\crefname{lemma}{Lem.}{Lems.}
\Crefname{lemma}{Lem.}{Lems.}
\crefname{appendix}{App.}{Apps.}
\Crefname{appendix}{App.}{Apps.}
\crefname{section}{Sec.}{Secs.}
\Crefname{section}{Sec.}{Secs.}
\crefname{example}{Ex.}{Exs.}
\Crefname{example}{Ex.}{Exs.}
\crefname{corollary}{Cor.}{Cors.}
\Crefname{corollary}{Cor.}{Cors.}
\crefname{figure}{Fig.}{Figs.}
\Crefname{figure}{Fig.}{Figs.}
\crefname{definition}{Def.}{Defs.}
\Crefname{definition}{Def.}{Defs.}
\crefname{theorem}{Thm.}{Thms.}
\Crefname{theorem}{Thm.}{Thms.}
\crefname{proposition}{Prop.}{Props.}
\Crefname{proposition}{Prop.}{Props.}
\begin{document}
\captionsetup[subfigure]{labelformat=customLabel}%
\title{The Benefits of Diligence}

\titlerunning{The Benefits of Diligence}
%
\author{Victor Arrial\inst{1}
\and Giulio
Guerrieri\inst{2}
\and Delia
Kesner\inst{1}
}
\authorrunning{V. Arrial et al.}
%
\institute{Université Paris Cité, CNRS, IRIF, France \and University of Sussex, Department of Informatics, Brighton, UK}
\maketitle              
\begin{abstract}
This paper studies the strength of embedding \CbNText (\CBNSymb) and
\CbVText (\CBVSymb) into a unifying framework called the
\BangCalculusText (\BANGSymb). These embeddings 
enable establishing (static and dynamic) properties of \CBNSymb and
\CBVSymb through their respective counterparts in $\BANGSymb$. While
some specific static properties have been already successfully studied
in the literature, the dynamic ones are more challenging and have been
left unexplored. We accomplish that by using a standard embedding for
the (easy) \CBNSymb case, while a novel one must be introduced for the
(difficult) {\CBVSymb\ case}. Moreover, a key point of our approach is
the identification of \BANGSymb diligent reduction sequences, which
eases the preservation of dynamic properties from $\BANGSymb$ to
$\CBNSymb/\CBVSymb$. We illustrate our methodology through two
concrete applications: confluence/factorization for both \CBNSymb and
\CBVSymb are respectively derived from confluence/factorization for
\BANGSymb.
\end{abstract}

\boolfalse{inAppendix}

\section{Introduction}
\label{sec:intro}

Call-by-Name (CBN) and Call-by-Value (CBV) stand as two foundational
evaluation strategies inspiring  distinct techniques and models of
computation in the theory of programming languages and proof
assistants~\cite{Plotkin75}. Notably, most theoretical studies in the
$\lambda$-calculus still continues to focus on its CBN variant, while
CBV, the cornerstone of operational semantics for most programming
languages and proof assistants, has been less extensively explored.
This is due in particular to the CBV stipulation that an argument can
be passed to a function only when it is a \emph{value} (\ie  variable
or abstraction), making the reasoning notably challenging to grasp.
Consequently, some fundamental concepts in the theory of the
$\lambda$-calculus (e.g. denotational semantics, contextual
equivalence, solvability, Böhm trees) make subtle\,--and not entirely
understood--\,distinctions between CBN and CBV, sometimes resulting in
completely ad-hoc scenarios for CBV, not being uniform with the
corresponding notion in CBN. This is for example the case of CBV Böhm
trees~\cite{KerinecManzonettoPagani20} or the notion of substitution
in~\cite{Santo20}.

\paragraph{Unifying Frameworks.} 
Reynolds~\cite{Reynolds98} (quoted by Levy~\cite{Levy99}) advocated
for a unifying framework for CBN and CBV.  This not only
minimizes their arbitrariness, but also avoids developing and proving
distinct and independent concepts and properties for them from scratch. Indeed,
both paradigms can be encompassed into broader foundational
frameworks~\cite{LincolnMitchell92,Abramsky93,BentonBiermanPaivaHyland93,BentonWadler96,RonchiRoversi97,MaraistOderskyTurnerWadler99,Levy99,Ehrhard16,EspiritoPintoUustalu19}
that explicitly differentiate values by marking them with a
distinguished constructor. While multiple such frameworks exist, our
focus lies on the Bang
Calculus~\cite{EhrhardGuerrieri16,GuerrieriManzonetto19,BucciarelliKesnerRiosViso20}.
Inspired by Girard's Linear Logic (LL)~\cite{Girard87} and Ehrhard's
interpretation~\cite{Ehrhard16} of Levy's
Call-by-Push-Value~\cite{Levy99} into LL, the Bang Calculus is
obtained by enriching the $\lambda$-calculus with two distinguished
modalities $\oc$ and $\derSymb$. The modality $\oc$ plays a twofold
role: it marks what can be duplicated or erased during evaluation (\ie copied an
arbitrary number of times, including zero), and it freezes the
evaluation of subterms (called \emph{thunks}). The modality $\derSymb$
annihilates the effect of $\oc$. Embedding CBN or CBV into the Bang
Calculus just consists in decorating $\lambda$-terms with $\oc$ and
$\derSymb$, thus forcing one model of computation or the other one.
Thanks to these two modalities, the Bang Calculus eases the
identification of shared behaviors and properties of CBN and CBV,
encompassing both syntactic and semantic aspects of them, within a
unifying and simple framework.

\paragraph{Adequate Models of Computation.}
Both CBN and CBV were originally defined on \emph{closed} terms
(without occurrences of free variables), that are enough to model
execution of programs. However, evaluation in proof assistants must be
performed on possibly \emph{open} terms, that is, with free variables.
While open terms are harmless to CBN, the theory of the CBV
$\lambda$-calculus on open terms turns out to be much more subtle and
trickier
(see~\cite{AccattoliGuerrieri16,AccattoliGuerrieri19,AccattoliGuerrieri22}
for a detailed discussion). In particular, Plotkin's original
CBV~\cite{Plotkin75} is not \emph{adequate} for open terms, as there
exist terms that may be both \emph{irreducible} and
\emph{meaningless/unsolvable}. The non-adequacy problem in  Plotkin's
CBV calculus can be repaired by introducing a form of sharing
implemented by \emph{explicit substitutions (ES)}, together with a
notion of \emph{reduction at a
distance}~\cite{AccattoliKesner10,AccattoliKesner12}, like in the
Value Substitution Calculus~\cite{AccattoliPaolini12} (here called
\CBVSymb), a CBV variant of Accattoli and Kesner's linear substitution
calculus~\cite{Accattoli12,ABKL14} (generalizing in turn Milner's
calculus~\cite{Milner2006,KOC}). Adequacy also fails for the version
of the Bang Calculus studied
in~\cite{GuerrieriManzonetto19,FaggianGuerrieri21}, for the same
reasons as in CBV. It can be repaired again via ES and distance,
resulting in the  Distant Bang Calculus
\BANGSymb~\cite{BucciarelliKesnerRiosViso20,BucciarelliKesnerRiosViso23}.
It is then natural to also integrate  ES and distance in the CBN
specification: this gives rise to CBN substitution calculi at a distance~\cite{AccattoliKesner10,AccattoliKesner12}, here we call \CBNSymb the one in~\cite{Accattoli12}, which is adequate as the usual CBN. In summary, we
focus in this paper on a CBN calculus \CBNSymb, an adequate CBV
calculus \CBVSymb, and the  adequate unifying Distant Bang Calculus
\BANGSymb.

\paragraph{Static and dynamic.}
The literature {has} shown that {some} \emph{static}
properties of CBN and CBV, including normal forms~\cite{KesnerViso22},
quantitative typing~\cite{BucciarelliKesnerRiosViso20}, tight
typing~\cite{KesnerViso22,BucciarelliKesnerRiosViso23},
inhabitation~\cite{ArrialGuerrieriKesner23}, and denotational
semantics~\cite{GuerrieriManzonetto19}, can be inferred from their
corresponding counterparts in the (Distant) Bang Calculus by
exploiting {suitable} CBN and CBV encodings. However, retrieving
\emph{dynamic} properties from the Bang Calculus into CBN or CBV turns
out to be a more intricate task, especially in their \emph{adequate}
(distant)
variant~\cite{GuerrieriManzonetto19,BucciarelliKesnerRiosViso20,FaggianGuerrieri21,BucciarelliKesnerRiosViso23}.
Indeed, it is {easy} to obtain \emph{simulation} {(a CBN or CBV reduction sequence is always embedded into a \BANGSymb reduction sequence)}, {\emph{but}} the
converse, known as \emph{reverse simulation}, fails{:  a \BANGSymb reduction sequence from a term
in the image of the CBN or CBV embedding  may not correspond to a
valid reduction sequence in CBN or CBV (counterexample in
\Cref{CE: CBV Reverse Simulation})}. 
Up to these days, there are no embeddings in the
literature enjoying reverse simulation for an adequate CBV calculus,
so that it is not possible to export \mbox{dynamic properties from \BANGSymb
to both \CBNSymb and \CBVSymb}.

\paragraph{Contributions.} We first revisit and \emph{extend} the
existing static and dynamic preservation results relating \CBNSymb and
\BANGSymb,
including simulation and reverse simulation, exploiting the embedding
used in
\cite{BucciarelliKesnerRiosViso20,BucciarelliKesnerRiosViso23}.
However, our primary and most significant contribution is a new
\emph{methodology} to deal with the (adequate) calculus \CBVSymb.
Indeed, we define a \emph{novel embedding} from \CBVSymb into
\BANGSymb, \emph{refining} the one of
\cite{BucciarelliKesnerRiosViso20,BucciarelliKesnerRiosViso23}, that
finely decorates terms with the modalities $\oc$ and $\derSymb$. 
To avoid redundant decorations, as $\oc$ and $\derSymb$ annihilate each other, a dedicated
$\bangSymbBang$-reduction step is then applied \emph{on the fly} by the
embedding, as in~\cite{BucciarelliKesnerRiosViso20,BucciarelliKesnerRiosViso23}. But our new 
{\CBVSymb} embedding not only preserves static and dynamic properties,
but also satisfies \emph{reverse simulation}, an essential property
that was previously lacking. This achievement is realized by the
second ingredient of our new methodology, given by  the notion of
\emph{diligent sequence} in \BANGSymb, a concept standing
independently of the embeddings. Indeed, a challenge at this point is
to prove that the earlier mentioned $\bangSymbBang$-reductions have a
purely \emph{administrative} nature, and additionally, that they can
be treated \emph{diligently}, by executing all of them as soon as
possible. We call this method \emph{diligent administration}: we
consistently address all administrative steps before proceeding with
any other \emph{computational} steps. A further challenge is then to establish that
working with administrative diligence does not alter the CBN or CBV
\mbox{nature~of~evaluation}.

As explained above, reverse simulation is crucial to derive properties
for \CBNSymb and \CBVSymb from their respective properties in
\BANGSymb. We provide two main illustrative \emph{applications} of
this by studying the cases of \emphit{confluence} and
\emphit{factorization}. Confluence is a well known property, and
factorization is crucial to prove important results in
rewriting~\cite{Mitschke79,Barendregt84,Takahashi95,Wadsworth76,Accattoli12,AccattoliFaggianGuerrieri19,AccattoliFaggianGuerrieri21,GuerrieriPaoliniRonchi17,Guerrieri15,FaggianGuerrieridLiguoroTreglia22}:
we say that a reduction enjoys factorization when every reduction
sequence can be rearranged so that some specific relevant steps are
performed first. In the two last sections, we use
confluence/factorization for \BANGSymb as a basis to easily deduce
confluence/factorization for \CBNSymb and \CBVSymb. This is done by
exploiting the CBN and CBV embeddings back and forth, via reduction
simulation and reverse~simulation. Just one proof is enough for three
confluence/factorization results: it's a three-for-one deal!  
The fact that \CBNSymb and \CBVSymb confluence/factorizations can be
\emph{easily} derived from \BANGSymb confluence/factorization in
essentially the \emph{same} way is another achievement, attained
thanks to having introduced good tools, such as diligence and the new
\CBVSymb embedding.

We actually provide a first proof of factorization for \BANGSymb,
another major contribution of this paper. Factorizations in \CBNSymb
and \CBVSymb were already proved in~\cite{Accattoli12}
and~\cite{AccattoliPaolini12}, respectively, but their proofs are not
trivial, even when applying some abstract approach~\cite{Accattoli12}.

\paragraph{Road Map.}
\Cref{sec:Distant_Bang_Calculus} recalls \BANGSymb  and introduces
diligence. The \CBNSymb/\CBVSymb calculi and their embeddings are
presented in \Cref{sec:CBN_CBV_Embeddings}, together with their
corresponding (static and dynamic) preservation results.
\Cref{sec:Bang_CBN_CBV_Confluence} derives \CBNSymb/\CBVSymb
confluence from that of \BANGSymb. \Cref{sec:Factorization} proves a
factorization result for \BANGSymb, and deduces factorization for
\CBNSymb and \CBVSymb by projection. \Cref{s:conclusion} discusses
future and related work and concludes. Proofs can be found
in~\ifbool{versionLongue}{the appendix}{\victor{[?]}}.

\subsection{Basic Notions Used all along the Paper.}
\label{sec:basic-notions}

An \emphasis{abstract rewriting system ($ARS$)} $\relE$ is a set $E$
with a binary relation $\arrE$ on $E$, called \emphasis{reduction}. We
write $u \revArrE t$ if $t \arrE u$, and we denote by $\arrE^+$ (\resp
$\arrE^*$) the transitive (\resp reflexive-transitive) closure of
$\arrE$. Given $t \in E$, $t$ is an \emphasis{$\relE$-normal form}
($\relE$-{\tt NF}) if there is no $u \in E$ such that $t \arrE u$; $t$
is \emphasis{$\relE$-terminating} if there is no infinite $\arrE$
reduction sequence starting at $t$. Reduction $\arrE$ is
\emphasis{terminating} if every $t \in E$ is $\relE$-terminating;
$\arrE$ is \emphasis{diamond} if for any $t, u_1, u_2 \in E$ such that
$u_1 \revArrE t \arrE u_2$ and $u_1 \neq u_2$, there is $s\in E$ such
that $u_1 \arrE s \revArrE u_2$; $\arrE$ is \emphasis{confluent} if
$\arrE^*$ is diamond.

All reductions in this paper will be defined by a set of rewrite rules
$\SetRules$, closed by a set of contexts $\SetContexts$. A term being an instance of the left-hand side of a rewrite rule
$\rel \in \SetRules$ is called a \emphasis{$\rel$-redex}.  Given a
rule $\rel \in \SetRules$, and a context $\bangECtxt \in
\SetContexts$, we use $\bangArr_E<R>$ to denote the reduction of the
$\rel$-redex under the context $\bangECtxt$.  The reduction
$\bangErrSet_E<R>$ is the union of reductions $\bangArr_E<R>$ over
\emph{all} contexts $\bangECtxt \in \SetContexts$. In other words,
$\bangArrSet_E<R>$ is the closure of the rule $\rel$  under all the
contexts in $\SetContexts$.

\newcommand{\nameSectionII}
    {Distant Bang Calculus}

\section{The Distant Bang Calculus \BANGSymb}
\label{sec:Distant_Bang_Calculus}

We introduce the term syntax of
\BANGSymb~\cite{BucciarelliKesnerRiosViso20}. Given a countably
infinite set $\setVar$ of variables $x, y, z, \dots$, the set
$\setBangTerms$ of terms is defined inductively as follows:
\begin{equation*}
	\textbf{(Terms)} \qquad t, u, s \;\Coloneqq\; x \in \setVar
        \vsep \app{t}{u}
	    \vsep \abs{x}{t}
        \vsep \oc t
        \vsep \der{t}
        \vsep t\esub{x}{u}
\end{equation*}

The set $\setBangTerms$ includes \emphasis{variables} $x$,
\emphasis{abstractions} $\abs{x}{t}$, \emphasis{applications} $\app{t}{u}$,
\emphasis{closures} $t\esub{x}{u}$ representing a pending \emphasis{explicit
substitution} (\emph{ES}) $\esub{x}{u}$ on $t$, \emphasis{bangs} $\oc t$
and \emphasis{derelictions} $\der{t}$ (their operational meaning is
\mbox{explained~below}).

Abstractions $\abs{x}{t}$ and closures $t\esub{x}{u}$ bind the
variable $x$ in their body $t$. The set of \emphasis{free variables}
$\freeVar{t}$ of a term $t$ is defined as expected, in particular
$\freeVar{\abs{x}{t}} \coloneqq \freeVar{t} \setminus \{x\}$ and
$\freeVar{ t\esub{x}{u}} \coloneqq \freeVar{u} \cup (\freeVar{t}
\setminus \{x\})$. The usual notion of $\alpha$-conversion
\cite{Barendregt84} is extended to the whole set $\setBangTerms$, and
terms are identified up to $\alpha$-conversion, \eg
$y\esub{y}{\abs{x}{x}} =  z\esub{z}{\abs{y}{y}}$. We denote by
$t\isub{x}{u}$ the usual (capture avoiding) meta-level substitution of
$u$ for all free occurrences of $x$~in~$t$.
\emphasis{\Full contexts} $(\bangFCtxt \in \bangFCtxtSet)$,
\emphasis{surface contexts} $(\bangSCtxt \in \bangSCtxtSet)$ and
\emphasis{list contexts} $(\bangLCtxt \in \bangLCtxtSet)$, which can
be seen as terms with exactly one \emphasis{hole} $\Hole$, are
inductively defined by:
\begin{align*}
	{\emphasis{(Full Contexts)}}& &
	\bangFCtxt &\Coloneqq \Hole
		\vsep \app[\,]{\bangFCtxt}{t}
		\vsep \app[\,]{t}{\bangFCtxt}
		\vsep \abs{x}{\bangFCtxt}
		\vsep \oc \bangFCtxt
		\vsep \der{\bangFCtxt}
		\vsep \bangFCtxt\esub{x}{t}
		\vsep t\esub{x}{\bangFCtxt}
\\[-2pt]
	{\emphasis{(Surface Contexts)}}& &
	\bangSCtxt &\Coloneqq \Hole
		\vsep \app[\,]{\bangSCtxt}{t}
		\vsep \app[\,]{t}{\bangSCtxt}
		\vsep \abs{x}{\bangSCtxt}
		\vsep \der{\bangSCtxt}
		\vsep \bangSCtxt\esub{x}{t}
		\vsep t\esub{x}{\bangSCtxt}
\\[-2pt]
	{\emphasis{(List Contexts)}}& &
	\bangLCtxt &\Coloneqq \Hole \vsep
		\bangLCtxt\esub{x}{t}
\end{align*}
$\bangLCtxt$ and $\bangSCtxt$ are special cases of $\bangFCtxt$: the
hole may occur everywhere in $\bangFCtxt$, while in $\bangSCtxt$ it
cannot appear under a $\oc$. List contexts $\bangLCtxt$ are arbitrary
lists of ES, used to implement reduction at a
distance~\cite{AccattoliKesner10,AccattoliKesner12}. We write
$\bangFCtxt<t>$ for the term obtained by replacing the hole in
$\bangFCtxt$ with the term $t$ (possibly capturing the free
variables~of~$t$).
The following \emphasis{rewrite rules} are the base components of
our reductions. 
\begin{align*}
\app{\bangLCtxt<\abs{x}{t}>}{u}
				&\bangMapstoBeta
					\bangLCtxt<t\esub{x}{u}>
&
t\esub{x}{\bangLCtxt<\oc u>}
				&\bangMapstoSubs
					\bangLCtxt<t\isub{x}{u}>
&
\der{\bangLCtxt<\oc t>}
				&\bangMapstoBang
					\bangLCtxt<t>
\end{align*}
Rule $\bangSymbBeta$ (\resp $\bangSymbSubs$) is assumed to be
capture-free, so no free variable of $u$ (resp. $t$) is captured by
the context $\bangLCtxt$.
The rule $\bangSymbBeta$ fires a $\beta$-redex, generating an ES. The
rule $\bangSymbSubs$ fires an ES provided that its argument is
duplicable, \ie is a bang. The rule $\bangSymbBang$ uses $\derSymb$ to
erase a $\oc$. In all of these rewrite rules, the reduction acts
\emphit{at a distance}~\cite{AccattoliKesner10,AccattoliKesner12}: the
main constructors involved in the rule can be separated by a finite
---possibly empty--- list $\bangLCtxt$ of ES.  This mechanism unblocks
desired computations that otherwise would be stuck, \eg
$(\abs{x}{x})\esub{y}{w}\oc z \bangMapstoBeta x\esub{x}{\oc z}
\esub{y}{w}$.

%

Reductions are defined, as  specified in~\cref{sec:basic-notions}, by
taking  the set of rewrite rules $\{\bangSymbBeta, \bangSymbSubs,
\bangSymbBang\}$ and the sets of contexts $\bangSCtxtSet$ and
$\bangFCtxtSet$. \emphasis{Surface reduction} is the relation
$\bangArrSet_S \,\coloneqq\, \bangArrSet_S<dB> \cup \bangArrSet_S<s!>
\cup \bangArrSet_S<d!>$, while \emphasis{\full reduction} is the
relation $\bangArrSet_F \,\coloneqq\, \bangArrSet_F<dB> \cup
\bangArrSet_F<s!> \cup \bangArrSet_F<d!>$. For example, for
$\bangSCtxt_1 = \Hole \in \bangSCtxtSet$ and $\bangFCtxt_1= \oc \Hole
\in \bangFCtxtSet \setminus \bangSCtxtSet$: $\app{(\abs{x}{\oc
\der{\oc x}})}{\oc y} \;\bangArr_{\bangSCtxt_1}<\bangSymbBeta>\; (\oc
\der{\oc x})\esub{x}{\oc y} \;\bangArr_{\bangSCtxt_1}<\bangSymbSubs>\;
\oc\der{\oc y} \;\bangArr_{\bangFCtxt_1}<\bangSymbBang>\; \oc y$. The
first two steps are $\bangArrSet_S$- and also $\bangArrSet_F$-steps,
while the last one is a $\bangArrSet_F$-step but not a
$\bangArrSet_S$-step. More generally, $\bangArrSet_S \,\subsetneq\,
\bangArrSet_F$. For instance, $\oc (\der{\oc y})$ is a \bangSetSNF but
not a \bangSetFNF\ since $\oc (\der{\oc y}) \bangArrSet_F \oc y$,
while $\oc y$ is a \bangSetFNF (and hence a \bangSetSNF too).

The $\oc$ modality plays a twofold role. First, it marks the only
subterms that can be substituted (\ie erased or arbitrarily copied):
the $\bangSymbSubs$-rule fires an ES only if there is a $\oc$ in its
argument (up to a list context). Second, it freezes (\surface)
evaluation of the term under the scope of $\oc$: surface reduction
$\bangArrSet_S$ does not reduce under $\oc$. In \full reduction
$\bangArrSet_{F}$, the $\oc$ modality looses its~freezing~behavior.

\paragraph{Diligent Administration.}
While reductions $\bangArrSet_F<dB>$ and $\bangArrSet_F<s!>$ are
actual \emph{computational} steps, reduction $\bangArrSet_F<d!>$ is
rather \emph{administrative} in nature. As we use \BANGSymb to
simulate other calculi, we need to align with the \emph{implicit
nature} of these administrative steps: this can be achieved by
executing them as soon as possible. We thus introduce a \emph{diligent
process} that reorders some reduction steps to ensure that
administrative steps are always performed as soon as \mbox{there is a
$\bangSymbBang$-redex}.

To begin, we formally introduce the concept of \emphasis{diligent
administrative} reduction sequence, characterizing sequences where
each \emph{computational} step ($\bangSymbBeta$ or $\bangSymbSubs$)
can be performed only \emph{after} all \emph{administrative} steps
($\bangSymbBang$) have been executed.

\begin{definition}[Diligent Administrative Reduction]
    The \emph{diligent administrative} \emph{\surface} (\resp
    \emph{\full}) reduction $\bangArrSet_{Sai}$ (\resp
    $\bangArrSet_{Fai}$) is a subset of the \surface (\resp \full)
    reduction obtained by restricting $\bangSymbBeta$- and
    $\bangSymbSubs$-steps to $\bangSCtxtSet<\bangSymbBang>$-normal
    forms (\resp $\bangFCtxtSet<\bangSymbBang>$-normal forms). More
    precisely, it is defined as follows:
    \begin{equation*}
        \begin{array}{rcl}
            \bangArrSet_{Sai}
                &\coloneqq& (\bangArrSet_S<dB> \cap\; \text{\bangSetSNF<d!>} \times \setBangTerms)
                    \;\;\cup\;\; (\bangArrSet_S<s!> \cap\; \text{\bangSetSNF<d!>} \times \setBangTerms)
                    \;\;\cup\; \bangArrSet_S<d!>
		\\[0.2cm]
            \bangArrSet_{Fai}
                &\coloneqq& (\bangArrSet_F<dB> \cap\; \text{\bangSetFNF<d!>} \times \setBangTerms)
                    \;\;\cup\;\; (\bangArrSet_F<s!> \cap\; \text{\bangSetFNF<d!>} \times \setBangTerms)
                    \;\;\cup\; \bangArrSet_F<d!>
        \end{array}
    \end{equation*}
	
\end{definition}

\begin{example}
    \label{ex:diligent}
    Consider the two surface reduction sequences $\der{\oc
    x}\esub{x}{\oc y} \bangArrSet_S<s!> \der{\oc y} \bangArrSet_S<d!>
    y$ and $\der{\oc x}\esub{x}{\oc y} \bangArrSet_S<d!> x\esub{x}{\oc
    y} \bangArrSet_S<s!> y$. The first one is not diligent
    administrative, as the step $\bangArrSet_S<s!>$ is performed in a
    term that is not \bangSetSNF<d!>. But the second one is diligent
    administrative: \mbox{$\der{\oc x}\esub{x}{\oc y}
    \bangArrSet_{Sai} x\esub{x}{\oc y} \bangArrSet_{Sai} y$}.
\end{example}

To show that every reduction sequence can be transformed into a
diligent  one (\Cref{lem:Implicitation}), we first observe that it is
possible to perform \emph{all} administrative steps from any term:
indeed, reductions $\bangArrSet_F<d!>$ and $\bangArrSet_S<d!>$ are
\emph{terminating}, because each administrative step erase two
constructors, $\derSymb$ and $\oc$, so the term size decreases.

Some reduction sequences can be made diligent, as in
\Cref{ex:diligent}, but this is not the case for all reduction
sequences. For instance $\der{\oc x}\esub{x}{\oc y} \bangArrSet_S
\der{\oc y}$ but $\der{\oc x}\esub{x}{\oc y} \not\bangArrSet_{Sai}
\der{\oc y}$. Therefore, we focus solely on reduction sequences
reaching terms that are normal for $\bangSymbBang$. Under these conditions and by
commuting computational steps with administrative ones, we obtain the
following results:

\begin{restatable}[Diligence Process]{lemma}{RecImplicitation}
    \LemmaToFromProof{Implicitation}
    Let $t, u \in \setBangTerms$ be terms.
    \begin{itemize}
    \item  \textbf{(Surface)} \;%
        If $t \bangArrSet*_S u$ and $u$ is a \bangSetSNF<d!>, then $t
        \bangArrSet*_{Sai} u$.
    \item \hspace{0.2cm} \textbf{(Full)} \;\hspace{0.3cm}%
        If $t \bangArrSet*_F u$ and $u$ is a \bangSetFNF<d!>, then $t
        \bangArrSet*_{Fai} u$.
    \end{itemize}
\end{restatable}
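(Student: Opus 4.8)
The plan is to reduce the statement to a local ``swap'' between administrative ($\bangSymbBang$) and computational ($\bangSymbBeta$, $\bangSymbSubs$) steps, and then to argue by induction on the number of computational steps of the given reduction. I present the surface case; the full case is proved identically, replacing surface contexts, surface reduction and \bangSetSNF<d!> by their full counterparts throughout. I call a $\bangSymbBeta$- or $\bangSymbSubs$-step \emph{computational}. Two easy facts about the administrative reduction will be used: $\bangArrSet_S<d!>$ is terminating (each $\bangSymbBang$-step erases a $\derSymb$ and a $\oc$, hence strictly decreases the size), and it is diamond, hence confluent (two distinct $\bangSymbBang$-redexes of a term are disjoint or nested, and contracting one leaves exactly one residual of the other, closing the local peak with one step on each side). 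In particular every $t$ has a $\bangSymbBang$-normal form $\mathsf{nf}(t)$, reachable by $\bangSymbBang$-steps.

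The crucial ingredient is the \emph{Swap Lemma}: if $t \bangArrSet_S<d!> t'$ and $t \bangArrSet_S s$ by a computational step, then there is $z$ with $t' \bangArrSet_S z$ by one computational step of the same kind and $s \bangArrSet*_S<d!> z$. I would prove it by inspecting the relative positions of the $\der{\bangLCtxt<\oc\cdot>}$-redex contracted on the left and the computational redex contracted on the right. The decisive observation is that contracting a $\bangSymbBang$-redex never destroys a computational redex: the pattern of a $\bangSymbBeta$-redex is made only of an application node, a list of explicit substitutions and an abstraction, and that of an $\bangSymbSubs$-redex only of an explicit substitution and a $\oc$, none of which a $\der{\bangLCtxt<\oc\cdot>}$-contraction can break. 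The one delicate configuration is $\der{(\oc s)\esub{x}{\oc q}}$ and its variants at a distance, which carry a $\bangSymbBang$- and an $\bangSymbSubs$-redex overlapping through the explicit substitution; there a direct computation checks that both branches meet. Since $\bangSymbBang$ duplicates nothing, the computational redex has exactly one residual after the $\bangSymbBang$-step --- whence the single computational step $t' \bangArrSet_S z$ --- while $s \bangArrSet*_S<d!> z$ just fires the residuals of the contracted $\bangSymbBang$-redex (several if the computational step was a duplicating $\bangSymbSubs$, none if it erased them). One checks along the way that residuals of surface redexes stay surface, so the lemma holds in both settings.

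Combining the Swap Lemma with the diamond property of $\bangArrSet_S<d!>$ yields the local diagram ``a $\bangSymbBang$-step followed by a $\bangArrSet_S$-step rearranges into at most one $\bangArrSet_S$-step followed by $\bangSymbBang$-steps'', whence, by the standard strong-commutation lemma, $\bangArrSet*_S<d!>$ and $\bangArrSet*_S$ commute. I actually need the refinement obtained by the same induction while keeping track of computational steps, which the local diagram leaves unchanged: if $t \bangArrSet*_S<d!> t'$ and $t \bangArrSet*_S s$, then $t' \bangArrSet*_S s'$ for some $s'$ with $s \bangArrSet*_S<d!> s'$ and no more computational steps than in $t \bangArrSet*_S s$; and when $s$ is a \bangSetSNF<d!> this forces $s' = s$.

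The statement then follows by induction on the number $c$ of computational steps of the given reduction $\rho : t \bangArrSet*_S u$ with $u$ a \bangSetSNF<d!>. If $c = 0$, then $\rho$ consists of $\bangSymbBang$-steps only, and is therefore already a $\bangArrSet*_{Sai}$ reduction. If $c \ge 1$, write $\rho$ as $t \bangArrSet*_S v \bangArrSet_S v' \bangArrSet*_S u$ where the first block is made of $\bangSymbBang$-steps and $v \bangArrSet_S v'$ is the first computational step; prolong the block to $t \bangArrSet*_S<d!> \mathsf{nf}(v)$ and push the computational step through $v \bangArrSet*_S<d!> \mathsf{nf}(v)$ by iterating the Swap Lemma (each $\bangSymbBang$-step keeps the unique residual of the computational redex), obtaining $\mathsf{nf}(v) \bangArrSet_S \bar z$ by one computational step with $v' \bangArrSet*_S<d!> \bar z$. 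Since $v' \bangArrSet*_S u$ and $u$ is a \bangSetSNF<d!>, the refinement above gives $\bar z \bangArrSet*_S u$ with at most $c - 1$ computational steps. Now $t \bangArrSet*_S<d!> \mathsf{nf}(v)$ is a $\bangArrSet*_{Sai}$ reduction, $\mathsf{nf}(v) \bangArrSet_S \bar z$ is a computational step out of a \bangSetSNF<d!>, hence a $\bangArrSet_{Sai}$ step, and $\bar z \bangArrSet*_S u$ becomes a $\bangArrSet*_{Sai}$ reduction by the induction hypothesis; concatenating the three pieces gives $t \bangArrSet*_{Sai} u$. (If the pushed computational step vanished, then $\bar z = \mathsf{nf}(v)$ and the reduction $t \bangArrSet*_S u$ already has at most $c-1$ computational steps, so the induction hypothesis applies directly.) The main obstacle is the Swap Lemma --- the careful case analysis of how a $\der{\bangLCtxt<\oc\cdot>}$-redex interacts with the two computational patterns, especially the overlap through an explicit substitution, plus checking that surface-ness of residuals is preserved; once that local diagram is secured, the strong-commutation/strip reasoning and the induction are routine, the only care being that the count of computational steps does not grow.
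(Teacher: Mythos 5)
Your proof is correct, and it rests on exactly the ingredients the paper itself uses: termination and the diamond/confluence of the administrative reductions $\bangArrSet_S<d!>$ and $\bangArrSet_F<d!>$, and the local peak commutation between one administrative and one computational step --- your Swap Lemma is precisely the hypothesis of the paper's abstract diligence lemma (\Cref{lem: Abstract Implicitation}), verified in \Cref{lem:R=dB/s!andt->S<R>u1andt->S<d!>u2=>u1->S<d!>sandu2->S<R>s} for the surface case and \Cref{lem:R=dB/s!andt->F<R>u1andt->F<d!>u2=>u1->*F<d!>sandu2->F<R>s} for the full case (note that at surface positions the administrative redex cannot sit inside the banged argument of the fired ES, so duplication is impossible and the surface closing is a single $\bangSymbBang$-step; your weaker $\bangArrSet*_S<d!>$ form is still sufficient). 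Where you diverge is the global bookkeeping: the paper proves an abstract, parametric lemma by induction on the length of the whole reduction, working from the tail via a head-extraction property of diligent sequences and a one-step diligence lemma, whereas you induct on the number of computational steps, administratively normalize the source, push the first computational step through that normalization, and then invoke a commutation of $\bangArrSet*_S<d!>$ with $\bangArrSet*_S$ that does not increase the count of computational steps. That quantitative refinement is the one extra obligation your route incurs; it does follow by the standard tiling argument from your two local diagrams, so calling it routine is fair. Conversely, the paper's abstraction buys reuse --- the same lemma is instantiated a third time for internal reduction in the factorization appendix --- while your organization is somewhat more self-contained. Both are sound; your parenthetical about the computational step "vanishing" is vacuous, since, as you argue, a $\bangSymbBang$-contraction can neither erase nor duplicate a computational redex.
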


\newcommand{\nameSectionIII}
    {Call-by-Name and Call-by-Value Embeddings}

\section{Call-by-Name and Call-by-Value Embeddings}
\label{sec:CBN_CBV_Embeddings}

In this section we present the call-by-name \CBNSymb
(\Cref{subsec:Call-by-Name_Calculus_and_Embedding}) and call-by-value
\CBVSymb (\Cref{subsec:Call-by-Value_Calculus_and_Embedding}) calculi,
as well as their embeddings into \BANGSymb, which preserve static
properties (Corollaries \ref{cor:cbn prop-full}.2 and %
\ref{cor:cbn prop-surface}.2 for \CBNSymb, \ref{cor:cbv prop-full}.2
and \ref{cor:cbv prop-surface}.2 for \CBVSymb) and dynamic ones
(Corollaries \ref{cor:cbn prop-full}.3 and %
\ref{cor:cbn prop-surface}.3 for \CBNSymb, \ref{cor:cbv prop-full}.3 and
\ref{cor:cbv prop-surface}.3 for \CBVSymb).

Both \CBNSymb~\cite{AccattoliKesner10,AccattoliKesner12,Accattoli12}
and \CBVSymb~\cite{AccattoliPaolini12} are specified  using ES and
action at a distance, as explained in~\Cref{sec:intro}, and they share
the same term syntax. The sets $\setCbvTerms$ of \emphasis{terms} and
$\setCbvValues$ of \emphasis{values} are inductively defined below.
\begin{equation*}
    \textbf{(Terms)} \ \ 
    t, u 	\;\Coloneqq\;  v
    \vsep \app[\,]{t}{u}
    \vsep t\esub{x}{u}
    \hspace{1cm}
    \textbf{(Values)} \ \
    v \;\Coloneqq\; x 
    \vsep \abs{x}{t}
\end{equation*}
Note that the syntax  contains neither $\derSymb$~nor~$\oc$. The
distinction between terms and values is irrelevant in \CBNSymb but
crucial in \CBVSymb. The two calculi also share the same
\emphasis{full contexts} $\genFCtxt$ and \emphasis{list contexts}
$\genLCtxt$, which can be seen as terms with exactly one
\emphasis{hole} $\Hole$ and are inductively defined below. The
differences between \CBNSymb and \CBVSymb are in the definitions of
\emph{surface} contexts and \emph{rewrite rules}.
\begin{align*}
		\emphasis{(List Contexts)}& &
		\genLCtxt&\;\Coloneqq  \Hole
		\vsep \genLCtxt\esub{x}{t}
		\\[-2pt]
		\emphasis{(Full Contexts)}& &
		\genFCtxt&\;\Coloneqq \Hole
		\vsep \app[\,]{\genFCtxt}{t}
		\vsep \app[\,]{t}{\genFCtxt}
		\vsep \abs{x}{\genFCtxt}
		\vsep \genFCtxt\esub{x}{t}
		\vsep t\esub{x}{\genFCtxt}
\end{align*}

\subsection{The Call-by-Name Calculus \CBNSymb and its embedding to \BANGSymb}
\label{subsec:Call-by-Name_Calculus_and_Embedding}

%
In \CBNSymb, \emphasis{surface contexts} $\cbnSCtxt \in \cbnSCtxtSet$
are defined below: the hole cannot be in the argument of an application or ES. 
To align the notations, in \CBNSymb full contexts
are denoted by $\cbnFCtxt \in \cbnFCtxtSet$ and list contexts by
$\cbnLCtxt \in \cbnLCtxtSet$.
\begin{align*}
        \emphasis{(\CBNSymb Surface Contexts)}& &
        \cbnSCtxt&\;\Coloneqq  \Hole
        \vsep \app[\,]{\cbnSCtxt}{t}
        \vsep \abs{x}{\cbnSCtxt}
        \vsep \cbnSCtxt\esub{x}{t}
\end{align*}

As explained in~\cref{sec:basic-notions}, reductions
in $\CBNSymb$ are defined by taking the set of rewrite rules
$\{\cbnSymbBeta, \cbnSymbSubs\}$ defined below and the sets of contexts
$\cbnSCtxtSet$ and $\cbnFCtxtSet$.
\begin{equation*}
	\app{\cbnLCtxt<\abs{x}{t}>}{u}
	\;\cbnMapstoBeta\;
	\cbnLCtxt<t\esub{x}{u}>
	\hspace{2cm}
	t\esub{x}{u}
	\;\cbnMapstoSubs\;
	t\isub{x}{u}
\end{equation*}

Rule \cbnSymbBeta is capture-free: no free variable of $u$ is captured
by the context $\cbnLCtxt$. The \emphasis{\CBNSymb surface reduction}
is the relation $\cbnArrSet_S \,\coloneqq\, \cbnArrSet_S<dB>
\cup\cbnArrSet_S<s>$, while the \emphasis{\CBNSymb full reduction} is
the relation $\cbnArrSet_F \,\coloneqq\, \cbnArrSet_F<dB> \cup
\cbnArrSet_F<s>$. \Eg, for $\cbnFCtxt = \abs{z}{\Hole}$, $t_0 =
\abs{z}{(\app{(\abs{x}{\app{y}{\app{x}{x}}})}{(\app{z}{z})})}
\cbnArr_F<dB> t_1 =
\abs{z}{((\app{y}{\app{x}{x}})\esub{x}{\app{z}{z}})} \cbnArr_F<s> t_2
= \abs{z}{(\app{y}{\app{(\app{z}{z})}{(\app{z}{z})}})}$.

The {\CBNSymb surface reduction} is nothing but (a non-deterministic
but diamond variant of) the well-known \emph{head} reduction.

\paragraph{Embedding \CBNSymb into \BANGSymb.}
The \emphasis{\CBNSymb embedding} $\cbnToBangAGK{\cdot} \colon
\setCbnTerms \to \setBangTerms$ from \CBNSymb to \BANGSymb, introduced
in \cite{BucciarelliKesnerRiosViso20,BucciarelliKesnerRiosViso23} and
presented below, extends Girard's one \cite{EhrhardGuerrieri16}~to~ES.
	\begin{align*}
			\cbnToBangAGK{x}
                &\coloneqq
            x
		&
			\cbnToBangAGK{(\abs{x}{t})}
			&\coloneqq
			\abs{x}{\cbnToBangAGK{t}}
        &
            \cbnToBangAGK{(\app{t}{u})}
                &\coloneqq
            \app[\,]{\cbnToBangAGK{t}}{\oc\cbnToBangAGK{u}}
		&
			\cbnToBangAGK{(t\esub{x}{u})}
                &\coloneqq
            \cbnToBangAGK{t}\esub{x}{\oc\cbnToBangAGK{u}}.
	\end{align*}
As an example, $\cbnToBangAGK{(\app{y}{x})\esub{y}{z}} = (\app{y}{\oc
x})\esub{y}{\oc z}$. Note that $\cbnToBangAGK{\cdot}$ never
introduces $\derSymb$, hence $\cbnToBangAGK{t}$, and every term it
reduces to, are always a \bangSetFNF<d!> (this does not hold for the
\CBVSymb embedding,
\Cref{subsec:Call-by-Value_Calculus_and_Embedding}). In every
application and ES, $\cbnToBangAGK{\cdot}$ puts a $\oc$ in front of
their argument, which shows the two roles---called
\emph{duplicability} and \emph{accessibility}---played by $\oc$ in
this embedding: \CBNSymb duplicability means that any argument can be
duplicated (or erased), \CBNSymb accessibility means that \surface
reduction cannot take place inside arguments. Indeed, the $\oc$ seals
all subterms in argument~position.

The embedding is trivially
extended to \CBNSymb contexts by setting $\cbnToBangAGK{\Hole} =
\Hole$.
~

The static properties of this embedding have already been partially
discussed in
\cite{BucciarelliKesnerRiosViso20,BucciarelliKesnerRiosViso23}. We
will revisit and refine them (\Cref{cor:cbn prop-full,cor:cbn
prop-surface,cor:cbn prop-internal}), but our main focus lies in the
preservation of the dynamics of \CBNSymb within \BANGSymb. For that,
we first extend the embedding to rule names, by defining
$\cbnToBangAGK{\cbnSymbBeta} \;\coloneqq\; \bangSymbBeta$ and
$\cbnToBangAGK{\cbnSymbSubs} \;\coloneqq\; \bangSymbSubs$.

The reduction of a \CBNSymb redex can be effectively simulated in
\BANGSymb by reducing the corresponding redex occurring at the
translated location/context.

\begin{restatable}[\CBNSymb One-Step Simulation]{lemma}{RecCbnOneStepSimulation}
    \label{lem:Cbn_Simulation_One_Step}%
    Let $t, u \in \setCbnTerms$ and $\cbnFCtxt \in
    	\cbnFCtxtSet$ and $\rel \in \{\cbnSymbBeta,
    \cbnSymbSubs\}$. If $t \cbnArr_F<R> u$ then $\cbnToBangAGK{t} \bangArr_{\cbnToBangAGK{\cbnFCtxt}\bangCtxtPlug{\cbnToBangAGK{\rel}}} \cbnToBangAGK{u}$.
    \label{l:one-step-cnb}
\end{restatable}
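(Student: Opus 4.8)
Here is how I would prove \Cref{lem:Cbn_Simulation_One_Step}.

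The plan is to factor the statement through two purely compositional properties of the embedding $\cbnToBangAGK{\cdot}$: its compatibility with plugging into contexts, and its commutation with meta-level substitution. As a preliminary I would record, by an immediate induction on $t$, that $\cbnToBangAGK{\cdot}$ only inserts $\oc$ markers (wrapping the arguments of applications and ES with $\oc$) without altering the binding structure; in particular $\freeVar{\cbnToBangAGK{t}} = \freeVar{t}$, and the translation of a context binds the same variables along the hole path as the context itself. This is exactly what is needed to transport the capture-freeness side conditions of the \CBNSymb rules to their \BANGSymb images.

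Then a \emph{context lemma}: extend $\cbnToBangAGK{\cdot}$ to \CBNSymb contexts by the same clauses and $\cbnToBangAGK{\Hole} := \Hole$; by induction on $\cbnFCtxt$ one obtains simultaneously that (i) $\cbnToBangAGK{\cbnFCtxt}$ is a \BANGSymb list context in $\bangLCtxtSet$ if $\cbnFCtxt$ is a \CBNSymb list context, and a \BANGSymb full context in $\bangFCtxtSet$ if $\cbnFCtxt$ is a \CBNSymb full context, and (ii) the embedding of $\cbnFCtxt<t>$ equals the result of plugging $\cbnToBangAGK{t}$ into the hole of $\cbnToBangAGK{\cbnFCtxt}$. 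Point (ii) is routine since $\cbnToBangAGK{\cdot}$ is structurally recursive, so the decoration surrounding the hole is independent of what fills it, and capture into the binders of $\cbnFCtxt$ happens identically on both sides. Point (i) is immediate except for the argument positions $\app{t}{\cbnFCtxt}$ and $t\esub{x}{\cbnFCtxt}$, whose images $\app[\,]{\cbnToBangAGK{t}}{\oc\cbnToBangAGK{\cbnFCtxt}}$ and $\cbnToBangAGK{t}\esub{x}{\oc\cbnToBangAGK{\cbnFCtxt}}$ are still \BANGSymb full contexts, because $\bangFCtxtSet$ admits a hole under a $\oc$ --- this is precisely how the \CBNSymb accessibility discipline becomes the $\oc$ modality.

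Next a \emph{substitution lemma}, $\cbnToBangAGK{t\isub{x}{u}} = \cbnToBangAGK{t}\isub{x}{\cbnToBangAGK{u}}$, by induction on $t$ (the $\abs{x}{\cdot}$ and $\esub{x}{\cdot}$ cases using $\alpha$-renaming, licit by free-variable preservation). The core of the proof is then a \emph{root-step simulation}: if $r$ is an $\rel$-redex with contractum $r'$, then $\cbnToBangAGK{r}$ is an $\cbnToBangAGK{\rel}$-redex whose root contraction yields $\cbnToBangAGK{r'}$ (recall $\cbnToBangAGK{\cbnSymbBeta} = \bangSymbBeta$ and $\cbnToBangAGK{\cbnSymbSubs} = \bangSymbSubs$). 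If $\rel = \cbnSymbBeta$, write $r = \app{\cbnLCtxt<\abs{x}{t}>}{u}$; the context lemma gives $\cbnToBangAGK{r} = \app{\bangLCtxt<\abs{x}{\cbnToBangAGK{t}}>}{\oc\cbnToBangAGK{u}}$ with $\bangLCtxt := \cbnToBangAGK{\cbnLCtxt} \in \bangLCtxtSet$, i.e.\ a $\bangSymbBeta$-redex (capture-free by the preliminary), whose contractum $\bangLCtxt<\cbnToBangAGK{t}\esub{x}{\oc\cbnToBangAGK{u}}>$ is, since $\cbnToBangAGK{t}\esub{x}{\oc\cbnToBangAGK{u}} = \cbnToBangAGK{t\esub{x}{u}}$ and by the context lemma once more, exactly $\cbnToBangAGK{r'}$. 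If $\rel = \cbnSymbSubs$, then $r = t\esub{x}{u}$ and $\cbnToBangAGK{r} = \cbnToBangAGK{t}\esub{x}{\oc\cbnToBangAGK{u}}$ is a $\bangSymbSubs$-redex with empty list context, contracting to $\cbnToBangAGK{t}\isub{x}{\cbnToBangAGK{u}} = \cbnToBangAGK{t\isub{x}{u}} = \cbnToBangAGK{r'}$ by the substitution lemma.

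Finally the assembly: a step $t \cbnArr_F<R> u$ factors as $t = \cbnFCtxt<r>$, $u = \cbnFCtxt<r'>$ with $r$ an $\rel$-redex contracting to $r'$. By the context lemma, $\cbnToBangAGK{\cbnFCtxt} \in \bangFCtxtSet$, and $\cbnToBangAGK{t}$ (resp.\ $\cbnToBangAGK{u}$) is $\cbnToBangAGK{\cbnFCtxt}$ with $\cbnToBangAGK{r}$ (resp.\ $\cbnToBangAGK{r'}$) in its hole; by the root-step simulation, $\cbnToBangAGK{r}$ is an $\cbnToBangAGK{\rel}$-redex with root contractum $\cbnToBangAGK{r'}$. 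Wrapping that root step inside $\cbnToBangAGK{\cbnFCtxt}$ is, by definition of the relation, precisely $\cbnToBangAGK{t} \bangArr_{\cbnToBangAGK{\cbnFCtxt}\bangCtxtPlug{\cbnToBangAGK{\rel}}} \cbnToBangAGK{u}$, as claimed. I expect no genuine obstacle: the only care needed is the bookkeeping of $\alpha$-equivalence and of the capture-freeness premises of the rules, and all three auxiliary lemmas are plain structural inductions.
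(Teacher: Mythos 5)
Your proposal is correct and follows essentially the same route as the paper: the paper's proof also decomposes the step as a root redex under a translated context, uses the compositionality lemma $\cbnToBangAGK{(\cbnFCtxt\cbnCtxtPlug{t})} = \cbnToBangAGK{\cbnFCtxt}\cbnCtxtPlug{\cbnToBangAGK{t}}$, treats the two rules by the same case analysis (with $\cbnToBangAGK{\cbnLCtxt}$ as the \BANGSymb list context for $\bangSymbBeta$ and the commutation of $\cbnToBangAGK{\cdot}$ with meta-level substitution for $\bangSymbSubs$), and concludes by plugging the root step back into $\cbnToBangAGK{\cbnFCtxt}$. The only difference is presentational: you make the stability of list/full contexts and the capture-freeness bookkeeping explicit, which the paper leaves implicit or defers to separate trivial lemmas.
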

\begin{example}
    \label{ex:cbn-simulation}
    Consider the \CBNSymb reductions $t_0 \cbnArr_F<dB> t_1$ and $t_1
	\cbnArr_F<s> t_2$ seen above with $\cbnFCtxt = \abs{z}{\Hole}$.
	Since $ \cbnToBangAGK{\cbnFCtxt} = \abs{z}{\Hole}$,
	we have
$\cbnToBangAGK{t_0}
= \abs{z}{(\app{(\abs{x}{\app{y}{\app{\oc x}{\oc x}}})}{\oc (\app{z}{\oc z})} )}
\allowbreak\bangArr_{\cbnToBangAGK{\cbnFCtxt}\bangCtxtPlug{\bangSymbBeta}}
\abs{z}{((\app[\,]{y}{\app{\oc x}{\oc x}})\esub{x}{\oc (\app{z}{\oc z})} )}
= \cbnToBangAGK{t_1}$
and
$\cbnToBangAGK{t_1}
\bangArr_{\cbnToBangAGK{\cbnFCtxt}\bangCtxtPlug{\bangSymbSubs}}
\abs{z}{(y \app{\oc(\app{z}{\oc z})}{\oc(\app{z}{\oc z})} )}
= \cbnToBangAGK{t_2}$.
    \end{example}

{So, every \CBNSymb reduction step is simulated by the corresponding \BANGSymb reduction
	step, without the need for any administrative step.  
	Simulation of $\CBVSymb$ (\Cref{lem:Cbv_Simulation_One_Step}) is instead more~involved, requiring some further administrative steps.}



The following property, which effectively reverses the simulation
process, extends the one holding for the original Bang Calculus
(without distance) \cite{GuerrieriManzonetto19}.

\begin{restatable}[\CBNSymb One-Step Reverse Simulation]{lemma}{RecCbnOneStepReverseSimulation}
    \label{lem:Cbn_Reverse_Simulation_One_Step}%
    Let $t \in \setCbnTerms$, $u' \in \setBangTerms$, $\bangFCtxt \in
    	\bangFCtxtSet$ and $\rel' \in
    \{\bangSymbBeta, \bangSymbSubs, \bangSymbBang\}$. 
    
    \vspace{-1.65\baselineskip}
    \begin{equation*} 
        \cbnToBangAGK{t} \bangArr_F<R'> u'
            \quad \implies \quad
        \left\{\begin{array}{l@{\hspace{.2cm}}l}
            \exists\; u \in \setCbnTerms,
                & \cbnToBangAGK{u} = u'      \\
         \exists\; \rel \in \{\cbnSymbBeta, \cbnSymbSubs\},
                & \cbnToBangAGK{\rel} = \rel'          \\
         \exists\; \cbnFCtxt \in \cbnFCtxtSet,
                &\cbnToBangAGK{\cbnFCtxt} = \bangFCtxt
        \end{array}\right\}
        \text{ such that } t \cbnArr_F<R> u.
    \end{equation*}%
    \label{l:reverse-simulation-cbn}
\end{restatable}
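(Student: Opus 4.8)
The plan is to prove the statement by induction on the term $t$, localizing the fired \BANGSymb-redex and reading it back through the embedding. Two elementary auxiliary facts carry most of the load. The first is a \emph{substitution lemma}, $\cbnToBangAGK{s\isub{x}{r}} = \cbnToBangAGK{s}\isub{x}{\cbnToBangAGK{r}}$, which follows by a routine induction on $s$ (and is in any case already needed to establish simulation). The second is a \emph{list-context reflection lemma}: if $\cbnToBangAGK{t} = \bangLCtxt<\abs{x}{s}>$ then $t = \cbnLCtxt<\abs{x}{t'}>$ for some $\cbnLCtxt \in \cbnLCtxtSet$ and $t' \in \setCbnTerms$ with $\cbnToBangAGK{\cbnLCtxt} = \bangLCtxt$ and $\cbnToBangAGK{t'} = s$; this is proved by induction on $\bangLCtxt$, using that every ES in the image of $\cbnToBangAGK{\cdot}$ has a bang as its argument. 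I will also record the basic shape constraints on the image of the embedding: $\cbnToBangAGK{t}$ contains no $\derSymb$ and is never itself a bang, so it has no $\bangSymbBang$-redex anywhere (hence the case $\rel' = \bangSymbBang$ is vacuous, consistently with $\cbnToBangAGK{\cbnSymbBeta} = \bangSymbBeta$ and $\cbnToBangAGK{\cbnSymbSubs} = \bangSymbSubs$), and any reduction taking place inside the argument of an application or ES of $\cbnToBangAGK{t}$ must pass through the $\oc$ that the embedding inserted there.

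For the induction, the case $t = x$ is vacuous; if $t = \abs{x}{t_1}$ the step necessarily occurs inside $\cbnToBangAGK{t_1}$ and the induction hypothesis, repackaged under the abstraction, closes it. If $t = \app{t_1}{t_2}$, so $\cbnToBangAGK{t} = \app[\,]{\cbnToBangAGK{t_1}}{\oc\cbnToBangAGK{t_2}}$, there are three sub-cases according to the witnessing context $\bangFCtxt$: (a) the redex is at the root---then it is a $\bangSymbBeta$-redex $\app[\,]{\bangLCtxt<\abs{x}{s}>}{\oc\cbnToBangAGK{t_2}}$, the reflection lemma rewrites $t_1$ as $\cbnLCtxt<\abs{x}{t_1'}>$, and $t \cbnArr_F<dB> \cbnLCtxt<t_1'\esub{x}{t_2}>$ is the sought \CBNSymb step with $\cbnFCtxt = \Hole$, its image matching $u'$; (b) the redex lies inside $\cbnToBangAGK{t_1}$---apply the induction hypothesis to $t_1$ and take $\cbnFCtxt = \app[\,]{\cbnFCtxt_1}{t_2}$, $u = \app[\,]{u_1}{t_2}$; (c) the redex lies inside the argument, so $\bangFCtxt = \app[\,]{\cbnToBangAGK{t_1}}{\oc\bangFCtxt_2}$ (the alternative $\app[\,]{\cbnToBangAGK{t_1}}{\Hole}$ is impossible since $\oc\cbnToBangAGK{t_2}$ is not a redex)---apply the induction hypothesis to $t_2$ and take $\cbnFCtxt = \app[\,]{t_1}{\cbnFCtxt_2}$, $u = \app[\,]{t_1}{u_2}$. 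The case $t = t_1\esub{x}{t_2}$ is analogous: at the root $\cbnToBangAGK{t}$ can only be a $\bangSymbSubs$-redex, and since $\oc\cbnToBangAGK{t_2}$ is already a bang the distance list is empty, so the step is $\cbnToBangAGK{t_1}\esub{x}{\oc\cbnToBangAGK{t_2}} \to \cbnToBangAGK{t_1}\isub{x}{\cbnToBangAGK{t_2}}$, which equals $\cbnToBangAGK{t_1\isub{x}{t_2}}$ by the substitution lemma and matches $t \cbnArr_F<s> t_1\isub{x}{t_2}$; the two ``inside a subterm'' sub-cases are treated exactly as for applications. In all cases, compositionality of $\cbnToBangAGK{\cdot}$ on contexts yields $\cbnToBangAGK{\cbnFCtxt} = \bangFCtxt$, and the capture conditions of the rules transfer because the embedding commutes with binders and free variables.

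The main obstacle is the book-keeping in the application and ES cases, namely verifying that the \BANGSymb context $\bangFCtxt$ cannot decompose $\cbnToBangAGK{t}$ in a manner incompatible with the embedding---in particular it never separates an embedding-inserted $\oc$ from its argument, and in the $\bangSymbBeta$ base case the distance list context of the \BANGSymb redex is precisely the image of a \CBNSymb list context. Once the substitution and list-context reflection lemmas are in place, these checks become routine; there is no delicate termination- or confluence-style reasoning here, so the real work is isolating those two auxiliary facts cleanly.
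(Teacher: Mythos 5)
Your proposal is correct and follows essentially the same route as the paper: the paper inducts on the \BANGSymb context $\bangFCtxt$ while you induct on the term $t$ with a case analysis on the redex position, but the two decompositions coincide case by case, and you rely on the same ingredients (the substitution lemma $\cbnToBangAGK{(t\isub{x}{u})} = \cbnToBangAGK{t}\isub{x}{\cbnToBangAGK{u}}$, reading the distance list context back through the embedding, compositionality of $\cbnToBangAGK{\cdot}$ on contexts, and the impossibility of $\derSymb$- or $\oc$-rooted decompositions of image terms, which makes the $\bangSymbBang$ case vacuous).
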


\Cref{lem:Cbn_Reverse_Simulation_One_Step} states that any \BANGSymb
step from the image $\cbnToBangAGK{t}$ of a \CBNSymb term $t$ (which
is necessarily diligent, because $\cbnToBangAGK{t}$ is a \bangFNF<d!>)
actually simulates a \CBNSymb step from $t$.
In~\Cref{ex:cbn-simulation}, $\cbnToBangAGK{t_0}$
$\bangSymbBeta$-reduces in the context $\bangFCtxt = \abs{z}{\Hole}$
to $\abs{z}{((\app[\,]{y}{\app{\oc x}{\oc x}})\esub{x}{\oc
(\app{z}{\oc z})})}$, which is indeed equal to $\cbnToBangAGK{t_1}$,
and 
$t_0 \cbnArr_F<dB> t_1$ in the context $\cbnFCtxt = \abs{z}{\Hole}$ as
well, with $\cbnToBangAGK{\cbnFCtxt} = \bangFCtxt$. Note that
\Cref{lem:Cbn_Reverse_Simulation_One_Step} is vacuously true for $\rel
= \bangSymbBang$, since there is no term $t$ such that $\derSymb$
occurs in $\cbnToBangAGK{t}$.
\Cref{l:one-step-cnb,l:reverse-simulation-cbn} have some
significant~consequences:

\begin{restatable}{corollary}{RecCbnPropFull}
    \label{cor:cbn prop-full}%
    Let $t, u \in \setCbnTerms$ and $s' \in \setBangTerms$. 
    \begin{enumerate}
    \item \textbf{(Stability)}: if $\cbnToBangAGK{t} \bangArrSet*_F
        s'$ then there is $s \in \setCbnTerms$ such that
        $\cbnToBangAGK{s} = s'$.

    \item \textbf{(Normal Forms)}: $t$ is a \cbnSetFNF if and only if
        $\cbnToBangAGK{t}$ is a \bangSetFNF.

    \item \textbf{(Simulations)}: $t \cbnArrSet*_F u$ if and only if
        $\cbnToBangAGK{t} \bangArrSet*_F \cbnToBangAGK{u}$. Moreover,
        the number of $\cbnSymbBeta/\cbnSymbSubs$-steps on the left
        matches the number $\bangSymbBeta/\bangSymbSubs$-steps on the
        right.
    \end{enumerate}
\end{restatable}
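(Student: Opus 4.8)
The plan is to obtain all three items as straightforward iterations of the two one-step results \Cref{lem:Cbn_Simulation_One_Step} and \Cref{lem:Cbn_Reverse_Simulation_One_Step}, establishing them in the stated order since the proof of \emph{Simulations} reuses material from the proof of \emph{Stability}.

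For \emph{Stability}, I would proceed by induction on the length of $\cbnToBangAGK{t} \bangArrSet*_F s'$. The empty case is trivial (take $s \coloneqq t$). For a non-empty sequence $\cbnToBangAGK{t} \bangArrSet*_F s'' \bangArrSet_F s'$, the induction hypothesis provides $r \in \setCbnTerms$ with $\cbnToBangAGK{r} = s''$, and then \Cref{lem:Cbn_Reverse_Simulation_One_Step} applied to $\cbnToBangAGK{r} \bangArrSet_F s'$ yields the desired $s \in \setCbnTerms$ with $\cbnToBangAGK{s} = s'$; note it also produces a step $r \cbnArrSet_F s$, which is exactly what is needed to drive the Simulations argument below.

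For \emph{Normal Forms}, both implications are argued by contraposition: if $t$ is not a \cbnSetFNF, a step $t \cbnArrSet_F u$ is turned by \Cref{lem:Cbn_Simulation_One_Step} into a step $\cbnToBangAGK{t} \bangArrSet_F \cbnToBangAGK{u}$; conversely, if $\cbnToBangAGK{t}$ is not a \bangSetFNF, a step $\cbnToBangAGK{t} \bangArrSet_F u'$ is reflected by \Cref{lem:Cbn_Reverse_Simulation_One_Step} into a step $t \cbnArrSet_F u$. For \emph{Simulations}, the left-to-right direction is an immediate induction concatenating the \BANGSymb steps produced by \Cref{lem:Cbn_Simulation_One_Step}; the converse is an induction on the length of $\cbnToBangAGK{t} \bangArrSet*_F \cbnToBangAGK{u}$, where the first step $\cbnToBangAGK{t} \bangArrSet_F s_1$ is handled by \Cref{lem:Cbn_Reverse_Simulation_One_Step}, which in particular gives $r \in \setCbnTerms$ with $\cbnToBangAGK{r} = s_1$ and $t \cbnArrSet_F r$, so that the induction hypothesis applies to $\cbnToBangAGK{r} \bangArrSet*_F \cbnToBangAGK{u}$. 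The step-count matching is free in both directions, since \Cref{lem:Cbn_Simulation_One_Step} and \Cref{lem:Cbn_Reverse_Simulation_One_Step} put $\cbnSymbBeta$/$\bangSymbBeta$ and $\cbnSymbSubs$/$\bangSymbSubs$ in correspondence one step at a time, and $\bangSymbBang$-steps never occur out of the image of $\cbnToBangAGK{\cdot}$.

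I do not expect a genuine obstacle: the difficulty is entirely absorbed by \Cref{lem:Cbn_Reverse_Simulation_One_Step}. The only point requiring minimal care is that reverse simulation is available only on terms lying in the image of $\cbnToBangAGK{\cdot}$, which is why each induction above must carry along the fact that the intermediate \BANGSymb term has the form $\cbnToBangAGK{\cdot}$; in the \CBNSymb case this is painless precisely because $\cbnToBangAGK{\cdot}$ never introduces $\derSymb$, so no administrative $\bangSymbBang$-steps ever intervene — this is exactly the extra complication that will have to be addressed in the \CBVSymb counterpart.
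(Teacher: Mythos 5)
Your proposal is correct and takes essentially the same route as the paper: the paper's proof just packages your inductions (one-step simulation and reverse simulation iterated over the sequence length, with normal forms obtained by contraposition and stability carrying the preimage along) into a parametric lemma over families of contexts, instantiated here with full contexts, so that the very same argument can be reused verbatim for the surface and internal fragments. The only micro-detail you leave implicit is the injectivity of $\cbnToBangAGK{\cdot}$, used in the base case of reverse simulation of sequences to conclude $t = u$ from $\cbnToBangAGK{t} = \cbnToBangAGK{u}$, which is immediate.
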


These results deserve some comments. Point 1 states that the image of
the \CBNSymb embedding is \emph{stable under reduction}. However,
it is not stable under expansion. For
instance, 
$\der{\oc x} \bangArr_S x = \cbnToBangAGK{x}$,
although $\der{\oc x}$ does not belong to the embedding's image, which
only contains terms without $\derSymb$. 
Point 2 guarantees the \emph{preservation of normal forms} in both
directions. Finally, Point 3 concerns the \emph{preservation of
reduction sequences}. It is worth highlighting that this is an
equivalence, enabling to inject reduction sequences from \CBNSymb into
\BANGSymb and project them back from \BANGSymb into \CBNSymb. This is
a key property allowing in particular to infer confluence and factorization
for $\CBNSymb$ from that for $\BANGSymb$.

The reader may wonder whether similar preservation results hold for
\surface reduction. Since it is a subreduction of full reduction,
\Cref{cor:cbn prop-full}.1 already implies stability for \surface
reduction. However, it does not imply preservation of \surface normal
forms, and only yields back and forth simulation of \surface reduction
via \full reduction, which is not exactly what we want:
$\cbnToBangAGK{t} \bangArrSet*_F \cbnToBangAGK{u}$ if $t \cbnArrSet*_S
u$, and $t \cbnArrSet*_F u$ if $\cbnToBangAGK{t} \bangArrSet*_S
\cbnToBangAGK{u}$. So let us come back to analyze the situation for
the \emph{one-step} simulation and reverse simulation.  Since \surface
contexts are special cases of \full contexts,  then $t \cbnArr_S<R> u$
implies $\cbnToBangAGK{t}
\bangArr_{\cbnToBangAGK{\cbnSCtxt}\cbnCtxtPlug{\cbnToBangAGK{R}}}
\cbnToBangAGK{u}$ by \Cref{l:one-step-cnb}. To prove that this
simulating step is actually a \surface step, we need an additional
property: that \CBNSymb \surface contexts are translated into
\BANGSymb \surface contexts (\Cref{lem:Cbn Redex Position
Stability}.1). A more subtle analysis will be required for \surface reverse
simulation: positions of \BANGSymb \surface \emphit{redexes} are
always in the image of \CBNSymb \surface contexts:

\begin{restatable}{lemma}{RecCBNCtxtStability}
    \label{lem:Cbn Redex Position Stability}%
    \mbox{}
    \begin{enumerate}
    \item \textbf{(\CBNSymb $\rightarrow$ \BANGSymb)} %
        If $\cbnSCtxt \in \cbnSCtxtSet$, then
        $\cbnToBangAGK{\cbnSCtxt} \in \bangSCtxtSet$.
    \item \textbf{(\BANGSymb $\rightarrow$ \CBNSymb)} %
        If $\bangSCtxt \in \bangSCtxtSet$ and $\cbnFCtxt \in
        \cbnFCtxtSet$ such that $\cbnToBangAGK{\cbnFCtxt} =
        \bangSCtxt$, then $\cbnFCtxt \in \cbnSCtxtSet$.
    \end{enumerate}
\end{restatable}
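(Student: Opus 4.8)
The plan is to prove both statements by straightforward structural induction on the relevant context, matching the grammars of \CBNSymb surface contexts, \CBNSymb full contexts, and \BANGSymb surface contexts.

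For Point 1, I would proceed by induction on the structure of $\cbnSCtxt \in \cbnSCtxtSet$. The base case $\cbnSCtxt = \Hole$ gives $\cbnToBangAGK{\Hole} = \Hole \in \bangSCtxtSet$. For the inductive cases, I inspect each production of the \CBNSymb surface grammar: for $\cbnSCtxt = \app{\cbnSCtxt'}{t}$, the embedding gives $\cbnToBangAGK{\cbnSCtxt} = \app{\cbnToBangAGK{\cbnSCtxt'}}{\oc \cbnToBangAGK{t}}$; by the IH $\cbnToBangAGK{\cbnSCtxt'} \in \bangSCtxtSet$, and since $\bangSCtxtSet$ is closed under $\app{\bangSCtxt'}{t}$ the result follows (the $\oc$ is only on the argument side, which is fine). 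For $\cbnSCtxt = \abs{x}{\cbnSCtxt'}$ we get $\abs{x}{\cbnToBangAGK{\cbnSCtxt'}}$, again in $\bangSCtxtSet$ by the IH and the abstraction production. For $\cbnSCtxt = \cbnSCtxt'\esub{x}{t}$ we get $\cbnToBangAGK{\cbnSCtxt'}\esub{x}{\oc\cbnToBangAGK{t}} \in \bangSCtxtSet$ by the IH and the ES-on-the-left production of $\bangSCtxtSet$. The crucial observation making this work is that the \CBNSymb surface grammar never puts the hole in the argument of an application or ES, so the translated hole is never placed under a $\oc$; hence $\cbnToBangAGK{\cbnSCtxt}$ stays within $\bangSCtxtSet$.

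For Point 2, I would argue by induction on the structure of $\cbnFCtxt \in \cbnFCtxtSet$, using the hypothesis $\cbnToBangAGK{\cbnFCtxt} = \bangSCtxt \in \bangSCtxtSet$. If $\cbnFCtxt = \Hole$, then $\cbnFCtxt \in \cbnSCtxtSet$ trivially. If $\cbnFCtxt = \app{\cbnFCtxt'}{t}$, then $\cbnToBangAGK{\cbnFCtxt} = \app{\cbnToBangAGK{\cbnFCtxt'}}{\oc\cbnToBangAGK{t}}$; for this to be a surface context with the hole on the left, $\cbnToBangAGK{\cbnFCtxt'}$ must itself be a surface context, so the IH gives $\cbnFCtxt' \in \cbnSCtxtSet$ and thus $\cbnFCtxt \in \cbnSCtxtSet$. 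Similarly for $\cbnFCtxt = \abs{x}{\cbnFCtxt'}$ and $\cbnFCtxt = \cbnFCtxt'\esub{x}{t}$: the translation lies in $\bangSCtxtSet$ only if the sub-context's translation does, and the IH closes the case. The key cases are $\cbnFCtxt = \app{t}{\cbnFCtxt'}$ and $\cbnFCtxt = t\esub{x}{\cbnFCtxt'}$, i.e. those full contexts with the hole in argument position that are \emph{not} surface contexts: here $\cbnToBangAGK{\cbnFCtxt}$ equals $\app{\cbnToBangAGK{t}}{\oc\cbnToBangAGK{\cbnFCtxt'}}$ or $\cbnToBangAGK{t}\esub{x}{\oc\cbnToBangAGK{\cbnFCtxt'}}$, which places the hole of $\cbnToBangAGK{\cbnFCtxt'}$ strictly under a $\oc$. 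Such a context can never belong to $\bangSCtxtSet$, since $\bangSCtxtSet$ by definition forbids the hole under a $\oc$; this contradicts the hypothesis $\cbnToBangAGK{\cbnFCtxt} = \bangSCtxt$, so these cases are vacuous.

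The main obstacle is really just being careful in Point 2 to rule out the bad cases. This amounts to the syntactic fact that for any context $\bangFCtxt$, if $\bangFCtxt$ contains the hole under a $\oc$ then $\bangFCtxt \notin \bangSCtxtSet$ (a trivial induction on $\bangFCtxt$), combined with the observation that $\cbnToBangAGK{\cdot}$ places a fresh $\oc$ exactly over the translated argument of every application and ES. Once that is noted, the $\app{t}{\cbnFCtxt'}$ and $t\esub{x}{\cbnFCtxt'}$ cases collapse to impossibility and the remaining cases are immediate from the induction hypothesis. No delicate calculation is needed; the argument is entirely a matter of reading off the grammars and the clauses of the embedding.
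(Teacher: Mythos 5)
Your proposal is correct and follows essentially the same route as the paper: Point 1 by structural induction over the \CBNSymb surface grammar, and Point 2 by an induction-plus-inversion argument whose bad cases ($\app{t}{\cbnFCtxt'}$ and $t\esub{x}{\cbnFCtxt'}$) are killed by the observation that no context in $\bangSCtxtSet$ can have its hole under a $\oc$, exactly the impossibility the paper invokes. The only (inessential) difference is that you induct on the \CBNSymb full context $\cbnFCtxt$ while the paper inducts on the \BANGSymb surface context $\bangSCtxt$; the case analyses are dual and both rely on the same grammar inversion.
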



Thanks to \Cref{lem:Cbn Redex Position Stability}, one-step simulation
and reverse simulation
(\Cref{l:one-step-cnb,l:reverse-simulation-cbn}) can be iterated 
to obtain the following results about \emph{surface} reduction.

\begin{restatable}{corollary}{RecCbnPropSurface}
    \label{cor:cbn prop-surface}
    Let $t, u \in \setCbnTerms$  and $s' \in \setBangTerms$.  
    \begin{enumerate}
    \item \textbf{(Stability)}: if $\cbnToBangAGK{t} \bangArrSet*_S
        s'$ then there is $s \in \setCbnTerms$ such that
        $\cbnToBangAGK{s} = s'$.
        \label{lem:cbn_prop-surface_Stability}%

    \item \textbf{(Normal Forms)}: $t \text{ is a \cbnSetSNF}$ if and
        only if $\cbnToBangAGK{t}$ is a \bangSetSNF.
        \label{lem:cbn_prop-surface_normalForms}%

    \item \textbf{(Simulations)}: $t \cbnArrSet*_S u$ if and only if
        $\cbnToBangAGK{t} \bangArrSet*_S \cbnToBangAGK{u}$. Moreover,
        the number of $\cbnSymbBeta/\cbnSymbSubs$-steps on the left
        matches the number of $\bangSymbBeta/\bangSymbSubs$-steps on
        the right.
        \label{lem:cbn_prop-surface_Simulations}%
    \end{enumerate}
\end{restatable}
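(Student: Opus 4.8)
The plan is to obtain all three items by iterating the one-step results of this section --- \Cref{l:one-step-cnb} and \Cref{l:reverse-simulation-cbn} --- exactly as \Cref{cor:cbn prop-full} is obtained from them, the only new ingredient being the context bookkeeping supplied by \Cref{lem:Cbn Redex Position Stability}. \textbf{Item 1} needs nothing new: since $\bangArrSet_S$ is contained in $\bangArrSet_F$, the hypothesis $\cbnToBangAGK{t} \bangArrSet*_S s'$ entails $\cbnToBangAGK{t} \bangArrSet*_F s'$, and \Cref{cor:cbn prop-full}.1 already returns the required $s \in \setCbnTerms$ with $\cbnToBangAGK{s} = s'$.

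For \textbf{Item 3}, left-to-right, I would induct on the length of $t \cbnArrSet*_S u$: splitting off the first step, which fires a $\rel$-redex ($\rel \in \{\cbnSymbBeta,\cbnSymbSubs\}$) in some $\cbnSCtxt \in \cbnSCtxtSet$, \Cref{l:one-step-cnb} makes $\cbnToBangAGK{t}$ fire the $\cbnToBangAGK{\rel}$-redex in $\cbnToBangAGK{\cbnSCtxt}$, and \Cref{lem:Cbn Redex Position Stability}.1 guarantees $\cbnToBangAGK{\cbnSCtxt} \in \bangSCtxtSet$, so this is a \surface step of rule $\bangSymbBeta$ or $\bangSymbSubs$; prepending it to the sequence given by the induction hypothesis closes the case. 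Right-to-left, I would induct on the length of the Bang sequence, strengthening the statement to: if $\cbnToBangAGK{t} \bangArrSet*_S s'$ then $s' = \cbnToBangAGK{s}$ and $t \cbnArrSet*_S s$ for some $s \in \setCbnTerms$ (the original claim then follows by taking $s' \coloneqq \cbnToBangAGK{u}$ and using injectivity of $\cbnToBangAGK{\cdot}$). In the inductive step, split off the first step $\cbnToBangAGK{t} \bangArrSet_S s_1'$, which fires some $\rel'$ in some $\bangSCtxt \in \bangSCtxtSet$; viewing it as a \full step, \Cref{l:reverse-simulation-cbn} returns $s_1 \in \setCbnTerms$, $\rel \in \{\cbnSymbBeta,\cbnSymbSubs\}$ and $\cbnFCtxt \in \cbnFCtxtSet$ with $\cbnToBangAGK{s_1} = s_1'$, $\cbnToBangAGK{\rel} = \rel'$, $\cbnToBangAGK{\cbnFCtxt} = \bangSCtxt$ and $t$ firing its $\rel$-redex in $\cbnFCtxt$; then \Cref{lem:Cbn Redex Position Stability}.2 upgrades $\cbnFCtxt$ to a \CBNSymb \surface context (since $\bangSCtxt \in \bangSCtxtSet$ and $\cbnToBangAGK{\cbnFCtxt} = \bangSCtxt$), so $t \cbnArrSet_S s_1$, and the induction hypothesis applied to $\cbnToBangAGK{s_1} = s_1' \bangArrSet*_S s'$ finishes. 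The step-count matching is immediate in both directions, since the correspondence is one-to-one and preserves the rule kind.

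\textbf{Item 2} I would read off the one-step arguments via contrapositives: a \surface step from $t$ produces, through \Cref{l:one-step-cnb} and \Cref{lem:Cbn Redex Position Stability}.1, a \surface step from $\cbnToBangAGK{t}$, so $\cbnToBangAGK{t}$ is not a \bangSetSNF whenever $t$ is not a \cbnSetSNF; conversely, a \surface step from $\cbnToBangAGK{t}$ produces, through \Cref{l:reverse-simulation-cbn} and \Cref{lem:Cbn Redex Position Stability}.2, a \surface step from $t$. The single delicate point --- precisely why \Cref{lem:Cbn Redex Position Stability} was isolated beforehand --- is guaranteeing that the simulating and reverse-simulating steps stay \emph{surface} rather than merely \emph{full}; in the backward direction this hinges on the Bang context witnessing the redex being literally the $\cbnToBangAGK{\cdot}$-image of the \CBNSymb context that fired the \CBNSymb redex, without which \Cref{lem:Cbn Redex Position Stability}.2 could not be invoked. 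Everything else is routine induction, requiring no separate stability argument to keep going, because \Cref{l:reverse-simulation-cbn} already hands back a term inside the image of $\cbnToBangAGK{\cdot}$.
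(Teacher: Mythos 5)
Your proposal is correct and follows essentially the same route as the paper: the paper packages the iteration of the one-step simulation and reverse simulation lemmas into an auxiliary parametrized lemma and instantiates it with the surface context families via \Cref{lem:Cbn Redex Position Stability}, which is exactly the induction you carry out inline (your derivation of stability from \Cref{cor:cbn prop-full}.1 is also the shortcut the paper's own prose points out). No gaps.
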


Our results for \CBNSymb notably extend the ones in
\cite{BucciarelliKesnerRiosViso20,BucciarelliKesnerRiosViso23}, where
it was only shown that \cbnSetSNF translates to \bangSetSNF, and that
\CBNSymb surface reduction is simulated by \BANGSymb surface
reduction: we went further by encompassing their converses.

\subsection{The Call-by-Value Calculus \CBVSymb and its embedding into \BANGSymb}
\label{subsec:Call-by-Value_Calculus_and_Embedding}

In \CBVSymb, \emphasis{surface contexts} $\cbvSCtxt \in \cbvSCtxtSet$
are defined below: the hole cannot be under an abstraction. To align the notations, in \CBVSymb full contexts
are denoted by $\cbvFCtxt \in \cbvFCtxtSet$ and list contexts by
$\cbvLCtxt \in \cbvLCtxtSet$.
\begin{align*}
    \emphasis{(\CBVSymb Surface Contexts)}& &
    \cbvSCtxt&\;\Coloneqq  \Hole
    \vsep \app[\,]{\cbvSCtxt}{t}
    \vsep \app[\,]{t}{\cbvSCtxt}
    \vsep \cbvSCtxt\esub{x}{t}
    \vsep t\esub{x}{\cbvSCtxt}
\end{align*}
As explained in~\cref{sec:basic-notions}, reductions in $\CBVSymb$ are
defined by taking the set of rewrite rules $\{\cbvSymbBeta,
\cbvSymbSubs\}$ defined below and the sets of contexts $\cbvSCtxtSet$
and $\cbvFCtxtSet$.
\begin{equation*}
	\app{\cbvLCtxt\cbvCtxtPlug{\abs{x}{t}}}{u}
	\;\cbvMapstoBeta\;
	\cbvLCtxt\cbvCtxtPlug{t\esub{x}{u}}
	\hspace{2cm}
	t\esub{x}{\cbvLCtxt\cbvCtxtPlug{v}}		
	\;\cbvMapstoSubs\;
	\cbvLCtxt\cbvCtxtPlug{t\isub{x}{v}}
\end{equation*}
Rule \cbvSymbBeta (\resp\ \cbvSymbSubs) is capture-free: no free
variable of $u$ (\resp $t$) is captured by context $\cbvLCtxt$. The
\emphasis{\CBVSymb surface reduction} is the relation $\cbvArrSet_S
\,\coloneqq\, \cbvArrSet_S<dB> \cup \cbvArrSet_S<sV>$, while the
\emphasis{\CBVSymb full reduction} is the relation $\cbvArrSet_F
\,\coloneqq\, \cbvArrSet_F<dB> \cup \cbvArrSet_F<sV>$.

The calculi \CBNSymb and \CBVSymb differ in that \CBNSymb can always
fire an ES (rule $\cbnSymbSubs$), while \CBVSymb only does when the ES
argument is a value, possibly wrapped by a finite list of ES (rule
$\cbvSymbSubs$). So \eg, for $\cbvSCtxt =
(\app{\app{y}{x}}{x})\esub{x}{\Hole}$, we have:
\begin{equation}\label{ex:cbv-reduction}
	\begin{split}
        u_0
    &= \app{(\abs{x}{\app{\app{y}{x}}{x}})}{(\app{(\abs{z}{z})}{y})}
        \cbvArr_{\Hole\,\bangCtxtPlug{\cbvSymbBeta}}
        u_1
    = (\app{\app{y}{x}}{x})\esub{x}{\app{(\abs{z}{z})}{y}}
\\
        	&\cbvArr_{S<dB>}
        u'_1
    = (\app{\app{y}{x}}{x})\esub{x}{z\esub{z}{y}} 
        \cbvArr_{S<sV>}
        u_2
    = (\app{\app{y}{x}}{x})\esub{x}{y} 
        \cbvArr_{\Hole\,\bangCtxtPlug{\cbvSymbSubs}}
        u_3
    = \app{\app{y}{y}}{y}
    \end{split}
\end{equation}

Notice how reduction $\cbvArrSet_S$ unblocks
redexes: given $\delta \coloneqq \lambda z. zz$, the term
$t \coloneqq (\lambda y.  \delta) (xx) \delta$, which is
a normal form 
in Plotkin's CBV~\cite{Plotkin75}, is now non-terminating
$t \cbvArrSet_S \delta\esub{y}{xx} \delta \allowbreak\cbvArrSet_S
(zz)\esub{z}{\delta} \esub{y}{xx} \cbvArrSet_S (\delta
\delta)\esub{y}{xx} \cbvArrSet*_S (\delta \delta)\esub{y}{xx}$, as one
would expect, since $t$ is semantically equivalent to the diverging
term
$\delta\delta$~\cite{PaoliniSimonaRonchiDellaRocca99,PaoliniRonchi04,CarraroGuerrieri14,AccattoliGuerrieri16,AccattoliGuerrieri22}.

The \CBVSymb surface reduction is nothing but the well-known
\emph{weak} reduction that does not evaluate under abstractions.

\paragraph{Embedding \CBVSymb into \BANGSymb.}

\newcommand{\girardCbv}[1]{#1^{{\tt v_1}}}
\newcommand{\bkrvCbv}[1]{#1^{{\tt v_2}}}

Values (\ie, variables and abstractions) are the duplicable elements
of \CBVSymb. Girard's Call-by-Value encoding (used in
\cite{EhrhardGuerrieri16,GuerrieriManzonetto19} and noted
$\girardCbv{(\cdot)}$ here) is built upon this insight, placing a bang
in front of each variable $\girardCbv{x} = \oc x$ and abstraction
$\girardCbv{(\abs{x}{t})} = \oc\abs{x}{\girardCbv{t}}$. The encoding
of an application is $\girardCbv{(\app{t}{u})} =
\app{\der{\girardCbv{t}}}{\girardCbv{u}}$, where the $\derSymb$ is
used to enable a $\bangSymbBang$-step if $t$ (the left-hand side of
the application) is a value, so as to restore its functional role.
However, as highlighted in
\cite{BucciarelliKesnerRiosViso20,BucciarelliKesnerRiosViso23}, such a
definition fails normal forms preservation: a \CBVSymb normal form is
not necessarily encoded by  a \BANGSymb normal form, for example given
the normal term $t_0 = \app{x}{y}$ we have   $\girardCbv{t_0} =
\app[\,]{\der{\oc x}}{\oc y}$ which is not normal. Consequently,
\cite{BucciarelliKesnerRiosViso20,BucciarelliKesnerRiosViso23}
proposed an alternative encoding  (noted $\bkrvCbv{(\cdot)}$ here,
whose details are omitted for lack of space), based on the same
principle, but with an additional \emphit{super-development}: all $\bangSymbBang$-redexes appearing during the encoding on the
left of an application are eliminated \emph{on the fly}, so that the embedding $\bkrvCbv{(\cdot)}$ preserves normal forms (\eg, $\bkrvCbv{t_0} = \app[\,]{x}{\oc y}$, which is normal in \BANGSymb). 
But, as shown in \Cref{CE: CBV Reverse Simulation}, $\bkrvCbv{(\cdot)}$ breaks reverse simulation with respect to surface reduction.

\begin{figure}[t]
    \begin{equation*}
        \begin{array}{ccc}
            \app{(\abs{x}{\app{(\abs{y}{y})}{z})}}{z}                     &\quad\not\cbvArr_S\quad    &\app{(\abs{x}{y\esub{y}{z}})}{z}
        \\[0.05cm]
            \rotatebox[origin=c]{-90}{$\rightsquigarrow$}\;\bkrvCbv{\cdot}  &                           &\rotatebox[origin=c]{-90}{$\rightsquigarrow$}\;\bkrvCbv{\cdot} 
        \\[0.0cm]
            \app{(\abs{x}{\app{(\abs{y}{\oc y})}{\oc z})}}{\oc z}     &\quad\bangArr_S\quad       &\app{(\abs{x}{(\oc y)\esub{y}{\oc z})}}{\oc z}
        \end{array}
    \end{equation*}
    \caption{Counterexample to \CBVSymb reverse simulation using
    the embedding $\bkrvCbv{\cdot}$}
    \label{CE: CBV Reverse Simulation}
\end{figure}

We introduce a \emph{new} \CBVSymb embedding that preserves normal
forms and fulfills simulation \emph{and} reverse simulation (this is
one of our main contributions).%

\begin{definition}
	\label{def:cbvAGK_Embedding}
	The \emphasis{\CBVSymb embedding} $\cbvToBangAGK{\cdot} \colon
	\setCbvTerms \to \setBangTerms$ is defined as follows:
%
\begin{equation*}
	\begin{array}{rlp{.3cm}rlll}
		\cbvToBangAGK{x}
		&\coloneqq
		\oc x
		& &
		\multirow{2}{*}{$\cbvToBangAGK{(\app{t}{u})}$}
		&\multirow{2}{*}{$\coloneqq \bigg\{$}
		&
		\der{\app[]{\bangLCtxt<s>}{\cbvToBangAGK{u}}}
		&\text{ if } \cbvToBangAGK{t} = \bangLCtxt<\oc s>
		\\
		\cbvToBangAGK{(\abs{x}{t})}
		&\coloneqq 
		\oc\abs{x}{\oc\cbvToBangAGK{t}}
		& & & &
\der{\app[\,]{\der{\cbvToBangAGK{t}}}{\cbvToBangAGK{u}}}
&\text{ otherwise;}
		\\
		\cbvToBangAGK{(t\esub{x}{u})}
		&\coloneqq
		\cbvToBangAGK{t}\esub{x}{\cbvToBangAGK{u}}.
		& 		
\end{array}
\end{equation*}
\end{definition}

Note in particular that, thanks to super-development,
$\cbvToBangAGK{t}$ is always a \bangSetFNF<d!>.
For instance, $\cbvToBangAGK{(\abs{z}{z})} = \oc \abs{z} {\oc
\oc z}$ and $\cbvToBangAGK{(yxx)} = \Der{\der{\der{y \oc x}}\oc x}$,
whereas
$\cbvToBangAGK{(\app{(\abs{x}{\app{y}{\app{x}{x}}})}{(\app{I}{I})})} =
\Der{\app{\big(\abs{x}{\oc \der{\app{\der{\der{y \oc x}}}{\oc
x}}}\big)}{\, \der{\app{(\abs{z}{\oc \oc z})}{\oc \abs{z}{\oc \oc
z}}}}}.$

As in the \CBNSymb embedding, the modality $\oc$ plays a \emph{twofold} role
in our new \CBVSymb embedding.
First, $\cbvToBangAGK{\cdot}$ marks with $\oc$ subterms to be considered as
values, \ie\ potentially \emph{erasable} or \emph{duplicable}. This induces the use
of super-developments in the case of applications to avoid some
administrative steps that would otherwise affect preservation of
normal forms. Second, $\cbvToBangAGK{\cdot}$ marks the positions where
surface reduction must not occur: inside values; thus it introduces a \emph{second} (internal) $\oc$ in the encoding  of abstractions
to encapsulate its body and shield it from surface computation.
Additionally, to restore access to the abstraction's body when it is
applied, a second (external) $\derSymb$ is added to the encoding of
applications. These two principles highlights the dual role of $\oc$ in \BANGSymb: enabling duplication (and erasure) as well as isolating subterms
from surface computation processes.

The \CBVSymb embedding is extended to rule names, by defining
$\cbvToBangAGK{\cbvSymbBeta} \coloneqq \bangSymbBeta$ and
$\cbvToBangAGK{\cbvSymbSubs} \coloneqq  \bangSymbSubs$. 
Similarly to \CBNSymb, we have the fundamental simulation~result~below.

\begin{restatable}[\CBVSymb One-Step Simulation]{lemma}{RecCbvOneStepSimulation}
    \label{lem:Cbv_Simulation_One_Step}%
    Let $t, u \in \setBangTerms$, and $\rel \in \{\cbvSymbBeta,
    \cbvSymbSubs\}$. If $t \cbvArr_F<R> u$ then there is $ \bangFCtxt
    \in \bangFCtxtSet$ such that $\cbvToBangAGK{t}
    \bangArr_F<\cbvToBangAGK{\rel}>  \bangArrSet*_F<d!>
    \cbvToBangAGK{u}$, where $\bangFCtxt$ and all contexts used for
    the steps in $\bangArrSet*_F<d!>$ can be specified using
    $\cbvToBangAGK{\cbvFCtxt}, \rel$ and $t$.
\end{restatable}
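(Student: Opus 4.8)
The plan is to proceed by induction on the reduction $t \cbvArr_F<R> u$, which amounts to an induction on the full context $\cbvFCtxt$ closing the redex, with the base cases being the root rewrite rules $\cbvSymbBeta$ and $\cbvSymbSubs$ fired at the root (under a list context $\cbvLCtxt$). Before attacking those, I would first establish a few auxiliary facts about the embedding that will be used repeatedly. The most important one is a \emph{substitution lemma}: $\cbvToBangAGK{t\isub{x}{v}}$ is related to $\cbvToBangAGK{t}$ with $\cbvToBangAGK{v}$ substituted for $x$, up to some $\bangSymbBang$-steps; this is needed for the $\cbvSymbSubs$ base case. Note that since $v$ is a value, $\cbvToBangAGK{v}$ is always of the form $\bangLCtxt<\oc s>$ (in fact with empty $\bangLCtxt$: it is $\oc x$ or $\oc\abs{x}{\oc\cbvToBangAGK{t'}}$), so plugging it into an application context inside $\cbvToBangAGK{t}$ may turn an ``otherwise'' case $\der{\app{\der{\cdot}}{\cdot}}$ of the embedding clause into a ``then'' case $\der{\app{\bangLCtxt<s>}{\cdot}}$, generating exactly one $\bangSymbBang$-redex $\der{\oc s}$ per application node whose left side became a value — precisely the administrative steps $\bangArrSet*_F<d!>$ in the statement. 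I would also record a \emph{context lemma} describing how $\cbvToBangAGK{\cbvLCtxt<u>}$ decomposes: list contexts are translated to list contexts, so $\cbvToBangAGK{\cbvLCtxt<\abs{x}{t}>} = \bangLCtxt'<\oc\abs{x}{\oc\cbvToBangAGK{t}}>$ for the corresponding $\bangLCtxt'$, which is what makes the $\cbvSymbBeta$ base case go through.

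For the base case $\cbvSymbBeta$: $\app{\cbvLCtxt<\abs{x}{t}>}{u} \cbvMapstoBeta \cbvLCtxt<t\esub{x}{u}>$. By the context lemma, $\cbvToBangAGK{\cbvLCtxt<\abs{x}{t}>} = \bangLCtxt'<\oc\abs{x}{\oc\cbvToBangAGK{t}}>$, so the embedding of the application falls in the ``then'' branch: $\cbvToBangAGK{\app{\cbvLCtxt<\abs{x}{t}>}{u}} = \der{\app{\bangLCtxt'<\abs{x}{\oc\cbvToBangAGK{t}}>}{\cbvToBangAGK{u}}}$. One $\bangSymbBeta$-step at a distance (through $\bangLCtxt'$, inside the outer $\derSymb$) yields $\der{\bangLCtxt'<(\oc\cbvToBangAGK{t})\esub{x}{\cbvToBangAGK{u}}>}$, and then one $\bangSymbBang$-step $\der{\oc(\cdot)} \bangMapstoBang (\cdot)$ (again at a distance through $\bangLCtxt'$) gives $\bangLCtxt'<\cbvToBangAGK{t}\esub{x}{\cbvToBangAGK{u}}> = \cbvToBangAGK{\cbvLCtxt<t\esub{x}{u}>}$. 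So here the pattern is one $\bangSymbBeta$ followed by one $\bangSymbBang$. For the base case $\cbvSymbSubs$: $t\esub{x}{\cbvLCtxt<v>} \cbvMapstoSubs \cbvLCtxt<t\isub{x}{v}>$. Its embedding is $\cbvToBangAGK{t}\esub{x}{\bangLCtxt'<\oc s>}$ where $\cbvToBangAGK{\cbvLCtxt<v>} = \bangLCtxt'<\oc s>$; one $\bangSymbSubs$-step at a distance produces $\bangLCtxt'<\cbvToBangAGK{t}\isub{x}{s}>$, and then the substitution lemma tells us that $\cbvToBangAGK{t}\isub{x}{\oc s}$ reduces by administrative $\bangSymbBang$-steps to $\cbvToBangAGK{t\isub{x}{v}}$ (the discrepancy being exactly the freshly-created-value application nodes discussed above), which closes this case after wrapping in $\bangLCtxt'$.

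For the inductive step, the context $\cbvFCtxt$ is one of $\app{\cbvFCtxt'}{t}$, $\app{t}{\cbvFCtxt'}$, $\abs{x}{\cbvFCtxt'}$, $\cbvFCtxt'\esub{x}{t}$, or $t\esub{x}{\cbvFCtxt'}$; in each I apply the IH to the subreduction under $\cbvFCtxt'$ and then propagate. The cases $\abs{x}{\cbvFCtxt'}$, $\cbvFCtxt'\esub{x}{t}$ and $t\esub{x}{\cbvFCtxt'}$ are straightforward since the embedding acts homomorphically on these and the surrounding context (or list context) simply carries the $\bangSymbBeta/\bangSymbSubs$ step and the trailing $\bangArrSet*_F<d!>$ steps along. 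The delicate cases are the two application positions, because the embedding clause for $\app{-}{-}$ branches on whether the \emph{left} argument's embedding is of the form $\bangLCtxt<\oc s>$, and a reduction step may change that status. The hard part will be arguing that when we reduce inside the left subterm $t$ of $\app{t}{u}$, the shape of $\cbvToBangAGK{t}$ either stays in the same branch or moves in a controlled way so that the additional $\bangSymbBang$-steps needed to re-align with $\cbvToBangAGK{\app{t'}{u}}$ are again purely administrative and specifiable from the data. I expect this ``branch-switching'' bookkeeping — showing that whatever extra $\der{\oc s}$ redexes appear or disappear at the application node under scrutiny are exactly absorbed by the $\bangArrSet*_F<d!>$ tail, with all the contexts determined by $\cbvToBangAGK{\cbvFCtxt}$, $\rel$ and $t$ as the statement demands — to be the main obstacle, and I would isolate it as a separate lemma about how $\cbvToBangAGK{\cdot}$ interacts with one-step reduction at the top of a subterm, before assembling the induction.
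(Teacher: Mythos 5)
There is a genuine gap, on two counts. First, your account of where the administrative steps come from is wrong: the substitution lemma for the embedding is an \emph{exact equality}, namely $\cbvToBangAGK{(t\isub{x}{v})} = \cbvToBangAGK{t}\isub{x}{\bangStrip{\cbvToBangAGK{v}}}$, not an equality up to $\bangSymbBang$-steps. The branch-switching you describe at application nodes cannot happen under substitution, because the only thing a \CBVSymb substitution replaces is a variable, and $\cbvToBangAGK{x} = \oc x$ already satisfies the bang predicate; so an application whose left side is the substituted variable is already in the super-developed branch before the substitution, and stays there afterwards (this is exactly the paper's auxiliary fact that $\bangBangPred{\cbvToBangAGK{t}}$ is invariant under $\isub{x}{v}$). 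Consequently the root $\cbvSymbSubs$ case needs \emph{no} trailing $\bangArrSet*_F<d!>$ steps at all; note also that your write-up wavers between substituting $s$ and $\oc s$ --- the $\bangSymbSubs$-step of \BANGSymb substitutes $s = \bangStrip{\cbvToBangAGK{v}}$, and with the exact substitution lemma the target is hit on the nose. The administrative steps in the statement actually arise elsewhere: one $\bangSymbBang$-step after every $\bangSymbBeta$-step (your dB root case has this right), plus possibly one more to re-align with the super-development when the redex sits to the left of an application in the surrounding context.

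Second, the part you explicitly defer --- ``branch-switching bookkeeping'' when the reduced subterm occurs in function position of an application inside $\cbvFCtxt$ --- is precisely the crux of the proof, and your plan does not contain it. The paper's route is not an induction on the context in the main lemma at all: it first defines a super-developed context embedding $\cbvToBangAGK[*]{\cdot}$, the predicates $\cbvFuncPred{\cdot}$ and $\bangBangPred{\cdot}$, and proves (by induction on $\cbvFCtxt$) the decomposition $\cbvToBangAGK{(\cbvFCtxt<t>)} = \cbvToBangAGK[*]{\cbvFCtxt}\bangCtxtPlug{\bangStrip{\cbvToBangAGK{t}}}$ when $\cbvFuncPred{\cbvFCtxt}$ and $\bangBangPred{\cbvToBangAGK{t}}$, and $\cbvToBangAGK{\cbvFCtxt}\bangCtxtPlug{\cbvToBangAGK{t}}$ otherwise; the simulation is then a four-way case split (rule $\times$ predicate), where in the super-development case of $\cbvSymbBeta$ an extra $\bangSymbBang$-step in the context $\cbvToBangAGK[*]{\cbvRmDst{\cbvFCtxt}}$ is produced. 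Without this lemma (or an equivalent), your induction cannot be closed in the application-left case, and you cannot justify the clause of the statement asserting that $\bangFCtxt$ and the contexts of the trailing $\bangArrSet*_F<d!>$ steps are specifiable from $\cbvToBangAGK{\cbvFCtxt}$, $\rel$ and $t$ --- which requires exhibiting them explicitly (e.g.\ $\cbvToBangAGK{\cbvFCtxt}\bangCtxtPlug{\der{\Hole}}$, $\cbvToBangAGK{\cbvFCtxt}$, $\cbvToBangAGK[*]{\cbvRmDst{\cbvFCtxt}}$).
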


Let us see how $\bangFCtxt$ and the contexts used in the steps
$\bangArrSet*_F<d!>$ are constructed: it highlights the difference
between \Cref{lem:Cbv_Simulation_One_Step} for \CBVSymb and
\Cref{l:one-step-cnb}~for~\CBNSymb.

\begin{itemize}
\item Additional administrative steps ($\bangArrSet*_F<d!>$) may be needed
    at the end. For example, for the \CBVSymb\ steps $u_0
    \cbvArrSet_F<dB> u_1$ and $u_2 \cbvArrSet_F<sV> u_3$ seen in
    \eqref{ex:cbv-reduction},~we~have:
    \begin{align}\label{ex:cbv-simulation}
        \cbvToBangAGK{u_0}
            &=
            \Der{\app{\big(\abs{x}{\oc \der{\app{\der{\der{y \oc x}}}{\, \oc x}}}\big)}{\, \der{\app{(\abs{z}{\oc \oc z})}{\oc y}}}}
            \nonumber
    \\
            &\bangArrSet_F<dB> 
            \Der{\, \oc \der{\app{\der{\der{y \oc x}}}{\, \oc x}}{\, \esub{x}{\der{\app{(\abs{z}{\oc \oc z})}{\oc y}}}}\,} = s'
    \\
            &\bangArrSet_F<d!>
            \der{\app{\der{\der{y \oc x}}}{\, \oc x}}{ \, \esub{x}{\der{\app{(\abs{z}{\oc \oc z})}{\oc y}}}}
                = \cbvToBangAGK{u_1}
             \nonumber
    \\[0.1em]
        \cbvToBangAGK{u_2} 
            &= \Der{ \app{\der{\der{y \oc x}}}{\, \oc x} }{\esub{x}{\oc y}}
            \bangArrSet_F<s!> 
            \Der{\app[\,]{\der{\der{\app[\,]{y}{\oc y}}}}{\oc y}}
                = \cbvToBangAGK{u_3}
                \nonumber
    \end{align}

\item In \CBNSymb one-step simulation the rule name and context are
    independently translated. It is slightly more subtle in \CBVSymb:
    the rule name translates to the corresponding one in \BANGSymb
    without any ambiguity, yet the translation of the context
    $\cbvFCtxt$ depends not only on the initial context $\cbvFCtxt$
    but also on the rule name $\rel$ and the initial term $t$. Two
    distinct situations can emerge:
    \begin{itemize}
    \item $\cbvSymbBeta$-steps require to add a dereliction to the
        translated context: for example, the $\cbvSymbBeta$-redex
        position $\Hole$ in $t = \app{(\abs{x}{x})}{y}$ needs to be
        translated to the redex position $\der{\Hole}$ in
        $\cbvToBangAGK{t} = \der{\app{(\abs{x}{\oc x})}{\oc y}}$.

    \item $\cbvSymbSubs$-steps may need to remove a dereliction from
        the translated context: for instance, the $\cbvSymbSubs$-redex
        position $\app[\,]{\Hole}{y}$ in $t =
        \app[\,]{(\abs{z}{x})\esub{x}{y}}{y}$ is translated to the
        redex position $\der{\app[\,]{\Hole}{\oc y}}$ in
        $\cbvToBangAGK{t} = \der{\app[\,]{(\abs{z}{\oc x})\esub{x}{\oc
        y}}{(\oc y)}}$. 
    	The context translation anticipates the
        super-development used in $\cbvToBangAGK{t}$.
    \end{itemize}

    Note that both situations can be detected by case-analysis on
    $\rel$ and $t$, where the target context translation is a slight
    variation over the original one.
\end{itemize}

While the \CBVSymb embedding $\bkrvCbv{\cdot}$ used in
\cite{BucciarelliKesnerRiosViso20,BucciarelliKesnerRiosViso23} successfully enables the simulation of \CBVSymb
into \BANGSymb, it falls short when it comes to reverse simulation, as
shown in \Cref{CE: CBV Reverse Simulation}. 
Therefore, $\bkrvCbv{\cdot}$ cannot be used to transfer dynamic properties from
\BANGSymb back to \CBVSymb, thus failing in particular to derive
\CBVSymb factorization from \BANGSymb (\Cref{sec:Factorization}).
Our new embedding instead satisfies reverse simulation.

\begin{restatable}[\CBVSymb One-Step Reverse Simulation]{lemma}{RecCbvOneStepReverseSimulation}
    \label{lem:Cbv_Reverse_Simulation_One_Step}%
    Let $t \in \setCbvTerms$, $u' \in \setBangTerms$, $\bangFCtxt \in
    \bangFCtxtSet$ and $\rel' \in
    \{\bangSymbBeta, \bangSymbSubs, \bangSymbBang\}$.
    If $u'$ is a \bangSetFNF<d!>, then%
    \begin{equation*}
        \cbvToBangAGK{t} \bangArr_F<R'>\bangArrSet*_F<d!> u'
            \ \implies \
        \left\{\begin{array}{lr}
            \exists\; u \in \setCbvTerms,
                &\cbvToBangAGK{u} = u'
        \\
            \exists\; \rel \in \{\cbvSymbBeta, \cbvSymbSubs\},
                &\cbvToBangAGK{\rel} = \rel'
        \\
            \exists\; \cbvFCtxt \in \cbvFCtxtSet,
        \end{array}\right\}
         \text{
        such that } t \cbvArr_F<R> u.
    \end{equation*}%
\end{restatable}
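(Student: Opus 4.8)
The plan is to strip the administrative content off the simulating sequence, reducing the claim to an \emph{administration-free} variant provable by structural induction on $t$, and then to use confluence of $\bangArrSet_F<d!>$ to pin down the endpoint. First note that $\cbvToBangAGK{t}$ is a \bangSetFNF<d!> (remarked right after \Cref{def:cbvAGK_Embedding}); hence if $\rel' = \bangSymbBang$ the hypothesis $\cbvToBangAGK{t} \bangArr_F<R'> s'$ is unsatisfiable and the statement is vacuous, and otherwise $\rel' \in \{\bangSymbBeta, \bangSymbSubs\}$, so setting $\rel \coloneqq \cbvSymbBeta$ (resp.\ $\cbvSymbSubs$) yields $\cbvToBangAGK{\rel} = \rel'$ by definition. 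It then suffices to prove: \emph{if $\cbvToBangAGK{t} \bangArr_F<R'> s'$ with $\rel' \in \{\bangSymbBeta, \bangSymbSubs\}$, then there are $u \in \setCbvTerms$ and $\cbvFCtxt \in \cbvFCtxtSet$ such that $t \cbvArr_F<R> u$ with redex at $\cbvFCtxt$, and $s' \bangArrSet*_F<d!> \cbvToBangAGK{u}$}. Granting this, the lemma follows: from $s' \bangArrSet*_F<d!> \cbvToBangAGK{u}$, the hypothesis $s' \bangArrSet*_F<d!> u'$, and the fact that both $\cbvToBangAGK{u}$ (always) and $u'$ (by assumption) are \bangSetFNF<d!>, confluence of $\bangArrSet_F<d!>$ forces $\cbvToBangAGK{u} = u'$. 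Confluence of $\bangArrSet_F<d!>$ comes from Newman's lemma: it is terminating (each step erases a $\derSymb$ and a $\oc$) and local confluence is routine, two $\bangSymbBang$-redexes being always disjoint or nested with no critical pair.

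For the administration-free claim I would induct on the structure of $t$, exploiting that the outermost constructor of $\cbvToBangAGK{t}$ is $\oc$ when $t$ is a variable or an abstraction, $\derSymb$ when $t$ is an application, and an explicit substitution when $t$ is a closure, whereas a $\bangSymbBeta$-redex (resp.\ $\bangSymbSubs$-redex) has an application (resp.\ an explicit substitution) as outermost constructor. Consequently the fired redex is either \emph{(a)} strictly inside the embedding of a proper subterm $t_i$ of $t$ — then the induction hypothesis on $t_i$ gives $t_i \cbvArr_F<R> t_i'$, and $u$, $\cbvFCtxt$ are rebuilt around it ($\cbvFCtxt = \app{\cbvFCtxt_1}{t_2}$, $\app{t_1}{\cbvFCtxt_2}$, $\abs{x}{\cbvFCtxt_0}$, $\cbvFCtxt_0\esub{x}{t_2}$, or $t_1\esub{x}{\cbvFCtxt_2}$) — or \emph{(b)} the ``root'' redex of the embedding of an application- or closure-subterm, which then matches a genuine \CBVSymb redex fired at $\cbvFCtxt = \Hole$: a root $\bangSymbSubs$-redex of $\cbvToBangAGK{t_1\esub{x}{t_2}} = \cbvToBangAGK{t_1}\esub{x}{\cbvToBangAGK{t_2}}$ forces $\cbvToBangAGK{t_2} = \bangLCtxt<\oc q>$, i.e.\ $t_2 = \cbvLCtxt\cbvCtxtPlug{v}$ for a value $v$, whence $t_1\esub{x}{t_2} \cbvMapstoSubs \cbvLCtxt\cbvCtxtPlug{t_1\isub{x}{v}} =: u$; a root $\bangSymbBeta$-redex in the ``$\bangLCtxt<\oc s>$'' clause of $\cbvToBangAGK{\app{t_1}{t_2}}$ forces $s$ to be an abstraction, i.e.\ $t_1 = \cbvLCtxt\cbvCtxtPlug{\abs{a}{p}}$, whence $\app{t_1}{t_2} \cbvMapstoBeta \cbvLCtxt\cbvCtxtPlug{p\esub{a}{t_2}} =: u$. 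In each case one must verify $s' \bangArrSet*_F<d!> \cbvToBangAGK{u}$, using two identities proved by easy structural inductions: \emph{(i)} $\cbvToBangAGK{\cbvLCtxt\cbvCtxtPlug{M}} = \bangLCtxt<\cbvToBangAGK{M}>$ for every term $M$, where $\bangLCtxt \in \bangLCtxtSet$ embeds the explicit substitutions of $\cbvLCtxt$; and \emph{(ii)} $\cbvToBangAGK{t_0\isub{x}{v}} = \cbvToBangAGK{t_0}\isub{x}{w}$ whenever $v$ is a value and $\cbvToBangAGK{v} = \oc w$ — this holds because each occurrence of $x$ in $\cbvToBangAGK{t_0}$ is either under a $\oc$ (and $\isub{x}{w}$ reproduces $\oc w = \cbvToBangAGK{v}$) or bare in an application-head-under-substitutions position (and $\isub{x}{w}$ reproduces the ``$\bangLCtxt<\oc s>$'' clause applied to $\cbvLCtxt\cbvCtxtPlug{v}$), and because substituting a value for a variable never changes which clause encodes an application.

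The main obstacle is the administrative bookkeeping behind ``$s' \bangArrSet*_F<d!> \cbvToBangAGK{u}$''. In case \emph{(b)} with $\bangSymbSubs$ no administrative step is needed: the \BANGSymb reduct $\bangLCtxt<\cbvToBangAGK{t_1}\isub{x}{w}>$ already equals $\cbvToBangAGK{u}$ by \emph{(i)}--\emph{(ii)}; in case \emph{(b)} with $\bangSymbBeta$ the step produces $\der{\bangLCtxt<(\oc\cbvToBangAGK{p})\esub{a}{\cbvToBangAGK{t_2}}>}$, which has a single $\bangSymbBang$-redex whose contraction gives $\bangLCtxt<\cbvToBangAGK{p}\esub{a}{\cbvToBangAGK{t_2}}> = \cbvToBangAGK{u}$. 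Case \emph{(a)} is the delicate one: once the induction hypothesis is re-inserted into the surrounding embedding pattern, I must argue that this pattern is either unchanged (hence equal to $\cbvToBangAGK{u}$) or needs exactly one more $\bangSymbBang$-step, the latter arising only for an application $\app{t_1}{t_2}$ encoded through the ``otherwise'' clause ($\cbvToBangAGK{t} = \der{\app{\der{\cbvToBangAGK{t_1}}}{\cbvToBangAGK{t_2}}}$) whose head $t_1$ reduces to a term $\cbvLCtxt\cbvCtxtPlug{v}$, so that $\der{\cbvToBangAGK{t_1'}}$ becomes a $\bangSymbBang$-redex whose contraction flips the encoding to the ``$\bangLCtxt<\oc s>$'' clause of $\cbvToBangAGK{\app{t_1'}{t_2}}$. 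I expect the genuinely technical point to be establishing that this readjustment always terminates on $\cbvToBangAGK{u}$ and never cascades further up the application spine; the remaining ingredients (termination and local confluence of $\bangArrSet_F<d!>$, and the ``dereliction bookkeeping'' relating $\cbvFCtxt$ to $\bangFCtxt$, which the statement deliberately leaves unspecified) are routine.
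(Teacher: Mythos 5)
Your plan follows essentially the same route as the paper's proof: a structural analysis of $t$ to reverse-translate the fired $\bangSymbBeta$/$\bangSymbSubs$ redex into a \CBVSymb\ redex (the $\bangSymbBang$ case being vacuous since $\cbvToBangAGK{t}$ is normal for $\bangArrSet_F<d!>$), administrative $\bangSymbBang$-steps to repair the application-clause mismatch so that the reduct joins $\cbvToBangAGK{u}$, and confluence of $\bangArrSet_F<d!>$ together with the fact that $u'$ and $\cbvToBangAGK{u}$ are both normal for $\bangArrSet_F<d!>$ to conclude $\cbvToBangAGK{u}=u'$. The one refinement the paper builds in, and which your plain induction on $t$ will also need, is a strengthened induction hypothesis covering $\bangStrip{\cbvToBangAGK{t_1}}$ when an application is encoded through the bang clause, since there the head subterm occurs stripped rather than literally as $\cbvToBangAGK{t_1}$, so the unstrengthened hypothesis does not directly apply.
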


\Cref{lem:Cbv_Reverse_Simulation_One_Step} states that any \BANGSymb
diligent step from the image $\cbvToBangAGK{t}$ of a \CBVSymb term $t$
actually simulates a \CBVSymb step from $t$. As expected, the same
subtleties encountered in the \CBVSymb one-step simulation (\Cref{lem:Cbv_Simulation_One_Step}) apply in
this last result, in particular regarding the construction of
$\cbvFCtxt$. In the \CBNSymb case, the absence of administrative steps
renders all sequences from images of \CBNSymb terms diligent, making
stability, normal form preservation and simulations direct
consequences of one-step simulation (\Cref{l:one-step-cnb}) and
reverse simulation (\Cref{l:reverse-simulation-cbn}). This is not the
case for \CBVSymb, due to the presence of administrative steps in the
simulation process. Indeed, when simulating \CBVSymb reduction within
\BANGSymb (\Cref{lem:Cbv_Simulation_One_Step}), administrative steps
are performed as soon as they become available, thus constructing a
diligent sequence. Conversely, projecting a reduction step from
\BANGSymb to \CBVSymb (\Cref{lem:Cbv_Reverse_Simulation_One_Step})
requires a diligent step. However, in the case of sequences,
in contrast to one-steps,  there is
no requirement for administrative steps to be correctly synchronized,
and this may lead to deviations from the embedding's image,
significantly complicating reverse simulation. Fortunately, the
diligence presented in \BANGSymb (\Cref{lem:Implicitation})
resynchronizes administrative steps yielding sequences
that~are~easy~to~project.%

\begin{restatable}{corollary}{RecCbvPropFull}
    \label{cor:cbv prop-full}%
    Let $t, u \in \setCbvTerms$ and $s' \in \setBangTerms$. 
    \begin{enumerate}
    \item \textbf{(Stability)}: 
        if $\cbvToBangAGK{t} \bangArrSet*_F s'$ and $s'$ is a \bangSetFNF<d!>, 
        	then $s' = \cbvToBangAGK{s}$ for some $s \in \setCbvTerms$.
            \label{lem:cbv_prop-full_Stability}%

    \item \textbf{(Normal Forms)}: $t \text{ is a \cbvSetFNF}$ if and
        only if $\cbvToBangAGK{t}$ is a \bangSetFNF.
        \label{lem:cbv_prop-full_NormalForms}%

    \item \textbf{(Simulations)}: $t \cbvArrSet*_F u$ if and only if
        $\cbvToBangAGK{t} \bangArrSet*_F \cbvToBangAGK{u}$. Moreover,
        the number of $\cbvSymbBeta/\cbvSymbSubs$-steps on the left
        matches the number $\bangSymbBeta/\bangSymbSubs$-steps on the
        right.
        \label{lem:cbv_prop-full_Simulations}%
    \end{enumerate}
\end{restatable}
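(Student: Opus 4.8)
The plan is to derive all three items from the one-step results \Cref{lem:Cbv_Simulation_One_Step} and \Cref{lem:Cbv_Reverse_Simulation_One_Step}, together with the diligence process \Cref{lem:Implicitation} and the fact (recorded right after \Cref{def:cbvAGK_Embedding}) that $\cbvToBangAGK{r}$ is always a \bangSetFNF<d!>. The forward halves of Points~2 and~3 are routine: iterating \Cref{lem:Cbv_Simulation_One_Step} along $t \cbvArrSet*_F u$ produces, for each \CBVSymb step, one $\bangSymbBeta$- or $\bangSymbSubs$-step of the matching kind followed by a possibly empty run of administrative $\bangSymbBang$-steps, and concatenation yields $\cbvToBangAGK{t} \bangArrSet*_F \cbvToBangAGK{u}$ with matching numbers of $\cbvSymbBeta$- and $\bangSymbBeta$-steps, and of $\cbvSymbSubs$- and $\bangSymbSubs$-steps; specialising $u$ to a one-step reduct gives the ``only if'' half of normal-form preservation. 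All the difficulty lies in the converse directions, which must cope with \emph{arbitrary}---not necessarily diligent---\BANGSymb reduction sequences out of the image of $\cbvToBangAGK{\cdot}$.

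The heart of the argument is an auxiliary claim: \emph{whenever $t \in \setCbvTerms$, $s'$ is a \bangSetFNF<d!>, and $\cbvToBangAGK{t} \bangArrSet*_{Fai} s'$, there is $s \in \setCbvTerms$ with $s' = \cbvToBangAGK{s}$ and $t \cbvArrSet*_F s$, the numbers of $\cbvSymbBeta$- and $\bangSymbBeta$-steps agreeing, and likewise for $\cbvSymbSubs$ and $\bangSymbSubs$}. To prove it I would first observe that a diligent sequence ending in a \bangSetFNF<d!> splits into consecutive blocks, each of the form ``a single computational step (of rule $\bangSymbBeta$ or $\bangSymbSubs$) followed by a maximal, possibly empty, run of $\bangSymbBang$-steps reaching a \bangSetFNF<d!>'': by definition of $\bangArrSet_{Fai}$ a computational step may fire only at a \bangSetFNF<d!>, so any administrative steps separating two computational ones (or preceding the final term) must be pushed through to a \bangSetFNF<d!> (which is reached, as $\bangArrSet_F<d!>$ is terminating); moreover, since $\cbvToBangAGK{t}$ has no $\bangSymbBang$-redex, the sequence opens with a computational step, so the decomposition starts at $\cbvToBangAGK{t}$ itself. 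One then inducts on the number of blocks: the source of the current block is, by induction hypothesis, in the image of $\cbvToBangAGK{\cdot}$, and the block is exactly of the shape $\bangArr_F<R'>\bangArrSet*_F<d!>$ reaching a \bangSetFNF<d!> required by \Cref{lem:Cbv_Reverse_Simulation_One_Step}; applying that lemma yields a \CBVSymb step between preimages and, crucially, keeps the block's target inside the image, feeding the induction. Concatenating the \CBVSymb steps block-by-block gives $t \cbvArrSet*_F s$ with $\cbvToBangAGK{s} = s'$, and the count bookkeeping is immediate since each block contributes exactly one \CBVSymb step, of the kind dictated by $\cbvToBangAGK{\cbvSymbBeta} = \bangSymbBeta$ and $\cbvToBangAGK{\cbvSymbSubs} = \bangSymbSubs$.

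Granting the claim, the three points follow quickly. For Point~1, from $\cbvToBangAGK{t} \bangArrSet*_F s'$ with $s'$ a \bangSetFNF<d!>, \Cref{lem:Implicitation} (Full) gives $\cbvToBangAGK{t} \bangArrSet*_{Fai} s'$, and the claim supplies $s$ with $s' = \cbvToBangAGK{s}$. For Point~2, the ``if'' direction is the contrapositive of the forward simulation above; for ``only if'', if $\cbvToBangAGK{t} \bangArrSet_F v'$ then, $\bangArrSet_F<d!>$ being terminating, complete $v'$ into a \bangSetFNF<d!> $v''$ by administrative steps, so that $\cbvToBangAGK{t} \bangArrSet_F v' \bangArrSet*_F<d!> v''$; as $\cbvToBangAGK{t}$ has no $\bangSymbBang$-redex the first step is a $\bangSymbBeta$- or $\bangSymbSubs$-step, so \Cref{lem:Cbv_Reverse_Simulation_One_Step} yields a \CBVSymb step out of $t$, contradicting $t$ being a \cbvSetFNF. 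For Point~3, the forward half was done above; for the converse, $\cbvToBangAGK{u}$ is a \bangSetFNF<d!>, so \Cref{lem:Implicitation} (Full) turns $\cbvToBangAGK{t} \bangArrSet*_F \cbvToBangAGK{u}$ into a diligent sequence, the claim gives $s$ with $t \cbvArrSet*_F s$ and $\cbvToBangAGK{s} = \cbvToBangAGK{u}$, and injectivity of $\cbvToBangAGK{\cdot}$ (immediate from its inductive definition, each constructor of $\setCbvTerms$ producing a distinct outermost shape) forces $s = u$; the step-count equality is carried along by the claim.

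The main obstacle is precisely this reverse direction at the level of \emph{sequences}. Unlike \CBNSymb---whose embedding never introduces $\derSymb$, so every sequence out of its image is automatically diligent---a \BANGSymb sequence out of $\cbvToBangAGK{t}$ may defer administrative $\bangSymbBang$-steps and transit through terms lying \emph{outside} the image of $\cbvToBangAGK{\cdot}$, on which \Cref{lem:Cbv_Reverse_Simulation_One_Step} is useless. This is exactly what diligence was designed for: \Cref{lem:Implicitation} re-synchronises the administrative steps so that the sequence decomposes into image-to-image blocks on which one-step reverse simulation applies. A minor but essential point is that the first step of such a diligent sequence is necessarily computational, since images of \CBVSymb terms are $\bangSymbBang$-normal---this is what lets the block decomposition begin at $\cbvToBangAGK{t}$ itself rather than after an administrative prefix.
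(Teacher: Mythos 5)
Your proposal is correct and is essentially the paper's own argument: forward directions by iterating one-step simulation, and the converses by first invoking the diligence process (\Cref{lem:Implicitation}) and then applying one-step reverse simulation (\Cref{lem:Cbv_Reverse_Simulation_One_Step}) block by block along the diligent sequence---each block a computational step followed by administrative steps ending in a \bangSetFNF<d!>---with injectivity of $\cbvToBangAGK{\cdot}$ closing Point~3. The paper merely packages this induction once as a generic lemma over families of contexts (\Cref{lem: cbv preservation restricted}, instantiated with full contexts via \Cref{lem:cbv_full_context_stability}) so the same argument can be reused for the surface and internal cases, whereas you inline it directly for full reduction.
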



As in \CBNSymb, we may wonder whether similar preservation
results hold for surface reductions. Such results cannot be
entirely derived out from \Cref{cor:cbv prop-full} alone. Still, as
with \CBNSymb, the \CBVSymb one-step simulation and reverse simulation
properties
(\Cref{lem:Cbv_Simulation_One_Step,lem:Cbv_Reverse_Simulation_One_Step})
already encompass the \surface case. However,
even though \surface redexes positions are  mutually
mapped by the embedding, it does not yet imply surface stability,
preservation of normal forms, and simulations. As previously
explained, diligence is required to deal with administrative steps.
Fortunately, the \surface
fragment admits a diligence process, as illustrated in
\Cref{lem:Implicitation}, which can then be
leveraged to obtain the following results.

\begin{restatable}{corollary}{RecCbvPropSurface}
    \label{cor:cbv prop-surface}
    Let $t, u \in \setCbvTerms$  and $s' \in \setBangTerms$.  
    \begin{enumerate}
    \item \textbf{(Stability)}:
        if $\cbvToBangAGK{t} \bangArrSet*_S s'$ and $s'$ is a 
        	\bangSetSNF<d!>, then $s' = \cbvToBangAGK{s}$ for some $s \in \setCbvTerms$.

    \item \textbf{(Normal Forms)}: $t \text{ is a \cbvSetSNF}$ if and
        only if $\cbvToBangAGK{t}$ is a \bangSetSNF.

    \item \textbf{(Simulations)}: $t \cbvArrSet*_S u$ if and only if
    	$\cbvToBangAGK{t} \bangArrSet*_S \cbvToBangAGK{u}$. Moreover,
    	the number of $\cbvSymbBeta/\cbvSymbSubs$-steps on the left
    	matches the number of $\bangSymbBeta/\bangSymbSubs$-steps on
    	the right.
    \end{enumerate}
\end{restatable}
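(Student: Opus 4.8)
The plan is to lift the \CBVSymb one-step simulation and reverse simulation (\Cref{lem:Cbv_Simulation_One_Step,lem:Cbv_Reverse_Simulation_One_Step}) from \full to \surface reduction, mirroring the way \Cref{cor:cbn prop-surface} is obtained from \Cref{l:one-step-cnb,l:reverse-simulation-cbn} in the \CBNSymb case, but additionally reorganising the administrative $\bangSymbBang$-steps by means of the \surface diligence process (\Cref{lem:Implicitation}). Two facts will be used throughout. The first is the \CBVSymb analogue of \Cref{lem:Cbn Redex Position Stability}, i.e.\ that \surface redex positions are mutually mapped by $\cbvToBangAGK{\cdot}$: the context that \Cref{lem:Cbv_Simulation_One_Step} builds from a $\cbvSCtxt \in \cbvSCtxtSet$ lies in $\bangSCtxtSet$ --- inserting or deleting a $\derSymb$ keeps a context \surface, since $\bangSCtxtSet$ only forbids holes under $\oc$ --- and, conversely, the \CBVSymb context returned by \Cref{lem:Cbv_Reverse_Simulation_One_Step} out of a \BANGSymb step whose context is in $\bangSCtxtSet$ lies in $\cbvSCtxtSet$; this is checked by inspecting the finitely many context shapes occurring in the proofs of those two lemmas. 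The second is that $\cbvToBangAGK{t}$ is always a \bangSetFNF<d!>, together with the invariant that \surface reduction issued from a term in the image of $\cbvToBangAGK{\cdot}$ never buries a $\bangSymbBang$-redex under a $\oc$; hence along such reductions being a \bangSetSNF<d!> coincides with being a \bangSetFNF<d!>.

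Point 2 is proved by contraposition in both directions. If $\cbvToBangAGK{t}$ is not a \bangSetSNF, its \surface redex is a $\bangSymbBeta$- or $\bangSymbSubs$-redex (not $\bangSymbBang$, as $\cbvToBangAGK{t}$ is a \bangSetFNF<d!>); firing it and then contracting the $\bangSymbBang$-redexes it exposes --- possible because $\bangArrSet_F<d!>$ is terminating --- produces a sequence of the shape required by \Cref{lem:Cbv_Reverse_Simulation_One_Step}, hence a step $t \cbvArr_F<R> u$, which is \surface by the redex-position property, so $t$ is not a \cbvSetSNF; the converse uses \Cref{lem:Cbv_Simulation_One_Step} and the redex-position property symmetrically. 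For the left-to-right direction of Point 3, I split $t \cbvArrSet*_S u$ into single steps; each $t' \cbvArr_S<R> t''$ is a \full step, so \Cref{lem:Cbv_Simulation_One_Step} gives $\cbvToBangAGK{t'} \bangArr_F<\cbvToBangAGK{\rel}> \bangArrSet*_F<d!> \cbvToBangAGK{t''}$, whose leading step is \surface by the redex-position property and whose trailing $\bangSymbBang$-steps occur at \surface positions --- they are precisely the redexes exposed by that leading step --- so the whole block is a \surface sequence; concatenating the blocks yields $\cbvToBangAGK{t} \bangArrSet*_S \cbvToBangAGK{u}$ with one $\bangSymbBeta/\bangSymbSubs$-step per $\cbvSymbBeta/\cbvSymbSubs$-step.

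The substance of the argument is the right-to-left direction of Point 3 together with Point 1 (stability), which I prove simultaneously by induction on the length of a diligent sequence. Given $\cbvToBangAGK{t} \bangArrSet*_S s'$ with $s'$ a \bangSetSNF<d!>, \Cref{lem:Implicitation} rewrites it as $\cbvToBangAGK{t} \bangArrSet*_{Sai} s'$. If it is empty, take $s = t$; otherwise its first step cannot be administrative ($\cbvToBangAGK{t}$ is a \bangSetFNF<d!>), hence is computational, so $\cbvToBangAGK{t}$ is trivially a \bangSetSNF<d!>, and following that step by the maximal run of administrative steps produces a prefix $\cbvToBangAGK{t} \bangArr_S<R'> \bangArrSet*_S<d!> s_1$ where $s_1$ is the first \bangSetSNF<d!> further down the sequence --- it exists because the sequence ends at the \bangSetSNF<d!> $s'$ and any later computational step can only fire from a \bangSetSNF<d!>. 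By the invariant above, $s_1$ is in fact a \bangSetFNF<d!>, so \Cref{lem:Cbv_Reverse_Simulation_One_Step} yields $s_1 = \cbvToBangAGK{s_1'}$ and a step $t \cbvArr_F<R> s_1'$, which is \surface by the redex-position property. The residual sequence $\cbvToBangAGK{s_1'} \bangArrSet*_{Sai} s'$ is strictly shorter, so the induction hypothesis gives $s' = \cbvToBangAGK{s}$ with $s_1' \cbvArrSet*_S s$ and matching step counts; prepending $t \cbvArr_S<R> s_1'$ closes the case. Taking $s' = \cbvToBangAGK{u}$ gives the ``if'' part of Point 3, while leaving $s'$ arbitrary gives Point 1. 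I expect the main obstacle to be exactly this block decomposition: each intermediate $s_1$ must be simultaneously in the image of $\cbvToBangAGK{\cdot}$ --- which is why stability has to be carried through the induction rather than derived afterwards --- and a genuine \bangSetFNF<d!>, so that \Cref{lem:Cbv_Reverse_Simulation_One_Step} applies verbatim; the former is handled by the shape of the induction, the latter by the ``no $\bangSymbBang$ under $\oc$'' invariant for \surface reduction from images.
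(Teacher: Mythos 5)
Your proof is correct and follows essentially the same route as the paper: the paper packages exactly your ingredients---surface context stability in both directions (\Cref{lem:cbv_surface_context_stability}), surface diligence (\Cref{lem:Implicitation}), and the fact that surface reducts of images that are \bangSetSNF<d!> are already \bangSetFNF<d!> (\Cref{lem:cbv_SNF<d!>_is_FNF<d!>})---into the abstract preservation lemma (\Cref{lem: cbv preservation restricted}), whose proof performs the same block decomposition of the diligent sequence and the same induction carrying stability that you do by hand. The only difference is presentational: you inline that abstract lemma, and your ``inspection of context shapes'' corresponds to the genuine context induction the paper devotes to \Cref{lem:cbv_surface_context_stability}.
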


Stability statements in \CBVSymb %
(\Cref{cor:cbv prop-full}.1 and \ref{cor:cbv prop-surface}.1) require
the reached term $s'$ to be normal for $\bangSymbBang$, otherwise
stability does not hold (\eg, $s'$ in \eqref{ex:cbv-simulation} before
is not in the image of $\cbvToBangAGK{\cdot}$), This is not required
in the \CBNSymb stability statements %
(\Cref{cor:cbn prop-full}.1 and \ref{cor:cbn prop-surface}.1) since
every term to which $\cbnToBangAGK{t}$ reduces is $\derSymb$-free and
so \mbox{normal for $\bangSymbBang$}.

Proving simulation and reverse simulation requires a considerable
effort. However, this initial investment lays the groundwork for
numerous benefits without extra costs. For example,
in~\Cref{sec:Bang_CBN_CBV_Confluence,subsec:Bang Factorization}, we
demonstrate that typically challenging tasks like proving confluence
and factorization can be easily achieved through simulations. This
approach not only unifies the proofs but also minimizes the workload
for future proofs.

\section{Confluence}
\label{sec:Bang_CBN_CBV_Confluence}%

Confluence is a crucial property in the $\lambda$-calculus, ensuring
that expressions consistently produce a single result, regardless of
the chosen reduction path. In this section, we examine confluence
across different reduction relations (surface and full) and within
various calculus frameworks: \CBNSymb, \CBVSymb, and \BANGSymb. We
specifically leverage simulation and reverse simulation properties to
seamlessly project these results from \BANGSymb to \CBNSymb and
\CBVSymb, providing a comprehensive solution across three contexts.

Surface confluence is usually proved by showing that surface reduction
is diamond,  as for example
in~\cite{BucciarelliKesnerRiosViso20,BucciarelliKesnerRiosViso23}.
Full confluence is more complex,  as the full reduction relation is not
diamond, as one can easily notice with the term
$\app{(\abs{x}{y})}{\oc\Omega}$. Alternative
techniques~\cite{Terese03,Oostrom08} can establish full reduction's
confluence, albeit often requiring numerous commutation diagrams and
potentially non-trivial bounding measures.

\begin{restatable}[\BANGSymb Confluence]{theorem}{RecBangConfluence}
    \label{lem:Bang_Confluence}%
    \begin{enumerate}
    \item \textbf{(Surface)} The reduction relation $\bangArrSet_S$ is
        confluent.%
        \label{lem:Bang_Confluence_Surface}%

        \quad Moreover, any two different surface reduction paths to a
        $\bangSCtxtSet$-normal form have the same length and number of
        $\bangSymbBeta$, $\bangSymbSubs$ and $\bangSymbBang$-steps.
    \item \hspace{9pt}\textbf{(Full)}\hspace{9pt} The reduction
        relation $\bangArrSet_F$ is confluent.
        \label{lem:Bang_Confluence_Full}%
    \end{enumerate}
\end{restatable}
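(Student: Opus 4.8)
The plan is to prove the two clauses by different routes, since the surface reduction is diamond while the full one is not.

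For the \emph{surface} clause, the first step is to establish that $\bangArrSet_S$ enjoys a \emph{rule-labelled} diamond property: if $t$ reduces in one surface step to $u_1$ by a rule $\rho_1$ and also to $u_2\neq u_1$ by a rule $\rho_2$, then there is $s$ with $u_1\bangArrSet_S s$ by rule $\rho_2$ and $u_2\bangArrSet_S s$ by rule $\rho_1$. This is a case analysis on the relative positions of the two contracted redexes --- disjoint, nested, or one lying inside the list context of the other --- in the spirit of the surface-diamond proof of \cite{BucciarelliKesnerRiosViso20,BucciarelliKesnerRiosViso23}. The decisive observation is that a surface context never enters a $\oc$, so the duplicable argument of a $\bangSymbSubs$-redex, which is always guarded by a $\oc$ up to a list context, contains no surface redex; hence no surface redex of $t$ is erased or copied by contracting another one, which is exactly what permits single-step closure --- the only bookkeeping being $\alpha$-renaming and the capture-freeness conditions of $\bangSymbBeta$ and $\bangSymbSubs$, which are met automatically. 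The plain diamond of $\bangArrSet_S$ lifts to $\bangArrSet*_S$ by the standard diagram-pasting, giving confluence; and the labelled refinement yields the last clause by the usual random-descent argument: if $t\bangArrSet*_S u_1$ and $t\bangArrSet*_S u_2$ with $u_1,u_2$ both a \bangSetSNF, then $u_1=u_2$ by confluence, and an induction on the sum of the two path lengths --- closing the non-degenerate case with the labelled diamond --- shows that the two multisets of labelled steps, hence the two lengths and the three step-counts, coincide.

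For the \emph{full} clause, $\bangArrSet_F$ is not diamond (as $\app{(\abs{x}{y})}{\oc\Omega}$ shows), so the plan is the standard parallel-reduction technique (Tait--Martin-L\"of, in Takahashi's formulation). Introduce a parallel reduction $\Rightarrow$ on $\setBangTerms$, given by congruence rules following the term structure (so that $\Rightarrow$ may fire redexes at any depth, in particular under $\oc$) together with three ``at a distance'' rules for $\bangSymbBeta$, $\bangSymbSubs$ and $\bangSymbBang$ that reduce simultaneously inside the list context, the left-hand side and the argument. Then one checks: (a) $\bangArrSet_F\subseteq{\Rightarrow}\subseteq\bangArrSet*_F$, so that $\Rightarrow^*=\bangArrSet*_F$; (b) the substitution lemma, $t\Rightarrow t'$ and $u\Rightarrow u'$ imply $t\isub{x}{u}\Rightarrow t'\isub{x}{u'}$, needed because $\bangSymbSubs$ duplicates its argument; (c) the triangle property: writing $t^{\bullet}$ for the full development of $t$ --- defined by recursion, contracting every redex with priority given to the outermost one in the at-a-distance patterns, so that $t\Rightarrow t^{\bullet}$ --- one proves that $t\Rightarrow u$ implies $u\Rightarrow t^{\bullet}$. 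From (c), $\Rightarrow$ is diamond ($t^{\bullet}$ being a common $\Rightarrow$-reduct of any two $\Rightarrow$-reducts of $t$); hence $\Rightarrow^*$ is diamond by diagram-pasting, and since $\Rightarrow^*=\bangArrSet*_F$ by (a), $\bangArrSet_F$ is confluent.

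The main obstacle is the triangle property (c), and within it the interaction of the three rules through the ``at a distance'' mechanism: when the left-hand side of an application, or the body of a dereliction, is itself a $\bangSymbSubs$-redex up to a list context, contracting the outer $\bangSymbBeta$- or $\bangSymbBang$-redex leaves a residual of that ES, and reaching $t^{\bullet}$ then requires commuting a head substitution past the exposed redex while respecting the capture conditions --- this is where (b) and the induction hypothesis of (c) are combined. I expect this to be the bulk of the work; the inclusions (a) and the substitution lemma (b) are routine inductions on $\Rightarrow$ and on terms, and once $\Rightarrow$ is known to be diamond the conclusion is immediate. The same core difficulty would arise if one instead used the Z-property with the same full-super-development map in place of parallel reduction.
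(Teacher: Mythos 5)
Your proposal is sound, but it is worth noting that the paper itself contains no in-text proof of this theorem: the surface item is delegated to \cite{BucciarelliKesnerRiosViso20,BucciarelliKesnerRiosViso23} and the full item to \cite{KAG24BangMeaningfulness}. Your surface argument is essentially the argument of the cited works: the labelled diamond (no surface redex is erased or duplicated by firing another, since duplicable arguments sit under $\oc$ and surface contexts never enter a $\oc$; the only overlaps are the ``at a distance'' ones through a list context, and they close in one step with swapped labels), followed by the standard random-descent induction for the length/step-count claim. For the full item you replace the citation by a self-contained Tait--Martin-L\"of/Takahashi development, whereas the paper's source follows the commutation-based techniques alluded to in the text (\cite{Terese03,Oostrom08}); your route buys a single substitution lemma plus one triangle property instead of many commutation diagrams and bounding measures, at the price of carefully handling the distant patterns inside the parallel reduction and the development map.

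One point to fix in the write-up of that development map: defining $t^{\bullet}$ with ``priority to the outermost'' distant pattern, read literally, breaks the triangle property. For $t=\app{(\abs{y}{s})\esub{x}{\oc u}}{v}$, both the root $\bangSymbBeta$-redex and the inner $\bangSymbSubs$-redex are redexes of $t$; if $t^{\bullet}$ only contracts the outer one and keeps the ES as a residual, i.e.\ $t^{\bullet}=s^{\bullet}\esub{y}{v^{\bullet}}\esub{x}{\oc u^{\bullet}}$, then the reduct obtained by firing the inner $\bangSymbSubs$ first, namely $\app{(\abs{y}{s\isub{x}{u}})}{v}$, cannot parallel-reduce to $t^{\bullet}$ (the consumed ES cannot be recreated). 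The correct definition develops the subparts first and then contracts the outer distant redex on the developed pieces, so that here $t^{\bullet}=s^{\bullet}\isub{x}{u^{\bullet}}\esub{y}{v^{\bullet}}$ --- a super-development-style map, exactly as you hint at the end; with that definition, the case analysis you describe (using the parallel substitution lemma and the induction hypothesis) does go through, and the rest of your plan is the standard machinery.
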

\begin{proof}
\textbf{(Surface)}
See~\cite{BucciarelliKesnerRiosViso20,BucciarelliKesnerRiosViso23}
\textbf{(Full)} See~\cite{KAG24BangMeaningfulness}.
\end{proof}


These proofs are typically highly technical, requiring a significant
amount of time to write and verify, and are prone to errors.
Therefore, it is extremely beneficial to have a method to streamline
them, especially when mecanising proofs. With the robust preservation
of \CBNSymb reductions in the \BANGSymb, we can actually project
\CBNSymb confluences directly from that of \BANGSymb.

\begin{corollary}[\CBNSymb Confluence]
    \label{lem:Cbn_Confluence}%
    \begin{enumerate}
    \item \textbf{(Surface)} The reduction relation $\cbnArrSet_S$ is
        confluent.%
        \label{lem:Cbn_Confluence_Surface}%

        \quad Moreover, any two different surface reduction paths to a
        $\cbnSCtxtSet$-normal form have the same length and same
        number of $\cbnSymbBeta$ and $\cbnSymbSubs$-steps.
    \item \hspace{9pt}\textbf{(Full)}\hspace{9pt} The reduction
        relation $\cbnArrSet_F$ is confluent.
        \label{lem:Cbn_Confluence_Full}%
    \end{enumerate}
\end{corollary}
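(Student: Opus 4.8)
The plan is to transfer confluence from \BANGSymb\ to \CBNSymb\ using the back-and-forth simulation results established for the embedding $\cbnToBangAGK{\cdot}$, namely \Cref{cor:cbn prop-full}.3 (full) and \Cref{cor:cbn prop-surface}.3 (surface), together with the stability results \Cref{cor:cbn prop-full}.1 and \Cref{cor:cbn prop-surface}.1 and normal-form preservation \Cref{cor:cbn prop-full}.2 and \Cref{cor:cbn prop-surface}.2. The general scheme, applied once for surface and once for full reduction: suppose $u_1 \revCbnArrSet*_\bullet t \cbnArrSet*_\bullet u_2$ in \CBNSymb\ (with $\bullet \in \{S,F\}$). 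Apply simulation to each branch to get $\cbnToBangAGK{u_1} \revBangArrSet*_\bullet \cbnToBangAGK{t} \bangArrSet*_\bullet \cbnToBangAGK{u_2}$ in \BANGSymb. By \Cref{lem:Bang_Confluence} (the relevant clause), there is $s' \in \setBangTerms$ with $\cbnToBangAGK{u_1} \bangArrSet*_\bullet s' \revBangArrSet*_\bullet \cbnToBangAGK{u_2}$. Now stability (\Cref{cor:cbn prop-full}.1, resp.\ \ref{cor:cbn prop-surface}.1) applied to $\cbnToBangAGK{u_1} \bangArrSet*_\bullet s'$ gives some $s \in \setCbnTerms$ with $\cbnToBangAGK{s} = s'$. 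Finally, reverse simulation via \Cref{cor:cbn prop-full}.3 (resp.\ \ref{cor:cbn prop-surface}.3), reading the equivalence from right to left on $\cbnToBangAGK{u_1} \bangArrSet*_\bullet \cbnToBangAGK{s}$ and $\cbnToBangAGK{u_2} \bangArrSet*_\bullet \cbnToBangAGK{s}$, yields $u_1 \cbnArrSet*_\bullet s \revCbnArrSet*_\bullet u_2$, closing the diagram.

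For the surface case, the additional quantitative claim — that any two surface reduction paths to a $\cbnSCtxtSet$-normal form have the same length and the same number of $\cbnSymbBeta$- and $\cbnSymbSubs$-steps — is obtained by pushing the corresponding \BANGSymb\ statement in \Cref{lem:Bang_Confluence_Surface} through the embedding. Given two surface paths $t \cbnArrSet*_S u$ and $t \cbnArrSet*_S u'$ with $u,u'$ both $\cbnSCtxtSet$-normal, simulation (\Cref{cor:cbn prop-surface}.3) produces two \BANGSymb\ surface paths $\cbnToBangAGK{t} \bangArrSet*_S \cbnToBangAGK{u}$ and $\cbnToBangAGK{t} \bangArrSet*_S \cbnToBangAGK{u'}$ with matching step counts per rule. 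By \Cref{cor:cbn prop-surface}.2, $\cbnToBangAGK{u}$ and $\cbnToBangAGK{u'}$ are $\bangSCtxtSet$-normal forms; surface confluence of \BANGSymb\ then forces $\cbnToBangAGK{u} = \cbnToBangAGK{u'}$ (hence $u = u'$ by injectivity of the embedding), and the quantitative part of \Cref{lem:Bang_Confluence_Surface} gives that the two \BANGSymb\ paths have equal length and equal numbers of each rule; since the embedding preserves these counts step-for-step (no administrative steps appear, as $\cbnToBangAGK{\cdot}$ never introduces $\derSymb$), the same holds downstairs in \CBNSymb.

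The main obstacle is essentially bookkeeping rather than conceptual difficulty: one must be careful that stability is only needed on the branch whose endpoint is not a priori in the image of the embedding, and that reverse simulation (\Cref{cor:cbn prop-full}.3 / \ref{cor:cbn prop-surface}.3) really is an equivalence so that it can be read from \BANGSymb\ back to \CBNSymb. In the \CBNSymb\ setting this is clean because every term reachable from $\cbnToBangAGK{t}$ is $\derSymb$-free, hence automatically a \bangSetFNF<d!>, so the diligence caveats that complicate the \CBVSymb\ analysis do not arise here and the $s'$ produced by \BANGSymb\ confluence is unconditionally in the embedding's image. Thus the proof is a direct diagram chase through the already-established correspondences.
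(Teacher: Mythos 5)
Your proposal is correct and follows essentially the same route as the paper: the paper's proof is exactly the diagram chase in Fig.~\ref{fig:Cbn_Confluence_Surface} (simulation via \Cref{cor:cbn prop-surface}.3, closure by \Cref{lem:Bang_Confluence}.1, stability via \Cref{cor:cbn prop-surface}.1, then reverse simulation), with the full case done analogously using \Cref{cor:cbn prop-full}. Your handling of the quantitative surface claim (step-count preservation, absence of $\bangSymbBang$-steps since the embedding is $\derSymb$-free) matches the intended argument as well.
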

\begin{proof}
\textbf{(Surface)} See Fig.\ref{fig:Cbn_Confluence_Surface}.
\textbf{(Full)} Following the same reasoning as
Fig.\ref{fig:Cbv_Confluence_Full}.
\end{proof}
\begin{figure}[t]
    \begin{minipage}[c]{0.49\textwidth}%
        \hspace{0.5cm}
        \begin{tikzpicture}[scale=0.85]%
            \node (t) at (0, 0)                                             {$t$};%
            \node (u1) at (-2, -1)                                          {$u_1$};%
            \node (u2) at (2, -1)                                           {$u_2$};%
            \draw[->, shorten <=4pt] (t) to (u1);
            \node at ([yshift=6pt]$(t)!0.5!(u1)$)                           {$\scriptstyle\cbnSCtxtSet$};%
            \node at ([yshift=5pt]$(t)!0.85!(u1)$)                          {$\scriptstyle*$};%
            \draw[->, shorten <=4pt] (t) to (u2);
            \node at ([yshift=6pt]$(t)!0.5!(u2)$)                           {$\scriptstyle\cbnSCtxtSet$};%
            \node at ([yshift=5pt]$(t)!0.85!(u2)$)                          {$\scriptstyle*$};%
            \draw [->, squigArrowStyle]  (-1, -0.65) -- (-1, -1.7);
            \node at (-1.8, -1.5)                                            {\scriptsize({\tt Cor.}\ref{cor:cbn prop-surface}.\ref{lem:cbn_prop-surface_Simulations})};%
            \draw [->, squigArrowStyle]  (1, -0.65) -- (1, -1.7);
            \node at (1.8, -1.5)                                             {\scriptsize({\tt Cor.}\ref{cor:cbn prop-surface}.\ref{lem:cbn_prop-surface_Simulations})};%
            \node (b_t) at (0, -1.25)                                       {$\cbnToBangAGK{t}$};%
            \node (b_u1) at (-2,-2.5)                                       {$\cbnToBangAGK{u_1}$};%
            \node (b_u2) at (2,-2.5)                                        {$\cbnToBangAGK{u_2}$};%
            \node (b_s') at (0,-3.75)                                       {$s'$};%
            \node[rotate=90] (eq_s) at (0, -4.375)                          {$=$};%
            \node (b_s) at (0,-5)                                           {$\cbnToBangAGK{s}$};%
            \draw[->, shorten <=4pt] (b_t) to (b_u1);
            \node at ([yshift=-6pt]$(b_t)!.5!(b_u1)$)                       {$\scriptstyle\bangSCtxtSet$};%
            \node at ([yshift=5pt]$(b_t)!.85!(b_u1)$)                       {$\scriptstyle*$};%
            \draw[->, shorten <=4pt] (b_t) to (b_u2);
            \node at ([yshift=-6pt]$(b_t)!.5!(b_u2)$)                       {$\scriptstyle\bangSCtxtSet$};%
            \node at ([yshift=5pt]$(b_t)!.85!(b_u2)$)                       {$\scriptstyle*$};%
            \node at (0, -2.5)                                              {\scriptsize({\tt Thm.}\ref{lem:Bang_Confluence}.\ref{lem:Bang_Confluence_Surface})};%
            \draw[->, shorten <=4pt] (b_u1) to (b_s');
            \node at ([yshift=6pt]$(b_u1)!.5!(b_s')$)                       {$\scriptstyle\bangSCtxtSet$};%
            \node at ([yshift=5pt]$(b_u1)!.85!(b_s')$)                      {$\scriptstyle*$};%
            \draw[->, shorten <=4pt] (b_u2) to (b_s');
            \node at ([yshift=6pt]$(b_u2)!.5!(b_s')$)                       {$\scriptstyle\bangSCtxtSet$};%
            \node at ([yshift=5pt]$(b_u2)!.85!(b_s')$)                      {$\scriptstyle*$};%
            \draw[->, shorten <=4pt, bend right=25] (b_u1) to (b_s);
            \node at ([yshift=-12pt]$(b_u1)!.5!(b_s)$)                      {$\scriptstyle\bangSCtxtSet$};%
            \node at ([yshift=-1pt]$(b_u1)!.85!(b_s)$)                      {$\scriptstyle*$};%
            \draw[->, shorten <=4pt, bend left=25] (b_u2) to (b_s);
            \node at ([yshift=-12pt]$(b_u2)!.5!(b_s)$)                      {$\scriptstyle\bangSCtxtSet$};%
            \node at ([yshift=-1pt]$(b_u2)!.85!(b_s)$)                      {$\scriptstyle*$};%
            \draw [->, squigArrowStyle]  (-1, -4.6) -- (-1, -5.6);
            \node at (-1.8, -4.75)                                          {\scriptsize({\tt Cor.}\ref{cor:cbn prop-surface}.\ref{lem:cbn_prop-surface_Simulations})};%
            \draw [->, squigArrowStyle]  (1, -4.6) -- (1, -5.6);
            \node at (1.8, -4.75)                                           {\scriptsize({\tt Cor.}\ref{cor:cbn prop-surface}.\ref{lem:cbn_prop-surface_Simulations})};%
            \node at (0, -5.35)                                             {\scriptsize({\tt Cor.}\ref{cor:cbn prop-surface}.\ref{lem:cbn_prop-surface_Stability})};%
            \node (u1) at (-2, -5.25)                                       {$u_1$};%
            \node (u2) at (2, -5.25)                                        {$u_2$};%
            \node  (s) at (0, -6.25)                                        {$s$};%
            \draw[->, shorten <=4pt] (u1) to (s);
            \node at ([yshift=-6pt]$(u1)!0.5!(s)$)                          {$\scriptstyle\cbnSCtxtSet$};%
            \node at ([yshift=5pt]$(u1)!0.85!(s)$)                          {$\scriptstyle*$};%
            \draw[->, shorten <=4pt] (u2) to (s);
            \node at ([yshift=-6pt]$(u2)!0.5!(s)$)                          {$\scriptstyle\cbnSCtxtSet$};%
            \node at ([yshift=5pt]$(u2)!0.85!(s)$)                          {$\scriptstyle*$};%
            \node at (0, -6.75) {\phantom{where $\cbvToBangAGK{s}$ is a $\bangFCtxt<\bangSymbBang>$-NF}};%
        \end{tikzpicture}%
        \caption{Schematic proof of \Cref{lem:Cbn_Confluence}.\ref{lem:Cbn_Confluence_Surface}}%
        \label{fig:Cbn_Confluence_Surface}%
    \end{minipage}
    \hspace{.5cm}
    \begin{minipage}[c]{0.49\textwidth}%
        \hspace{0.5cm}
        \begin{tikzpicture}[scale=0.85]%
            \node (t) at (0, 0)                                             {$t$};%
            \node (u1) at (-2, -1)                                          {$u_1$};%
            \node (u2) at (2, -1)                                           {$u_2$};%
            \draw[->, shorten <=4pt] (t) to (u1);
            \node at ([yshift=6pt]$(t)!0.5!(u1)$)                           {$\scriptstyle\cbvFCtxtSet$};%
            \node at ([yshift=5pt]$(t)!0.85!(u1)$)                          {$\scriptstyle*$};%
            \draw[->, shorten <=4pt] (t) to (u2);
            \node at ([yshift=6pt]$(t)!0.5!(u2)$)                           {$\scriptstyle\cbvFCtxtSet$};%
            \node at ([yshift=5pt]$(t)!0.85!(u2)$)                          {$\scriptstyle*$};%
            \draw [->, squigArrowStyle]  (-1, -0.65) -- (-1, -1.7);
            \node at (-1.8, -1.5)                                            {\scriptsize({\tt Cor.}\ref{cor:cbv prop-full}.\ref{lem:cbv_prop-full_Simulations})};%
            \draw [->, squigArrowStyle]  (1, -0.65) -- (1, -1.7);
            \node at (1.8, -1.5)                                             {\scriptsize({\tt Cor.}\ref{cor:cbv prop-full}.\ref{lem:cbv_prop-full_Simulations})};%
            \node (b_t) at (0, -1.25)                                       {$\cbvToBangAGK{t}$};%
            \node (b_u1) at (-2,-2.5)                                       {$\cbvToBangAGK{u_1}$};%
            \node (b_u2) at (2,-2.5)                                        {$\cbvToBangAGK{u_2}$};%
            \node (b_s') at (0,-3.75)                                       {$s'$};%
            \draw[->, shorten >=2pt] (b_s') to (b_s);
            \node at ([xshift=10pt]$(b_s')!.4!(b_s)$)                      {$\scriptstyle\bangFCtxtSet<\bangSymbBang>$};%
            \node at ([xshift=-4pt]$(b_s')!.6!(b_s)$)                      {$\scriptstyle*$};%
            \node (b_s) at (0,-5)                                           {$\cbvToBangAGK{s}$};%
            \draw[->, shorten <=4pt] (b_t) to (b_u1);
            \node at ([yshift=-6pt]$(b_t)!.5!(b_u1)$)                       {$\scriptstyle\bangFCtxtSet$};%
            \node at ([yshift=5pt]$(b_t)!.85!(b_u1)$)                       {$\scriptstyle*$};%
            \draw[->, shorten <=4pt] (b_t) to (b_u2);
            \node at ([yshift=-6pt]$(b_t)!.5!(b_u2)$)                       {$\scriptstyle\bangFCtxtSet$};%
            \node at ([yshift=5pt]$(b_t)!.85!(b_u2)$)                       {$\scriptstyle*$};%
            \node at (0, -2.5)                                              {\scriptsize({\tt Thm.}\ref{lem:Bang_Confluence}.\ref{lem:Bang_Confluence_Full})};%
            \draw[->, shorten <=4pt] (b_u1) to (b_s');
            \node at ([yshift=6pt]$(b_u1)!.5!(b_s')$)                       {$\scriptstyle\bangFCtxtSet$};%
            \node at ([yshift=5pt]$(b_u1)!.85!(b_s')$)                      {$\scriptstyle*$};%
            \draw[->, shorten <=4pt] (b_u2) to (b_s');
            \node at ([yshift=6pt]$(b_u2)!.5!(b_s')$)                       {$\scriptstyle\bangFCtxtSet$};%
            \node at ([yshift=5pt]$(b_u2)!.85!(b_s')$)                      {$\scriptstyle*$};%
            \draw[->, shorten <=4pt, bend right=25] (b_u1) to (b_s);
            \node at ([yshift=-12pt]$(b_u1)!.5!(b_s)$)                      {$\scriptstyle\bangFCtxtSet$};%
            \node at ([yshift=-1pt]$(b_u1)!.85!(b_s)$)                      {$\scriptstyle*$};%
            \draw[->, shorten <=4pt, bend left=25] (b_u2) to (b_s);
            \node at ([yshift=-12pt]$(b_u2)!.5!(b_s)$)                      {$\scriptstyle\bangFCtxtSet$};%
            \node at ([yshift=-1pt]$(b_u2)!.85!(b_s)$)                      {$\scriptstyle*$};%
            \draw [->, squigArrowStyle]  (-1, -4.6) -- (-1, -5.6);
            \node at (-1.8, -4.75)                                          {\scriptsize({\tt Cor.}\ref{cor:cbv prop-full}.\ref{lem:cbv_prop-full_Simulations})};%
            \draw [->, squigArrowStyle]  (1, -4.6) -- (1, -5.6);
            \node at (1.8, -4.75)                                           {\scriptsize({\tt Cor.}\ref{cor:cbv prop-full}.\ref{lem:cbv_prop-full_Simulations})};%
            \node at (0, -5.35)                                             {\scriptsize({\tt Cor.}\ref{cor:cbv prop-full}.\ref{lem:cbv_prop-full_Stability})};%
            \node (u1) at (-2, -5.25)                                       {$u_1$};%
            \node (u2) at (2, -5.25)                                        {$u_2$};%
            \node  (s) at (0, -6.25)                                        {$s$};%
            \draw[->, shorten <=4pt] (u1) to (s);
            \node at ([yshift=-6pt]$(u1)!0.5!(s)$)                          {$\scriptstyle\cbvFCtxtSet$};%
            \node at ([yshift=5pt]$(u1)!0.85!(s)$)                          {$\scriptstyle*$};%
            \draw[->, shorten <=4pt] (u2) to (s);
            \node at ([yshift=-6pt]$(u2)!0.5!(s)$)                          {$\scriptstyle\cbvFCtxtSet$};%
            \node at ([yshift=5pt]$(u2)!0.85!(s)$)                          {$\scriptstyle*$};%
        %
            \node at (0, -6.75) {where $\cbvToBangAGK{s}$ is a \bangSetFNF<d!>};
        \end{tikzpicture}%
        \caption{Schematic proof of \Cref{lem:Cbv_Confluence}.\ref{lem:Cbv_Confluence_Full}}%
        \label{fig:Cbv_Confluence_Full}%
    \end{minipage}
\end{figure}

The same technic can be used for \CBVSymb, with the additional help of
diligence. Thus, we get the following results for free.

\begin{corollary}[\CBVSymb Confluence]
    \label{lem:Cbv_Confluence}
    \begin{enumerate}
    \item \textbf{(Surface)} The reduction relation $\cbvArrSet_S$ is
        confluent.%
        \label{lem:Cbv_Confluence_Surface}%

        \quad Moreover, any two different surface reduction paths to a
        $\cbvSCtxtSet$-normal form have the same length and same
        number of $\cbvSymbBeta$ and $\cbvSymbSubs$-steps.
    \item \hspace{9pt}\textbf{(Full)}\hspace{9pt} The reduction
        relation $\cbvArrSet_F$ is confluent.
        \label{lem:Cbv_Confluence_Full}%
    \end{enumerate}
\end{corollary}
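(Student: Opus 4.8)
The plan is to obtain \Cref{lem:Cbv_Confluence} as a corollary of \BANGSymb confluence (\Cref{lem:Bang_Confluence}), by transporting reduction sequences back and forth through the embedding $\cbvToBangAGK{\cdot}$ along the diagram of \Cref{fig:Cbv_Confluence_Full}. The simulation and stability packages \Cref{cor:cbv prop-full} (full case) and \Cref{cor:cbv prop-surface} (surface case) do essentially all the work; \BANGSymb confluence only supplies the missing tip of the diamond.

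For full confluence I would start from $t \cbvArrSet*_F u_1$ and $t \cbvArrSet*_F u_2$ in \CBVSymb. By \Cref{cor:cbv prop-full}.3 this lifts to $\cbvToBangAGK{t} \bangArrSet*_F \cbvToBangAGK{u_1}$ and $\cbvToBangAGK{t} \bangArrSet*_F \cbvToBangAGK{u_2}$, which by \Cref{lem:Bang_Confluence} (full) meet at some common reduct $s'$. Since $s'$ need not lie in the image of $\cbvToBangAGK{\cdot}$, the next step is to normalize its administrative redexes: $\bangArrSet_F<d!>$ is terminating (each step erases a $\derSymb$/$\oc$ pair, see \Cref{sec:Distant_Bang_Calculus}), so there is $s''$ with $s' \bangArrSet*_F<d!> s''$ and $s''$ a \bangSetFNF<d!>. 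Now $\cbvToBangAGK{u_1} \bangArrSet*_F s''$ (recall $\bangArrSet_F<d!>\subseteq\bangArrSet_F$) with $s''$ a \bangSetFNF<d!>, so stability (\Cref{cor:cbv prop-full}.1) gives $s'' = \cbvToBangAGK{s}$ for some $s \in \setCbvTerms$, and likewise $\cbvToBangAGK{u_2} \bangArrSet*_F \cbvToBangAGK{s}$. The backward direction of \Cref{cor:cbv prop-full}.3, applied twice, then yields $u_1 \cbvArrSet*_F s$ and $u_2 \cbvArrSet*_F s$, closing the \CBVSymb diamond.

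For surface confluence I would run the same argument verbatim, replacing $F$ by $S$ everywhere and invoking the surface counterparts (\Cref{lem:Bang_Confluence} (surface), \Cref{cor:cbv prop-surface}, and termination of $\bangArrSet_S<d!>$). For the \emph{moreover} clause, I first use the surface confluence just established to conclude that the $\cbvSCtxtSet$-normal form of a term $t$, when it exists, is unique; call it $u$, and recall from \Cref{cor:cbv prop-surface}.2 that $\cbvToBangAGK{u}$ is a $\bangSCtxtSet$-normal form. Any \CBVSymb surface sequence $\rho\colon t \cbvArrSet*_S u$ projects, via \Cref{cor:cbv prop-surface}.3, to a \BANGSymb surface sequence from $\cbvToBangAGK{t}$ to $\cbvToBangAGK{u}$ with exactly as many $\bangSymbBeta$- (resp.\ $\bangSymbSubs$-) steps as $\rho$ has $\cbvSymbBeta$- (resp.\ $\cbvSymbSubs$-) steps, all remaining steps being $\bangSymbBang$-steps. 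By the \emph{moreover} clause of \Cref{lem:Bang_Confluence} (surface), any two such \BANGSymb sequences, both ending at the normal form $\cbvToBangAGK{u}$, agree on their numbers of $\bangSymbBeta$-, $\bangSymbSubs$- and $\bangSymbBang$-steps and on their length; hence any two sequences $\rho$ agree on their numbers of $\cbvSymbBeta$- and $\cbvSymbSubs$-steps, and therefore on their length, which is just the sum of these two counts.

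The crux, and the reason diligence and the refined embedding $\cbvToBangAGK{\cdot}$ are genuinely needed here, is the return trip into the image of the embedding. Unlike the \CBNSymb case (\Cref{fig:Cbn_Confluence_Surface}), where $\cbnToBangAGK{\cdot}$ and all its reducts are $\derSymb$-free so the \BANGSymb common reduct is automatically back in the image, here $s'$ is in general polluted by residual $\bangSymbBang$-redexes and escapes the image; it must be forced back in by exhausting its administrative redexes, which is precisely where termination of $\bangSymbBang$-reduction and the $\bangSymbBang$-normal-form hypotheses of the stability statements (\Cref{cor:cbv prop-full}.1 and \Cref{cor:cbv prop-surface}.1) enter, the latter resting in turn on the diligence machinery of \Cref{lem:Implicitation}. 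Everything else is a routine diagram chase.
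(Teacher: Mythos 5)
Your argument is correct and coincides with the paper's own proof, which is exactly the schematic diagram of \Cref{fig:Cbv_Confluence_Full} (lift by \Cref{cor:cbv prop-full}.3, close the diamond with \Cref{lem:Bang_Confluence}, exhaust the administrative $\bangSymbBang$-redexes to re-enter the image via stability, and project back by reverse simulation), with the surface case run analogously "with the additional help of diligence" as the paper indicates. Your treatment of the \emph{moreover} clause via the step-count preservation of \Cref{cor:cbv prop-surface}.3 and the quantitative part of \Cref{lem:Bang_Confluence}.1 is likewise the intended reading.
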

\begin{proof}
\textbf{(Surface)} Following the same reasoning as
Fig.\ref{fig:Cbn_Confluence_Surface}. \textbf{(Full)} See
Fig.\ref{fig:Cbv_Confluence_Full}.
\end{proof}


\newcommand{\nameSectionIV}
    {Factorization}

\section{Factorization}
\label{sec:Factorization}

In $\lambda$-calculi, reduction is a relation, so different reduction
steps are possible starting from the same term. Some steps (\eg head
steps) are more significant than others, and they may occur in the
middle of a {reduction} sequence.
Factorization is the process of disentangling the significant steps
from the ``superfluous'' ones, bringing the former forward and leaving
the start~and~end~terms~unchanged.

This section is devoted to the factorization property for \BANGSymb,
\CBNSymb and \CBVSymb. We start by revisiting an abstract
factorization theorem~\cite{Accattoli12}. We first apply this abstract
method to \BANGSymb, thus obtaining a new result of factorization not
previously appearing in the literature. Then, we use the properties of
simulation and reverse simulation proved in
\Cref{sec:CBN_CBV_Embeddings} to project the factorization result for
\BANGSymb into \CBNSymb and \CBVSymb. Although these two results can
be directly derived from the abstract factorization
theorem~\cite{Accattoli12}, our approach circumvents the numerous
commutation properties required by the abstract approach. Also, it
provides a tangible illustration of how the simulation and reverse
simulation properties discussed in
\Cref{subsec:Call-by-Name_Calculus_and_Embedding,subsec:Call-by-Value_Calculus_and_Embedding}
can be applied in~concrete~cases.

\paragraph{Abstract Factorization.}
\label{subsec:Abstract Factorization}

We recall an abstract factorization method from~\cite{Accattoli12}
that relies on local rewrite conditions given by the notion of
\emph{square factorization system} (SFS). While its original
presentation concerns only two subreductions, we (straightforwardly)
extend the notion of SFS to a \emph{family} of subreductions, as in
\BANGSymb the reduction consists of more than two subreductions.

\begin{definition}
    Let $\mathcal{R} = (R,\, \rightarrow_\rel)$ be an abstract
    rewriting system The family $\left(\arrRExt[\rel_k],
    \arrRInt[\rel_k]\right)_{k \in K}$ of paired reduction relations
    is a \emphasis{square factorization system (SFS)} for
    $\mathcal{R}$ if it covers the reduction relation (\ie
    $\rightarrow_\rel\;= \bigcup_{k \in K} \arrR[\rel_k]$ where
    $\arrR[\rel_k] \,\coloneqq\, \arrRExt[\rel_k] \cup \arrRInt[\rel_k]$) and
    satisfies the following conditions:
    \begin{enumerate}
    \item \textbf{Termination:} %
        $\forall\, k \in K$, $\arrRExt[\rel_k]$ is terminating.

    \item \textbf{Row-swaps:} %
        $\forall\, k \in K$,\; $\arrRInt[\rel_k]\arrRExt[\rel_k]
            \;\subseteq\; \arrpRExt[\rel_{k}]\arrsRInt[\rel_{k}]$.

    \item \textbf{Diagonal-swaps:} %
            $\forall\, k_1, k_2 \in K$, $k_1 \neq k_2$,\;
            $\arrRInt[\rel_{k_1}]\arrRExt[\rel_{k_2}] \;\subseteq\;
            \arrRExt[\rel_{k_2}]\arrsR[\rel_{k_1}]$\!.
        \end{enumerate}%
\end{definition}

The symbol $\circ$ tags \emph{significant} (also called
\emph{external}) steps, while $\bullet$ is used for \emph{\insignificant}
(also called \emph{internal}) ones. 
The commutations required in an
SFS are sufficient to achieve factorization, which consists in
rearranging a reduction sequence to prioritize significant steps
$\arrRExt$ over {\insignificant} steps $\arrRInt$.

\begin{restatable}[\cite{Accattoli12}]{proposition}{RecSFSFactorizable}
    \label{lem:SFS_Factorizable}%
    Let $\mathcal{R} = (R, \rightarrow_\rel)$ be an abstract rewriting
    system and $(\arrRExt[\rel_k], \arrRInt[\rel_k])_{k \in K}$ be an
    $SFS$ for $\mathcal{R}$. Then the reduction relation factorizes,
    that is, $\rightarrow^*_\rel \ \subseteq \ \rightarrow^*_\circ
    \rightarrow^*_\bullet$ where $\rightarrow_\circ \coloneqq
    \bigcup_{k \in K} \arrRExt[\rel_k]$ and $\rightarrow_\bullet
    \coloneqq \bigcup_{k \in K} \arrRInt[\rel_k]$. 
\end{restatable}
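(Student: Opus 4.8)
The plan is to turn the two local swap conditions into a global rearrangement, in two stages: first reorder the steps of a single ``colour'' $k$ using the row-swap, then merge the colours using the diagonal-swaps. Throughout write $\rightarrow_{\circ}\coloneqq\bigcup_{k}\rightarrow_{\circ_k}$, $\rightarrow_{\bullet}\coloneqq\bigcup_{k}\rightarrow_{\bullet_k}$, and $\rightarrow_{\rel_k}\coloneqq\rightarrow_{\circ_k}\cup\rightarrow_{\bullet_k}$, so that the goal reads $\rightarrow_\rel^{*}\subseteq\rightarrow_\circ^{*}\rightarrow_\bullet^{*}$.

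\emph{Single colour.} Fix $k$; I would first prove $\rightarrow_{\rel_k}^{*}\subseteq\rightarrow_{\circ_k}^{*}\rightarrow_{\bullet_k}^{*}$. By Termination, every term $t$ has a well-founded rank $\partial_k(t)$ (the length of a longest $\rightarrow_{\circ_k}$-reduction from $t$, or an ordinal rank in general). Given a sequence $\rho:t\rightarrow_{\rel_k}^{*}s$, argue by induction on $\partial_k(t)$: apply the row-swap $\rightarrow_{\bullet_k}\!\rightarrow_{\circ_k}\subseteq\rightarrow_{\circ_k}^{+}\!\rightarrow_{\bullet_k}^{*}$ repeatedly to the pair formed by the last $\bullet$-step occurring before the \emph{first} $\circ$-step of $\rho$; each such application strictly decreases the number of $\bullet$-steps preceding the first $\circ$-step, so after finitely many of them $\rho$ either contains no $\circ$-step---hence is already of the form $\rightarrow_{\bullet_k}^{*}$---or begins with a step $t\rightarrow_{\circ_k}t'\rightarrow_{\rel_k}^{*}s$, in which case $\partial_k(t')<\partial_k(t)$ lets the induction hypothesis factor $t'\rightarrow_{\rel_k}^{*}s$, and prepending the first step concludes.

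\emph{Merging colours.} For a general sequence $\rho:t\rightarrow_\rel^{*}s$, decompose it into maximal single-colour blocks $\rho_1\cdots\rho_m$ with $\rho_j:t_{j-1}\rightarrow_{\rel_{k_j}}^{*}t_j$ and $k_j\neq k_{j+1}$, and factor each block by the single-colour case, turning $\rho$ into an alternation $\rightarrow_{\circ_{k_1}}^{*}\rightarrow_{\bullet_{k_1}}^{*}\rightarrow_{\circ_{k_2}}^{*}\rightarrow_{\bullet_{k_2}}^{*}\cdots$. The only obstruction to the shape $\rightarrow_\circ^{*}\rightarrow_\bullet^{*}$ is a pattern $\rightarrow_{\bullet_{k_j}}\!\rightarrow_{\circ_{k_{j+1}}}$ with $k_j\neq k_{j+1}$; the diagonal-swap $\rightarrow_{\bullet_{k_j}}\!\rightarrow_{\circ_{k_{j+1}}}\subseteq\rightarrow_{\circ_{k_{j+1}}}\!\rightarrow_{\rel_{k_j}}^{*}$ pushes that external step one place to the left, at the cost of replacing the internal step by a mixed $\rightarrow_{\rel_{k_j}}^{*}$ sequence, which the single-colour case re-factors. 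Iterating this move (and a row-swap whenever the two colours happen to coincide) brings every $\rightarrow_\circ$-step in front of every $\rightarrow_\bullet$-step; termination of the outer loop follows from a well-founded measure on the sequence that strictly decreases at each move, which yields $\rightarrow_\rel^{*}\subseteq\rightarrow_\circ^{*}\rightarrow_\bullet^{*}$. An alternative is an induction on $|K|$, splitting off one colour at a time, which goes through because the diagonal-swap concerns only two colours at once.

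\emph{Expected difficulty.} The diagram chasing itself is routine; the real point is that both local swaps must be promoted to \emph{non-looping} global rearrangements, and this hinges entirely on the Termination hypothesis. In the single-colour case the rank $\partial_k$ makes it transparent. The merging case is where the hard work sits: $\rightarrow_\circ=\bigcup_k\rightarrow_{\circ_k}$ need not itself be terminating, so one cannot reuse one rank, and the delicate part is designing an ad hoc measure that tracks, colour by colour, the external steps still to be pushed past internal ones---exactly the bookkeeping performed by the abstract theorem of \cite{Accattoli12}. Since the statement is precisely that theorem with a family of subreductions in place of two, the safest route to a full proof is to replay the argument of \cite{Accattoli12} and verify that the generalisation is harmless, which it is: the row-swap stays within one colour and the diagonal-swap touches only two.
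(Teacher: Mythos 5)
The paper does not itself prove this proposition: it is imported from \cite{Accattoli12}, the only new ingredient being the (unproved, claimed-routine) observation that the two-subreduction statement extends to a family because row-swaps are per-colour and diagonal-swaps only ever involve two colours at a time. Your bottom line---replay the argument of \cite{Accattoli12} and check that the generalisation is harmless---is therefore the same route as the paper's. Your single-colour stage is a correct Hindley-style argument: termination of $\to_{\circ_k}$ plus the row-swap does give $(\to_{\circ_k}\cup\to_{\bullet_k})^*\subseteq\to_{\circ_k}^*\to_{\bullet_k}^*$, with the inner measure (number of $\bullet_k$-steps before the first $\circ_k$-step) and the outer induction read as well-founded induction along $\to_{\circ_k}$ (a ``longest $\circ_k$-reduction'' length need not be a natural number without finite branching, as you half-acknowledge).

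Two cautions on the merging stage. First, as a self-contained proof it has exactly the gap you flag yourself: the well-founded measure controlling the loop ``diagonal-swap, then re-factor the generated mixed $\to_{\rel_{k_j}}^*$ tail'' is asserted, not constructed, and that measure \emph{is} the content of the cited theorem---this is acceptable here only because the paper also delegates it wholesale to \cite{Accattoli12}. Second, the aside that one could instead induct on $|K|$, splitting off one colour at a time, is not justified as stated: to invoke the binary theorem on the split you would have to treat $\bigcup_{k\in K'}\to_{\circ_k}$ as a single external relation, and a union of terminating relations need not terminate (nor does the grouped pair obviously satisfy the row-swap, since the diagonal-swap's residual $\to_{\rel_{k_1}}^*$ mixes external and internal steps), so the SFS hypotheses are lost for the grouped family. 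The extension to a family goes through by replaying the binary proof with per-colour termination and pairwise diagonal-swaps---which is precisely why it is harmless---not by reducing to the binary case.
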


\paragraph{Factorization in \BANGSymb.}
\label{subsec:Bang Factorization}

In \BANGSymb we claim that \surface reduction is the significant part
of \full reduction, and our goal is to factor it out. To exploit the
abstract method, we first formally identify the \insignificant\
subreduction of \full reduction, called here \emph{internal}, as
reduction under the scope of a $\oc$. 
\emphasis{Internal contexts}
$\bangICtxt \in \bangICtxtSet$ are full contexts $\bangFCtxtSet$ for
which the hole is placed under a bang. Formally,%
\begin{align*}
	\emphasis{(\BANGSymb Internal Contexts)}
	& &
	\bangICtxt &\;\Coloneqq\; \oc\bangFCtxt
	\vsep \bangSCtxt^*\bangCtxtPlug{\bangICtxt}
	&&
	\text{with } \bangSCtxt^* \in \bangSCtxtSet \setminus \{\Hole\}
\end{align*}

\noindent
Clearly, $\bangICtxtSet = \bangFCtxtSet \setminus \bangSCtxtSet$.
As usual, $\bangArrSet_I<R>$ is the closure
of the rewrite rules $\rel \in \{\bangSymbBeta, \bangSymbSubs,
\bangSymbBang\}$ over all contexts $\bangICtxt \in \bangICtxtSet$.
The  \emphasis{\BANGSymb internal reduction} is the relation
$\bangArrSet_I \, \coloneqq \, \bangArrSet_I<dB> \cup \bangArrSet_I<s!> \cup \bangArrSet_I<d!>$. For example, $\app[\,]{(\abs{x}{\oc \Hole})}{y} $ is an internal context while $\Hole$ is not. Thus,
$\app[\,]{(\abs{x}{\oc (z\esub{z}{x})})}{y} \bangArrSet_I<s!>
\app[\,]{(\abs{x}{\oc x})}{y} \not\bangArrSet_I (\oc x)\esub{x}{y}$.
We can now show that surface
and internal reductions enjoy the abstract properties of an SFS.

\begin{restatable}{lemma}{BangSFS}
    \label{lem:BangSFS}%
    The family  $(\bangArrSet_S<\rel>, \bangArrSet_I<\rel>)_{\rel \in
        \{\bangSymbBeta, \bangSymbSubs, \bangSymbBang\}}$%
    is an SFS for $(\setBangTerms, \bangArr_F)$.
\end{restatable}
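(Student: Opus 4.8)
The plan is to verify the three defining conditions of an SFS for the family $(\bangArrSet_S<\rel>, \bangArrSet_I<\rel>)_{\rel \in \{\bangSymbBeta, \bangSymbSubs, \bangSymbBang\}}$, after first checking the covering condition $\bangArr_F<\rel> = \bangArrSet_S<\rel> \cup \bangArrSet_I<\rel>$ for each $\rel$, which is immediate from $\bangFCtxtSet = \bangSCtxtSet \uplus \bangICtxtSet$ (the surface and internal contexts partition the full contexts, as noted just before the statement). Then: \textbf{(1) Termination.} Each $\bangArrSet_S<\rel>$ is terminating. For $\rel = \bangSymbBang$ this is already observed in the paper (each $\bangSymbBang$-step erases a $\oc$ and a $\derSymb$, so term size strictly decreases). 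For $\bangSymbBeta$ and $\bangSymbSubs$, surface reduction does not go under $\oc$, so I would argue termination of $\bangArrSet_S<dB>$ and $\bangArrSet_S<s!>$ in isolation: a single $\bangSymbBeta$-step at the surface strictly decreases the number of surface abstractions-applied-to-arguments (or, more robustly, use a measure counting surface applications/ES not shielded by $\oc$), and similarly for $\bangSymbSubs$ where the number of surface ES whose argument is of the form $\bangLCtxt<\oc u>$ strictly decreases. I would likely package this with a single polynomial-style measure on the ``surface skeleton'' of a term, since here we only need each individual $\bangArrSet_S<\rel>$ (not their union) to terminate.

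\textbf{(2) Row-swaps.} For each $\rel$, I must show $\bangArrSet_I<\rel>\bangArrSet_S<\rel> \subseteq \bangArrSet*_S<\rel>\bangArrSet=_I<\rel>$, i.e.\ an internal $\rel$-step followed by a surface $\rel$-step can be rearranged so that the surface step(s) come first, followed by at most one internal step. The key observation is that an internal $\rel$-step happens strictly under some $\oc$, hence in a position invisible to surface reduction, so it cannot create a surface $\rel$-redex that was not already present; conversely, performing the surface step first leaves the internal redex essentially untouched (it may get duplicated or erased if it sits inside the argument of the fired surface redex, which is exactly why the right-hand side allows $\bangArrSet*_S<\rel>$ and $\bangArrSet=_I<\rel>$ rather than single steps). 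The bookkeeping is a case analysis on the relative positions of the two redexes: disjoint, or the internal one inside the (duplicable/erasable) argument substituted by the surface one. \textbf{(3) Diagonal-swaps.} For $\rel_1 \neq \rel_2$, show $\bangArrSet_I<\rel_1>\bangArrSet_S<\rel_2> \subseteq \bangArrSet_S<\rel_2>\bangArrSet*_{F}<\rel_1>$ — same idea, an internal $\rel_1$-step is hidden under a $\oc$ and thus commutes with a subsequent surface $\rel_2$-step, which may duplicate/erase it (hence the reflexive-transitive closure $\bangArrSet*$ on the $\rel_1$ side, and full reduction since the residuals of the internal redex may now sit in various positions, though they remain internal in fact).

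I expect the main obstacle to be the residual analysis in the row- and diagonal-swap cases: carefully checking that when the surface step fires a $\bangSymbBeta$ or $\bangSymbSubs$-redex whose argument contains the internal redex, the residuals of that internal redex are still internal redexes of the same rule, and that no new \emph{surface} redex of the relevant rule is spuriously created or destroyed by commuting the steps — in particular handling the action-at-a-distance list contexts $\bangLCtxt$, which can move constructors around and must be shown not to break the ``hidden under a $\oc$'' invariant. Capture-avoidance side conditions on $\bangSymbBeta$/$\bangSymbSubs$ also need a quick check when swapping. Once all three conditions are established, \Cref{lem:SFS_Factorizable} immediately yields factorization of $\bangArr_F$.
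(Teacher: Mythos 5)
There is a genuine gap in your swap analysis, and it sits exactly where the proof is delicate. Your key invariant --- the internal redex is hidden under a $\oc$, so after commuting the two steps its residuals ``are still internal redexes of the same rule'' (row-swaps) and ``remain internal in fact'' (diagonal-swaps) --- is false. Internal redexes get promoted to surface ones in two ways. First, if the surface step is an $\bangArrSet_S<s!>$-step firing $t\esub{x}{\bangLCtxt<\oc u>}$ and the internal redex lies inside $u$, then substituting $u$ for the occurrences of $x$ strips the shielding $\oc$: the redex may be copied several times and the copies may land at surface positions of $t$. Second, if the surface step is an $\bangArrSet_S<d!>$-step, the $\oc$ it erases can be precisely the one that made the other redex internal. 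Consequently the row-swap for $\bangSymbSubs$ cannot have the shape you state ($\bangArrSet*_S<s!>$ followed by at most one internal step --- a shape that is in any case inconsistent with your own remark that the redex may be duplicated); what actually holds is $\bangArrSet_I<s!> \bangArrSet_S<s!> \subseteq \bangArrSet+_S<s!>\bangArrSet*_I<s!>$, and proving it needs the additional argument that the copies are pairwise non-overlapping, so that they can be fired surface-first and internal-second. Likewise the $\bangSymbBang$ row-swap needs the extra disjunct $\bangArrSet_S<d!> \bangArrSet_S<d!>$, and the mixed diagonal swaps only land in $\bangArrSet*_F<dB>$ resp.\ $\bangArrSet*_F<d!>$ (full, not internal, closures); this is harmless there because the diagonal condition of an SFS tolerates it, but the same phenomenon is exactly what your row-swap argument cannot absorb. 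Note finally that the row-swap condition requires at least one surface step on the right (a transitive closure, not a reflexive-transitive one); this part is easy, since an internal step cannot destroy the surface redex, but your star is too weak as stated.

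A secondary but real problem is termination of $\bangArrSet_S<s!>$. The measure you sketch (count the surface ES whose argument has the form $\bangLCtxt<\oc u>$) does not decrease: the surface step $(\app{x}{x})\esub{x}{\oc (z\esub{z}{\oc y})} \bangArrSet_S<s!> \app{(z\esub{z}{\oc y})}{(z\esub{z}{\oc y})}$ copies an ES out from under the $\oc$ to two surface positions, raising the count from one to two. The plain size measure does work for $\bangArrSet_S<dB>$ and $\bangArrSet_S<d!>$, as you say, but $\bangSymbSubs$ needs a multiplicity-aware (weighted) measure; the paper obtains it by adapting Accattoli and Kesner's termination argument. Your overall strategy --- check covering via the partition of full contexts into surface and internal ones, verify the three SFS conditions, and conclude by \Cref{lem:SFS_Factorizable} --- coincides with the paper's, but as written the row-swaps and the $\bangSymbSubs$ termination do not go through.
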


This immediately gives the following novel factorization result for
the Distant Bang Calculus, by applying \Cref{lem:SFS_Factorizable} and \Cref{lem:BangSFS}.

\begin{corollary}[\BANGSymb Factorization]
    \label{lem:Bang_Factorization}%
    We have that $\bangArrSet*_F \;=\; \bangArrSet*_S\bangArrSet*_I$.
\end{corollary}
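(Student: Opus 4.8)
The plan is to read off this corollary directly from the abstract factorization machinery, so that the only thing left to do by hand is a trivial reverse inclusion. I would split the claimed equality $\bangArrSet*_F = \bangArrSet*_S\bangArrSet*_I$ into the two inclusions $\bangArrSet*_S\bangArrSet*_I \subseteq \bangArrSet*_F$ and $\bangArrSet*_F \subseteq \bangArrSet*_S\bangArrSet*_I$, and observe that only the second one has any content. For the first one I would note that surface and internal contexts are particular full contexts (recall $\bangICtxtSet = \bangFCtxtSet \setminus \bangSCtxtSet$ and $\bangSCtxtSet \subseteq \bangFCtxtSet$), so that $\bangArrSet_S \subseteq \bangArrSet_F$ and $\bangArrSet_I \subseteq \bangArrSet_F$, hence $\bangArrSet*_S \subseteq \bangArrSet*_F$ and $\bangArrSet*_I \subseteq \bangArrSet*_F$; since $\bangArrSet*_F$ is a reflexive-transitive closure it is stable under composition with itself, which gives $\bangArrSet*_S\bangArrSet*_I \subseteq \bangArrSet*_F$.

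For the inclusion $\bangArrSet*_F \subseteq \bangArrSet*_S\bangArrSet*_I$, I would invoke \Cref{lem:BangSFS}, which states that the family $(\bangArrSet_S<\rel>, \bangArrSet_I<\rel>)_{\rel \in \{\bangSymbBeta, \bangSymbSubs, \bangSymbBang\}}$ is an SFS for $(\setBangTerms, \bangArr_F)$, and then apply \Cref{lem:SFS_Factorizable} to this SFS. That proposition yields $\bangArrSet*_F \subseteq \rightarrow^*_\circ\rightarrow^*_\bullet$ with $\rightarrow_\circ = \bigcup_{\rel}\bangArrSet_S<\rel>$ and $\rightarrow_\bullet = \bigcup_{\rel}\bangArrSet_I<\rel>$. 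It then suffices to identify these unions with the reductions in the statement: $\bigcup_{\rel}\bangArrSet_S<\rel> = \bangArrSet_S<dB> \cup \bangArrSet_S<s!> \cup \bangArrSet_S<d!>$ is by definition $\bangArrSet_S$, and symmetrically $\bigcup_{\rel}\bangArrSet_I<\rel> = \bangArrSet_I$, which closes the argument.

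I do not expect any genuine obstacle at the level of this corollary. The only bit of bookkeeping is the cover clause required for an SFS, namely that $\bangArr_F$ equals the union over $\rel$ of $\bangArrSet_S<\rel> \cup \bangArrSet_I<\rel>$, which is immediate from $\bangFCtxtSet = \bangSCtxtSet \cup \bangICtxtSet$. All the real work, i.e.\ the termination of each external sub-reduction $\bangArrSet_S<\rel>$ and the row-swap and diagonal-swap commutations relating surface and internal steps acting at a distance, has already been discharged inside \Cref{lem:BangSFS}; the present statement is merely the instantiation of \Cref{lem:SFS_Factorizable} at that SFS, together with the one-line reverse inclusion above.
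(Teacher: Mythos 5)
Your proposal is correct and matches the paper's own argument: the paper derives \Cref{lem:Bang_Factorization} exactly by instantiating \Cref{lem:SFS_Factorizable} at the SFS given by \Cref{lem:BangSFS}, with the reverse inclusion $\bangArrSet*_S\bangArrSet*_I \subseteq \bangArrSet*_F$ left implicit as trivial. Your extra bookkeeping (the cover clause and identifying the unions with $\bangArrSet_S$ and $\bangArrSet_I$) is fine and does not change the route.
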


\begin{example} Take $t=(\app{x}{y})\esub{y}{\oc (\app{I}{ \oc
  \oc (\app{I}{ w})})}$ where $I = \abs{z}{z}$. 
Then, the
  factorization of the first sequence starting at $t$ below,
  is given by the second one:
\begin{gather*}
        t
    \bangArrSet_F
        (\app{x}{y})\esub{y}{\oc(z\esub{z}{\oc \oc (\app{I}{w})})}
    \bangArrSet_S
       \app{x}{(z\esub{z}{\oc \oc (\app{I}{w})})}
    \bangArrSet_F
       \app{x}{(z\esub{z}{\oc \oc (z\esub{z}{w})})}
    \bangArrSet_S
       \app{x}{\oc(z\esub{z}{w})}
\\
       t
    \;\;\bangArrSet_S\;\;
       \app{x}{(\app{I}{\oc\oc(\app{I}{w})})}
    \;\;\bangArrSet_S\;\;
       \app{x}{(z\esub{z}{\oc\oc(\app{I}{w})})}
    \;\;\bangArrSet_S\;\;
       \app{x}{\oc(\app{I}{w})}
    \;\;{\bangArrSet_I}\;\;
       \app{x}{\oc(z\esub{z}{w})}
\end{gather*} 
\end{example}

\paragraph{Factorizations in \CBNSymb and \CBVSymb.}
\label{subsec:CBN-CBV-Factorization}

\begin{figure}[t]
    \begin{minipage}[c]{0.49\textwidth}%
        \begin{tikzpicture}
            \node (0_t) at (-2.5, 0)      {$t$};
            \node (0_u) at (2.5, 0)       {$u$};

            \draw[->] (0_t) to ([xshift=-0.5cm]0_u);
            \node at ([xshift=-0.4cm, yshift=0.1cm]0_u) {*};
            \node at ([xshift=-0.35cm, yshift=-0.15cm]0_u) {$\scriptstyle\cbnFCtxtSet$};
            \node[rotate=-90] at (-0.65, -0.3)               {$\rightsquigarrow$};
            \node at (0, -0.3) {\scriptsize({\tt Cor.}\ref{cor:cbn prop-full})};

            \node (1_t) at (-2.5, -0.6)      {$\cbnToBangAGK{t}$};
            \node (1_u) at (2.5, -0.6)       {$\cbnToBangAGK{u}$};

            \draw[->] (1_t) to ([xshift=-0.6cm]1_u);
            \node at ([xshift=-0.5cm, yshift=0.1cm]1_u) {*};
            \node at ([xshift=-0.5cm, yshift=-0.15cm]1_u) {$\scriptstyle\bangFCtxtSet$};
            \node[rotate=-90] at (-0.65, -0.9)               {$\rightsquigarrow$};
            \node at (0, -0.9) {\scriptsize({\tt Cor.}\ref{lem:Bang_Factorization})};

            \node (2_t) at (-2.5, -1.2)      {$\cbnToBangAGK{t}$};
            \node (2_s) at (0, -1.2)         {$s'$};
            \node (2_u) at (2.5, -1.2)       {$\cbnToBangAGK{u}$};

            \draw[->] (2_t) to ([xshift=-0.4cm]2_s);
            \node at ([xshift=-0.3cm, yshift=0.1cm]2_s) {*};
            \node at ([xshift=-0.3cm, yshift=-0.15cm]2_s) {$\scriptstyle\bangSCtxtSet$};
            \draw[->] (2_s) to ([xshift=-0.6cm]2_u);
            \node at ([xshift=-0.5cm, yshift=0.1cm]2_u) {*};
            \node at ([xshift=-0.5cm, yshift=-0.15cm]2_u) {$\scriptstyle\bangICtxtSet$};
            \node[rotate=-90] at (-1.85, -1.5)               {$\rightsquigarrow$};
            \node at (-1.2, -1.5) {\scriptsize({\tt Cor.}\ref{cor:cbn prop-surface})};
            \node[rotate=-90] at (0.55, -1.5)               {$\rightsquigarrow$};
            \node at (1.2, -1.5) {\scriptsize({\tt Cor.}\ref{cor:cbn prop-internal})};


            \node (3_t) at (-2.5, -1.8)      {$t$};
            \node (3_s) at (0, -1.8)         {$s$};
            \node (3_u) at (2.5, -1.8)       {$u$};

            \draw[->] (3_t) to ([xshift=-0.4cm]3_s);
            \node at ([xshift=-0.3cm, yshift=0.1cm]3_s) {*};
            \node at ([xshift=-0.25cm, yshift=-0.15cm]3_s) {$\scriptstyle\cbnSCtxtSet$};
            \draw[->] (3_s) to ([xshift=-0.4cm]3_u);
            \node at ([xshift=-0.3cm, yshift=0.1cm]3_u) {*};
            \node at ([xshift=-0.25cm, yshift=-0.15cm]3_u) {$\scriptstyle\cbnICtxtSet$};
        \end{tikzpicture}
        \vspace{-0.3cm}
        \caption{\CBNSymb factorization (with $s' \!=\! \cbnToBangAGK{s}$)}
        \label{fig:proof cbn factorization}
    \end{minipage}
\hspace{.5cm}
    \begin{minipage}[c]{0.49\textwidth}%
        \begin{tikzpicture}
            \node (0_t) at (-2.5, 0)      {$t$};
            \node (0_u) at (2.5, 0)       {$u$};

            \draw[->] (0_t) to ([xshift=-0.5cm]0_u);
            \node at ([xshift=-0.4cm, yshift=0.1cm]0_u) {*};
            \node at ([xshift=-0.35cm, yshift=-0.15cm]0_u) {$\scriptstyle\cbvFCtxtSet$};
            \node[rotate=-90] at (-0.65, -0.3)               {$\rightsquigarrow$};
            \node at (0, -0.3) {\scriptsize({\tt Cor.}\ref{cor:cbv prop-full})};

            \node (1_t) at (-2.5, -0.6)      {$\cbvToBangAGK{t}$};
            \node (1_u) at (2.5, -0.6)       {$\cbvToBangAGK{u}$};

            \draw[->] (1_t) to ([xshift=-0.6cm]1_u);
            \node at ([xshift=-0.5cm, yshift=0.1cm]1_u) {*};
            \node at ([xshift=-0.5cm, yshift=-0.15cm]1_u) {$\scriptstyle\bangFCtxtSet$};
            \node[rotate=-90] at (-0.65, -0.9)               {$\rightsquigarrow$};
            \node at (0, -0.9) {\scriptsize({\tt Cor.}\ref{lem:Bang_Factorization})};

            \node (2_t) at (-2.5, -1.2)      {$\cbvToBangAGK{t}$};
            \node (2_s) at (0, -1.2)         {$s'$};
            \node (2_u) at (2.5, -1.2)       {$\cbvToBangAGK{u}$};

            \draw[->] (2_t) to ([xshift=-0.4cm]2_s);
            \node at ([xshift=-0.3cm, yshift=0.1cm]2_s) {*};
            \node at ([xshift=-0.3cm, yshift=-0.15cm]2_s) {$\scriptstyle\bangSCtxtSet$};
            \draw[->] (2_s) to ([xshift=-0.6cm]2_u);
            \node at ([xshift=-0.5cm, yshift=0.1cm]2_u) {*};
            \node at ([xshift=-0.5cm, yshift=-0.15cm]2_u) {$\scriptstyle\bangICtxtSet$};
            \node[rotate=-90] at (-1.85, -1.5)               {$\rightsquigarrow$};
            \node at (-1.2, -1.5) {\scriptsize({\tt Cor.}\ref{cor:cbv prop-surface})};
            \node[rotate=-90] at (0.55, -1.5)               {$\rightsquigarrow$};
            \node at (1.2, -1.5) {\scriptsize({\tt Cor.}\ref{cor:cbv prop-internal})};


            \node (3_t) at (-2.5, -1.8)      {$t$};
            \node (3_s) at (0, -1.8)         {$s$};
            \node (3_u) at (2.5, -1.8)       {$u$};

            \draw[->] (3_t) to ([xshift=-0.4cm]3_s);
            \node at ([xshift=-0.3cm, yshift=0.1cm]3_s) {*};
            \node at ([xshift=-0.25cm, yshift=-0.15cm]3_s) {$\scriptstyle\cbvSCtxtSet$};
            \draw[->] (3_s) to ([xshift=-0.4cm]3_u);
            \node at ([xshift=-0.3cm, yshift=0.1cm]3_u) {*};
            \node at ([xshift=-0.25cm, yshift=-0.15cm]3_u) {$\scriptstyle\cbvICtxtSet$};
        \end{tikzpicture}
        \vspace{-0.3cm}
        \caption{\CBVSymb factorization (with $s' \!=\! \cbvToBangAGK{s}$)}
        \label{fig:proof cbv factorization}
    \end{minipage}
\end{figure}

To achieve factorization in \CBNSymb and \CBVSymb via the abstract
method, we need to establish the existence of an $SFS$ in each case.
This requires validating multiple commutations. We bypass these
lengthy proofs by adopting a simpler projection approach
from~\BANGSymb.

As in \BANGSymb, we claim that \surface reduction is the significant
part of \full reduction in \CBNSymb/\CBVSymb, and we consequently
identify the \insignificant\ subreduction, called here
\emph{internal}. The \CBNSymb (resp. \CBVSymb) internal contexts
$\cbnICtxt \in \cbnICtxtSet$ (resp. $\cbvICtxt \in \cbvICtxtSet$) are
\full contexts whose hole is in an argument (\resp under a $\lambda$).
Formally,
\begin{align*}
    \emphasis{(\CBNSymb Internal Contexts)}
& &
    \cbnICtxt &\;\Coloneqq\; \app[\,]{t}{\cbnFCtxt}
        \vsep t\esub{x}{\cbnFCtxt}
        \vsep \cbnSCtxt^*\cbnCtxtPlug{\cbnICtxt}
& &
    \text{with }
    \cbnSCtxt^* \in \cbnSCtxtSet \setminus \{\Hole\} \\
    \emphasis{(\CBVSymb Internal Contexts)}
& &
		\cbvICtxt&\;\Coloneqq\; \abs{x}{\cbvFCtxt}
		\vsep \cbvSCtxt^*\cbvCtxtPlug{\cbvICtxt}
& &
		\text{with }
		\cbvSCtxt^* \in \cbvSCtxtSet \setminus \{\Hole\}
\end{align*}	

\noindent
The \emphasis{\CBNSymb (resp. \CBVSymb) internal reduction}
$\cbnArrSet_I$ (\resp  $\cbvArrSet_I$) is the closure over all
internal contexts $\cbnICtxt \in \cbnICtxtSet$ (resp. $\cbvICtxt \in
\cbvICtxtSet$) of the rewrite rules \cbnSymbBeta and \cbnSymbSubs\
(\resp \cbnSymbBeta and \cbvSymbSubs). For example,
$\app{(\abs{x}{x})}{\Hole}$ is a \CBNSymb\ internal context, while
$\Hole$ is not, thus $\app{(\abs{x}{x})}{(\app{(\abs{y}{z})}{t})}
\cbnArrSet_I \app{(\abs{x}{x})}{z} \not\cbnArrSet_I z$. And
$\app{(\abs{x}{\Hole})}{z}$ is a \CBVSymb internal context while
$\Hole$ is not, thus $\app{(\abs{x}{\app{(\abs{y}{y})}{z}})}{z}
\cbvArrSet_I \app{(\abs{x}{y\esub{y}{z}})}{z} \cbvArrSet_I
\app{(\abs{x}{z})}{z} \not\cbvArrSet_I z\esub{x}{z}$.

As in the surface case, the one-step simulation and reverse simulation
(\Cref{l:one-step-cnb,l:reverse-simulation-cbn} for \CBNSymb,
\Cref{lem:Cbv_Simulation_One_Step,lem:Cbv_Reverse_Simulation_One_Step}
for \CBVSymb) can be specialized to the internal case. This allows us
to show in particular the following property. 

\begin{restatable}{corollary}{RecCbnPropInternal}
    \label{cor:cbn prop-internal}\label{cor:cbv prop-internal}
    Let $t, u \in \setCbnTerms$  and $s' \in \setBangTerms$.  
    \begin{itemize}
    \item \textbf{(Stability)}: if $\cbnToBangAGK{t} \bangArrSet*_I
        s'$ (\resp $\cbvToBangAGK{t} \bangArrSet*_I s'$ and
        {$s'$ is a \bangSetINF<d!>}) then there is $s \in \setCbnTerms$
        such that $\cbnToBangAGK{s} = s'$ (\resp $\cbvToBangAGK{s} =
        s'$).

    \item \textbf{(Normal Forms)}: $t \text{ is a \cbnSetINF}$ (\resp
        	\cbvSetINF) iff $\cbnToBangAGK{t}$ (\resp
        	$\cbvToBangAGK{t}$) is a \bangSetINF.

    \item \textbf{(Simulations)}: $t \cbnArrSet*_I u$ (\resp $t
        \cbvArrSet*_I u$) iff $\cbnToBangAGK{t} \bangArrSet*_I
        \cbnToBangAGK{u}$ (\resp $\cbvToBangAGK{t} \bangArrSet*_I
        \cbvToBangAGK{u}$). Moreover, the number of
        $\cbnSymbBeta/\cbnSymbSubs$-steps (\resp
        $\cbvSymbBeta/\cbvSymbSubs$-steps) on the left matches the
        number of $\bangSymbBeta/\bangSymbSubs$-steps on the right.
    \end{itemize}
\end{restatable}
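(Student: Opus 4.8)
The plan is to replay, at the \emph{internal} level, the development already carried out for surface reduction in \Cref{cor:cbn prop-surface,cor:cbv prop-surface}: first establish one-step internal simulation and one-step internal reverse simulation, and then iterate them --- in the \CBVSymb case with the help of an internal version of the diligence process of \Cref{lem:Implicitation}.

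The first ingredient is an internal analogue of \Cref{lem:Cbn Redex Position Stability} (and of its \CBVSymb counterpart): internal contexts are mapped to internal contexts by the embeddings, and are reflected by them. Since $\cbnICtxtSet = \cbnFCtxtSet \setminus \cbnSCtxtSet$, $\cbvICtxtSet = \cbvFCtxtSet \setminus \cbvSCtxtSet$ and $\bangICtxtSet = \bangFCtxtSet \setminus \bangSCtxtSet$, this follows by pure complementation: both embeddings send full contexts to full contexts and they preserve and reflect the property of \emph{not} being a surface context. For \CBVSymb one extra observation is needed: the rule- and term-dependent correction to the context translation (adding or removing a top-level $\derSymb$, cf.\ the discussion after \Cref{lem:Cbv_Simulation_One_Step}) never moves the hole into or out of the scope of a $\oc$, hence does not affect whether the context is surface or internal.

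The next step is to specialise the one-step (reverse) simulations to internal contexts. For \CBNSymb, \Cref{l:one-step-cnb,l:reverse-simulation-cbn} directly yield that $t \cbnArrSet_I u$ iff $\cbnToBangAGK{t} \bangArrSet_I \cbnToBangAGK{u}$, with the rule preserved. For \CBVSymb, \Cref{lem:Cbv_Simulation_One_Step,lem:Cbv_Reverse_Simulation_One_Step} yield: $t \cbvArrSet_I u$ implies $\cbvToBangAGK{t} \bangArrSet_I\, \bangArrSet*_I<d!> \cbvToBangAGK{u}$, and conversely a step $\cbvToBangAGK{t} \bangArrSet_I\, \bangArrSet*_I<d!> u'$ landing on a \bangSetINF<d!> simulates a \CBVSymb internal step from $t$. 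Two points make the $\bangArrSet*_I<d!>$-part behave well here: the administrative steps produced by the super-development sit below the same $\oc$ as the main redex, so they are genuinely internal; and an internal step issued from an image of $\cbvToBangAGK{\cdot}$ can never create a \emph{surface} $\bangSymbBang$-redex (such a redex lies outside every $\oc$, whereas an internal step only rewrites under a $\oc$) --- combined with the fact that $\cbvToBangAGK{t}$ has no $\bangSymbBang$-redex at all, this shows that every \bangSetINF<d!> reachable from $\cbvToBangAGK{t}$ by internal steps is already a \bangSetFNF<d!>, so that \Cref{cor:cbv prop-full}.1 applies to it.

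It then remains to iterate. Normal-form preservation and the left-to-right simulations are straightforward inductions on the number of steps built from the one-step lemmas ($\bangSymbBang$-reduction being terminating). For \CBNSymb the right-to-left simulation and stability are equally immediate, since every reduct of $\cbnToBangAGK{t}$ is $\derSymb$-free, so no administrative steps intervene --- this is why no $\bangSymbBang$-normality hypothesis is needed there. For \CBVSymb, given $\cbvToBangAGK{t} \bangArrSet*_I s'$ with $s'$ a \bangSetINF<d!>, I would first upgrade the sequence to a diligent internal one, then cut it into blocks of shape ``one computational \BANGSymb step followed by a maximal run of $\bangArrSet*_I<d!>$'', each ending on a \bangSetINF<d!>; by the one-step internal reverse simulation each block projects to exactly one \CBVSymb internal step with the matching rule, and an induction using stability shows that every block endpoint remains a \CBVSymb image, which gives $s' = \cbvToBangAGK{s}$ and the stated step counts. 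I expect the main obstacle to be precisely this \CBVSymb right-to-left argument: on the one hand, checking that the embedding's administrative $\bangSymbBang$-steps remain internal when the simulated \CBVSymb step is internal; on the other hand, establishing the internal analogue of \Cref{lem:Implicitation} (whose statement covers only surface and full reduction) by replaying its commutation arguments under a $\oc$-prefix while keeping everything inside internal contexts.
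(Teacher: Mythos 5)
Your proposal follows essentially the same route as the paper: internal-context preservation/reflection for both embeddings, specialization of the one-step (reverse) simulations to internal contexts, an internal analogue of the diligence lemma together with the observation that internal steps from a \CBVSymb image cannot create surface $\bangSymbBang$-redexes (so reachable internally-$\bangSymbBang$-normal terms are fully $\bangSymbBang$-normal), and iteration --- precisely the ingredients the paper supplies as \Cref{lem: Cbn Internal mapped to Bang Internal}, \Cref{lem:cbv_internal_context_stability}, \Cref{lem:Internal_Diligence} and \Cref{lem:cbv_INF<d!>_is_FNF<d!>}, plugged into the generic preservation lemmas. The only real deviation is minor: you obtain the internal-context stability by complementation from the surface-context lemmas (using that internal contexts are exactly the non-surface full contexts, and noting that $\derSymb$-adjustments and $\cbvRmDst{\cdot}$ do not change surface/internal status), whereas the paper proves these facts by direct induction on contexts.
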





\noindent
Via \Cref{cor:cbn prop-surface,cor:cbv prop-surface,cor:cbn
prop-internal}, we can project factorization from \BANGSymb
back~to~\CBNSymb/\CBVSymb.

\begin{theorem}[\CBNSymb/\CBVSymb Factorizations]
     $\cbnArrSet*_F \,=\, \cbnArrSet*_S\cbnArrSet*_I$ and
     $\cbvArrSet*_F \,= \, \cbvArrSet*_S\cbvArrSet*_I$.
\end{theorem}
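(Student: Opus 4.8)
The plan is to \emph{project} the \BANGSymb factorization of \Cref{lem:Bang_Factorization} along the two embeddings, exactly as pictured in \Cref{fig:proof cbn factorization} and \Cref{fig:proof cbv factorization}; I would spell out the \CBVSymb case, the \CBNSymb one being a strictly simpler instance. The inclusion $\cbvArrSet*_S\cbvArrSet*_I \subseteq \cbvArrSet*_F$ is immediate, since \CBVSymb surface and internal contexts are \CBVSymb full contexts, so $\cbvArrSet_S \subseteq \cbvArrSet_F$ and $\cbvArrSet_I \subseteq \cbvArrSet_F$. For the non-trivial inclusion, given $t \cbvArrSet*_F u$, I would first lift it to \BANGSymb via the simulation part of \Cref{cor:cbv prop-full}, obtaining $\cbvToBangAGK{t} \bangArrSet*_F \cbvToBangAGK{u}$, and then apply \BANGSymb factorization (\Cref{lem:Bang_Factorization}) to split this sequence as $\cbvToBangAGK{t} \bangArrSet*_S s' \bangArrSet*_I \cbvToBangAGK{u}$ for some $s' \in \setBangTerms$.

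The heart of the argument will be to show that this intermediate $s'$ lies in the image of $\cbvToBangAGK{\cdot}$, so that both halves of the factored \BANGSymb sequence can be pushed back down to \CBVSymb. By the stability part of \Cref{cor:cbv prop-surface}, it suffices to prove that $s'$ is a \bangSetSNF<d!>. The key observation is that $\bangArrSet_I$-reduction cannot affect $\bangSymbBang$-redexes occurring at surface positions: an internal step fires a redex under some $\oc$, so a $\bangSymbBang$-redex that is \emph{not} under any $\oc$ is left in place by the step (neither erased nor duplicated out of that position), while every $\bangSymbBang$-redex the step may create is again under a $\oc$, hence internal. Therefore being a \bangSetSNF<d!> is invariant along $\bangArrSet*_I$; since $\cbvToBangAGK{u}$ is a \bangSetFNF<d!> --- every image of $\cbvToBangAGK{\cdot}$ is, thanks to super-development --- it is in particular a \bangSetSNF<d!>, and $s' \bangArrSet*_I \cbvToBangAGK{u}$ then forces $s'$ to be one too. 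Hence $s' = \cbvToBangAGK{s}$ for some (necessarily \bangSetFNF<d!>) term $s \in \setCbvTerms$.

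It would then remain to project the two halves separately: from $\cbvToBangAGK{t} \bangArrSet*_S \cbvToBangAGK{s}$ and the backward direction of the simulation part of \Cref{cor:cbv prop-surface} we get $t \cbvArrSet*_S s$; from $\cbvToBangAGK{s} \bangArrSet*_I \cbvToBangAGK{u}$ and the backward direction of the simulation part of \Cref{cor:cbv prop-internal} we get $s \cbvArrSet*_I u$; hence $t \cbvArrSet*_S s \cbvArrSet*_I u$, as required. The \CBNSymb case runs identically but sidesteps the delicate middle step entirely: since $\cbnToBangAGK{t}$ and all its reducts are $\derSymb$-free, they contain no $\bangSymbBang$-redex and are trivially \bangSetSNF<d!>, so the stability part of \Cref{cor:cbn prop-surface} gives $s' = \cbnToBangAGK{s}$ at once, and one concludes with the simulation parts of \Cref{cor:cbn prop-surface} and \Cref{cor:cbn prop-internal}. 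I expect the main obstacle to be precisely this middle claim for \CBVSymb: an arbitrary surface reduct of $\cbvToBangAGK{t}$ may still carry administrative $\bangSymbBang$-redexes and thus fall outside the embedding's image, and the persistence observation above is exactly what rules this out --- the concrete pay-off of having arranged, via super-development, that the embedding's image consists of \bangSetFNF<d!> terms.
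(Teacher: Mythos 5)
Your proposal is correct and follows essentially the same route as the paper: lift the \CBVSymb (resp.\ \CBNSymb) sequence via \Cref{cor:cbv prop-full} (resp.\ \Cref{cor:cbn prop-full}), factorize in \BANGSymb with \Cref{lem:Bang_Factorization}, show the intermediate term is a \bangSetSNF<d!> because it $\bangArrSet*_I$-reduces to the \bangSetFNF<d!> term $\cbvToBangAGK{u}$ (the paper's terse ``induction on the length of $s' \bangArrSet*_I \cbvToBangAGK{u}$'', which your persistence-of-surface-$\bangSymbBang$-redexes observation spells out), then use stability and the reverse simulations of \Cref{cor:cbv prop-surface,cor:cbv prop-internal} (resp.\ \Cref{cor:cbn prop-surface,cor:cbn prop-internal}) to project both halves back. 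No gaps; your \CBNSymb shortcut via $\derSymb$-freeness is exactly the paper's.
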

\begin{proof}
    The proof for \CBNSymb is depicted in \Cref{fig:proof cbn
    factorization}. In particular, since $\cbnToBangAGK{t}
    \bangArrSet*_S s'$, one deduces using \Cref{cor:cbn prop-surface}
    that there exists $s \in \setCbnTerms$ such that $\cbnToBangAGK{s}
    = s'$.

    The proof for  \CBVSymb is depicted in \Cref{fig:proof cbv
    factorization}. In particular, by construction $\cbvToBangAGK{u}$
    is a \bangSetFNF<d!> and by induction on the length of $s'
    \bangArr*_I \cbvToBangAGK{u}$, one has that $s'$ is a
    \bangSetSNF<d!>. Using \Cref{cor:cbv prop-surface}, one deduces
    that there exists $s \in \setCbvTerms$ such that $\cbvToBangAGK{s}
    = s'$.
   \qed
\end{proof}

\noindent



\section{Conclusion and Related Work}
\label{s:conclusion}

Our first contribution is to revisit and extend several properties
concerning the encoding of \CBNSymb into \BANGSymb. The second
contribution, 
more significant, consists in introducing a new embedding from
\CBVSymb to \BANGSymb, which is conservative with respect to previous
results in the
literature~\cite{BucciarelliKesnerRiosViso20,BucciarelliKesnerRiosViso23},
but also (and this is a novelty)  allows us to establish the essential
reverse simulation property, achieved through the non-trivial concept
of diligent sequence. We illustrate the strength of our methodology by
means of an example, namely factorization. For that, we first prove a
factorization theorem for \BANGSymb, another major contribution of the
paper, and we then deduce factorization for \CBNSymb and \CBVSymb by
projecting that for \BANGSymb.


In \cite{FaggianGuerrieri21}, factorization
for the (non-distant) Bang Calculus has been proved and from that,
factorizations results for standard (non distant) CBN
$\lambda$-calculus and Plotkin's original CBV $\lambda$-calculus has
been deduced. But the {(non-distant)}
Bang Calculus and Plotkin's CBV are \emph{not
adequate}, in the sense explained in \Cref{sec:intro}, thus decreasing
the significance of those preliminary results. 

When taking \emph{adequate} versions of the Bang Calculus, by adding
ES and distance, or $\sigma$-reduction
\cite{EhrhardGuerrieri16,GuerrieriOlimpieri21}, the CBV encodings in
the
literature~\cite{EhrhardGuerrieri16,GuerrieriManzonetto19,BucciarelliKesnerRiosViso20,FaggianGuerrieri21,BucciarelliKesnerRiosViso23}
fail to enjoy reverse simulation, thus preventing one deducing dynamic
properties from
the Bang Calculus into CBV. Other CBN and CBV encodings into a
unifying framework appear in~\cite{BakelTyeWu23}, but there is no
reverse simulation property, so that no concrete application of the
proposed encoding to export properties into CBN and CBV. The same
occurs in~\cite{EspiritoPintoUustalu19}. The only exceptions
are~\cite{ArrialGuerrieriKesner23,KesnerViso22} ---where only
\emph{static} properties are obtained---,
and~\cite{GuerrieriManzonetto19,FaggianGuerrieri21} ---where the
Bang and CBV calculi are not adequate in the sense explained in
\Cref{sec:intro}.

Sabry and Wadler~\cite{SabryWadler97} showed that simulation and
reverse simulation between two calculi are for free when their back
and forth translations give rise to an adjoint. One of the
difficulties to achieve our results is that our CBN and CBV
embeddings, as well as the ones used
in~\cite{EhrhardGuerrieri16,GuerrieriManzonetto19,BucciarelliKesnerRiosViso20,FaggianGuerrieri21,BucciarelliKesnerRiosViso23},
do not form an adjoint. This is basically due to the fact that a
CBN/CBV term can be decorated by $\oc$ and $\derSymb$ so as that
administrative steps performed in the (Distant) Bang Calculus do not
correspond to anything \mbox{in CBN or CBV}. Our contribution is
precisely to achieve simulation and reverse simulation without the
need for any  adjoint.

As discussed at the end of~\Cref{sec:CBN_CBV_Embeddings}, proving simulation and reverse simulation requires a
considerable initial effort. However, this investment lays the
groundwork for numerous benefits without extra costs, as we showed
in \Cref{sec:Bang_CBN_CBV_Confluence,sec:Factorization}. 

In addition to the tangible contributions presented in this paper, we
believe our methodology enhances the understanding of the semantic
aspects of CBV, especially concerning untyped and typed approximants.
This remains a topic that, while gradually gaining attention in the
literature~\cite{ManzonettoPaganiRonchi19,KerinecManzonettoPagani20,ArrialGuerrieriKesner23},
is yet to be thoroughly explored. Our novel CBV embedding would also
suggest a logical counterpart (a new encoding  of intuitionistic logic
into linear logic), which remains to be investigated. 
Moreover, we aim to further leverage our technique to explore other
crucial dynamic properties of \CBNSymb and \CBVSymb, such as
standardization, normalization, genericity as well as some specific
deterministic strategies.

\bibliography{main}

\ifbool{versionLongue}{
\booltrue{inAppendix}
\clearpage
\appendix

\section{Appendix: Proofs of \Cref{sec:Distant_Bang_Calculus}}

\label{l:proofs-bang}
\subsection{Basic Properties}

We introduce the following measure, which counts the number of
constructors contained within a term.

\begin{definition}[Term Full Size]
    \begin{equation*}
        \begin{array}{rcl c rcl}
            \bangTermFSize{x}
                &:=&
                    0
        &\hspace{2cm}&
            \bangTermFSize{t\esub{x}{u}}
                &:=&
                    1 + \bangTermFSize{t} + \bangTermFSize{u}
        \\
            \bangTermFSize{\app{t}{u}}
                &:=&
                    1 + \bangTermFSize{t} + \bangTermFSize{u}
        &&
            \bangTermFSize{\der{t}}
                &:=&
                    1 + \bangTermFSize{t}
        \\
            \bangTermFSize{\abs{x}{t}}
                &:=&
                    1 + \bangTermFSize{t}
        &&
            \bangTermFSize{\oc t}
                &:=&
                    1 + \bangTermFSize{t}
        \end{array}
    \end{equation*}
\end{definition}

It is extended to all contexts by taking $\bangTermFSize{\Hole} = 0$
and show that full term size is compatible with contexts.

\begin{lemma}
	\label{lem: |F<t>|_F = |F|_F + |t|_F}%
	Let $t \in \setBangTerms$, then for any context $\bangFCtxt$,
	$\bangTermFSize{\bangFCtxt<t>} = \bangTermFSize{\bangFCtxt} +
	\bangTermFSize{t}$.
\end{lemma}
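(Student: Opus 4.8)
The plan is to proceed by structural induction on the full context $\bangFCtxt \in \bangFCtxtSet$, following the grammar of full contexts given earlier.

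For the base case $\bangFCtxt = \Hole$ we have $\Hole<t> = t$ and $\bangTermFSize{\Hole} = 0$ by the extension of the measure to contexts, hence $\bangTermFSize{\Hole<t>} = \bangTermFSize{t} = \bangTermFSize{\Hole} + \bangTermFSize{t}$.

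For the inductive step, the context has one of the seven shapes $\app{\bangFCtxt'}{u}$, $\app{u}{\bangFCtxt'}$, $\abs{x}{\bangFCtxt'}$, $\oc\bangFCtxt'$, $\der{\bangFCtxt'}$, $\bangFCtxt'\esub{x}{u}$, $u\esub{x}{\bangFCtxt'}$, where $\bangFCtxt'$ is a strictly smaller full context. In every case plugging commutes with the outermost constructor (e.g.\ $(u\esub{x}{\bangFCtxt'})<t> = u\esub{x}{\bangFCtxt'<t>}$ and $(\oc\bangFCtxt')<t> = \oc(\bangFCtxt'<t>)$), so the defining clause of $\bangTermFSize{\cdot}$ strips the same leading $1$ --- and possibly the size of a side subterm $u$ --- from $\bangFCtxt<t>$ as from $\bangFCtxt$; applying the induction hypothesis to $\bangFCtxt'<t>$ then closes the case. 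For instance, $\bangTermFSize{(u\esub{x}{\bangFCtxt'})<t>} = \bangTermFSize{u\esub{x}{\bangFCtxt'<t>}} = 1 + \bangTermFSize{u} + \bangTermFSize{\bangFCtxt'<t>} = 1 + \bangTermFSize{u} + \bangTermFSize{\bangFCtxt'} + \bangTermFSize{t} = \bangTermFSize{u\esub{x}{\bangFCtxt'}} + \bangTermFSize{t}$, using the induction hypothesis in the third equality. The remaining six cases are entirely analogous.

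I expect no real obstacle here: the measure merely counts constructors, and hole-plugging adds none and removes none, so the statement is pure bookkeeping. The only point worth a remark is that $\bangFCtxt<t>$ may capture free variables of $t$, but $\bangTermFSize{\cdot}$ ignores variable names altogether (it assigns $0$ to every variable and inspects binders only to count the surrounding constructor), so capture plays no role in the computation.
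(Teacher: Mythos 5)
Your proof is correct and follows exactly the paper's route: a structural induction on the full context $\bangFCtxt$, with the hole case and the homomorphic commutation of plugging with each context constructor doing all the work (the paper simply spells out all seven inductive cases instead of treating one representative case). The closing remark on variable capture being irrelevant to the size measure is accurate and harmless.
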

    \begin{proof}
By induction on $\bangFCtxt$. Cases:
\begin{itemize}
\item[\bltI] $\bangFCtxt = \Hole$: Then $\bangFCtxt<t> = t$ thus
$\bangTermFSize{\bangFCtxt<t>} = \bangTermFSize{t}= 0 +
\bangTermFSize{t} = \bangTermFSize{\bangFCtxt} + \bangTermFSize{t}$.

\item[\bltI] $\bangFCtxt = \app[\,]{\bangFCtxt'}{u}$. Then:
\begin{equation*}
    \begin{array}{rcl}
        \bangTermFSize{\bangFCtxt<t>}
        &=& \bangTermFSize{\app{\bangFCtxt'<t>}{u}}
        \;=\; 1 + \bangTermFSize{\bangFCtxt'<t>} + \bangTermFSize{u}
        \\
        &\eqih& 1 + \bangTermFSize{\bangFCtxt'} + \bangTermFSize{t} + \bangTermFSize{u}
        \;=\; \bangTermFSize{\app{\bangFCtxt'}{u}} + \bangTermFSize{t}
        \;=\; \bangTermFSize{\bangFCtxt} + \bangTermFSize{t}
    \end{array}
\end{equation*}

\item[\bltI] $\bangFCtxt = \app[\,]{u}{\bangFCtxt'}$. Then:
\begin{equation*}
    \begin{array}{rcl}
        \bangTermFSize{\bangFCtxt<t>}
        &=& \bangTermFSize{\app[\,]{u}{\bangFCtxt'<t>}}
        \;=\; 1 + \bangTermFSize{u} + \bangTermFSize{\bangFCtxt'<t>}
        \\
        &\eqih& 1 + \bangTermFSize{u} + \bangTermFSize{\bangFCtxt'} + \bangTermFSize{t}
        \;=\; \bangTermFSize{\app[\,]{u}{\bangFCtxt'}} + \bangTermFSize{t}
        \;=\; \bangTermFSize{\bangFCtxt} + \bangTermFSize{t}
    \end{array}
\end{equation*}

\item[\bltI] $\bangFCtxt = \bangFCtxt'\esub{x}{u}$. Then:
\begin{equation*}
    \begin{array}{rcl}
        \bangTermFSize{\bangFCtxt<t>}
        &=& \bangTermFSize{\bangFCtxt'<t>\esub{x}{u}}
        \;=\; 1 + \bangTermFSize{\bangFCtxt'<t>} + \bangTermFSize{u}
        \\
        &\eqih& 1 + \bangTermFSize{\bangFCtxt'} + \bangTermFSize{t} + \bangTermFSize{u}
        \;=\; \bangTermFSize{\bangFCtxt'\esub{x}{u}} + \bangTermFSize{t}
        \;=\; \bangTermFSize{\bangFCtxt} + \bangTermFSize{t}
    \end{array}
\end{equation*}

\item[\bltI] $\bangFCtxt = u\esub{x}{\bangFCtxt'}$. Then:
\begin{equation*}
    \begin{array}{rcl}
        \bangTermFSize{\bangFCtxt<t>}
        &=& \bangTermFSize{u\esub{x}{\bangFCtxt'<t>}}
        \;=\; 1 + \bangTermFSize{u} + \bangTermFSize{\bangFCtxt'<t>}
        \\
        &\eqih& 1 + \bangTermFSize{u} + \bangTermFSize{\bangFCtxt'} + \bangTermFSize{t}
        \;=\; \bangTermFSize{u\esub{x}{\bangFCtxt'}} + \bangTermFSize{t}
        \;=\; \bangTermFSize{\bangFCtxt} + \bangTermFSize{t}
    \end{array}
\end{equation*}

\item[\bltI] $\bangFCtxt = \abs{x}{\bangFCtxt'}$. Then:
\begin{equation*}
    \begin{array}{rcl}
        \bangTermFSize{\bangFCtxt<t>}
        &=& \bangTermFSize{\abs{x}{\bangFCtxt'<t>}}
        \;=\; 1 + \bangTermFSize{\bangFCtxt'<t>}
        \\
        &\eqih& 1 + \bangTermFSize{\bangFCtxt'} + \bangTermFSize{t}
        \;=\; \bangTermFSize{\abs{x}{\bangFCtxt'}} + \bangTermFSize{t}
        \;=\; \bangTermFSize{\bangFCtxt} + \bangTermFSize{t}
    \end{array}
\end{equation*}

\item[\bltI] $\bangFCtxt = \der{\bangFCtxt'}$. Then:
\begin{equation*}
    \begin{array}{rcl}
        \bangTermFSize{\bangFCtxt<t>}
        &=& \bangTermFSize{\der{\bangFCtxt'<t>}}
        \;=\; 1 + \bangTermFSize{\bangFCtxt'<t>}
        \\
        &\eqih& 1 + \bangTermFSize{\bangFCtxt'} + \bangTermFSize{t}
        \;=\; \bangTermFSize{\der{\bangFCtxt'}} + \bangTermFSize{t}
        \;=\; \bangTermFSize{\bangFCtxt} + \bangTermFSize{t}
    \end{array}
\end{equation*}

\item[\bltI] $\bangFCtxt = \oc\bangFCtxt'$. Then:
\begin{equation*}
    \begin{array}{rcl}
        \bangTermFSize{\bangFCtxt<t>}
        &=& \bangTermFSize{\oc \bangFCtxt'<t>}
        \;=\; 1 + \bangTermFSize{\bangFCtxt'<t>}
        \\
        &\eqih& 1 + \bangTermFSize{\bangFCtxt'} + \bangTermFSize{t}
        \;=\; \bangTermFSize{\oc \bangFCtxt'} + \bangTermFSize{t}
        \;=\; \bangTermFSize{\bangFCtxt} + \bangTermFSize{t}
    \end{array}
\end{equation*}
\qed
\end{itemize}
\end{proof}

We now use this measure to prove the termination of reductions
$\bangArrSet_F<d!>, \bangArrSet_S<d!>$.

\begin{restatable}{lemma}{RecFullDistantBangSN}
	\label{lemma:RecFullDistantBangSN}
	\label{lem:->F<d!>SN}
	Reductions $\bangArrSet_F<d!>$ and $\bangArrSet_S<d!>$ are
	terminating.
\end{restatable}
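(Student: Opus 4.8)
The plan is to pin down a measure that strictly decreases along every $\bangArrSet_F<d!>$-step; termination of $\bangArrSet_S<d!>$ is then free, since surface contexts are particular full contexts and hence $\bangArrSet_S<d!> \subseteq \bangArrSet_F<d!>$. The measure will be the full size $\bangTermFSize{\cdot}$ just introduced: I would show that $t \bangArrSet_F<d!> u$ entails $\bangTermFSize{u} = \bangTermFSize{t} - 2$, which in particular gives $\bangTermFSize{u} < \bangTermFSize{t}$.

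To establish this, spell out a single $\bangArrSet_F<d!>$-step: it fires a $\bangSymbBang$-redex $\der{\bangLCtxt<\oc s>}$, turning it into $\bangLCtxt<s>$ inside some full context, so $t = \bangFCtxt<\der{\bangLCtxt<\oc s>}>$ and $u = \bangFCtxt<\bangLCtxt<s>>$ for some $\bangFCtxt \in \bangFCtxtSet$, $\bangLCtxt \in \bangLCtxtSet$ and $s \in \setBangTerms$. Since list contexts are a special case of full contexts, \Cref{lem: |F<t>|_F = |F|_F + |t|_F} applies both to $\bangFCtxt$ and to $\bangLCtxt$; combined with $\bangTermFSize{\der{r}} = 1 + \bangTermFSize{r}$ and $\bangTermFSize{\oc r} = 1 + \bangTermFSize{r}$, this yields $\bangTermFSize{t} = \bangTermFSize{\bangFCtxt} + 1 + \bangTermFSize{\bangLCtxt} + 1 + \bangTermFSize{s}$ and $\bangTermFSize{u} = \bangTermFSize{\bangFCtxt} + \bangTermFSize{\bangLCtxt} + \bangTermFSize{s}$, hence $\bangTermFSize{t} = \bangTermFSize{u} + 2$ as claimed.

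Since $\bangTermFSize{\cdot}$ is a nonnegative integer, no infinite $\bangArrSet_F<d!>$-reduction sequence can start from any term: it would induce an infinite, strictly decreasing sequence of nonnegative integers. Thus $\bangArrSet_F<d!>$ is terminating, and a fortiori so is $\bangArrSet_S<d!>$.

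This argument has no genuine difficulty; the only points deserving attention are bookkeeping ones. One must remember that $\bangSymbBang$-redexes act at a distance, so the redex pattern is $\der{\bangLCtxt<\oc s>}$ rather than just $\der{\oc s}$, which forces invoking the compatibility lemma twice---once for the ambient full context and once for the internal list context; and one should note that $\bangTermFSize{\cdot}$ is invariant under $\alpha$-conversion, so identifying terms up to $\alpha$ is harmless.
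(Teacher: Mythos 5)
Your proof is correct and follows essentially the same route as the paper's: both use the full size measure $\bangTermFSize{\cdot}$, decompose the step as $\bangFCtxt<\der{\bangLCtxt<\oc s>}> \bangArrSet_F<d!> \bangFCtxt<\bangLCtxt<s>>$, apply \Cref{lem: |F<t>|_F = |F|_F + |t|_F} to both the ambient full context and the inner list context to get a strict (indeed by-2) decrease, and derive the surface case from $\bangArrSet_S<d!> \subseteq \bangArrSet_F<d!>$. No issues.
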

    \begin{proof} \label{prf:->F<d!>SN}%
To prove that $\bangArrSet_F<d!>$ is terminating, it is enough to show
that $\bangTermFSize{t} > \bangTermFSize{u}$ for any $t, u \in
\setBangTerms$ such that $t \bangArrSet_F<d!> u$. By definition, there
are $t', u' \in \setBangTerms$ and a contexts $\bangFCtxt$ such that
$t = \bangFCtxt<t'>$, $u = \bangFCtxt<u'>$ with $t' \bangMapstoBang
u'$. So, $t' = \der{\bangLCtxt<\oc s>}$ and $u' = \bangLCtxt<s>$ for
some $s \in \setBangTerms$ and list context $\bangLCtxt$. Using
\Cref{lem: |F<t>|_F = |F|_F + |t|_F}, one deduces that
$\bangTermFSize{t'} = \bangTermFSize{\der{\bangLCtxt<\oc s>}} = 2 +
\bangTermFSize{s} +  \bangTermFSize{\bangLCtxt} > \bangTermFSize{s} +
\bangTermFSize{\bangLCtxt} = \bangTermFSize{\bangLCtxt<s>} =
\bangTermFSize{u'}$, and using \Cref{lem: |F<t>|_F = |F|_F + |t|_F}
again we conclude that $\bangTermFSize{t} =
\bangTermFSize{\bangFCtxt<t'>} > \bangTermFSize{\bangFCtxt<u'>} =
\bangTermFSize{u}$.

As $\bangArrSet_S<d!> \,\subseteq\, \bangArrSet_F<d!>$, termination of
$\bangArrSet_F<d!>$ implies termination of $\bangArrSet_S<d!>$.
\qed
\end{proof}

\begin{lemma}
		\label{lemma:RecFullDistantBetaSN}
		Reductions $\bangArrSet_F<dB>$ and $\bangArrSet_S<dB>$ are
		terminating.
\end{lemma}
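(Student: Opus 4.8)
The plan is to mimic the termination proof of the $\bangSymbBang$-reduction (\Cref{lem:->F<d!>SN}): exhibit the full term size $\bangTermFSize{\cdot}$ as a measure that strictly decreases along every $\bangArrSet_F<dB>$-step, which immediately gives termination of $\bangArrSet_F<dB>$, and then of $\bangArrSet_S<dB>$ since the latter is contained in the former.

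First I would reduce to the root case. A step $t \bangArrSet_F<dB> u$ decomposes as $t = \bangFCtxt<t'>$, $u = \bangFCtxt<u'>$ with $t' \bangMapstoBeta u'$, so by compositionality of the measure (\Cref{lem: |F<t>|_F = |F|_F + |t|_F}) it suffices to check $\bangTermFSize{t'} > \bangTermFSize{u'}$. The root redex has the shape $t' = \app{\bangLCtxt<\abs{x}{s}>}{r}$ and $u' = \bangLCtxt<s\esub{x}{r}>$ for some $s, r \in \setBangTerms$ and list context $\bangLCtxt$. Applying \Cref{lem: |F<t>|_F = |F|_F + |t|_F} inside each side, $\bangTermFSize{t'} = 2 + \bangTermFSize{\bangLCtxt} + \bangTermFSize{s} + \bangTermFSize{r}$ (one application node plus one abstraction node), while $\bangTermFSize{u'} = 1 + \bangTermFSize{\bangLCtxt} + \bangTermFSize{s} + \bangTermFSize{r}$ (a single ES node), so $\bangTermFSize{t'} = \bangTermFSize{u'} + 1$. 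This closes the argument; the surface case follows from $\bangArrSet_S<dB> \subseteq \bangArrSet_F<dB>$.

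I do not expect any real obstacle, but the point worth making explicit is why $\bangTermFSize{\cdot}$ still decreases here even though a $\bangSymbBeta$-step merely turns an application into an explicit substitution, which looks size-preserving: the bound abstraction $\abs{x}{s}$ is also consumed, and it is precisely this extra erased constructor that yields the strict decrease (by $1$, rather than by $2$ as for $\bangSymbBang$). This is also the reason the statement excludes the $\bangSymbSubs$-rule: its right-hand side uses the meta-level substitution $s\isub{x}{r}$, which may duplicate $r$ and hence strictly increase $\bangTermFSize{\cdot}$, so $\bangArrSet_F<s!>$ is genuinely non-terminating and cannot be handled by this (or any purely size-based) measure.
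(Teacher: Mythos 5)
Your proof is correct and is essentially the paper's own argument: the full term size $\bangTermFSize{\cdot}$ strictly decreases (by exactly $1$) under every $\bangArrSet_F<dB>$-step, via \Cref{lem: |F<t>|_F = |F|_F + |t|_F} applied at the root redex and again under the surrounding context, and the surface case follows from $\bangArrSet_S<dB> \subseteq \bangArrSet_F<dB>$. One correction to your closing aside, though: it is true that no purely size-based measure can handle $\bangSymbSubs$ (since meta-level substitution may duplicate $r$), but it is false that $\bangArrSet_F<s!>$ is non-terminating — the substitution subreduction is in fact terminating, and the paper relies on exactly this (termination of $\bangArrSet_S<s!>$, proved by adapting \cite[Lemma 2.9]{AccattoliKesner12}) in the termination item of \Cref{lem:BangSFS}; it is simply proved with a more refined measure than $\bangTermFSize{\cdot}$.
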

\begin{proof}
To prove that $\bangArrSet_F<dB>$ is terminating, it is enough to show
that $\bangTermFSize{t} > \bangTermFSize{u}$ for any $t, u \in
\setBangTerms$ such that $t \bangArrSet_F<dB> u$. By definition, there
are $t', u' \in \setBangTerms$ and a context $\bangFCtxt$ such that $t
= \bangFCtxt<t'>$, $u = \bangFCtxt<u'>$ with $t' \bangMapstoBeta u'$.
So, $t' = \app{\bangLCtxt<\abs{x}{s}>}{r}$ and $u' =
\bangLCtxt<s\esub{x}{r}>$ for some $s, r \in \setBangTerms$ and list
context $\bangLCtxt$. By \Cref{lem: |F<t>|_F = |F|_F + |t|_F}, one has
$\bangTermFSize{t'} = \bangTermFSize{\app{\bangLCtxt<\abs{x}{s}>}{r}}
= 2 + \bangTermFSize{s} + \bangTermFSize{r} +
\bangTermFSize{\bangLCtxt} > 1 + \bangTermFSize{s} + \bangTermFSize{r}
+ \bangTermFSize{\bangLCtxt} =
\bangTermFSize{\bangLCtxt<s\esub{x}{r}>} = \bangTermFSize{u'}$, and
using \Cref{lem: |F<t>|_F = |F|_F + |t|_F} again we conclude that
$\bangTermFSize{t} = \bangTermFSize{\bangFCtxt<t'>} >
\bangTermFSize{\bangFCtxt<u'>} = \bangTermFSize{u}$.

As $\bangArrSet_S<dB> \,\subseteq\, \bangArrSet_F<dB>$, termination of
$\bangArrSet_F<dB>$ implies termination of $\bangArrSet_S<dB>$.
\qed
\end{proof}

\subsection{Abstract Diligence for \BANGSymb}
\label{a:abstract-diligence}

In this subsection we introduce an abstract proof of diligence
process. The proof is parametrized using a family $\bangECtxtSet$ of
\BANGSymb contexts.

We first introduce a parametric definition of administrive diligent
reduction.

\begin{definition}[Administrative Diligent Reduction]%
  \label{def:abstract-diligence}
  The \emph{administrative diligent reduction} $\bangArrSet_{Eai}$
    associated with a family $\bangECtxtSet \subseteq \bangFCtxtSet$
    of contexts is a subset of the reduction $\bangArrSet_E$ obtained
    by restricting $\bangSymbBeta$ and $\bangSymbSubs$-steps to
    $\bangECtxtSet<\bangSymbBang>$-normal forms. More precisely, it is
    defined as follows:
    \begin{equation*}
        \begin{array}{rcl}
            \bangArrSet_{Eai}
                &\coloneqq& (\bangArrSet_E<dB> \cap\; \text{\bangSetENF<d!>} \times \setBangTerms)
                    \;\;\cup\;\; (\bangArrSet_E<s!> \cap\; \text{\bangSetENF<d!>} \times \setBangTerms)
                    \;\;\cup\; \bangArrSet_E<d!>
        \end{array}
    \end{equation*}

    We say that the reduction $\bangArrSet_E$ \emph{admits a diligence
    process} if $t \bangArrSet*_{Eai} u$ whenever $t \bangArrSet*_E u$
    with $u$ a \bangSetENF<d!>.
\end{definition}

\begin{lemma}[$\bangECtxtSet<\bangSymbBang>$-Head Steps Extraction]
\label{lem:t ->*|Eai u => t ->*|E<d!> u or (s |E<d!>-NF and t ->*|E<d!> s and s ->*|E u)}%
Let $t \bangArrSet*_{Eai} u$, then $t \bangArrSet*_E<d!> u$
or there is a \bangSetENF<d!> $s \in \setBangTerms$ such that
$t \bangArrSet*_E<d!> s$ and $s \bangArrSet*_{Eai} u$.
\end{lemma}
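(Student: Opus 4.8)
The plan is to prove this by induction on the length $n$ of the diligent administrative reduction sequence $t \bangArrSet*_{Eai} u$. The base case $n = 0$ is immediate: then $t = u$, and the empty reduction witnesses $t \bangArrSet*_E<d!> u$, so the first disjunct holds.

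For the inductive step I would write the sequence as $t \bangArrSet_{Eai} t' \bangArrSet*_{Eai} u$, with the suffix of length $n-1$, and split on the nature of the first step $t \bangArrSet_{Eai} t'$. If it is a $\bangSymbBang$-step (\ie a $\bangArrSet_E<d!>$-step), I apply the induction hypothesis to $t' \bangArrSet*_{Eai} u$. In the subcase $t' \bangArrSet*_E<d!> u$, prepending the step gives $t \bangArrSet*_E<d!> u$; in the subcase where some \bangSetENF<d!> $s$ satisfies $t' \bangArrSet*_E<d!> s$ and $s \bangArrSet*_{Eai} u$, prepending the step to the first segment yields $t \bangArrSet*_E<d!> s$, with $s$ and the second segment unchanged. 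Either way the conclusion holds for $t$.

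If instead the first step $t \bangArrSet_{Eai} t'$ is a $\bangSymbBeta$- or $\bangSymbSubs$-step, then by the very definition of $\bangArrSet_{Eai}$ --- which admits these two rules only from $\bangECtxtSet<\bangSymbBang>$-normal forms --- the term $t$ is itself a \bangSetENF<d!>. I then take $s \coloneqq t$: the empty reduction witnesses $t \bangArrSet*_E<d!> s$, $s = t$ is a \bangSetENF<d!>, and $s = t \bangArrSet*_{Eai} u$ is the original sequence. So the second disjunct holds.

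I do not expect any real obstacle here: the statement merely records that a diligent sequence decomposes as a (possibly empty) prefix of administrative $\bangSymbBang$-steps followed by a diligent sequence which, if non-empty, necessarily starts at a $\bangSymbBang$-normal form --- a fact built directly into the definition of $\bangArrSet_{Eai}$. The only mild subtlety is to honour the dichotomy of the conclusion: when the whole sequence consists of $\bangSymbBang$-steps the first disjunct applies (and $u$ need not be $\bangSymbBang$-normal), whereas as soon as a computational step appears the second disjunct applies with $s$ the source of the first such step. A non-inductive presentation that takes $s$ directly as that source --- or concludes $t \bangArrSet*_E<d!> u$ when no computational step occurs --- is equally fine.
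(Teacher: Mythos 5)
Your proof is correct and follows essentially the same route as the paper: induction on the length of the diligent sequence, with the empty case giving the first disjunct, a leading administrative step absorbed into the $\bangArrSet*_E<d!>$ prefix via the induction hypothesis, and a leading $\bangSymbBeta$/$\bangSymbSubs$ step forcing $t$ to be a \bangSetENF<d!> so that $s \coloneqq t$ witnesses the second disjunct. The paper phrases the case split as ``$t$ is/is not a \bangSetENF<d!>'' rather than on the rule of the first step, but by the definition of $\bangArrSet_{Eai}$ these dichotomies coincide, so the arguments are the same.
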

    \begin{proof}
By induction on the length of the reduction sequence $t
\bangArrSet*_{Eai} u$. Cases:
\begin{itemize}
\item[\bltI] $t \bangArrSet*^0_{Eai} u$: Then $t = u$ thus in
    particular $t \bangArrSet*_E<d!> u$ concluding this case.

\item[\bltI] $t \bangArrSet_{Eai} s_1 \bangArrSet*^n_{Eai} u$: We
    distinguish $2$ cases:
    \begin{itemize}
    \item[\bltII] $t$ is a \bangSetENF<d!>: Thus taking $s = t$
        concludes this case since by reflexivity $t \bangArrSet*_E<d!>
        s$ and by hypothesis $s \bangArrSet*_{Eai} u$.
    \item[\bltII] $t$ is not a \bangSetENF<d!>: Then necessarily $t
        \bangArrSet_E<d!> s_1 \bangArrSet_{Eai}^n u$. By \ih on $s_1
        \bangArrSet*^n_{Eai} u$, two cases can be distinguished:
        \begin{itemize}
        \item[\bltIII] $s_1 \bangArrSet*_E<d!> u$: Then $t
            \bangArr*_E<d!> u$ concluding this case.

        \item[\bltIII] $s_1 \bangArrSet*_E<d!> s$ and $s
            \bangArrSet*_{Eai} u$ for some \bangSetENF<d!> $s \in
            \setBangTerms$: Then $t \bangArrSet*_E<d!> s$ and $s
            \bangArrSet*_{Eai} u$ concluding this case.
            \qed
        \end{itemize}
    \end{itemize}
\end{itemize}
\end{proof}

\begin{lemma}
    \label{lem:R=dB/s! square ->|E<d!>/->|E<R> => square ->*E<d!>/->E<R>}%
Let $\rel \in \{\bangSymbBeta, \bangSymbSubs\}$. Suppose that for any
$t, u_1, u_2 \in \setBangTerms$ such that
$t \bangArrSet_E<\rel> u_1$ and $t \bangArrSet_E<d!> u_2$, there is $s \in \setBangTerms$
making the diagram below commute.
\begin{equation*}
    \begin{array}{ccc}
        t                               &\bangLongArrSet{1cm}_E<R>  &u_1
    \\
        \bangLongDownArrBangESet{1cm}   &                           &\bangLongDownArrsBangESet{1cm}
    \\
        u_2                             &\bangLongArrSet{1cm}_E<R>  &s
    \end{array}
\end{equation*}
Then, for any $t, u_1, u_2 \in \setBangTerms$ such that
$t \bangArrSet_E<\rel> u_1$ and $t \bangArrSet*_E<d!> u_2$, there exists $s \in
\setBangTerms$ such that the following diagram also commutes.
\begin{equation*}
    \begin{array}{ccc}
        t                               &\bangLongArrSet{1cm}_E<R>  &u_1
    \\
        \bangLongDownArrsBangESet{1cm}  &                           &\bangLongDownArrsBangESet{1cm}
    \\
        u_2                             &\bangLongArrSet{1cm}_E<R>  &s
    \end{array}
\end{equation*}
\end{lemma}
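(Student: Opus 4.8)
The plan is to prove the statement by a straightforward induction on the length $n$ of the $\bangArrSet*_E<d!>$ sequence $t \bangArrSet*_E<d!> u_2$, tiling the hypothesised one-step square $n$ times.

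In the base case $n = 0$ we have $u_2 = t$, so taking $s \coloneqq u_1$ works: $u_1 \bangArrSet*_E<d!> u_1$ holds by reflexivity, and $u_2 = t \bangArrSet_E<\rel> u_1 = s$ is exactly the given top step. For the inductive step, decompose the left column as $t \bangArrSet_E<d!> u_2' \bangArrSet*_E<d!> u_2$, where the second part has length $n-1$. Apply the one-step square hypothesis to the local peak formed by $t \bangArrSet_E<\rel> u_1$ and $t \bangArrSet_E<d!> u_2'$: this yields a term $s'$ with $u_1 \bangArrSet*_E<d!> s'$ and $u_2' \bangArrSet_E<\rel> s'$. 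Now the induction hypothesis applies to the configuration with $u_2'$ in place of $t$, the single step $u_2' \bangArrSet_E<\rel> s'$ in place of the top step, and the length-$(n-1)$ sequence $u_2' \bangArrSet*_E<d!> u_2$ in place of the left column; it produces $s$ with $s' \bangArrSet*_E<d!> s$ and $u_2 \bangArrSet_E<\rel> s$. Concatenating $u_1 \bangArrSet*_E<d!> s' \bangArrSet*_E<d!> s$ gives $u_1 \bangArrSet*_E<d!> s$, which together with $u_2 \bangArrSet_E<\rel> s$ closes the outer square.

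I do not anticipate any real obstacle here: the argument is a pure diagram chase, and in particular termination of $\bangArrSet_E<d!>$ is not needed, since the number of $\bangSymbBang$-steps to be commuted is finite and fixed from the outset. The only point deserving a moment of care is the bookkeeping when reinstantiating the induction hypothesis: one must observe that the output of the single square, namely $u_2' \bangArrSet_E<\rel> s'$, is again a \emph{single} $\rel$-step --- which is precisely what the hypothesis delivers --- so that the inductive configuration really has the shape required to invoke the hypothesis again.
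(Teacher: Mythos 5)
Your proof is correct and follows essentially the same route as the paper: induction on the length of the $\bangArrSet*_E<d!>$ sequence, applying the one-step square to the first administrative step and the induction hypothesis to the remainder, then concatenating the administrative reductions on the right. (If anything, your use of $u_1 \bangArrSet*_E<d!> s'$ for the output of the one-step square is slightly more faithful to the stated hypothesis than the single-step arrow drawn in the paper's diagram.)
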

    \begin{proof}
By induction on the length of the derivation $t \bangArrSet*_E<d!>
u_2$. Cases:
\begin{itemize}
\item[\bltI] $t = u_2$: One trivially concludes by taking $s = u_1$.

\item[\bltI] $t \bangArrSet_E<d!> u'_2 \bangArrSet*_E<d!> u_2$: We
consider the diagram below. By hypothesis, there is $u_1' \in
\setBangTerms$ such that the top square commutes. By \ih, there is $s
\in \setBangTerms$ such that the bottom square commutes.
\begin{equation*}
    \begin{array}{ccc}
        t                               &\bangLongArrSet{1cm}_E<R>  &u_1
    \\
        \bangLongDownArrBangESet{1cm}   &                           &\bangLongDownArrBangESet{1cm}
    \\
        u'_2                            &\bangLongArrSet{1cm}_E<R>  &u'_1
    \\
        \bangLongDownArrsBangESet{1cm}  &                           &\bangLongDownArrsBangESet{1cm}
    \\
        u_2                             &\bangLongArrSet{1cm}_E<R>  &s
    \end{array}
\end{equation*}
\qed
\end{itemize}
\end{proof}

\begin{lemma}[One Step Abstract Diligence]
    \label{lem: One Step Abstract Implicitation}
    Let $\mathbb{E} \subseteq \bangFCtxtSet$ be a family of contexts
    such that:
    \begin{itemize}
    \item[\bltI] the reduction relation $\bangArrSet_E<d!>$
        is confluent;
    \item[\bltI] for any $\rel \in \{\bangSymbBeta, \bangSymbSubs\}$ and any
        $t, u_1, u_2 \in \setBangTerms$ such that
        $t \bangArrSet_E<\rel> u_1$ and $t \bangArrSet_E<d!> u_2$, there exists $s \in \setBangTerms$
        making the following diagram commutes. 
        \begin{equation*}
            \begin{array}{ccc}
                t                               &\bangLongArrSet{1cm}_E<R>  &u_1
            \\
                \bangLongDownArrBangESet{1cm}   &                           &\bangLongDownArrsBangESet{1cm}
            \\
                u_2                             &\bangLongArrSet{1cm}_E<R>  &s
            \end{array}
        \end{equation*}
    \end{itemize}
    If $t, u \in \setBangTerms$ where $u$ is a \bangSetENF<d!> and
    such that $t \bangArrSet_E \bangArrSet*_E<d!> u$, then $t
    \bangArrSet*_{Eai} u$.
\end{lemma}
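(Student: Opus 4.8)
The plan is to case-split on the shape of the first step of the given reduction $t \bangArrSet_E \bangArrSet*_E<d!> u$, say $t \bangArrSet_E v \bangArrSet*_E<d!> u$, recalling $\bangArrSet_E = \bangArrSet_E<dB> \cup \bangArrSet_E<s!> \cup \bangArrSet_E<d!>$. If $t \bangArrSet_E<d!> v$, then $t \bangArrSet*_E<d!> u$, and since every $\bangArrSet_E<d!>$-step is an $\bangArrSet_{Eai}$-step by definition (\Cref{def:abstract-diligence}), we get $t \bangArrSet*_{Eai} u$ at once. So the real work is the case $t \bangArrSet_E<\rel> v$ with $\rel \in \{\bangSymbBeta, \bangSymbSubs\}$.

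For that case, the steps are as follows. First, since $\mathbb{E} \subseteq \bangFCtxtSet$ we have $\bangArrSet_E<d!> \,\subseteq\, \bangArrSet_F<d!>$, so $\bangArrSet_E<d!>$ is terminating by \Cref{lem:->F<d!>SN}; hence $t$ reduces by $\bangArrSet_E<d!>$ to some $\bangArrSet_E<d!>$-normal form $t'$, i.e.\ $t \bangArrSet*_E<d!> t'$ with $t'$ a \bangSetENF<d!>. Second, apply the commutation square lifted to a block of administrative steps (\Cref{lem:R=dB/s! square ->|E<d!>/->|E<R> => square ->*E<d!>/->E<R>}, whose hypothesis is exactly the second assumption on $\mathbb{E}$) to $t \bangArrSet_E<\rel> v$ and $t \bangArrSet*_E<d!> t'$: this yields $s \in \setBangTerms$ with $v \bangArrSet*_E<d!> s$ and $t' \bangArrSet_E<\rel> s$. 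Third, since $t'$ is a \bangSetENF<d!> and $\rel \in \{\bangSymbBeta, \bangSymbSubs\}$, the step $t' \bangArrSet_E<\rel> s$ is an $\bangArrSet_{Eai}$-step, while the prefix $t \bangArrSet*_E<d!> t'$ consists of $\bangArrSet_{Eai}$-steps, so $t \bangArrSet*_{Eai} s$. Fourth, to reach $u$: we have $v \bangArrSet*_E<d!> s$ from the previous step and $v \bangArrSet*_E<d!> u$ from the original sequence, so by confluence of $\bangArrSet_E<d!>$ there is $w$ with $s \bangArrSet*_E<d!> w$ and $u \bangArrSet*_E<d!> w$; as $u$ is a \bangSetENF<d!>, necessarily $u = w$, hence $s \bangArrSet*_E<d!> u$, which are again $\bangArrSet_{Eai}$-steps. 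Concatenating gives $t \bangArrSet*_{Eai} t' \bangArrSet_{Eai} s \bangArrSet*_{Eai} u$, as desired.

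There is no serious obstacle here: both the reordering of the $\bangArrSet_E<\rel>$-step past a block of $\bangArrSet_E<d!>$-steps and the termination/confluence of $\bangArrSet_E<d!>$ are supplied by the cited results, so the argument is essentially a bookkeeping combination of \Cref{lem:R=dB/s! square ->|E<d!>/->|E<R> => square ->*E<d!>/->E<R>} and confluence. The one point needing care is the final ``collapse'' of the commuted tail onto $u$: it crucially uses that $u$ is a $\bangArrSet_E<d!>$-normal form, which is precisely where the normal-form hypothesis of the statement is needed --- without it the reordered sequence would in general end at a term $\bangArrSet_E<d!>$-convertible to $u$ but not equal to it.
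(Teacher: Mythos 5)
Your proof is correct and follows essentially the same route as the paper's: reduce $t$ to a $\bangSymbBang$-normal form $t'$ by termination, commute the single $\rel$-step past that administrative block via \Cref{lem:R=dB/s! square ->|E<d!>/->|E<R> => square ->*E<d!>/->E<R>}, then use confluence of $\bangArrSet_E<d!>$ plus the fact that $u$ is a \bangSetENF<d!> to collapse the tail onto $u$. The handling of the case where the first step is itself administrative also matches the paper, so there is nothing to add.
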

    \begin{proof}
Let $t, u, s \in \setBangTerms$ with $t \bangArrSet_E s
\bangArrSet*_E<d!> u$ where $u$ is a \bangSetENF<d!>. Two cases:
\begin{itemize}
\item[\bltI] $t \bangArrSet_E<R> s$ with $\rel \in \{\bangSymbBeta,
    \bangSymbSubs\}$. Consider the diagram below. Since
    $\bangArrSet_E<d!>$ terminates
    (\Cref{lemma:RecFullDistantBangSN}), there is a \bangSetENF<d!>
    $t' \in \setBangTerms$ such that $t \bangArrSet*_F<d!> t'$. Using
    \Cref{lem:R=dB/s! square ->|E<d!>/->|E<R> => square
    ->*E<d!>/->E<R>} there is $s' \in \setBangTerms$ such that the
    left square commutes. By confluence of $\bangArrSet_E<d!>$ there
    is $u' \in \setBangTerms$ such that the right square commutes.
    \begin{equation*}
        \begin{array}{clclc}
            t                               &\bangLongArrSet{1cm}_E<\rel>   & s                                 &\bangLongArrSet{1cm}*_E<d!>    & u
        \\
            \bangLongDownArrsBangESet{1cm}  &                               & \bangLongDownArrsBangESet{1cm}    &                               & \bangLongDownArrsBangESet{1cm}
        \\
            t'                              & \bangLongArrSet{1cm}_E<\rel>  & s'                                &\bangLongArrSet{1cm}*_E<d!>    & u'
        \end{array}
    \end{equation*}
    As $u$ is \bangSetENF<d!>, necessarily $u = u'$. 
    As  $\bangArrSet*_E<d!> \subseteq \bangArr*_{Eai}$, one has $t
    \bangArrSet*_{Eai} t'$ and $s' \bangArrSet*_{Eai} u$. 
    Since $t'$ is 
    \bangSetENF<d!>, then $t' \bangArrSet_{Eai} s'$ and so $t
    \bangArrSet*_{Eai} u$ by transitivity.

\item[\bltI] $t \bangArrSet*_E<d!> u$: Then $t
    \bangArrSet*_E<d!> u$, and so $t \bangArrSet*_{Eai} u$, since
    $\bangArrSet*_E<d!> \subseteq \bangArrSet*_{Eai}$.
\qed
\end{itemize}
\end{proof}

\begin{lemma}%
	\label{lem: Abstract Implicitation}
	Let $\mathbb{E}$ be a family of contexts such that:
	\begin{itemize}
		\item[\bltI] The reduction relation $\bangArrSet_E<d!>$ is confluent.

		\item[\bltI] For any $\rel \in \{\bangSymbBeta, \bangSymbSubs\}$ and any
		$t, u_1, u_2 \in \setBangTerms$ such that
		$t \bangArrSet_E<\rel> u_1$ and $t \bangArrSet_E<d!> u_2$, there exists $s \in \setBangTerms$
		making the following diagram commutes: 
		\begin{equation*}
			\begin{array}{ccc}
				t                               &\bangLongArrSet{1cm}_E<R>  &u_1
				\\
				\bangLongDownArrBangESet{1cm}   &                           &\bangLongDownArrsBangESet{1cm}
				\\
				u_2                             &\bangLongArrSet{1cm}_E<R>  &s
			\end{array}
		\end{equation*}
	\end{itemize}
	Then the reduction relation $\bangArrSet_E$ admits a diligence
	process.
\end{lemma}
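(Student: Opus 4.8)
The plan is to derive the statement from \Cref{lem: One Step Abstract Implicitation} (one-step abstract diligence) by an induction on the length of the reduction $t \bangArrSet*_E u$, using \Cref{lem:t ->*|Eai u => t ->*|E<d!> u or (s |E<d!>-NF and t ->*|E<d!> s and s ->*|E u)} (head steps extraction) to bring an arbitrary diligent sequence into the shape consumed by the one-step result. First I would check that the hypotheses of \Cref{lem: One Step Abstract Implicitation} are available: confluence of $\bangArrSet_E<d!>$ is assumed, and its termination is immediate because $\mathbb{E} \subseteq \bangFCtxtSet$ forces $\bangArrSet_E<d!> \;\subseteq\; \bangArrSet_F<d!>$, which is terminating by \Cref{lemma:RecFullDistantBangSN}; the local commutation hypothesis is shared verbatim. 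Hence I may freely use the one-step fact: whenever $t \bangArrSet_E \bangArrSet*_E<d!> u'$ with $u'$ a \bangSetENF<d!>, then $t \bangArrSet*_{Eai} u'$.

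Then I would prove, by induction on the length $n$ of a sequence $t \bangArrSet*_E u$ with $u$ a \bangSetENF<d!>, that $t \bangArrSet*_{Eai} u$. For $n = 0$ this is trivial since $t = u$. For $n \geq 1$, split the sequence as $t \bangArrSet_E t_1 \bangArrSet*_E u$ with the tail of length $n-1$; as $u$ is still a \bangSetENF<d!>, the induction hypothesis gives $t_1 \bangArrSet*_{Eai} u$. Applying \Cref{lem:t ->*|Eai u => t ->*|E<d!> u or (s |E<d!>-NF and t ->*|E<d!> s and s ->*|E u)} to $t_1 \bangArrSet*_{Eai} u$ yields either $t_1 \bangArrSet*_E<d!> u$, or a \bangSetENF<d!> term $s$ with $t_1 \bangArrSet*_E<d!> s$ and $s \bangArrSet*_{Eai} u$. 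In both cases $t \bangArrSet_E t_1 \bangArrSet*_E<d!> s'$ with $s'$ a \bangSetENF<d!> (namely $s' = u$ in the first case, $s' = s$ in the second), so \Cref{lem: One Step Abstract Implicitation} gives $t \bangArrSet*_{Eai} s'$; composing with $s' \bangArrSet*_{Eai} u$ (which is reflexivity when $s' = u$) closes the case by transitivity. This establishes that $\bangArrSet_E$ admits a diligence process. Note the argument does not need to inspect which rule fires in the step $t \bangArrSet_E t_1$, since the one-step lemma accepts an arbitrary leading $\bangArrSet_E$-step.

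I expect the real work to lie upstream rather than here: inside \Cref{lem: One Step Abstract Implicitation} one must first lift the single local swap $\bangArrSet_E<\rel>\bangArrSet_E<d!> \subseteq \bangArrSet_E<\rel>\bangArrSet*_E<d!>$ to arbitrarily long administrative tails via \Cref{lem:R=dB/s! square ->|E<d!>/->|E<R> => square ->*E<d!>/->E<R>}, and then, using confluence and termination of $\bangArrSet_E<d!>$, permute \emph{all} administrative steps reachable from the source past a single computational step. Given those ingredients, the present lemma is essentially bookkeeping; the only subtlety worth flagging is the use of head steps extraction, which is precisely what guarantees that after the first $\bangArrSet_E$-step the remaining diligent sequence either is purely administrative or factors through an administrative \bangSetENF<d!> term — exactly the input required by the one-step diligence result.
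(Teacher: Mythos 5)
Your proof is correct and follows essentially the same route as the paper: induction on the length of $t \bangArrSet*_E u$, applying the head-steps extraction lemma to the diligent sequence obtained from the induction hypothesis, and then closing with the one-step abstract diligence lemma plus transitivity. The additional remarks about termination of $\bangArrSet_E<d!>$ and where the real work lies are consistent with how the paper discharges the hypotheses of the one-step lemma.
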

	\begin{proof}
Let $t, u \in \setBangTerms$ and $t \bangArrSet*_E u$ such that $u$ is
a \bangSetENF<d!>. We show that $t \bangArrSet*_{Eai} u$.  
We proceed by induction on the length of the derivation $t
\bangArrSet*_E u$:
\begin{itemize}
\item[\bltI] $t = u$: Then by reflexivity $t \bangArrSet*_{Eai} u$.

\item[\bltI] $t \bangArrSet_E s \bangArrSet^n_E u$ for some $s \in
    \setBangTerms$: By \ih on $s \bangArrSet^n_E u$, one obtains $s
    \bangArrSet*_{Eai} u$. Using \Cref{lem:t ->*|Eai u => t ->*|E<d!>
    u or (s |E<d!>-NF and t ->*|E<d!> s and s ->*|E u)}, two cases can
    be distinguished:
    \begin{itemize}
    \item[\bltII] $s \bangArrSet*_E<d!> u$: Then $t \bangArrSet_E
        \bangArrSet*_E<d!> u$ and hence, by \Cref{lem: One Step
        Abstract Implicitation}, $t \bangArrSet*_{Eai} u$.

    \item[\bltII] $s \bangArrSet*_E<d!> s' \bangArrSet*_{Eai} u$ for
        some \bangSetENF<d!> $s' \in \setBangTerms$: Then $t
        \bangArrSet_E\bangArrSet*_E<d!> s'$ and so $t
        \bangArrSet*_{Eai} s'$ by \Cref{lem: One Step Abstract
        Implicitation}, hence $t \bangArrSet*_{Eai} u$ by
        transitivity.
        \qed
    \end{itemize}
\end{itemize}
\end{proof}

\RecImplicitation*
\begin{proof} \label{prf:Implicitation} ~
    \begin{itemize}
    \item  \textbf{(Surface)} \;%
        See \Cref{prfSec:SurfaceImplicitation}, in particular
        \Cref{lem:SurfaceReorchestration}.
    \item \hspace{0.2cm} \textbf{(Full)} \;\hspace{0.3cm}%
        See \Cref{prfSec:FullImplicitation}, in particular
        \Cref{lem:FullReorchestration}.
        \qed
    \end{itemize}
\end{proof}

\subsection{Surface Diligence for \BANGSymb}
\label{prfSec:SurfaceImplicitation}

\begin{lemma}
	\label{lem:R=dB/s!andt->S<R>u1andt->S<d!>u2=>u1->S<d!>sandu2->S<R>s}
	Let $\rel \in \{\bangSymbBeta, \bangSymbSubs\}$ and $t, u_1, u_2
    \in \setBangTerms$ such that $t \bangArrSet_S<\rel> u_1$ and $t
    \bangArrSet_S<d!> u_2$. Then the diagram below commutes, for some
    $s \in \setBangTerms$.
	\begin{equation*}
		\begin{array}{ccc}
			t                               &\bangLongArrSet{1cm}_S<R>  &u_1
        \\
			\bangLongDownArrBangSSet{1cm}   &                           &\bangLongDownArrBangSSet{1cm}
        \\
			u_2                             &\bangLongArrSet{1cm}_S<R>  &s
		\end{array}
	\end{equation*}
\end{lemma}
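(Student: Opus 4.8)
The plan is to prove the square by a direct case analysis on the relative position, inside $t$, of the $\rel$-redex that is contracted to reach $u_1$ (with $\rel \in \{\bangSymbBeta, \bangSymbSubs\}$) and of the $\bangSymbBang$-redex that is contracted to reach $u_2$. Since both steps occur at \emph{surface} positions and all three rules act \emph{at a distance} (the relevant constructors may be separated by a list context of ES), the two redexes can interact only in three ways: (a) they lie in disjoint subterms of $t$; (b) one is nested strictly inside the other; (c) their patterns overlap because one of the explicit substitutions forming the list context of one redex is shared with the other — this last situation arising only when $\rel = \bangSymbSubs$. In case (a) the two contractions commute trivially, and the diagram closes with one step on each side.

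Before the case work I would record two standard facts used throughout: reduction $\bangArrSet_S<d!>$ is stable under the meta-level substitution $\isub{x}{u}$ (so contracting a $\bangSymbBang$-redex commutes with substituting a term for a variable), and $\isub{x}{u}$ maps list contexts to list contexts, hence preserves the shape $\bangLCtxt<\oc s>$ and the shape $\der{\bangLCtxt<\oc s>}$ of a $\bangSymbBang$-redex. The structural point that makes a \emph{single} step suffice on each side is: the rules $\bangSymbBeta$ and $\bangSymbBang$ only rearrange constructors, duplicating and erasing nothing, while rule $\bangSymbSubs$ may duplicate or erase its argument $u$, but that material lies strictly under a $\oc$ and therefore cannot contain the $\bangSymbBang$-redex, which by hypothesis sits at a surface position. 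Consequently each contraction leaves \emph{exactly one} residual of the other redex, and contracting that residual yields a common reduct $s$.

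I would then dispatch the remaining cases. In (b) with the $\bangSymbBang$-redex nested in the $\rel$-redex, the $\bangSymbBang$-redex occurs in the argument, in the body, or in one of the ES-arguments of the $\rel$-redex (never under a $\oc$, since surface positions do not enter a bang); each of these is preserved and merely relocated by the $\bangSymbBeta$- or $\bangSymbSubs$-step, and one closes using stability of $\bangArrSet_S<d!>$ under $\isub{x}{u}$ in the $\bangSymbSubs$ sub-case. Symmetrically, when the $\rel$-redex is nested in the $\bangSymbBang$-redex $\der{\bangLCtxt<\oc s>}$, it must sit in an ES-argument of $\bangLCtxt$ (again not under the $\oc$), which survives the $\bangSymbBang$-step. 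In case (c) one checks, by a short computation using the two facts above, that contracting in either order threads the substitution $\isub{x}{u}$ through (part of) the list context of the $\bangSymbBang$-redex and recombines to the same term $s$. The main obstacle is precisely this last case: keeping the steps capture-free via the appropriate $\alpha$-renaming while verifying that the interleaved list contexts recompose identically; everything else is routine once the substitution facts and the no-duplication/no-erasure remark are in place. Instantiating $\bangECtxtSet := \bangSCtxtSet$ then shows the hypothesis of the abstract diligence machinery holds for surface contexts.
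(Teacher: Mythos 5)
Your proposal is sound, and it is worth pointing out that the paper does not actually carry out this proof at all: it discharges the lemma by citing Lemma 2.4 of \cite{BucciarelliKesnerRiosViso23}, which is a swap lemma of exactly this shape, so your direct case analysis is in effect a reconstruction of the outsourced argument rather than a deviation from the paper. Your decomposition (disjoint redexes, nesting, and the single genuine interaction where the fired ES of the $\bangSymbSubs$-redex belongs to the list context of the $\bangSymbBang$-redex) is the right one: since list contexts consist only of ES, neither the $\derSymb$ nor the $\oc$ of the administrative redex can be shared with the pattern of a $\bangSymbBeta$- or $\bangSymbSubs$-redex, so there is no critical overlap that destroys either redex, and your observation that the sharing case can only occur for $\bangSymbSubs$ is correct. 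The key point you isolate --- that the material duplicated or erased by $\bangSymbSubs$ sits strictly under a $\oc$ and hence cannot contain a redex at a surface position --- is precisely what makes the square close with a \emph{single} $\bangArrSet_S<d!>$ step on the right leg; compare the full-reduction analogue proved in the appendix (\Cref{lem:R=dB/s!andt->F<R>u1andt->F<d!>u2=>u1->*F<d!>sandu2->F<R>s}), where that leg needs $\bangArrSet*_F<d!>$ exactly because in \full reduction the administrative redex may lie under the duplicated bang. What remains to turn your plan into a complete proof is only the bookkeeping you already flag: stability of the redex shapes under meta-level substitution, and the verification, in the shared-ES case, that both orders of contraction recombine the interleaved list contexts (with the substitution threaded through the inner ES arguments) into the same term.
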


\begin{proof}
	See \cite[Lemma 2.4]{BucciarelliKesnerRiosViso23}.
	\qed
\end{proof}

\begin{lemma}[Diamond of $\bangArrSet_S<d!>$]
	\label{lem:->S<d!> Local Confluence}
	Let $t, u_1, u_2 \in \setBangTerms$ with $t \bangArrSet_S<d!>
	u_1$, $t \bangArrSet_S<d!> u_2$ and $u_1 \neq u_2$. Then there is
	$s \in \setBangTerms$ such that the diagram below commutes.
	\begin{equation*}
		\begin{array}{ccc}
			t                               &\bangLongArrSet{1cm}_S<d!> &u_1
        \\
			\bangLongDownArrBangSSet{1cm}   &                           &\bangLongDownArrBangSSet{1cm}
        \\
			u_2                             &\bangLongArrSet{1cm}_S<d!> &s
		\end{array}
	\end{equation*}
\end{lemma}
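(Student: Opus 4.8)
The plan is to carry out a direct residual analysis of the two $\bangSymbBang$-steps, with the relative position of the contracted redexes as the only case distinction. First I would unfold the hypotheses: by definition of $\bangArrSet_S<d!>$ there are surface contexts $\bangSCtxt_i \in \bangSCtxtSet$ and $\bangSymbBang$-redexes $r_i = \der{\bangLCtxt_i<\oc p_i>}$, with contractum $c_i = \bangLCtxt_i<p_i>$, such that $t = \bangSCtxt_i<r_i>$ and $u_i = \bangSCtxt_i<c_i>$ for $i \in \{1,2\}$. I would then record the small observation that the decomposition of a term of the shape $\der{\bangLCtxt<\oc p>}$ is unique: the list $\bangLCtxt$ must gather exactly the outermost explicit substitutions, since $\oc p$ is not itself an explicit substitution. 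Consequently, if the holes of $\bangSCtxt_1$ and $\bangSCtxt_2$ were at the same position of $t$, then $r_1 = r_2$, $c_1 = c_2$, and hence $u_1 = u_2$, contradicting the hypothesis. So the two redexes occur at distinct positions and, as subterms of $t$, are either disjoint or one strictly contains the other.

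If $r_1$ and $r_2$ are disjoint, I would present $t$ through a context with two disjoint holes filled respectively by $r_1$ and $r_2$. Contracting the two redexes in either order is well defined and yields a common term $s$, since disjoint contractions commute. Moreover, contracting a $\bangSymbBang$-redex only deletes a matching $\derSymb$/$\oc$ pair and introduces no new $\oc$, so the context leading to the surviving redex stays surface; hence $u_1 \bangArrSet_S<d!> s$ and $u_2 \bangArrSet_S<d!> s$, which closes the diamond.

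If $r_2 \subsetneq r_1$ (the case $r_1 \subsetneq r_2$ being symmetric), I would spell out $\bangLCtxt_1$ and write $r_1 = \der{(\oc p_1)\esub{x_1}{q_1}\cdots\esub{x_n}{q_n}}$. Since $r_2$ sits at a surface position of $t$, it cannot lie under $\oc p_1$, and it cannot be any subterm of $r_1$ of the form $(\oc p_1)\esub{x_1}{q_1}\cdots\esub{x_k}{q_k}$ with $0 \le k \le n$ (these are bangs or explicit substitutions, not derelictions); hence $r_2$ lies inside one of the explicit-substitution arguments $q_j$. Then contracting $r_1$ leaves $q_j$, hence $r_2$, untouched, whereas contracting $r_2$ leaves the $\derSymb$/$\oc$ spine of $r_1$ intact, so $r_1$ remains a $\bangSymbBang$-redex (with $\bangLCtxt_1$ merely updated inside $q_j$); neither contraction alters the outer contexts $\bangSCtxt_1$, $\bangSCtxt_2$ or places the other redex under a $\oc$. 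The two contractions commute to a common reduct $s$, so again $u_1 \bangArrSet_S<d!> s$ and $u_2 \bangArrSet_S<d!> s$.

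I expect the main obstacle to be purely bookkeeping in the nested case: one must carefully justify that a surface $\bangSymbBang$-redex strictly inside another can only live inside an explicit-substitution argument of the outer redex's list context, and that contracting it genuinely preserves the outer redex — which amounts to observing that $\bangSymbBang$ has no critical overlaps at surface positions. The uniqueness-of-decomposition remark is the little lemma that makes the position analysis clean, so I would state it explicitly before the case split.
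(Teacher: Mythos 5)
Your proof is correct: the uniqueness-of-decomposition remark, the disjoint/nested dichotomy on redex occurrences, and the observation that a surface $\bangSymbBang$-redex strictly inside another one can only sit inside an argument of the outer redex's list context (since the prefixes of the list context are not derelictions and anything under the $\oc$ is not at a surface position) together give a complete and sound one-step joining argument, and the non-duplicating, substitution-free nature of $\bangSymbBang$ makes both closing steps genuinely single steps, as the diamond requires. The paper, however, does not argue this directly at all: it simply invokes Lemma 2.4 of Bucciarelli--Kesner--R\'{\i}os--Viso (2023), a general commutation/diamond lemma for surface reduction in the Bang Calculus covering arbitrary pairs of rules, instantiated with both rules equal to $\bangSymbBang$. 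So your route is more elementary and self-contained: it exploits that $\bangSymbBang$ has no critical pairs at surface positions and erases only a matching $\derSymb$/$\oc$ pair, which keeps the residual analysis to two easy cases; the price is that it proves only this special case, whereas the cited lemma also handles the harder interactions (e.g.\ with $\bangSymbSubs$, where duplication forces the weaker ``many steps on one side'' diagrams used elsewhere in the appendix, cf.\ the commutation lemma preceding this one). One cosmetic caveat: in the nested case, contracting the outer redex does change the surface context in which the inner redex sits (a $\derSymb$ and a $\oc$ disappear from it), so the accurate statement is not that the contexts are unaltered but, as you also say, that the resulting context is still surface; you should phrase it that way in a polished write-up.
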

\begin{proof}
    See \cite[Lemma 2.4]{BucciarelliKesnerRiosViso23}, taking $p_1 =
    p_2 = \bangSymbBang$.
	\qed
\end{proof}

\begin{corollary}[Confluence of $\bangArrSet_S<d!>$]
    \label{lem:t->*S<d!>u1andt->*S<d!>u2=>u1->*S<d!>sandu2->*S<d!>s}
    Let $t, u_1, u_2 \in \setBangTerms$ with $t \bangArrSet*_S<d!>
    u_1$ and $t \bangArrSet*_S<d!> u_2$. Then there exists $s \in
    \setBangTerms$ such that the diagram below commutes.
    \begin{equation*}
        \begin{array}{ccc}
            t                               &\bangLongArrSet{1cm}*_S<d!>    &u_1
        \\
            \bangLongDownArrsBangSSet{1cm}  &                               &\bangLongDownArrsBangSSet{1cm}
        \\
            u_2                             &\bangLongArrSet{1cm}*_S<d!>    &s
        \end{array}
    \end{equation*}
\end{corollary}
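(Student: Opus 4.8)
The plan is to obtain this from \Cref{lem:->S<d!> Local Confluence} by the standard ``diamond implies confluence'' argument, so that no new reasoning about the rewrite rules is required; recall that, by the terminology of \Cref{sec:basic-notions}, the statement to prove is exactly that $\bangArrSet_S<d!>$ is confluent, i.e.\ that $\bangArrSet*_S<d!>$ is diamond.

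First I would record the consequence of \Cref{lem:->S<d!> Local Confluence} that also covers the degenerate case: for all $t, u_1, u_2 \in \setBangTerms$ with $t \bangArrSet_S<d!> u_1$ and $t \bangArrSet_S<d!> u_2$, there is $s \in \setBangTerms$ reached from both $u_1$ and $u_2$ in \emph{at most one} $\bangArrSet_S<d!>$-step --- taking $s \coloneqq u_1$ when $u_1 = u_2$, and applying \Cref{lem:->S<d!> Local Confluence} when $u_1 \neq u_2$. Equivalently, the reflexive closure $R \coloneqq \bangArrSet_S<d!> \cup \{(t,t) \mid t \in \setBangTerms\}$ enjoys the genuine diamond property: given a peak for $R$, either one of the two steps is an equality (and then the other term already joins) or both are honest $\bangArrSet_S<d!>$-steps (and then the previous observation applies). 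This repackaging through the reflexive closure is the only point that needs care, since \Cref{lem:->S<d!> Local Confluence} as stated is not itself the textbook diamond property (when $u_1 = u_2$ happens to be a normal form there is no common one-step reduct), so ``diamond implies confluent'' cannot be invoked on $\bangArrSet_S<d!>$ directly.

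Next I would apply the classical fact that the diamond property lifts to the reflexive--transitive closure. Concretely, with $R$ as above, one proves by induction on the length of $t \mathrel{R^*} u_2$ the strip property --- if $t \mathrel{R} u_1$ and $t \mathrel{R^*} u_2$, then $u_1$ and $u_2$ join under $R^*$ --- and then, by a second induction on the length of $t \mathrel{R^*} u_1$, the full diamond property of $R^*$. Since $R^* = \bangArrSet*_S<d!>$, this is precisely the claim. No step here is genuinely hard; the only real obstacle is the bookkeeping just described, namely passing through the reflexive closure so as to absorb the $u_1 \neq u_2$ side-condition of \Cref{lem:->S<d!> Local Confluence} before invoking the abstract lifting lemma.

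Finally, I note an even shorter alternative that sidesteps the tiling argument altogether: $\bangArrSet_S<d!>$ is terminating by \Cref{lemma:RecFullDistantBangSN} and locally confluent by \Cref{lem:->S<d!> Local Confluence}, hence confluent by Newman's lemma. Either route completes the proof.
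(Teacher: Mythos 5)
Your proposal is correct and follows essentially the paper's own route: the paper also derives this corollary as an immediate consequence of the diamond property of $\bangArrSet_S<d!>$ (\Cref{lem:->S<d!> Local Confluence}), lifted to the reflexive--transitive closure by the standard argument whose bookkeeping (absorbing the $u_1 \neq u_2$ side-condition via the reflexive closure) you spell out explicitly. The Newman's-lemma alternative you mention is also valid, but is not needed.
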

\begin{proof} 
    Immediate consequence of the diamond for $\bangArrSet_S<d!>$
    (\Cref{lem:->S<d!> Local Confluence}).
    \qed
\end{proof}

\begin{restatable}[\Surface Diligence]{lemma}{RecSurfaceReorchestration}
    \label{lem:SurfaceReorchestration}%
    Let $t, u \in \setBangTerms$ such that $t \bangArrSet*_S u$. If
    $u$ is a \bangSetSNF<d!>, then $t \bangArrSet*_{Sai} u$.
\end{restatable}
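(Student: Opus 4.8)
The plan is to derive this lemma as an instance of the abstract diligence theorem \Cref{lem: Abstract Implicitation}, applied to the family of surface contexts $\mathbb{E} \coloneqq \bangSCtxtSet$. Under this instantiation, $\bangArrSet_E$ is the surface reduction $\bangArrSet_S$, the administrative diligent reduction $\bangArrSet_{Eai}$ is $\bangArrSet_{Sai}$, and a \bangSetENF<d!> is precisely a \bangSetSNF<d!>; hence ``$\bangArrSet_S$ admits a diligence process'' is exactly the statement to prove. It therefore suffices to verify the two hypotheses of \Cref{lem: Abstract Implicitation} for $\mathbb{E} = \bangSCtxtSet$.

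First I would check that $\bangArrSet_S<d!>$ is confluent: this is \Cref{lem:t->*S<d!>u1andt->*S<d!>u2=>u1->*S<d!>sandu2->*S<d!>s}, itself obtained from the diamond property of $\bangArrSet_S<d!>$ (\Cref{lem:->S<d!> Local Confluence}); together with termination of $\bangArrSet_S<d!>$ (\Cref{lem:->F<d!>SN}) this even yields a unique normal form. Second I would check the square-commutation: for $\rel \in \{\bangSymbBeta, \bangSymbSubs\}$, a $\rel$-surface step and a $\bangSymbBang$-surface step issuing from the same term can be closed by one $\rel$-step followed by one $\bangSymbBang$-step. This is exactly \Cref{lem:R=dB/s!andt->S<R>u1andt->S<d!>u2=>u1->S<d!>sandu2->S<R>s}. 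With both hypotheses in place, \Cref{lem: Abstract Implicitation} gives $t \bangArrSet*_{Sai} u$ whenever $t \bangArrSet*_S u$ and $u$ is a \bangSetSNF<d!>, which is the claim.

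The bulk of the difficulty is not in this lemma --- which is a one-step instantiation --- but in the ingredients it invokes, and in particular in the square-commutation \Cref{lem:R=dB/s!andt->S<R>u1andt->S<d!>u2=>u1->S<d!>sandu2->S<R>s}: one must argue that firing an administrative $\bangSymbBang$-redex \emph{before} a computational redex, rather than afterwards, neither destroys the computational redex nor creates fresh $\bangSymbBang$-redexes that would obstruct diligence, and exhibit the closing term by a case analysis on the relative positions of the two redexes. Since surface contexts never place the hole under a $\oc$, the two redexes are either disjoint or nested along a surface path, so the analysis is finite and uniform; this, and the proof of the abstract lemma upstream, is where the real verification effort lies.
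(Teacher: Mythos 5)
Your proposal matches the paper's own proof: the lemma is obtained by instantiating the abstract diligence result (\Cref{lem: Abstract Implicitation}) with $\mathbb{E} = \bangSCtxtSet$, discharging its two hypotheses via confluence of $\bangArrSet_S<d!>$ (\Cref{lem:t->*S<d!>u1andt->*S<d!>u2=>u1->*S<d!>sandu2->*S<d!>s}, from the diamond property) and the surface commutation square (\Cref{lem:R=dB/s!andt->S<R>u1andt->S<d!>u2=>u1->S<d!>sandu2->S<R>s}). Your remarks on where the real work lies are accurate (the paper imports the commutation square from the literature rather than reproving it), so the argument is correct and essentially identical to the paper's.
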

\begin{proof} \label{prf:SurfaceReorchestration}%
    Just apply \Cref{lem: Abstract Implicitation} to $\bangArrSet_S$,
	since its hypotheses are fulfilled
	(\Cref{lem:t->*S<d!>u1andt->*S<d!>u2=>u1->*S<d!>sandu2->*S<d!>s,lem:R=dB/s!andt->S<R>u1andt->S<d!>u2=>u1->S<d!>sandu2->S<R>s}).
	\qed
\end{proof}

\subsection{Full Diligence for \BANGSymb}
\label{prfSec:FullImplicitation}

\begin{definition}
    \label{d:context-reduction}%
\emph{\Full reduction} $\bangArrSet_F<R>$ is extended to contexts
as expected:
\begin{itemize}
\item either the redex occurs in a \emph{subterm} of some context,  
\item or the redex occurs in a   \emph{subcontext}, where, in particular,
  $\bangSymbSubs$-redexes are only defined
as   
$t\esub{x}{(\oc u)\esub{x_1}{u_1} \ldots \esub{x_i}{\bangFCtxt_i}
\ldots \esub{x_n}{u_n}}$, but not as $t\esub{x}{\bangLCtxt<\oc
\bangFCtxt>}$ or 
 $\bangFCtxt\esub{x}{\bangLCtxt<\oc u>}$.  
\end{itemize}
\end{definition}

  Examples of the first case are $\Hole\esub{x}{\app{(\abs{w}{w})}{z}}
  \bangArrSet_F<dB> \Hole \esub{x}{w\esub{w}{z}}$, $\Hole \, \der{\oc
    t} \bangArrSet_F<d!> \Hole \, t$ and $\oc (x\esub{x}{\oc
      y}) \Hole\bangArrSet_F<s!> \oc (y) \Hole$, while examples of the
  second case are $(\abs{x}{x}){\Hole}\bangArrSet_F<dB>
  x\esub{x}{\Hole}$ and  $\der{\oc \Hole} \bangArrSet_F<d!> \Hole$ but $\Hole\esub{x}{z}$ 
  but  $x\esub{x}{\oc \Hole}$
  are  $\bangArrSet_F<R>$-irreducible.

\begin{lemma}
    \label{l:stability}
    If $t \bangArrSet_F<R> t'$ and $\bangFCtxt \bangArrSet_F<\rel>
    \bangFCtxt'$, then 
    \begin{enumerate}
        \item $t\isub{x}{u} \bangArrSet_F<R> t'\isub{x}{u}$;
        \item $u\isub{x}{t} \bangArrSet*_F<R> u\isub{x}{t'}$;
    \item $\bangFCtxt<t>  \bangArrSet_F<R> \bangFCtxt<t'>$;
    \item $\bangFCtxt<t>  \bangArrSet_F<R> \bangFCtxt'<t>$. 
        \end{enumerate}
\end{lemma}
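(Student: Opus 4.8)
The plan is to prove the four items separately, in the order (1), (3), (2), (4): item~(2) will reuse item~(3), and item~(4) is essentially self-contained. Throughout, terms and contexts are read up to $\alpha$-conversion, so bound variables may always be renamed fresh for the terms and substitutions at hand; this silently discharges the capture-freeness conditions attached to $\bangSymbBeta$ and $\bangSymbSubs$. Before starting I would record three elementary facts, each a routine induction: \textbf{(a)} meta-substitution commutes with context plugging, i.e.\ $(\bangFCtxt''<r>)\isub{x}{u}$ is again a full context filled with $r\isub{x}{u}$, which relies on the substitution lemma $(r\isub{y}{w})\isub{x}{u} = (r\isub{x}{u})\isub{y}{w\isub{x}{u}}$, valid whenever $x \neq y$ and $y \notin \freeVar{u}$ (granted by $\alpha$-conversion); \textbf{(b)} each rule in $\{\bangSymbBeta, \bangSymbSubs, \bangSymbBang\}$ is stable under meta-substitution, namely if $r$ is an $R$-redex with contractum $r'$ then $r\isub{x}{u}$ is an $R$-redex with contractum $r'\isub{x}{u}$, a one-line check per rule (the $\bangSymbSubs$ case again invoking the substitution lemma); and \textbf{(c)} full contexts are closed under composition.

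For item~(1) I would decompose the hypothesis as $t = \bangFCtxt''<r>$, $t' = \bangFCtxt''<r'>$ with $\bangFCtxt'' \in \bangFCtxtSet$, $r$ an $R$-redex and $r'$ its contractum; applying $\isub{x}{u}$ and using (a) and (b) presents $t\isub{x}{u}$ and $t'\isub{x}{u}$ as a common full context around the $R$-redex $r\isub{x}{u}$ and its contractum $r'\isub{x}{u}$, so $t\isub{x}{u} \bangArrSet_F<R> t'\isub{x}{u}$. For item~(3), with the same decomposition, plugging $\bangFCtxt''$ into the hole of $\bangFCtxt$ exhibits $\bangFCtxt<t>$ as the $R$-redex $r$ inside the context obtained by composing $\bangFCtxt$ and $\bangFCtxt''$, which is a full context by (c); contracting $r$ there yields $\bangFCtxt<t'>$, hence $\bangFCtxt<t> \bangArrSet_F<R> \bangFCtxt<t'>$. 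For item~(2) I would induct on the structure of $u$: when $u$ is the variable $x$ the claim is exactly the hypothesis, when $u$ is another variable both sides coincide, and in each compound case --- application, abstraction, bang, dereliction, closure --- the substitution distributes over the immediate subterms, so the induction hypotheses together with iterated uses of item~(3) in the evident one-hole contexts rebuild $u\isub{x}{t} \bangArrSet*_F<R> u\isub{x}{t'}$.

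The real work is item~(4), which I would settle by a case analysis on the derivation of $\bangFCtxt \bangArrSet_F<\rel> \bangFCtxt'$, following the two clauses of \Cref{d:context-reduction}. If the $\rel$-redex occurs in a \emph{subterm} of $\bangFCtxt$ --- which by definition does not contain the hole --- then it occurs verbatim inside $\bangFCtxt<t>$ and contracting it there produces exactly $\bangFCtxt'<t>$. If instead it occurs in a \emph{subcontext}, meaning the hole of $\bangFCtxt$ lands in a term-slot of the matched rule instance, then $\bangFCtxt<t>$ still matches the same rule and firing it commutes with plugging $t$; I would check this for $\bangSymbBeta$ (hole in the abstraction body, in the argument, or in one of the explicit substitutions of the distance list), for $\bangSymbSubs$ (hole in the substituted body, or in one of the explicit substitutions), and for $\bangSymbBang$ (hole under the $\oc$, or in one of the explicit substitutions). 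The delicate point, and the one I expect to cost the most care, is that \Cref{d:context-reduction} deliberately excludes the shapes $t\esub{x}{\bangLCtxt<\oc \bangFCtxt>}$ and $\bangFCtxt\esub{x}{\bangLCtxt<\oc u>}$ from counting as $\bangSymbSubs$-redexes of a context: in the first, contracting would substitute the plugged term for $x$ inside a hole-carrying subterm, which need not be a single-hole context; in the second, plugging and contracting fail to commute as soon as $x \in \freeVar{t}$. So the verification goes through precisely because these two problematic shapes are ruled out by definition, and most of the effort lies in making the case split exhaustive and matching it clause by clause against \Cref{d:context-reduction}.
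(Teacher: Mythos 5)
Your proof is correct and takes essentially the same route as the paper's, which disposes of items 1, 2 and 4 by the very inductions you carry out (on the derivation of $t \bangArrSet_F<R> t'$, on $u$, and on the derivation of $\bangFCtxt \bangArrSet_F<\rel> \bangFCtxt'$ respectively) and gives your context-composition argument for item 3 verbatim. One small wrinkle: in your item-(4) checklist for $\bangSymbSubs$ the sub-case ``hole in the substituted body'' does not actually arise, since by \Cref{d:context-reduction} the only admissible subcontext position for a $\bangSymbSubs$-redex is inside one of the explicit substitutions of the distance list; but as your closing paragraph correctly records that the shapes $t\esub{x}{\bangLCtxt<\oc \bangFCtxt>}$ and $\bangFCtxt\esub{x}{\bangLCtxt<\oc u>}$ are excluded (and why), this is a mislabeling rather than a gap.
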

\begin{proof}
    \begin{enumerate}
\item By straightforward induction on the definition of $t
    \bangArrSet_F<R> t'$.
\item By straightforward induction on $u$.
\item Since $t \bangArrSet_F<R> t'$, there is a full context
    $\bangFCtxt'$ and $r,r' \in \setBangTerms$ such that $t =
    \bangFCtxt'<r> \bangArrSet_F<R> \bangFCtxt'<r'> = t'$ with $r
    \mapsto_\rel r'$. As $\bangFCtxt<\bangFCtxt'>$ is a full context,
    $\bangFCtxt<t> = \bangFCtxt<\bangFCtxt'<r>> \bangArrSet_F<R>
    \bangFCtxt<\bangFCtxt'<r'>> = \bangFCtxt<t'>$.
\item By straightforward induction on $\bangFCtxt \bangArrSet_F<\rel>
    \bangFCtxt'$. \qed
\end{enumerate}

\end{proof}

\begin{lemma}
    \label{lem:R=dB/s!andt->F<R>u1andt->F<d!>u2=>u1->*F<d!>sandu2->F<R>s}
    Let $\rel \in \{\bangSymbBeta, \bangSymbSubs\}$ and $t, u_1, u_2
    \in \setBangTerms$ such that $t \bangArrSet_F<d!> u_1$ and $t
    \bangArrSet_F<\rel> u_2$. Then there exists $s \in \setBangTerms$
    such that the diagram below commutes.
    \begin{equation*}
        \begin{array}{ccc}
            t                               &\bangLongArrSet{1cm}_F<R>  &u_2
        \\
            \bangLongDownArrBangFSet{1cm}   &                           &\bangLongDownArrsBangFSet{1cm}
        \\
            u_1                             &\bangLongArrSet{1cm}_F<R>  &s
        \end{array}
    \end{equation*}
\end{lemma}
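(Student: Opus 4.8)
The plan is to prove this local commutation by a case analysis on the relative positions, inside $t$, of the contracted $\bangSymbBang$-redex $\der{\bangLCtxt<\oc r>}$ (the one realizing $t \bangArrSet_F<d!> u_1$) and of the contracted $\rel$-redex (the one realizing $t \bangArrSet_F<\rel> u_2$), which reads $\app{\bangLCtxt'<\abs{x}{v}>}{w}$ when $\rel = \bangSymbBeta$ and $v\esub{x}{\bangLCtxt'<\oc w>}$ when $\rel = \bangSymbSubs$. I would organise the bookkeeping as an induction on $t$: whenever the two redexes are disjoint, or both confined to a common proper subterm of $t$, the claim is immediate (by direct commutation in the first case, by the induction hypothesis in the second), using the contextual closure properties of \Cref{l:stability}; the remaining configurations, where one of the two redexes reaches the root of $t$, give rise to only finitely many overlap patterns, treated below.

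There are two nontrivial families of overlap. First, the $\bangSymbBang$-redex lies strictly below the $\rel$-redex, i.e. inside $v$, inside $w$, or inside one of the ES arguments packaged by $\bangLCtxt'$. The key points are: (i) contracting the $\bangSymbBang$-redex cannot destroy the $\rel$-redex, since a $\bangSymbBang$-step only removes one dereliction and one $\oc$, while the left-hand side of $\bangSymbBeta$ pins down neither a $\derSymb$ nor a $\oc$, and the left-hand side of $\bangSymbSubs$ pins down a single $\oc$ (the one opening its argument), which the positional constraints keep distinct from the $\oc$ removed by the $\bangSymbBang$-step --- hence $u_1 \bangArrSet_F<\rel> s$ in exactly one step; and (ii) contracting the $\rel$-redex first may replicate the $\bangSymbBang$-redex, but only when $\rel = \bangSymbSubs$ and the $\bangSymbBang$-redex sits inside the bang-body $w$, in which case it is copied once per free occurrence of $x$ in $v$ (and erased outright if there is none), so $u_2 \bangArrSet*_F<d!> s$ in exactly that many steps, by \Cref{l:stability}.2. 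This one subcase is the whole reason the right-hand edge of the square is $\bangArrSet*_F<d!>$ and not a single step, and it is precisely the phenomenon that cannot arise in the surface counterpart \Cref{lem:R=dB/s!andt->S<R>u1andt->S<d!>u2=>u1->S<d!>sandu2->S<R>s}, where reduction never enters the scope of a $\oc$. Second, the $\bangSymbBang$-redex lies strictly above the $\rel$-redex, i.e. the $\rel$-redex lies inside $r$, inside a side ES argument of $\bangLCtxt$, or uses one of the ES of $\bangLCtxt$ as its outermost matched ES (possible only for $\rel = \bangSymbSubs$). No replication happens on either side, since contracting the outer $\bangSymbBang$-step merely relocates the inner redex by pulling $\bangLCtxt$ outwards around $r$, and both $u_1 \bangArrSet_F<\rel> s$ and $u_2 \bangArrSet_F<d!> s$ are single steps; the only care needed is to follow how the meta-substitution produced by an inner $\bangSymbSubs$-step distributes over the relocated ES list, which is routine using $\alpha$-renaming and the capture-freeness provisos of the rules.

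I expect the main obstacle to be administrative rather than conceptual: exhaustively listing the overlap patterns between the two list contexts $\bangLCtxt$ and $\bangLCtxt'$ (which can be nested through each other's ES arguments) and between the $\oc$ pinned down by a $\bangSymbSubs$-redex and the $\derSymb/\oc$ pair removed by the $\bangSymbBang$-step, and checking in each of them that the $\bangSymbBang$-contraction leaves a genuine $\rel$-redex behind. Once the single genuinely asymmetric configuration has been isolated --- $\rel = \bangSymbSubs$ with the $\bangSymbBang$-redex underneath the substituted bang --- every remaining case closes the diagram with a single $\bangSymbBang$-step on its right edge, so neither commutation diagrams nor bounding measures beyond \Cref{l:stability} are required.
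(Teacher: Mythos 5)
Your proposal is correct and takes essentially the same route as the paper's proof: a case analysis on the relative positions of the $\bangSymbBang$-redex and the $\rel$-redex, closing each local square via \Cref{l:stability}, and isolating the single duplicating configuration ($\rel = \bangSymbSubs$ with the $\bangSymbBang$-redex inside the substituted bang body, handled by \Cref{l:stability}.2) as the only source of the $\bangArrSet*_F<d!>$ edge. Your explicit handling of the overlap where a $\bangSymbSubs$-step consumes one of the explicit substitutions of the $\bangSymbBang$-redex's list context is, if anything, slightly more thorough than the paper's split into steps occurring "inside $\bangFCtxt$, $\bangLCtxt$, $s_1$ or $s_2$".
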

    \begin{proof}
We analyze different (potentially overlapping) situations that at the
end cover all  possible cases. To close the diagrams we
use~\Cref{l:stability}.

  \begin{enumerate}
  \item $t =    \bangFCtxt<\der{\bangLCtxt<\oc s_1>}>$ and $u_1 =
    \bangFCtxt<\bangLCtxt<s_1>>$  and the step $t \bangArrSet_F<R>
    u_2$ occurs inside $\bangFCtxt$, $\bangLCtxt$, or $s_1$. We
    analyze all the possible cases. 
    \begin{itemize}
    \item[\bltII]  If $\bangFCtxt \bangArrSet_F<R> \bangFCtxt'$, then:
      \begin{equation*}
            \begin{array}{ccc}
                \bangFCtxt<\der{\bangLCtxt<\oc s_1>}>     &\bangLongArrSet{1cm}_F<R> &\bangFCtxt'<\der{\bangLCtxt<\oc s_1>}> 
            \\[0.2cm]
                \bangLongDownArrBangFSet{1cm}                       &                           &\bangLongDownArrBangFSet{1cm}
            \\[0.2cm]
                \bangFCtxt<\bangLCtxt<s_1>>    &\bangLongArrSet{1cm}_F<R> &\bangFCtxt'<\bangLCtxt<s_1>>
            \end{array}
        \end{equation*}

      \item[\bltII]  If $\bangLCtxt \bangArrSet_F<R> \bangLCtxt'$,
      then:
      \begin{equation*}
            \begin{array}{ccc}
                \bangFCtxt<\der{\bangLCtxt<\oc s_1>}>     &\bangLongArrSet{1cm}_F<R> &\bangFCtxt<\der{\bangLCtxt'<\oc s_1>}> 
            \\[0.2cm]
                \bangLongDownArrBangFSet{1cm}                       &                           &\bangLongDownArrBangFSet{1cm}
            \\[0.2cm]
                \bangFCtxt<\bangLCtxt<s_1>>    &\bangLongArrSet{1cm}_F<R> &\bangFCtxt<\bangLCtxt'<s_1>>
            \end{array}
        \end{equation*}

    \item[\bltII]  If $s_1 \bangArrSet_F<R> s'_1$, then:
\begin{equation*}
            \begin{array}{ccc}
                \bangFCtxt<\der{\bangLCtxt<\oc s_1>}>     &\bangLongArrSet{1cm}_F<R> &\bangFCtxt<\der{\bangLCtxt<\oc s'_1>}> 
            \\[0.2cm]
                \bangLongDownArrBangFSet{1cm}                       &                           &\bangLongDownArrBangFSet{1cm}
            \\[0.2cm]
                \bangFCtxt<\bangLCtxt<s_1>>    &\bangLongArrSet{1cm}_F<R> &\bangFCtxt<\bangLCtxt<s'_1>>
            \end{array}
          \end{equation*}
        \end{itemize}

\item    $t = \bangFCtxt<\app{\bangLCtxt<\abs{x}{s_1}>}{s_2}>$ and
    $u_2 = \bangFCtxt<\bangLCtxt<s_1\esub{x}{s_2}>>$  and the step $t
    \bangArrSet_F<d!> u_1$ occurs inside $\bangFCtxt$, $\bangLCtxt$,
    $s_1$, or $s_2$. We analyze all the possible cases.
    \begin{itemize}
    \item[\bltII] If $\bangFCtxt \bangArrSet_F<d!> \bangFCtxt'$, then:
        \begin{equation*}
            \begin{array}{ccc}
                \bangFCtxt<\app{\bangLCtxt<\abs{x}{s_1}>}{s_2}>     &\bangLongArrSet{1cm}_F<dB> &\bangFCtxt<\bangLCtxt<s_1\esub{x}{s_2}>> 
            \\[0.2cm]
                \bangLongDownArrBangFSet{1cm}                       &                           &\bangLongDownArrBangFSet{1cm}
            \\[0.2cm]
                \bangFCtxt'<\app{\bangLCtxt<\abs{x}{s_1}>}{s_2}>    &\bangLongArrSet{1cm}_F<dB> &\bangFCtxt'<\bangLCtxt<s_1\esub{x}{s_2}>>
            \end{array}
        \end{equation*}

    \item[\bltII] If $\bangLCtxt \bangArrSet_F<d!> \bangLCtxt'$, then:
        \begin{equation*}
            \hspace{-1.5cm}
            \begin{array}{ccc}
                \bangFCtxt<\app{\bangLCtxt<\abs{x}{s_1}>}{s_2}>     &\bangLongArrSet{1cm}_F<dB> &\bangFCtxt<\bangLCtxt<s_1\esub{x}{s_2}>>
            \\[0.2cm]
                \bangLongDownArrBangFSet{1cm}                       &                           &\bangLongDownArrBangFSet{1cm}
            \\[0.2cm]
                \bangFCtxt<\app{\bangLCtxt'<\abs{x}{s_1}>}{s_2}>    &\bangLongArrSet{1cm}_F<dB> &\bangFCtxt<\bangLCtxt'<s_1\esub{x}{s_2}>>
            \end{array}
        \end{equation*}

    \item[\bltII] If $s_1 \bangArrSet_F<d!> s'_1$, then:
        \begin{equation*}
            \begin{array}{ccc}
                \bangFCtxt<\app{\bangLCtxt<\abs{x}{s_1}>}{s_2}>     &\bangLongArrSet{1cm}_F<dB> &\bangFCtxt<\bangLCtxt<s_1\esub{x}{s_2}>>
            \\[0.2cm]
                \bangLongDownArrBangFSet{1cm}                       &                           &\bangLongDownArrBangFSet{1cm}
            \\[0.2cm]
                \bangFCtxt<\app{\bangLCtxt<\abs{x}{s'_1}>}{s_2}>    &\bangLongArrSet{1cm}_F<dB> &\bangFCtxt<\bangLCtxt<s'_1\esub{x}{s_2}>>
            \end{array}
        \end{equation*}

    \item[\bltII] If $s_2 \bangArrSet_F<d!> s'_2$, then:
        \begin{equation*}
            \begin{array}{ccc}
                \bangFCtxt<\app{\bangLCtxt<\abs{x}{s_1}>}{s_2}>     &\bangLongArrSet{1cm}_F<dB> &\bangFCtxt<\bangLCtxt<s_1\esub{x}{s_2}>>
            \\[0.2cm]
                \bangLongDownArrBangFSet{1cm}                       &                           &\bangLongDownArrBangFSet{1cm}
            \\[0.2cm]
                \bangFCtxt<\app{\bangLCtxt<\abs{x}{s_1}>}{s'_2}>    &\bangLongArrSet{1cm}_F<dB> &\bangFCtxt<\bangLCtxt<s_1\esub{x}{s'_2}>> 
            \end{array}
        \end{equation*}
\end{itemize}
    \item $t = \bangFCtxt<s_1\esub{x}{\bangLCtxt<\oc s_2>}>$ and $u_2
    = \bangFCtxt<\bangLCtxt<s_1\isub{x}{s_2}>>$  and the step $t
    \bangArrSet_F<d!> u_1$ occurs inside $\bangFCtxt$, $\bangLCtxt$,
    $s_1$ or $s_2$. We analyze all the possible cases.
    \begin{itemize}
    \item[\bltII] If $\bangFCtxt \bangArrSet_F<d!> \bangFCtxt'$, then:
        \begin{equation*}
            \hspace{-1cm}
            \begin{array}{ccc}
                \bangFCtxt<s_1\esub{x}{\bangLCtxt<\oc s_2>}>    &\bangLongArrSet{1cm}_F<s!> &\bangFCtxt<\bangLCtxt<s_1\isub{x}{s_2}>>
            \\[0.2cm]
                \bangLongDownArrBangFSet{1cm}                   &                           &\bangLongDownArrBangFSet{1cm}
            \\[0.2cm]
                \bangFCtxt'<s_1\esub{x}{\bangLCtxt<\oc s_2>}>   &\bangLongArrSet{1cm}_F<s!> &\bangFCtxt'<\bangLCtxt<s_1\isub{x}{s_2}>>
            \end{array}
        \end{equation*}

    \item[\bltII] If $s_1 \bangArrSet_F<d!> s'_1$, then:  
        \begin{equation*}
            \hspace{-1cm}
            \begin{array}{ccc}
                \bangFCtxt<s_1\esub{x}{\bangLCtxt<\oc s_2>}>    &\bangLongArrSet{1cm}_F<s!> &\bangFCtxt<\bangLCtxt<s_1\isub{x}{s_2}>>
            \\[0.2cm]
                \bangLongDownArrBangFSet{1cm}                   &                           &\bangLongDownArrBangFSet{1cm}
            \\[0.2cm]
                \bangFCtxt<s'_1\esub{x}{\bangLCtxt<\oc s_2>}>   &\bangLongArrSet{1cm}_F<s!> &\bangFCtxt<\bangLCtxt<s'_1\isub{x}{s_2}>>
            \end{array}
        \end{equation*}

    \item[\bltII] If $\bangLCtxt \bangArrSet_F<d!> \bangLCtxt'$, then:
        \begin{equation*}
            \hspace{-1cm}
            \begin{array}{ccc}
                \bangFCtxt<s_1\esub{x}{\bangLCtxt<\oc s_2>}>    &\bangLongArrSet{1cm}_F<s!> &\bangFCtxt<\bangLCtxt<s_1\isub{x}{s_2}>>
            \\[0.2cm]
                \bangLongDownArrBangFSet{1cm}                   &                           &\bangLongDownArrBangFSet{1cm}
            \\[0.2cm]
                \bangFCtxt<s_1\esub{x}{\bangLCtxt'<\oc s_2>}>   &\bangLongArrSet{1cm}_F<s!> &\bangFCtxt<\bangLCtxt'<s_1\isub{x}{s_2}>>
            \end{array}
        \end{equation*}

    \item[\bltII] If $s_2 \bangArrSet_F<d!> s'_2$, then:
        \begin{equation*}
            \hspace{-1cm}
            \begin{array}{ccc}
                \bangFCtxt<s_1\esub{x}{\bangLCtxt<\oc s_2>}>    &\bangLongArrSet{1cm}_F<s!> &\bangFCtxt<\bangLCtxt<s_1\isub{x}{s_2}>>
            \\[0.2cm]
                \bangLongDownArrBangFSet{1cm}                   &                           &\bangLongDownArrsBangFSet{1cm}
            \\[0.2cm]
                \bangFCtxt<s_1\esub{x}{\bangLCtxt<\oc s'_2>}>   &\bangLongArrSet{1cm}_F<s!> &\bangFCtxt<\bangLCtxt<s_1\isub{x}{s'_2}>>
            \end{array}
        \end{equation*}
    \qed
    \end{itemize}
\end{enumerate}
\end{proof}

\begin{lemma}[Diamond of $\bangArrSet_F<d!>$]
    \label{lem:Local Confluence ->F<d!>}
    Let $t, u_1, u_2 \in \setBangTerms$.
    If $t \bangArrSet_F<d!>
    u_1$, $t \bangArrSet_F<d!> u_2$ and $u_1 \neq u_2$, then there is
    $s \in \setBangTerms$ such that the diagram below commutes.
    \begin{equation*}
        \begin{array}{ccc}
            t                               &\bangLongArrSet{1cm}_F<d!> &u_2
        \\
            \bangLongDownArrBangFSet{1cm}   &                           &\bangLongDownArrBangFSet{1cm}
        \\
            u_1                             &\bangLongArrSet{1cm}_F<d!> &s
        \end{array}
    \end{equation*}
\end{lemma}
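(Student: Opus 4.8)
The plan is to prove the diamond of $\bangArrSet_F<d!>$ by the usual analysis of the relative positions of the two contracted $\bangSymbBang$-redexes, closing each resulting diagram with \Cref{l:stability}. First I would record a small but crucial observation: the contraction of a $\bangSymbBang$-redex is uniquely determined by the position of its outermost $\derSymb$. Indeed, for a fixed term $\der{v}$, a decomposition $v = \bangLCtxt<\oc r>$ is unique, by induction on $v$: a list context is obtained by peeling explicit substitutions off from the outside until a bang is hit, and at each step the next constructor is forced. Consequently, writing $p_i$ for the position of the $\derSymb$ of the redex contracted in $t \bangArrSet_F<d!> u_i$, the hypothesis $u_1 \neq u_2$ forces $p_1 \neq p_2$. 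I would also note the dual fact (used below): the maximal chain of explicit substitutions immediately above a given occurrence of $\oc r$ is unique, so at most one $\derSymb$, namely the one sitting on top of that chain, can head a $\bangSymbBang$-redex whose erased bang is that $\oc r$; in particular two distinct $\bangSymbBang$-redexes can never share their erased $\oc$ node (nor, a fortiori, their $\derSymb$).

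Then I would split on whether $p_1$ and $p_2$ are comparable. If they are incomparable, the two redexes occupy disjoint subterms of $t$: each survives the contraction of the other unchanged, and contracting the residuals in $u_1$ and in $u_2$ produces a common reduct, the diagrams being closed by \Cref{l:stability} (pushing the step through the surrounding context). If they are comparable, say $p_1$ is a strict prefix of $p_2$ (the other case is symmetric), then the whole redex $R_2$ contracted at $p_2$ is a proper subterm of $R_1 = \der{\bangLCtxt_1<\oc r_1>}$. Here the key remark is that the spine of $\bangLCtxt_1<\oc r_1>$ below the root $\derSymb$ consists only of explicit-substitution nodes followed by the bang, none of which is a $\derSymb$; since $R_2$'s root is a $\derSymb$, it must lie off that spine, i.e. either inside one of the substitution bodies of $\bangLCtxt_1$ (so that $\bangLCtxt_1 \bangArrSet_F<d!> \bangLCtxt_1'$, still a list context) or inside $r_1$ (so that $r_1 \bangArrSet_F<d!> r_1'$). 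In both subcases the contraction of $R_1$ merely erases the surrounding $\derSymb$ and $\oc$ and leaves $\bangLCtxt_1$ and $r_1$ intact, hence $R_2$ persists in $u_1$; symmetrically the contraction of $R_2$ leaves the shape $\der{\bangLCtxt_1<\oc \cdot>}$ intact, so $R_1$ persists in $u_2$. Contracting the residuals yields the common reduct $\bangLCtxt_1'<r_1>$ (resp. $\bangLCtxt_1<r_1'>$), and the diagrams are again closed using \Cref{l:stability}.

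The case analysis is exhaustive thanks to the two observations above, and each case is routine. I expect the only part requiring genuine care to be the list-context bookkeeping: checking that a $\bangSymbBang$-reduction inside a substitution body of $\bangLCtxt_1$ really is a context reduction in the sense of \Cref{d:context-reduction} and again produces a list context, and that the residual of $R_1$ after the inner step $\der{\bangLCtxt_1'<\oc r_1>}$ is still a bona fide $\bangSymbBang$-redex (which it is, with the same $\derSymb$ and the same surrounding list-context shape). None of this is conceptually hard, so the main obstacle is essentially organizational rather than mathematical.
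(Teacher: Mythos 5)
Your proposal is correct and follows essentially the same argument as the paper: a case analysis on where the second $\bangSymbBang$-redex sits relative to the decomposition $t=\bangFCtxt<\der{\bangLCtxt<\oc s_1>}>$, with each square closed in one step via \Cref{l:stability}. The only difference is organizational — you split on prefix-comparability of the two $\derSymb$ positions (using uniqueness of the $\bangLCtxt<\oc r>$ decomposition), whereas the paper folds the "disjoint or strictly above" situations into a single case $\bangFCtxt \bangArrSet_F<d!> \bangFCtxt'$ using reduction on contexts (\Cref{d:context-reduction}), the remaining cases ($\bangLCtxt$ or $s_1$ reduced) matching your nested case exactly.
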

    \begin{proof}
As $t \bangArrSet_F<d!> u_2$, then $t = \bangFCtxt<\der{\bangLCtxt<\oc
s_1>}>$ and $u_2 = \bangFCtxt<\bangLCtxt<s_1>>$. Different cases have
to be distinguished, since the reduction step $t \bangArrSet_{F}<d!>
u_1$ may occur inside $\bangFCtxt$, $\bangLCtxt$, or $s_1$; in all
these cases we use \Cref{l:stability} to conclude:
\begin{itemize}
\item[\bltI] If $\bangFCtxt \bangArrSet_{F}<d!> \bangFCtxt'$, then:
    \begin{equation*}
        \hspace{-1cm}
        \begin{array}{ccc}
            \bangFCtxt<\der{\bangLCtxt<\oc s_1>}>
                &\bangLongArrSet{1cm}_F<d!>&
            \bangFCtxt<\bangLCtxt<s_1>>
        \\[0.2cm]
            \bangLongDownArrBangFSet{1cm}
                &  &
            \bangLongDownArrBangFSet{1cm}
        \\[0.2cm]
        \bangFCtxt'<\der{\bangLCtxt<\oc s_1>}>
        & \bangLongArrSet{1cm}_F<d!>&
       \bangFCtxt'<\bangLCtxt<s_1>>
        \end{array}
    \end{equation*}

\item[\bltI] If $\bangLCtxt \bangArrSet_{F}<d!> \bangLCtxt'$, then:
    \begin{equation*}
        \hspace{-1cm}
        \begin{array}{ccc}
            \bangFCtxt<\der{\bangLCtxt<\oc s_1>}>
                & \bangLongArrSet{1cm}_F<d!>&
            \bangFCtxt<\bangLCtxt<s_1>>
        \\[0.2cm]
            \bangLongDownArrBangFSet{1cm}
                &  &
            \bangLongDownArrBangFSet{1cm}
        \\[0.2cm]
        \bangFCtxt<\der{\bangLCtxt'<\oc s_1>}>
        &\bangLongArrSet{1cm}_F<d!>&
       \bangFCtxt<\bangLCtxt'<s_1>>
        \end{array}
    \end{equation*}

\item[\bltI]  If $s_1 \bangArrSet_{F}<d!> s'_1$, then:
    \begin{equation*}
        \hspace{-1cm}
        \begin{array}{ccc}
            \bangFCtxt<\der{\bangLCtxt<\oc s_1>}>
                &\bangLongArrSet{1cm}_F<d!>&
            \bangFCtxt<\bangLCtxt<s_1>>
        \\[0.2cm]
            \bangLongDownArrBangFSet{1cm}
                &  &
            \bangLongDownArrBangFSet{1cm}
        \\[0.2cm]
        \bangFCtxt<\der{\bangLCtxt<\oc s'_1>}>
        &\bangLongArrSet{1cm}_F<d!>&
       \bangFCtxt<\bangLCtxt<s'_1>>
        \end{array}
    \end{equation*}
\end{itemize}
\end{proof}

\begin{corollary}[Confluence of $\bangArrSet_F<d!>$]
    \label{lem:Confluence_Admin_Full}%
    \label{lem:t->*F<d!>u1andt->*F<d!>u2=>u1->*F<d!>sandu2->*F<d!>s}%
    Let $t, u_1, u_2 \in \setBangTerms$ such that $t \bangArrSet*_F<d!>
    u_1$ and $t \bangArrSet*_F<d!> u_2$. 
    Then there is $s \in
    \setBangTerms$ such that the diagram below commutes.
    \begin{equation*}
        \begin{array}{ccc}
            t                               &\bangLongArrSet{1cm}*_F<d!>    &u_1
        \\
            \bangLongDownArrsBangFSet{1cm}  &                               &\bangLongDownArrsBangFSet{1cm}
        \\
            u_2                             &\bangLongArrSet{1cm}*_F<d!>    &s
        \end{array}
    \end{equation*}
\end{corollary}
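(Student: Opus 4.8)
The plan is to derive this confluence abstractly from the diamond property of $\bangArrSet_F<d!>$ just established in \Cref{lem:Local Confluence ->F<d!>}, in exactly the same way the surface analogue (\Cref{lem:t->*S<d!>u1andt->*S<d!>u2=>u1->*S<d!>sandu2->*S<d!>s}) is obtained from \Cref{lem:->S<d!> Local Confluence}: the statement is precisely ``$\bangArrSet_F<d!>$ is confluent'' in the sense of \cref{sec:basic-notions}, and it is classical that the diamond property propagates to the reflexive--transitive closure. So I would first prove a \emph{strip lemma} and then tile.

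\textbf{Strip lemma.} For all $t, u_1, u_2 \in \setBangTerms$, if $t \bangArrSet_F<d!> u_1$ and $t \bangArrSet*_F<d!> u_2$, then there is $s \in \setBangTerms$ with $u_1 \bangArrSet*_F<d!> s$ and $u_2 \bangArrSet*_F<d!> s$. I would proceed by induction on the length of $t \bangArrSet*_F<d!> u_2$. If it is empty, take $s \coloneqq u_1$. Otherwise $t \bangArrSet_F<d!> u_2' \bangArrSet*_F<d!> u_2$. If $u_1 = u_2'$, take $s \coloneqq u_2$. If $u_1 \neq u_2'$, then \Cref{lem:Local Confluence ->F<d!>} applied to the peak $t \bangArrSet_F<d!> u_1$, $t \bangArrSet_F<d!> u_2'$ yields $r$ with $u_1 \bangArrSet_F<d!> r$ and $u_2' \bangArrSet_F<d!> r$; the induction hypothesis applied to $u_2' \bangArrSet_F<d!> r$ and $u_2' \bangArrSet*_F<d!> u_2$ yields $s$ with $r \bangArrSet*_F<d!> s$ and $u_2 \bangArrSet*_F<d!> s$; then $u_1 \bangArrSet_F<d!> r \bangArrSet*_F<d!> s$ closes the diagram.

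\textbf{Confluence.} Then, given $t \bangArrSet*_F<d!> u_1$ and $t \bangArrSet*_F<d!> u_2$, I would argue by induction on the length of $t \bangArrSet*_F<d!> u_1$. If it is empty, take $s \coloneqq u_2$. Otherwise $t \bangArrSet_F<d!> u_1' \bangArrSet*_F<d!> u_1$; the strip lemma on $t \bangArrSet_F<d!> u_1'$, $t \bangArrSet*_F<d!> u_2$ gives $r$ with $u_1' \bangArrSet*_F<d!> r$ and $u_2 \bangArrSet*_F<d!> r$; the induction hypothesis on $u_1' \bangArrSet*_F<d!> u_1$, $u_1' \bangArrSet*_F<d!> r$ gives $s$ with $u_1 \bangArrSet*_F<d!> s$ and $r \bangArrSet*_F<d!> s$; composing $u_2 \bangArrSet*_F<d!> r \bangArrSet*_F<d!> s$ finishes.

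There is no real obstacle here: the substance is the diamond lemma \Cref{lem:Local Confluence ->F<d!>}, which is already done, and the two inductions above are routine. The only point requiring a moment's care is the side condition $u_1 \neq u_2$ in the paper's notion of diamond, which forces the small coincidence case split ($u_1 = u_2'$) in the strip lemma. As a fallback one could instead invoke Newman's lemma, since $\bangArrSet_F<d!>$ is terminating (\Cref{lem:->F<d!>SN}) and the diamond property trivially entails local confluence; but the strip-lemma route is the shortest and mirrors the surface development.
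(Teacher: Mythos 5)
Your proposal is correct and follows the same route as the paper: the paper dismisses the corollary as an immediate consequence of the diamond property (\Cref{lem:Local Confluence ->F<d!>}), and your strip lemma plus tiling is just the standard spelling-out of that step, with the $u_1 \neq u_2$ side condition handled appropriately.
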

\begin{proof} 
    Immediate consequence of the diamond for $\bangArrSet_F<d!>$
    (\Cref{lem:Local Confluence ->F<d!>}).
    \qed
\end{proof}

\begin{restatable}[\Full Diligence]{lemma}{RecFullReorchestration}
    \label{lem:FullReorchestration}%
    Let $t, u \in \setBangTerms$ such that $t \bangArrSet*_F u$. If
    $u$ is a \bangSetFNF<d!>, then $t\bangArrSet*_{Fai} u$.
\end{restatable}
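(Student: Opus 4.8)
The plan is to reuse the abstract diligence machinery of \Cref{a:abstract-diligence}, instantiating the context family $\bangECtxtSet$ with the family $\bangFCtxtSet$ of \emph{all} full contexts. With this choice $\bangArrSet_E$ becomes $\bangArrSet_F$, $\bangArrSet_E<d!>$ becomes $\bangArrSet_F<d!>$, $\bangArrSet_{Eai}$ becomes $\bangArrSet_{Fai}$, and a \bangSetENF<d!> is exactly a \bangSetFNF<d!>; hence the conclusion ``$\bangArrSet_F$ admits a diligence process'' delivered by \Cref{lem: Abstract Implicitation} \emph{is} precisely the statement to be proved. So the entire argument reduces to discharging the two hypotheses of \Cref{lem: Abstract Implicitation} for this instance, exactly as was done for surface reduction in \Cref{lem:SurfaceReorchestration}.

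The first hypothesis is confluence of $\bangArrSet_F<d!>$, which I would invoke directly as \Cref{lem:Confluence_Admin_Full}; under the hood this follows from the diamond property of $\bangArrSet_F<d!>$ (\Cref{lem:Local Confluence ->F<d!>}) together with termination of $\bangArrSet_F<d!>$ (\Cref{lem:->F<d!>SN}). The second hypothesis is the local ``diagonal swap'': for $\rel \in \{\bangSymbBeta,\bangSymbSubs\}$, whenever $t \bangArrSet_F<\rel> u_1$ and $t \bangArrSet_F<d!> u_2$, there is an $s$ with $u_1 \bangArrSet*_F<d!> s$ and $u_2 \bangArrSet_F<\rel> s$. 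This is precisely \Cref{lem:R=dB/s!andt->F<R>u1andt->F<d!>u2=>u1->*F<d!>sandu2->F<R>s}, modulo the harmless renaming $u_1 \leftrightarrow u_2$ of the two one-step reducts of $t$. Both hypotheses being available, \Cref{lem: Abstract Implicitation} applies and the proof is a one-line appeal, just as in the surface case.

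The real content therefore lives in the supporting lemma \Cref{lem:R=dB/s!andt->F<R>u1andt->F<d!>u2=>u1->*F<d!>sandu2->F<R>s}, and this is where I expect the only genuine difficulty: unlike $\bangArrSet_S<d!>$, full administrative reduction can fire a $\bangSymbBang$-redex nested arbitrarily deep, even under a $\oc$, so one must check that a $\bangSymbBang$-step and a $\bangSymbBeta/\bangSymbSubs$-step can only be situated disjointly --- the $\bangSymbBang$-redex lying inside the surrounding context, the list context, or one of the immediate subterms of the $\bangSymbBeta/\bangSymbSubs$-redex, so that they never \emph{properly} overlap --- and that when a $\bangSymbSubs$-step duplicates the $\oc$ that the $\bangSymbBang$-step was about to erase, the residual on the right must be allowed to be a \emph{multi}-step $\bangArrSet*_F<d!>$ rather than a single step, which is why that edge of the commuting square is reflexive--transitive. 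Each configuration is then closed mechanically using the stability lemma \Cref{l:stability} for meta-level substitution and context plugging. Once that exhaustive but routine case analysis is in place, the abstract lemma performs all remaining bookkeeping, and the full-diligence statement follows with no extra effort, mirroring the surface case verbatim.
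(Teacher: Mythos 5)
Your proof is correct and follows exactly the paper's route: it instantiates the abstract diligence lemma (\Cref{lem: Abstract Implicitation}) with the family $\bangFCtxtSet$, discharging its two hypotheses via confluence of $\bangArrSet_F<d!>$ (\Cref{lem:Confluence_Admin_Full}) and the commutation of $\bangArrSet_F<d!>$ with $\bangSymbBeta/\bangSymbSubs$-steps (\Cref{lem:R=dB/s!andt->F<R>u1andt->F<d!>u2=>u1->*F<d!>sandu2->F<R>s}), which is precisely the paper's proof. Your side remarks on where the real work lies (the case analysis in the commutation lemma, with the multi-step $\bangArrSet*_F<d!>$ edge arising from duplication/erasure of a $\bangSymbBang$-redex inside a substituted argument) also match the paper's appendix treatment.
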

\begin{proof} \label{prf:FullReorchestration}%
    Just apply \Cref{lem: Abstract Implicitation} to $\bangArrSet_F$,
	since its hypotheses are fulfilled
	(\Cref{lem:t->*F<d!>u1andt->*F<d!>u2=>u1->*F<d!>sandu2->*F<d!>s,lem:R=dB/s!andt->F<R>u1andt->F<d!>u2=>u1->*F<d!>sandu2->F<R>s}).
    \qed
\end{proof}

\section{Appendix: Proofs of \Cref{sec:CBN_CBV_Embeddings}}

\subsection{The Call-by-Name Calculus \CBNSymb}

\begin{lemma}
    \label{lem:cbn(F<t>) = cbn(F)<cbn(t)>}
    Let $\cbnFCtxt \in \cbnFCtxtSet$ and $t \in \setCbnTerms$, then
    $\cbnToBangAGK{(\cbnFCtxt<t>)} =
    \cbnToBangAGK{\cbnFCtxt}\cbnCtxtPlug{\cbnToBangAGK{t}}$.
\end{lemma}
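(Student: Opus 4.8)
The plan is a straightforward structural induction on the \CBNSymb full context $\cbnFCtxt \in \cbnFCtxtSet$. Recall that the context embedding is obtained by extending the term embedding with the clause $\cbnToBangAGK{\Hole} = \Hole$, so that $\cbnToBangAGK{\cdot}$ acts on each context constructor exactly as it acts on the homonymous term constructor. First I would dispatch the base case $\cbnFCtxt = \Hole$: then $\cbnFCtxt<t> = t$, hence $\cbnToBangAGK{(\cbnFCtxt<t>)} = \cbnToBangAGK{t}$, while $\cbnToBangAGK{\cbnFCtxt}\cbnCtxtPlug{\cbnToBangAGK{t}} = \Hole\cbnCtxtPlug{\cbnToBangAGK{t}} = \cbnToBangAGK{t}$, so both sides agree.

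For the inductive step I would go through the five productions of $\cbnFCtxtSet$, namely $\cbnFCtxt = \app{\cbnFCtxt'}{u}$, $\app{u}{\cbnFCtxt'}$, $\abs{x}{\cbnFCtxt'}$, $\cbnFCtxt'\esub{x}{u}$, and $u\esub{x}{\cbnFCtxt'}$. In each case plugging distributes through the head constructor (e.g. $(\app{\cbnFCtxt'}{u})<t> = \app{\cbnFCtxt'<t>}{u}$), so I would unfold the definition of $\cbnToBangAGK{\cdot}$ on that constructor, apply the induction hypothesis to $\cbnFCtxt'$, and re-fold the definition of $\cbnToBangAGK{\cdot}$ read on the corresponding context constructor. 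Concretely, for $\cbnFCtxt = \app{\cbnFCtxt'}{u}$:
\begin{equation*}
\cbnToBangAGK{(\app{\cbnFCtxt'<t>}{u})}
= \app[\,]{\cbnToBangAGK{(\cbnFCtxt'<t>)}}{\oc\cbnToBangAGK{u}}
= \app[\,]{(\cbnToBangAGK{\cbnFCtxt'}\cbnCtxtPlug{\cbnToBangAGK{t}})}{\oc\cbnToBangAGK{u}}
= \cbnToBangAGK{(\app{\cbnFCtxt'}{u})}\cbnCtxtPlug{\cbnToBangAGK{t}},
\end{equation*}
where the middle equality is the induction hypothesis. The two ES cases are identical, using $\cbnToBangAGK{(s\esub{x}{u})} = \cbnToBangAGK{s}\esub{x}{\oc\cbnToBangAGK{u}}$, the abstraction case uses $\cbnToBangAGK{(\abs{x}{s})} = \abs{x}{\cbnToBangAGK{s}}$, and the application-right case is symmetric to the one displayed.

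I do not anticipate any genuine difficulty: this is the routine ``embedding commutes with context decomposition'' fact. The only point worth stating explicitly is that $\cbnToBangAGK{\cdot}$ never renames bound variables and we reason modulo $\alpha$-equivalence, so no variable capture arises when the hole of $\cbnFCtxt$ (equivalently, of $\cbnToBangAGK{\cbnFCtxt}$) sits under a binder; plugging on the \BANGSymb side then distributes over $\abs{x}{\cdot}$, $\app{\cdot}{\cdot}$ and $\esub{x}{\cdot}$ exactly as on the \CBNSymb side. This lemma is precisely what lets one transport the position of a \CBNSymb redex to the position of the corresponding \BANGSymb redex, and is used silently in \Cref{lem:Cbn_Simulation_One_Step} and \Cref{lem:Cbn Redex Position Stability}.
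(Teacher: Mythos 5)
Your proof is correct and follows exactly the paper's argument: the paper also proves this lemma by a straightforward structural induction on $\cbnFCtxt$, with the base case $\cbnToBangAGK{\Hole}=\Hole$ and the embedding commuting with each context constructor. Nothing further is needed.
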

\begin{proof}
    By straightforward induction on $\cbnFCtxt$.
    \qed
\end{proof}

\RecCbnOneStepSimulation*
    \begin{proof}
    By definition, there exist $\cbnFCtxt \in \cbnFCtxtSet$ and $t',
    u' \in \setCbnTerms$ such that $t = \cbnFCtxt<t'>$, $u =
    \cbnFCtxt<u'>$ and $t' \mapstoR u'$ for some $\rel \in
    \{\cbnSymbBeta, \cbnSymbSubs\}$. We distinguish two cases on
    $\rel$:
    \begin{itemize}
    \item[\bltI] $\rel = \cbnSymbBeta$: Then there are $s_1, s_2 \in
        \setCbnTerms$ and $\cbnLCtxt \in \cbnLCtxtSet$ such that $t' =
        \app{\cbnLCtxt<\abs{x}{s_1}>}{s_2}$ and $s' =
        \cbnLCtxt<s_1\esub{x}{s_2}>$. By \Cref{lem:cbn(F<t>) =
        cbn(F)<cbn(t)>}, $\cbnToBangAGK{{t'}} =
        \app{\cbnToBangAGK{\cbnLCtxt}\cbnCtxtPlug{\abs{x}{\cbnToBangAGK{s_1}}}}{\oc\cbnToBangAGK{s_2}}
        \mapstoR[\cbnToBangAGK{\cbnSymbBeta}]
        \cbnToBangAGK{\cbnLCtxt}\cbnCtxtPlug{\cbnToBangAGK{s_1}\esub{x}{\oc\cbnToBangAGK{s_2}}}
        = \cbnToBangAGK{{u'}}$ so that using \Cref{lem:cbn(F<t>) =
        cbn(F)<cbn(t)>} again, $\cbnToBangAGK{t} =
        \cbnToBangAGK{\cbnFCtxt}<\cbnToBangAGK{{t'}}>
        \bangArr_{\cbnToBangAGK{\cbnFCtxt}\bangCtxtPlug{\cbnToBangAGK{\cbnSymbBeta}}}
        \cbnToBangAGK{\cbnFCtxt}<\cbnToBangAGK{{u'}}> =
        \cbnToBangAGK{u}$.

    \item[\bltI] $\rel = \cbnSymbSubs$:  Then there are $s_1, s_2 \in
        \setCbnTerms$ such that $t' = s_1\esub{x}{s_2}$ and $s' =
        s_1\isub{x}{s_2}$. By induction on $s_1$, one has that
        $\cbnToBangAGK{{u'}} =
        \cbnToBangAGK{s_1}\isub{x}{\cbnToBangAGK{s_2}}$ thus
        $\cbnToBangAGK{{t'}} =
        \cbnToBangAGK{s_1}\esub{x}{\oc\cbnToBangAGK{s_2}}
        \mapstoR[\cbnToBangAGK{\cbnSymbSubs}]
        \cbnToBangAGK{s_1}\isub{x}{\cbnToBangAGK{s_2}} =
        \cbnToBangAGK{{u'}}$ so that $\cbnToBangAGK{t} =
        \cbnToBangAGK{\cbnFCtxt}<\cbnToBangAGK{{t'}}>
        \bangArr_{\cbnToBangAGK{\cbnFCtxt}\bangCtxtPlug{\cbnToBangAGK{\cbnSymbSubs}}}
        \cbnToBangAGK{\cbnFCtxt}<\cbnToBangAGK{{u'}}> =
        \cbnToBangAGK{u}$ by  \Cref{lem:cbn(F<t>) = cbn(F)<cbn(t)>}.
        \qed
    \end{itemize}
\end{proof}

\begin{lemma}
    \label{lem: cbn(t{x:=u}) = cbn(t){x:=cbn(u)}} Let $t, u \in
    \setCbnTerms$, then $\cbnToBangAGK{(t\isub{x}{u})} =
    \cbnToBangAGK{t}\isub{x}{\cbnToBangAGK{u}}$.
\end{lemma}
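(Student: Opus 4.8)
The plan is to proceed by structural induction on the \CBNSymb term $t$, following the inductive definition of the embedding $\cbnToBangAGK{\cdot}$. The base cases are the two variable cases. If $t = x$, then both sides equal $\cbnToBangAGK{u}$, since $\cbnToBangAGK{(x\isub{x}{u})} = \cbnToBangAGK{u}$ while $\cbnToBangAGK{x}\isub{x}{\cbnToBangAGK{u}} = x\isub{x}{\cbnToBangAGK{u}} = \cbnToBangAGK{u}$. If $t = y$ with $y \neq x$, then both sides equal $y$, since substitution is vacuous on both $y$ and $\cbnToBangAGK{y} = y$.

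For the inductive step there are three compound cases. If $t = \app{t_1}{t_2}$, then $\cbnToBangAGK{((\app{t_1}{t_2})\isub{x}{u})} = \app{\cbnToBangAGK{(t_1\isub{x}{u})}}{\oc\cbnToBangAGK{(t_2\isub{x}{u})}}$, which by the induction hypothesis on $t_1$ and $t_2$ equals $\app{(\cbnToBangAGK{t_1}\isub{x}{\cbnToBangAGK{u}})}{\oc(\cbnToBangAGK{t_2}\isub{x}{\cbnToBangAGK{u}})}$; this in turn equals $(\app{\cbnToBangAGK{t_1}}{\oc\cbnToBangAGK{t_2}})\isub{x}{\cbnToBangAGK{u}} = \cbnToBangAGK{(\app{t_1}{t_2})}\isub{x}{\cbnToBangAGK{u}}$, using that meta-level substitution propagates through application and $\oc$ together with the defining clause of $\cbnToBangAGK{\cdot}$ on applications. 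The case $t = t_1\esub{y}{t_2}$ is handled identically, with $\oc$ wrapping only the content of the ES. The case $t = \abs{y}{t'}$ uses the induction hypothesis on $t'$: $\cbnToBangAGK{((\abs{y}{t'})\isub{x}{u})} = \abs{y}{\cbnToBangAGK{(t'\isub{x}{u})}} = \abs{y}{(\cbnToBangAGK{t'}\isub{x}{\cbnToBangAGK{u}})} = (\abs{y}{\cbnToBangAGK{t'}})\isub{x}{\cbnToBangAGK{u}} = \cbnToBangAGK{(\abs{y}{t'})}\isub{x}{\cbnToBangAGK{u}}$.

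In each binding case ($\abs{y}{t'}$ and $t_1\esub{y}{t_2}$) I would first invoke $\alpha$-conversion to assume $y \neq x$ and $y \notin \freeVar{u}$, so that the meta-level substitutions on both sides cross the binder without capture. This step relies on knowing that $\cbnToBangAGK{\cdot}$ neither creates nor destroys free variables, i.e. $\freeVar{\cbnToBangAGK{u}} = \freeVar{u}$, which is a trivial auxiliary induction on $u$ (and could equally be folded into the present statement). The only delicate point, and thus the ``main obstacle'', is precisely this bookkeeping around $\alpha$-equivalence and the freshness side-conditions of substitution; the algebraic identities themselves amount to routine rewriting with the definition of $\cbnToBangAGK{\cdot}$.
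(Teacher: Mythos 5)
Your proof is correct and matches the paper's argument, which is simply stated as an induction on $t \in \setCbnTerms$; you have spelled out exactly the routine case analysis (variables, application, ES, abstraction with the usual $\alpha$-conversion and the fact that $\cbnToBangAGK{\cdot}$ preserves free variables) that the paper leaves implicit. Nothing is missing.
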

\begin{proof}
    By induction on $t \in \setCbnTerms$.
\end{proof}

\RecCbnOneStepReverseSimulation*
    \begin{proof}
Let $t \in \setCbnTerms$, $s'_t, s'_u \in \setBangTerms$ and $\rel'
\in \{\bangSymbBeta, \bangSymbSubs, \bangSymbBang\}$. Let us show the
following property by induction on $\bangFCtxt \in \bangFCtxtSet$:

<< For all $\cbnFCtxt \in \cbnFCtxtSet$ such that $\cbnToBangAGK{t} =
\bangFCtxt<s'_t>$ and $s'_t \mapstoR[\rel'] s'_u$, there exist $s_t,
s_u \in \setCbnTerms$, $\rel \in \{\cbnSymbBeta, \cbnSymbSubs\}$ and
$\cbnFCtxt \in \cbnFCtxtSet$ such that $\cbnToBangAGK{s_t} = s'_t$,
$\cbnToBangAGK{s_u} = s'_u$, $s_t \mapstoR s_u$, $\cbnToBangAGK{\rel}
= \rel'$ and $\cbnToBangAGK{\cbnFCtxt} = \bangFCtxt$. >>

\begin{itemize}
\item[\bltI] $\bangFCtxt = \Hole$: Let $\cbnFCtxt = \Hole$, then
$\cbnToBangAGK{\cbnFCtxt} = \bangFCtxt$. We distinguish three cases on
$\rel'$:
    \begin{itemize}
    \item[\bltII] $\rel = \bangSymbBeta$: Then $s'_t =
        \app{\bangLCtxt<\abs{x}{s'_1}>}{s'_2}$ and $s'_u =
        \bangLCtxt<s'_1\esub{x}{s'_2}>$ for some $\bangLCtxt \in
        \bangLCtxtSet$ and $s'_1, s'_2 \in \setBangTerms$. Since
        $\cbnToBangAGK{t} = s'_t$, then there exists $\cbnLCtxt \in
        \cbnLCtxtSet$ and $s_1, s_2 \in \setCbnTerms$ such that
        $\cbnToBangAGK{\cbnLCtxt} = \bangLCtxt$, $\cbnToBangAGK{s_1} =
        s'_1$ and $\oc\cbnToBangAGK{s_2} = s'_2$. Taking $s_t =
        \app{\cbnLCtxt<\abs{x}{s_1}>}{s_2}$, $s_u =
        \cbnLCtxt<s_1\esub{x}{s_2}>$ and $\rel = \cbnSymbBeta$
        concludes this case since using \Cref{lem:cbn(F<t>) =
        cbn(F)<cbn(t)>}:
        \begin{equation*}
            \begin{array}{c}
                \cbnToBangAGK{s_t}
                \;=\; \app{\cbnToBangAGK{\cbnLCtxt}<\abs{x}{\cbnToBangAGK{s_1}}>}{\oc\cbnToBangAGK{s_2}}
                \;=\; \app{\bangLCtxt<\abs{x}{s'_1}>}{s'_2}
                \;=\; s'_t
        \\
                \cbnToBangAGK{s_u}
                \;=\; \cbnToBangAGK{\cbnLCtxt}<\cbnToBangAGK{s_1}\esub{x}{\oc\cbnToBangAGK{s_2}}>
                \;=\; \bangLCtxt<s'_1\esub{x}{s'_2}>
                \;=\; s'_u
        \\
                s_t
                \;=\; \app{\cbnLCtxt<\abs{x}{s_1}>}{s_2}
                \;\mapstoR\; \cbnLCtxt<s_1\esub{x}{s_2}>
                \;=\; s_u
        \\
                \cbnToBangAGK{\rel}
                \;=\; \cbnToBangAGK{\cbnSymbBeta}
                \;=\; \bangSymbBeta
                \;=\; \rel'
            \end{array}
        \end{equation*}

    \item[\bltII] $\rel = \bangSymbSubs$: Then $s'_t =
        s'_1\esub{x}{s'_2}$ and $s'_u = s'_1\isub{x}{s'_2}$ for some
        $s'_1, s'_2 \in \setBangTerms$. Since $\cbnToBangAGK{t} =
        s'_t$, then there exists $s_1, s_2 \in \setCbnTerms$ such that
        $\cbnToBangAGK{s_1} = s'_1$ and $\oc\cbnToBangAGK{s_2} =
        s'_2$. Taking $s_t = s_1\esub{x}{s_2}$, $s_u =
        s_1\isub{x}{s_2}$ and $\rel = \cbnSymbSubs$ concludes this
        case since:
        \begin{equation*}
            \begin{array}{c}
                \cbnToBangAGK{s_t}
                \;=\; \cbnToBangAGK{s_1}\esub{x}{\oc\cbnToBangAGK{s_2}}
                \;=\; s'_1\esub{x}{s'_2}
                \;=\; s'_t
        \\
                \cbnToBangAGK{s_u}
                \;\stackon{=}{{\scriptsize Lem.\ref{lem: cbn(t{x:=u}) = cbn(t){x:=cbn(u)}}}}\; \cbnToBangAGK{s_1}\isub{x}{\oc\cbnToBangAGK{s_2}}
                \;=\; s'_1\isub{x}{s'_2}
                \;=\; s'_u
        \\
                s_t
                \;=\; s_1\esub{x}{s_2}
                \;\mapstoR\; s_1\isub{x}{s_2}
                \;=\; s_u
        \\
                \cbnToBangAGK{\rel}
                \;=\; \cbnToBangAGK{\cbnSymbSubs}
                \;=\; \bangSymbSubs
                \;=\; \rel'
            \end{array}
        \end{equation*}

    \item[\bltII] $\rel = \bangSymbBang$: Then $\cbnToBangAGK{t} =
        s'_t = \der{\bangLCtxt<\oc s>}$ for some $\bangLCtxt \in
        \bangLCtxtSet$ and $s \in \setBangTerms$ which is impossible
        by definition of the embedding.
    \end{itemize}

\item[\bltI] $\bangFCtxt = \abs{x}{\bangFCtxt'}$: Then
    $\cbnToBangAGK{t} = \abs{x}{\bangFCtxt'<s'_t>}$ thus necessarily
    $t = \abs{x}{t'}$ for some $t' \in \setCbnTerms$ such that
    $\cbnToBangAGK{{t'}} = \bangFCtxt'<s_t>$. By induction on
    $\bangFCtxt'$, there exist $s_t, s_u \in \setCbnTerms$, $\rel \in
    \{\cbnSymbBeta, \cbnSymbSubs\}$ and $\cbnFCtxt' \in \cbnFCtxtSet$
    such that $\cbnToBangAGK{s_t} = s'_t$, $\cbnToBangAGK{s_u} =
    s'_u$, $s_t \mapstoR s_u$, $\cbnToBangAGK{\rel} = \rel'$ and
    $\cbnToBangAGK{{\cbnFCtxt'}} = \bangFCtxt'$. Taking $\cbnFCtxt =
    \abs{x}{\cbnFCtxt'}$ concludes this case since
    $\cbnToBangAGK{\cbnFCtxt} = \abs{x}{\cbnToBangAGK{{\cbnFCtxt'}}} =
    \abs{x}{\bangFCtxt'} = \bangFCtxt$.

\item[\bltI] $\bangFCtxt = \app{\bangFCtxt'}{s'}$: Then
    $\cbnToBangAGK{t} = \app{\bangFCtxt'<s'_t>}{s'}$ thus necessarily
    $t = \app{t'}{s}$ for some $t', s \in \setCbnTerms$ such that
    $\cbnToBangAGK{t'} = \bangFCtxt'<s'_t>$ and $\oc\cbnToBangAGK{s} =
    s'$. By induction on $\bangFCtxt'$, there exist $s_t, s_u \in
    \setCbnTerms$, $\rel \in \{\cbnSymbBeta, \cbnSymbSubs\}$ and
    $\cbnFCtxt' \in \cbnFCtxtSet$ such that $\cbnToBangAGK{s_t} =
    s'_t$, $\cbnToBangAGK{s_u} = s'_u$, $s_t \mapstoR s_u$,
    $\cbnToBangAGK{\rel} = \rel'$ and $\cbnToBangAGK{{\cbnFCtxt'}} =
    \bangFCtxt'$. Taking $\cbnFCtxt = \app[\,]{\cbnFCtxt'}{s}$
    concludes this case since $\cbnToBangAGK{\cbnFCtxt} =
    \app{\cbnToBangAGK{{\cbnFCtxt'}}}{\oc\cbnToBangAGK{s}} =
    \app{\bangFCtxt'}{s'} = \bangFCtxt$.

\item[\bltI] $\bangFCtxt = \app[\,]{s'}{\bangFCtxt'}$: Then
    $\cbnToBangAGK{t} = \app[\,]{s'}{\bangFCtxt'<s'_t>}$ thus
    necessarily $t = \app[\,]{s}{t'}$ for some $s, t' \in
    \setCbnTerms$ such that $\cbnToBangAGK{s} = s'$ and
    $\oc\cbnToBangAGK{t'} = \bangFCtxt'<s'_t>$. Therefore
    $\bangFCtxt'' = \oc\bangFCtxt''$ for some $\bangFCtxt'' \in
    \bangFCtxtSet$ such that $\cbnToBangAGK{t'} = \bangFCtxt''<s'_t>$.
    By induction on $\bangFCtxt''$, there exist $s_t, s_u \in
    \setCbnTerms$, $\rel \in \{\cbnSymbBeta, \cbnSymbSubs\}$ and
    $\cbnFCtxt' \in \cbnFCtxtSet$ such that $\cbnToBangAGK{s_t} =
    s'_t$, $\cbnToBangAGK{s_u} = s'_u$, $s_t \mapstoR s_u$,
    $\cbnToBangAGK{\rel} = \rel'$ and $\cbnToBangAGK{{\cbnFCtxt'}} =
    \bangFCtxt''$. Taking $\cbnFCtxt = \app[\,]{s}{\cbnFCtxt'}$
    concludes this case since $\cbnToBangAGK{\cbnFCtxt} =
    \app{\cbnToBangAGK{s}}{\oc\cbnToBangAGK{{\cbnFCtxt'}}} =
    \app[\,]{s'}{\oc\bangFCtxt''} = \app[\,]{s'}{\bangFCtxt'} =
    \bangFCtxt$.

\item[\bltI] $\bangFCtxt = \bangFCtxt'\esub{x}{s'}$: Then
    $\cbnToBangAGK{t} = \bangFCtxt'<s'_t>\esub{x}{s'}$ thus
    necessarily $t = t'\esub{x}{s}$ for some $t', s \in \setCbnTerms$
    such that $\cbnToBangAGK{t'} = \bangFCtxt'<s'_t>$ and
    $\oc\cbnToBangAGK{s} = s'$. By induction on $\bangFCtxt'$, there
    exist $s_t, s_u \in \setCbnTerms$, $\rel \in \{\cbnSymbBeta,
    \cbnSymbSubs\}$ and $\cbnFCtxt' \in \cbnFCtxtSet$ such that
    $\cbnToBangAGK{s_t} = s'_t$, $\cbnToBangAGK{s_u} = s'_u$, $s_t
    \mapstoR s_u$, $\cbnToBangAGK{\rel} = \rel'$ and
    $\cbnToBangAGK{{\cbnFCtxt'}} = \bangFCtxt'$. Taking $\cbnFCtxt =
    \cbnFCtxt'\esub{x}{s}$ concludes this case since
    $\cbnToBangAGK{\cbnFCtxt} =
    \cbnToBangAGK{{\cbnFCtxt'}}\esub{x}{\oc\cbnToBangAGK{s}} =
    \bangFCtxt'\esub{x}{s'} = \bangFCtxt$.

\item[\bltI] $\bangFCtxt = s'\esub{x}{\bangFCtxt'}$: Then
    $\cbnToBangAGK{t} = s'\esub{x}{\bangFCtxt'<s'_t>}$ thus
    necessarily $t = s\esub{x}{t'}$ for some $s, t' \in \setCbnTerms$
    such that $\cbnToBangAGK{s} = s'$ and $\oc\cbnToBangAGK{t'} =
    \bangFCtxt'<s'_t>$. Therefore $\bangFCtxt'' = \oc\bangFCtxt''$ for
    some $\bangFCtxt'' \in \bangFCtxtSet$ such that $\cbnToBangAGK{t'}
    = \bangFCtxt''<s'_t>$. By induction on $\bangFCtxt''$, there exist
    $s_t, s_u \in \setCbnTerms$, $\rel \in \{\cbnSymbBeta,
    \cbnSymbSubs\}$ and $\cbnFCtxt' \in \cbnFCtxtSet$ such that
    $\cbnToBangAGK{s_t} = s'_t$, $\cbnToBangAGK{s_u} = s'_u$, $s_t
    \mapstoR s_u$, $\cbnToBangAGK{\rel} = \rel'$ and
    $\cbnToBangAGK{{\cbnFCtxt'}} = \bangFCtxt''$. Taking $\cbnFCtxt =
    s\esub{x}{\cbnFCtxt'}$ concludes this case since
    $\cbnToBangAGK{\cbnFCtxt} =
    \cbnToBangAGK{s}\esub{x}{\oc\cbnToBangAGK{{\cbnFCtxt'}}} =
    s'\esub{x}{\oc\bangFCtxt''} = s'\esub{x}{\bangFCtxt'} =
    \bangFCtxt$.

\item[\bltI] $\bangFCtxt = \der{\bangFCtxt'}$: Then $\cbnToBangAGK{t}
    = \der{\bangFCtxt'<s'_t>}$ which is impossible by definition of
    $\cbnToBangAGK{\cdot}$.

\item[\bltI] $\bangFCtxt = \oc\bangFCtxt'$: Then $\cbnToBangAGK{t} =
    \oc\bangFCtxt'<s'_t>$ which is impossible by definition of
    $\cbnToBangAGK{\cdot}$.
\end{itemize}
\end{proof}

\begin{lemma}
    \label{lem: cbn preservation restricted}%
    Let $\cbnECtxtSet \subseteq \cbnFCtxtSet$, $\bangECtxtSet
    \subseteq \bangFCtxtSet$ be two families of contexts such that:
    \begin{enumerate}
    \item Let $\cbnECtxt \in \cbnECtxtSet$, then
        $\cbnToBangAGK{\cbnECtxt} \in \bangECtxtSet$.
    \item Let $\cbnFCtxt \in \cbnFCtxtSet$, if there exists
        $\bangECtxt \in \bangECtxtSet$ such that
        $\cbnToBangAGK{\cbnFCtxt} = \bangECtxt$, then $\cbnFCtxt \in
        \cbnECtxtSet$.
    \end{enumerate}
    Then, the following properties hold:
    \begin{itemize}
    \item[\bltI] \textbf{(Normal Forms)}: $\forall\, t \in
        \setCbnTerms,$ \quad%
        \begin{equation*}
            t \text{ is a \cbnSetENF}
                \quad\Leftrightarrow\quad
            \cbnToBangAGK{t} \text{ is a \bangSetENF}
        \end{equation*}

    \item[\bltI] \textbf{(Stability)}: $\forall\, t, \in
        \setCbnTerms,\; \forall\, u \in \setBangTerms,$
        \begin{equation*}
            \cbnToBangAGK{t} \bangArrSet*_E u'
                \quad \Rightarrow \quad
            \exists\, u \in \setCbnTerms, \; \cbnToBangAGK{u} = u'
        \end{equation*}

    \item[\bltI] \textbf{(Simulation and Reverse Simulation)}:
        $\forall\, t, u \in \setCbnTerms,$
        \begin{equation*}
            t \cbnArrSet*_E u
                \quad \Leftrightarrow \quad
            \cbnToBangAGK{t} \bangArrSet*_E \cbnToBangAGK{u}
        \end{equation*}
        Moreover, the number of $\cbnSymbBeta/\cbnSymbSubs$-steps
        matches the number $\bangSymbBeta/\bangSymbSubs$-steps.
    \end{itemize}
\end{lemma}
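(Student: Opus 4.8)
The plan is to lift the one-step (reverse) simulation lemmas (\Cref{l:one-step-cnb} and \Cref{l:reverse-simulation-cbn}) to the restricted context families $\cbnECtxtSet$ and $\bangECtxtSet$, and then obtain the three items by induction on reduction lengths.

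\textbf{Restricted one-step (reverse) simulation.} First I would prove two building blocks. (i) If $t \cbnArrSet_E u$, say the step occurs at a context $\cbnECtxt \in \cbnECtxtSet$ via a rule $\rel \in \{\cbnSymbBeta, \cbnSymbSubs\}$, then $\cbnToBangAGK{t} \bangArrSet_E \cbnToBangAGK{u}$ via rule $\cbnToBangAGK{\rel}$: this is \Cref{l:one-step-cnb} together with hypothesis~1, which guarantees $\cbnToBangAGK{\cbnECtxt} \in \bangECtxtSet$. (ii) Conversely, if $\cbnToBangAGK{t} \bangArrSet_E u'$ at a context $\bangECtxt \in \bangECtxtSet$ via $\rel' \in \{\bangSymbBeta, \bangSymbSubs, \bangSymbBang\}$, then since $\bangECtxtSet \subseteq \bangFCtxtSet$, \Cref{l:reverse-simulation-cbn} provides $u \in \setCbnTerms$, $\rel \in \{\cbnSymbBeta, \cbnSymbSubs\}$ and $\cbnFCtxt \in \cbnFCtxtSet$ with $\cbnToBangAGK{u} = u'$, $\cbnToBangAGK{\rel} = \rel'$, $\cbnToBangAGK{\cbnFCtxt} = \bangECtxt$ and $t$ reducing to $u$ at a $\cbnFCtxt$-position via $\rel$; hypothesis~2 then forces $\cbnFCtxt \in \cbnECtxtSet$, hence $t \cbnArrSet_E u$. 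In particular (ii) shows $\rel' \neq \bangSymbBang$ (no $\bangSymbBang$-redex can sit at a $\bangECtxtSet$-position of an image), and both (i) and (ii) preserve the rule name ($\cbnSymbBeta \leftrightarrow \bangSymbBeta$, $\cbnSymbSubs \leftrightarrow \bangSymbSubs$).

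\textbf{Deriving the three items.} \emph{Stability} is an induction on the length of $\cbnToBangAGK{t} \bangArrSet*_E u'$: the empty case is trivial, and for $\cbnToBangAGK{t} \bangArrSet_E s' \bangArrSet*_E u'$, item~(ii) gives $s \in \setCbnTerms$ with $\cbnToBangAGK{s} = s'$, so the induction hypothesis applies to $\cbnToBangAGK{s} \bangArrSet*_E u'$. \emph{Normal forms} follows by contraposition in each direction: if $t$ is not a $\cbnECtxtSet$-normal form then (i) makes $\cbnToBangAGK{t}$ a non-normal form, and if $\cbnToBangAGK{t}$ is not a $\bangECtxtSet$-normal form then (ii) makes $t$ a non-normal form. \emph{Simulation and reverse simulation}: the forward implication is a direct induction on the length of $t \cbnArrSet*_E u$ using (i); the backward implication is an induction on the length of $\cbnToBangAGK{t} \bangArrSet*_E \cbnToBangAGK{u}$, where for a non-empty sequence $\cbnToBangAGK{t} \bangArrSet_E s' \bangArrSet*_E \cbnToBangAGK{u}$ item~(ii) simultaneously produces a first \CBNSymb step $t \cbnArrSet_E s$ and the identity $\cbnToBangAGK{s} = s'$, so the induction hypothesis applies; the empty case uses injectivity of $\cbnToBangAGK{\cdot}$, a straightforward structural induction. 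Since every use of (i) and (ii) preserves the rule name, the number of $\cbnSymbBeta$-steps (\resp $\cbnSymbSubs$-steps) on the left equals the number of $\bangSymbBeta$-steps (\resp $\bangSymbSubs$-steps) on the right.

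\textbf{Main obstacle.} The only non-mechanical point is the backward direction of simulation: an arbitrary $\bangArrSet_E$-sequence out of $\cbnToBangAGK{t}$ might a priori leave the image of the embedding, which would break the induction. This is precisely what the ``$\exists\, u,\ \cbnToBangAGK{u} = u'$'' conclusion of \Cref{l:reverse-simulation-cbn} (inherited by item~(ii)) prevents, keeping every intermediate term of the form $\cbnToBangAGK{\cdot}$. Everything else is bookkeeping, and hypotheses~1 and~2 are exactly the data needed to restrict the ambient one-step lemmas to the subfamilies $\cbnECtxtSet$ and $\bangECtxtSet$.
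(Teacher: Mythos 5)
Your proposal is correct and follows essentially the same route as the paper: restrict the one-step simulation and reverse simulation lemmas (\Cref{l:one-step-cnb,l:reverse-simulation-cbn}) to the families $\cbnECtxtSet$/$\bangECtxtSet$ via hypotheses~1 and~2, then derive stability, normal forms, and (reverse) simulation by induction on reduction lengths, with injectivity of $\cbnToBangAGK{\cdot}$ for the base case and rule-name preservation giving the step counts. No gaps.
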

    \begin{proof}
Let us first show that the following two properties hold:
\begin{itemize}
\item[\bltI] \textbf{(One Step Simulation)}: $\forall\, t, u \in
    \setCbnTerms,$
    \begin{equation*}
        t \cbnArrSet_E u
            \quad \Rightarrow \quad
        \cbnToBangAGK{t} \bangArrSet_E \cbnToBangAGK{u}
    \end{equation*}

\item[\bltI] \textbf{(One Step Reverse Simulation)}: $\forall\, t, \in
    \setCbnTerms,\; \forall\, u' \in \setBangTerms,$
    \begin{equation*}
        \cbnToBangAGK{t} \bangArrSet_E u'
            \quad \Rightarrow \quad
        \exists\, u \in \setCbnTerms, \; t \bangArrSet_E u
        \text{ and } \cbnToBangAGK{u} = u'
    \end{equation*}
\end{itemize}
In particular, $\cbnSymbBeta$-steps (\resp $\cbnSymbSubs$-steps) are
simulated by $\bangSymbBeta$-steps (\resp $\bangSymbSubs$-steps).

~

We distinguish the two cases:
\begin{itemize}
\item[\bltI] \textbf{(One Step Simulation)}: Let $t, u \in
    \setCbnTerms$ such that $t \cbnArrSet_E u$. Thus $t \cbnArr_E<R>
    u$ for some $\cbnECtxt \in \cbnECtxtSet$ and $\rel \in
    \{\cbnSymbBeta, \cbnSymbSubs\}$. Since $\cbnECtxtSet \subseteq
    \cbnFCtxtSet$, then $\cbnToBangAGK{t}
    \bangArr_{\cbnToBangAGK{\cbnECtxt}\cbnCtxtPlug{\cbnToBangAGK{\rel}}}
    \cbnToBangAGK{u}$ using one step \full simulation
    (\Cref{lem:Cbn_Simulation_One_Step}). Since $\cbnECtxt \in
    \cbnECtxtSet$, then by hypothesis $\cbnToBangAGK{\cbnECtxt} \in
    \bangECtxt$ and one finally concludes that $\cbnToBangAGK{t}
    \bangArrSet_E \cbnToBangAGK{u}$.

\item[\bltI] \textbf{(One Step Reverse Simulation)}: Let $t \in
    \setCbnTerms$ and $u' \in \setBangTerms$ such that
    $\cbnToBangAGK{t} \bangArrSet_E u'$. Then $\cbnToBangAGK{t}
    \bangArr_E<R'> u'$ for some $\bangECtxt \in \bangECtxtSet$ and
    $\rel' \in \{\bangSymbBeta, \bangSymbSubs\}$. Since $\bangECtxtSet
    \subseteq \bangFCtxtSet$, then using one step \full reverse
    simulation (\Cref{l:reverse-simulation-cbn}) there exists $u \in
    \setCbnTerms$ such that $\cbnToBangAGK{u} = u'$, $\cbnFCtxt \in
    \cbnFCtxtSet$ with  $\cbnToBangAGK{\cbnFCtxt} = \bangECtxt$, $\rel
    \in \{\cbnSymbBeta, \cbnSymbSubs\}$, $\cbnToBangAGK{\rel} = \rel'$
    and $t \cbnArr_E<R> u$. Since $\cbnToBangAGK{\cbnFCtxt} \in
    \bangECtxtSet$, then by hypothesis $\cbnFCtxt \in \cbnECtxtSet$ so
    that $t \cbnArrSet_E u$.
\end{itemize}
Using these, let us now prove the three expected properties:
\begin{itemize}
\item[\bltI] \textbf{(Normal Forms)}: Direct consequence of simulation
    and reverse simulation.

\item[\bltI] \textbf{(Stability)}: Let $t \in \setCbnTerms$ and $u'
    \in \setBangTerms$ such that $\cbnToBangAGK{t} \bangArrSet*_E u'$.
    Let us proceed by induction on the length of the derivation
    $\cbnToBangAGK{t} \bangArrSet*_E u'$. Cases:
    \begin{itemize}
    \item[\bltII] $\cbnToBangAGK{t} \bangArrSet^0_E u'$: Then $u' =
        \cbnToBangAGK{t}$ and taking $u = t$ trivially concludes this
        case.
    \item[\bltII] $\cbnToBangAGK{t} \bangArrSet_E s' \bangArrSet*_E
        u'$: Using one step reverse simulation, there exists $s \in
        \setCbnTerms$ such that $\cbnToBangAGK{s} = s'$ and thus by
        \ih on $\cbnToBangAGK{s} \bangArrSet*_E u'$, one concludes
        that there exists $u \in \setCbnTerms$ such that
        $\cbnToBangAGK{u} = u'$.
    \end{itemize}

\item[\bltI] \textbf{(Simulation)}: Let $t, u \in \setCbnTerms$ such
    that $t \cbnArrSet*_E u$. We proceed by induction on the length of
    the derivation $t   \cbnArrSet*_E u$. Cases:
    \begin{itemize}
    \item[\bltII] $t \cbnArrSet^0_E u$: Then $t = u$ thus
        $\cbnToBangAGK{t} = \cbnToBangAGK{u}$ and by reflexivity
        $\cbnToBangAGK{t} \bangArrSet*_E \cbnToBangAGK{u}$.
    \item[\bltII] $t \cbnArrSet_E s \cbnArrSet*_E u$: On one hand
        $\cbnToBangAGK{t} \cbnArrSet_E \cbnToBangAGK{s}$ using one
        step simulation. On the other hand, by \ih on $s \cbnArrSet*_E
        u$, one has that $\cbnToBangAGK{s} \bangArrSet*_E
        \cbnToBangAGK{u}$. Finally, one concludes by transitivity that
        $\cbnToBangAGK{t} \bangArrSet*_E \cbnToBangAGK{u}$.
    \end{itemize}

\item[\bltI] \textbf{(Reverse Simulation)}: Let $t, u \in
    \setCbnTerms$ such that $\cbnToBangAGK{t} \bangArrSet*_E
    \cbnToBangAGK{u}$. We proceed by induction on the length of the
    derivation $\cbnToBangAGK{t} \bangArrSet*_E \cbnToBangAGK{u}$.
    Cases:
    \begin{itemize}
    \item[\bltII] $\cbnToBangAGK{t} \bangArrSet^0_E \cbnToBangAGK{u}$:
        Then $\cbnToBangAGK{t} = \cbnToBangAGK{u}$ thus by injectivity
        of $\cbnToBangAGK{(\cdot)}$ $t = u$ and by reflexivity $t
        \cbnArrSet*_E u$.
    \item[\bltII] $\cbnToBangAGK{t} \bangArrSet_E s' \bangArrSet*_E
        \cbnToBangAGK{u}$: On one hand, using one step reverse
        simulation, there exists $s \in \setCbnTerms$ such that $t
        \cbnArrSet_E s$ with $\cbnToBangAGK{s} = s'$. On the other
        hand, by \ih on $\cbnToBangAGK{s} \bangArrSet*_E
        \cbnToBangAGK{u}$, one has that $s \cbnArrSet*_E u$. Finally,
        one concludes by transitivity that $t \cbnArrSet*_E u$.
        \qed
    \end{itemize}
\end{itemize}
\end{proof}

\begin{lemma}
    \label{lem:cbn_full_context_stability}
    Let $t \in \setCbnTerms$, then:%
    \begin{enumerate}
    \item \textbf{(\CBNSymb $\rightarrow$ \BANGSymb)} %
        Let $\cbnFCtxt \in \cbnFCtxtSet$, then
        $\cbnToBangAGK{\cbnFCtxt} \in \bangFCtxtSet$.
    \item \textbf{(\BANGSymb $\rightarrow$ \CBNSymb)} %
        Let $\bangFCtxt \in \bangFCtxtSet$ and $\cbnFCtxt \in
        \cbnFCtxtSet$ such that $\cbnToBangAGK{\cbnFCtxt} =
        \bangFCtxt$, then $\cbnFCtxt \in \cbnFCtxtSet$.
    \end{enumerate}
\end{lemma}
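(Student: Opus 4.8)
The plan is to establish Point~1 by a routine structural induction on the \CBNSymb full context $\cbnFCtxt \in \cbnFCtxtSet$, unfolding the definition of the embedding on contexts (which extends $\cbnToBangAGK{\cdot}$ structurally, with $\cbnToBangAGK{\Hole} = \Hole$) and matching each of the six productions of $\cbnFCtxtSet$ against a production of $\bangFCtxtSet$. The base case $\cbnFCtxt = \Hole$ gives $\cbnToBangAGK{\Hole} = \Hole \in \bangFCtxtSet$. When the hole lies on the spine — $\abs{x}{\cbnFCtxt'}$, $\app{\cbnFCtxt'}{t}$, or $\cbnFCtxt'\esub{x}{t}$ — the embedding produces respectively $\abs{x}{\cbnToBangAGK{\cbnFCtxt'}}$, $\app{\cbnToBangAGK{\cbnFCtxt'}}{\oc\cbnToBangAGK{t}}$, and $\cbnToBangAGK{\cbnFCtxt'}\esub{x}{\oc\cbnToBangAGK{t}}$, and the conclusion follows at once from the induction hypothesis $\cbnToBangAGK{\cbnFCtxt'} \in \bangFCtxtSet$ together with the corresponding \BANGSymb full-context clauses.

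The one place that deserves a comment is the pair of cases where the hole sits in argument position, $\app{t}{\cbnFCtxt'}$ and $t\esub{x}{\cbnFCtxt'}$: the embedding then yields $\app{\cbnToBangAGK{t}}{\oc\cbnToBangAGK{\cbnFCtxt'}}$ and $\cbnToBangAGK{t}\esub{x}{\oc\cbnToBangAGK{\cbnFCtxt'}}$, so the translated sub-context ends up \emph{underneath a bang}. From the induction hypothesis, $\cbnToBangAGK{\cbnFCtxt'} \in \bangFCtxtSet$; one then applies the clause $\oc\bangFCtxt \in \bangFCtxtSet$ — available precisely because full contexts are allowed to have their hole under a $\oc$ — to get $\oc\cbnToBangAGK{\cbnFCtxt'} \in \bangFCtxtSet$, and finally the clauses $\app{t}{\bangFCtxt}$ and $t\esub{x}{\bangFCtxt}$. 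So there is no real obstacle here; the whole argument is pure bookkeeping. (This is also the step that would break if $\bangFCtxtSet$ were replaced by the surface set $\bangSCtxtSet$, which is exactly why the \emph{surface} analogue, \Cref{lem:Cbn Redex Position Stability}, has to be proved by a more delicate argument tracking where surface redex positions can occur.)

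Point~2 is immediate: its conclusion $\cbnFCtxt \in \cbnFCtxtSet$ is already among its hypotheses — equivalently, every single-hole context over the \CBNSymb term syntax is by definition a \CBNSymb full context, so there is nothing to prove. It is recorded here only to parallel the surface statement \Cref{lem:Cbn Redex Position Stability}.2, where the target set $\cbnSCtxtSet$ is a proper subset of $\cbnFCtxtSet$ and the implication carries genuine content. In short, the only nontrivial part is the structural induction for Point~1, and it presents no difficulty.
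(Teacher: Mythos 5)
Your proof is correct and matches the paper's: Point~1 is exactly the "by definition of $\cbnToBangAGK{\cdot}$" structural induction the paper invokes (your remark about the hole falling under a $\oc$ being harmless for full contexts is the right observation), and Point~2 is indeed trivially true by hypothesis, just as the paper records.
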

\begin{proof} ~
    \begin{enumerate}
    \item \textbf{(\CBNSymb $\rightarrow$ \BANGSymb)} By definition of
        $\cbnToBangAGK{\cdot}$.

    \item \textbf{(\BANGSymb $\rightarrow$ \CBNSymb)} By hypothesis.
    \end{enumerate}
\end{proof}

\RecCbnPropFull*
\begin{proof}
    Using \Cref{lem: cbn preservation restricted} and
    \Cref{lem:cbn_full_context_stability}.
\end{proof}

\RecCBNCtxtStability*
    \begin{proof}
\begin{enumerate}
\item \textbf{(\CBNSymb $\rightarrow$ \BANGSymb)} By induction on
    $\cbnSCtxt \in \cbnSCtxtSet$:
    \begin{itemize}
    \item[\bltI] $\cbnSCtxt = \Hole$: Then $\cbnToBangAGK{\cbnSCtxt} =
        \cbnToBangAGK{\Hole} = \Hole \in \bangSCtxtSet$.

    \item[\bltI] $\cbnSCtxt = \abs{x}{\cbnSCtxt'}$: By \ih,
        $\cbnToBangAGK{{\cbnSCtxt'}} \in \bangSCtxtSet$ thus
        $\cbnToBangAGK{\cbnSCtxt} =
        \abs{x}{\cbnToBangAGK{{\cbnSCtxt'}}} \in
        \abs{x}{\bangSCtxtSet} \subseteq \bangSCtxtSet$.

    \item[\bltI] $\cbnSCtxt = \app[\,]{\cbnSCtxt'}{s}$: By \ih on
        $\cbnSCtxt'$, one has that $\cbnToBangAGK{{\cbnSCtxt'}} \in
        \bangSCtxtSet$ which concludes this case since
        $\cbnToBangAGK{\cbnSCtxt} =
        \app[\,]{\cbnToBangAGK{{\cbnSCtxt'}}}{\oc \cbnToBangAGK{s}}
        \in \app[\,]{\bangSCtxtSet}{\oc \cbnToBangAGK{s}} \subseteq
        \bangSCtxtSet$.

    \item[\bltI] $\cbnSCtxt = \cbnSCtxt'\esub{x}{s}$: By \ih on
        $\cbnSCtxt'$, one has that $\cbnToBangAGK{{\cbnSCtxt'}} \in
        \bangSCtxtSet$ which concludes this case since
        $\cbnToBangAGK{\cbnSCtxt} =
        \cbnToBangAGK{{\cbnSCtxt'}}\esub{x}{\oc \cbnToBangAGK{s}} \in
        \bangSCtxtSet\esub{x}{\oc \cbnToBangAGK{s}} \subseteq
        \bangSCtxtSet$.
    \end{itemize}

\item \textbf{(\BANGSymb $\rightarrow$ \CBNSymb)} %
    Let $\bangSCtxt \in \bangSCtxtSet$ and $\cbnFCtxt \in
    \cbnFCtxtSet$ such that $\cbnToBangAGK{\cbnFCtxt} = \bangSCtxt$.
    By induction on $\bangSCtxt \in \bangSCtxtSet$:
    \begin{itemize}
    \item[\bltI] $\bangSCtxt = \Hole$: Then necessarily $\cbnFCtxt =
        \Hole$ thus $\cbnFCtxt \in \cbnSCtxtSet$.

    \item[\bltI] $\bangSCtxt = \abs{x}{\bangSCtxt'}$: Then necessarily
        $\cbnFCtxt = \abs{x}{\cbnFCtxt'}$ for some $\cbnFCtxt' \in
        \cbnFCtxtSet$ such that $\cbnToBangAGK{{\cbnFCtxt'}} =
        \bangSCtxt'$. By \ih on $\bangSCtxt'$, one has that
        $\cbnFCtxt' \in \cbnSCtxtSet$ so that $\cbnFCtxt =
        \abs{x}{\cbnFCtxt'} \in \abs{x}{\cbnSCtxtSet} \subseteq
        \cbnSCtxtSet$.

    \item[\bltI] $\bangSCtxt = \app[\,]{\bangSCtxt'}{s'}$: Then
        necessarily $\cbnFCtxt = \app[\,]{\cbnFCtxt'}{s}$ for some
        $\cbnFCtxt' \in \cbnFCtxtSet$ and $s \in \setCbnTerms$ such
        that $\cbnToBangAGK{{\cbnFCtxt'}} = \bangSCtxt'$ and
        $\oc\cbnToBangAGK{s} = s'$. By \ih on $\bangSCtxt'$, one has
        that $\cbnFCtxt' \in \cbnSCtxtSet$ so that $\cbnFCtxt =
        \app[\,]{\cbnFCtxt'}{s} \in \app[\,]{\cbnSCtxtSet}{s}
        \subseteq \cbnSCtxtSet$.

    \item[\bltI] $\bangSCtxt = \app[\,]{s'}{\bangSCtxt}$: Then
        necessarily $\cbnFCtxt = \app[\,]{s}{\cbnFCtxt'}$ for some $s
        \in \setCbnTerms$ and $\cbnFCtxt' \in \cbnFCtxtSet$ such that
        $\cbnToBangAGK{s} = s'$ and $\oc\cbnToBangAGK{{\cbnFCtxt'}} =
        \bangSCtxt$ which is impossible since no $\bangSCtxt \in
        \bangSCtxtSet$ can have a hole under a bang.

    \item[\bltI] $\bangSCtxt = \bangSCtxt'\esub{x}{s'}$: Then
        necessarily $\cbnFCtxt = \cbnFCtxt'\esub{x}{s}$ for some
        $\cbnFCtxt' \in \cbnFCtxtSet$ and $s \in \setCbnTerms$ such
        that $\cbnToBangAGK{{\cbnFCtxt'}} = \bangSCtxt'$ and
        $\oc\cbnToBangAGK{s} = s'$. By \ih on $\bangSCtxt'$, one has
        that $\cbnFCtxt' \in \cbnSCtxtSet$ so that $\cbnFCtxt =
        \cbnFCtxt'\esub{x}{s} \in \cbnSCtxtSet\esub{x}{s} \subseteq
        \cbnSCtxtSet$.

    \item[\bltI] $\bangSCtxt = s'\esub{x}{\bangSCtxt'}$: Then
        necessarily $\cbnFCtxt = s\esub{x}{\cbnFCtxt'}$ for some $s
        \in \setCbnTerms$ and $\cbnFCtxt' \in \cbnFCtxtSet$ such that
        $\cbnToBangAGK{s} = s'$ and $\oc\cbnToBangAGK{{\cbnFCtxt'}} =
        \bangSCtxt$ which is impossible since no $\bangSCtxt \in
        \bangSCtxtSet$ can have a hole under a bang.

    \item[\bltI] $\bangSCtxt = \der{\bangSCtxt'}$: Impossible since
        one cannot have $\cbnToBangAGK{\cbnFCtxt} = \der{\bangSCtxt}$
        by definition of the \CBNSymb embedding on contexts.
    \qed
    \end{itemize}
\end{enumerate}
\end{proof}

\RecCbnPropSurface*
\begin{proof} 
    Immediate consequence of \Cref{lem: cbn preservation
    restricted,lem:Cbn Redex Position Stability}.
	\qed
\end{proof}

\subsection{The Call-by-Value Calculus \CBVSymb}

We use here an equivalent presentation of the \emphasis{\CBVSymb
embedding} (\Cref{def:cbvAGK_Embedding}) to improve readability within
the proofs:
	\begin{equation*}
	\begin{array}{rcl}
		\cbvToBangAGK{x}
		&\;\;\coloneqq\;\;& 
		\oc x
		\\
		\cbvToBangAGK{(\abs{x}{t})}
		&\;\;\coloneqq\;\;& 
		\oc\abs{x}{\oc\cbvToBangAGK{t}}
		\\
		\cbvToBangAGK{(\app{t}{u})}
		&\;\;\coloneqq\;\;&
		\left\{\begin{array}{lcl}
			\der{\app[\,]{\bangStrip{\cbvToBangAGK{t}}}{\cbvToBangAGK{u}}}
			&\hspace{0.25cm}&\text{if } \bangBangPred{\cbvToBangAGK{t}}
			\\
			\der{\app[\,]{\der{\cbvToBangAGK{t}}}{\cbvToBangAGK{u}}}
			&\hspace{0.25cm}&\text{otherwise}
		\end{array}\right.
		\\
		\cbvToBangAGK{(t\esub{x}{u})}
		&\;\;\coloneqq\;\;& 
		\cbvToBangAGK{t}\esub{x}{\cbvToBangAGK{u}}.
	\end{array}
\end{equation*}
where the predicate $\bangBangPred{t}$ holds if $t = \bangLCtxt<\oc
u>$, and $\bangStrip{t} \coloneqq \left\{\begin{array}{ll}
\bangLCtxt<u> &\text{if } t = \bangLCtxt<\oc u> \\
	t &\text{otherwise.} \end{array}\right.$

In order to properly extend the \CBVSymb embedding to contexts, we
introduce a refined "super-developped" embedding. We first extend the
$\bangBangPred{\cdot}$ predicate and $\bangStrip{\cdot}$ function.

\begin{definition}
    Let $\bangFCtxt \in \bangFCtxtSet$.

    The context predicate $\bangBangPred{\bangFCtxt}$ holds if
    $\bangFCtxt = \bangLCtxt<\oc\bangFCtxt'>$ or $\bangFCtxt =
    \bangLCtxt_1<\bangLCtxt_2<\oc t>\esub{x}{\bangFCtxt'}>$.

    The context function $\bangStrip{\bangFCtxt}$ is defined as
    follows:
    \begin{equation*}
        \bangStrip{\bangFCtxt} := 
        \left\{\begin{array}{ll}
            \bangLCtxt<\bangFCtxt'>
            &\text{ if } \bangFCtxt = \bangLCtxt<\oc\bangFCtxt'>
        \\
            \bangLCtxt_1<\bangLCtxt_2<t>\esub{x}{\bangFCtxt'}>
            &\text{ if } \bangFCtxt = \bangLCtxt_1<\bangLCtxt_2<\oc t>\esub{x}{\bangFCtxt'}>
        \end{array}\right.
    \end{equation*}
\end{definition}

Let us also introduce the $\cbvFuncPred{\cdot}$ context predicate
which captures contexts with holes placed directly to the left of an
application.

\begin{definition}
    Let $\cbvFCtxt \in \cbvFCtxtSet$. The functional predicate
    $\cbvFuncPred{\cbvFCtxt}$ holds if $\cbvFCtxt =
    \cbvFCtxt'<\app[\,]{\cbvLCtxt}{t}>$ for some $\cbvFCtxt' \in
    \cbvFCtxtSet$, $\cbvLCtxt \in \cbvLCtxtSet$ and $t \in
    \setCbvTerms$. Similarly for $\bangFCtxt \in \bangFCtxtSet$,
    $\bangFuncPred{\bangFCtxt}$ holds if $\bangFCtxt =
    \bangFCtxt'<\app[\,]{\bangLCtxt}{t}>$ for some $\bangFCtxt' \in
    \bangFCtxtSet$, $\bangLCtxt \in \bangLCtxtSet$ and $t \in
    \setBangTerms$.
\end{definition}

This predicates is used to detecte places where super-development must
be used. We now introduce the \CBVSymb super-developped embedding for
contexts.

\begin{definition}
	\label{def:cbvAGK_Embedding_Superdev}
    The super-developped \emphasis{\CBVSymb context embedding}
	$\cbvToBangAGK[*]{\cdot} \colon \setCbvTerms \to \setBangTerms$ is
	defined on contexts as follows:
	\begin{equation*}
		\begin{array}{rcl}
            \cbvToBangAGK[*]{\Hole}
                &\coloneqq& \Hole
        \\
            \cbvToBangAGK[*]{(\abs{x}{\cbvFCtxt})}
                &\coloneqq& \oc\abs{x}{\oc\cbvToBangAGK[*]{\cbvFCtxt}}
        \\
            \cbvToBangAGK[*]{(\app[\,]{\cbvFCtxt}{t})}
                &\coloneqq&
                \left\{\begin{array}{lcl}
                    \der{\app[\,]{\cbvToBangAGK{\cbvFCtxt}}{\cbvToBangAGK{u}}}
                    &\hspace{0.25cm}&\text{if } \cbvFCtxt \in \cbvLCtxtSet
                \\
                    \der{\app[\,]{\bangStrip{\cbvToBangAGK[*]{\cbvFCtxt}}}{\cbvToBangAGK{u}}}
                    &\hspace{0.25cm}&\text{if } \bangBangPred{\cbvToBangAGK{\cbvFCtxt}}
                \\
                    \der{\app[\,]{\der{\cbvToBangAGK[*]{\cbvFCtxt}}}{\cbvToBangAGK{u}}}
                    &\hspace{0.25cm}&\text{otherwise}
                \end{array}\right.
        \\
            \cbvToBangAGK[*]{(\app[\,]{t}{\cbvFCtxt})}
                &\coloneqq&
                \left\{\begin{array}{lcl}
                    \der{\app[\,]{\bangStrip{\cbvToBangAGK{t}}}{\cbvToBangAGK[*]{\cbvFCtxt}}}
                    &\hspace{0.25cm}&\text{if } \bangBangPred{\cbvToBangAGK{t}}
                \\
                    \der{\app[\,]{\der{\cbvToBangAGK{t}}}{\cbvToBangAGK[*]{\cbvFCtxt}}}
                    &\hspace{0.25cm}&\text{otherwise}
                \end{array}\right.
        \\
            \cbvToBangAGK[*]{(\cbvFCtxt\esub{x}{t})}
                &\coloneqq& \cbvToBangAGK[*]{\cbvFCtxt}\esub{x}{\cbvToBangAGK{t}}
        \\
            \cbvToBangAGK[*]{(t\esub{x}{\cbvFCtxt})}
                &\coloneqq& \cbvToBangAGK{t}\esub{x}{\cbvToBangAGK[*]{\cbvFCtxt}}
		\end{array}
	\end{equation*}
\end{definition}

For example, $\cbvToBangAGK[*]{(\app[\,]{\app[\,]{\Hole}{y}}{z})} =
\der{\app[\,]{\der{\der{\app[\,]{\Hole}{\oc y}}}}{\oc z}}$ whereas
$\cbvToBangAGK{(\app[\,]{\app[\,]{\Hole}{y}}{z})} =
\der{\app[\,]{\der{\der{\app[\,]{\der{\Hole}}{\oc y}}}}{\oc z}}$. This
allows to work with embedding in a contextual manner.

\begin{lemma}
    \label{lem:cbvAGK_Contextual_Translation}%
    \label{lem: cbv embedding with contexts}
    Let $\cbvFCtxt \in \cbvFCtxtSet$ and $t \in \setCbvTerms$. Then:
    \begin{equation*}
        \cbvToBangAGK{(\cbvFCtxt<t>)} = 
        \left\{\begin{array}{ll}
            \cbvToBangAGK[*]{\cbvFCtxt}\bangCtxtPlug{\bangStrip{\cbvToBangAGK{t}}}
            &\text{ if } \cbvFuncPred{\cbvFCtxt}
            \text{ and } \bangBangPred{\cbvToBangAGK{t}}
        \\
            \cbvToBangAGK{\cbvFCtxt}\bangCtxtPlug{\cbvToBangAGK{t}}
            &\text{ otherwise}
        \end{array}\right.
    \end{equation*}
\end{lemma}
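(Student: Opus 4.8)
I would prove the statement by structural induction on $\cbvFCtxt \in \cbvFCtxtSet$, after recording a few auxiliary facts, each by a routine induction. \emph{(A) List contexts are transparent:} $\cbvToBangAGK{(\cbvLCtxt<r>)} = \cbvToBangAGK{\cbvLCtxt}\bangCtxtPlug{\cbvToBangAGK{r}}$ with $\cbvToBangAGK{\cbvLCtxt} \in \bangLCtxtSet$ (and likewise for $\cbvToBangAGK[*]{\cdot}$), and if $\cbvToBangAGK{r}$ has the form $\bangLCtxt<\oc s>$ then $\bangStrip{\cdot}$ slides through a list-context prefix, $\bangStrip{\cbvToBangAGK{\cbvLCtxt}\bangCtxtPlug{\cbvToBangAGK{r}}} = \cbvToBangAGK{\cbvLCtxt}\bangCtxtPlug{\bangStrip{\cbvToBangAGK{r}}}$. \emph{(B) Characterisation of bangs:} for a term $r$, $\bangBangPred{\cbvToBangAGK{r}}$ holds iff $r = \cbvLCtxt<v>$ for some $\cbvLCtxt \in \cbvLCtxtSet$ and $v \in \setCbvValues$ (values embed to bangs, applications to derelictions, closures to bangs iff their body does); consequently $\cbvFCtxt'<t>$ is a list-wrapped value precisely when either $\cbvFCtxt' \in \cbvLCtxtSet$ and $t$ is a value, or $\bangBangPred{\cbvToBangAGK{\cbvFCtxt'}}$ holds (the hole sitting below a value structure), and in the latter case also $\bangBangPred{\cbvToBangAGK[*]{\cbvFCtxt'}}$ holds and $\bangStrip{\cdot}$ commutes with plugging into $\cbvToBangAGK{\cbvFCtxt'}$ and $\cbvToBangAGK[*]{\cbvFCtxt'}$. \emph{(C) Placement:} for each of the constructors $\abs{x}{\cbvFCtxt'}$, $\cbvFCtxt'\esub{x}{u}$, $u\esub{x}{\cbvFCtxt'}$, $\app{u}{\cbvFCtxt'}$, the functional predicate of the built context is equivalent to $\cbvFuncPred{\cbvFCtxt'}$, whereas $\cbvFuncPred{\app{\cbvFCtxt'}{u}}$ holds iff $\cbvFCtxt' \in \cbvLCtxtSet$ or $\cbvFuncPred{\cbvFCtxt'}$.

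\textbf{Easy cases.} If $\cbvFCtxt = \Hole$, then $\cbvFuncPred{\Hole}$ is false and both sides equal $\cbvToBangAGK{t}$. If $\cbvFCtxt$ is $\abs{x}{\cbvFCtxt'}$, $\cbvFCtxt'\esub{x}{u}$, $u\esub{x}{\cbvFCtxt'}$ or $\app{u}{\cbvFCtxt'}$, the outer constructor is translated by one and the same top-level clause of $\cbvToBangAGK{\cdot}$, of $\cbvToBangAGK[*]{\cdot}$, and of the term embedding after plugging --- for $\app{u}{\cbvFCtxt'}$ this uses that the clause choice depends only on $\bangBangPred{\cbvToBangAGK{u}}$, unaffected by plugging since $u$ is hole-free. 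Since $\cbvFuncPred{\cbvFCtxt} \Leftrightarrow \cbvFuncPred{\cbvFCtxt'}$ by (C) and $\bangBangPred{\cbvToBangAGK{t}}$ depends on $t$ only, the induction hypothesis on $\cbvFCtxt'$ --- in whichever of the two branches it produces --- transports verbatim once the outer constructor is re-wrapped.

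\textbf{Main case $\cbvFCtxt = \app{\cbvFCtxt'}{u}$.} I would split along the three clauses of $\cbvToBangAGK[*]{(\app{\cbvFCtxt'}{u})}$. \emph{(a) $\cbvFCtxt' \in \cbvLCtxtSet$:} then $\cbvFuncPred{\cbvFCtxt}$ holds and, by (A), $\cbvToBangAGK{(\cbvFCtxt'<t>)} = \cbvToBangAGK{\cbvFCtxt'}\bangCtxtPlug{\cbvToBangAGK{t}}$, which is a list-wrapped bang exactly when $\bangBangPred{\cbvToBangAGK{t}}$. If so, the embedding of $\app{\cbvFCtxt'<t>}{u}$ super-develops and, pushing $\bangStrip{\cdot}$ through the list-context prefix by (A), the result is $\cbvToBangAGK[*]{\cbvFCtxt}\bangCtxtPlug{\bangStrip{\cbvToBangAGK{t}}}$ --- the first branch, correctly selected since both conjuncts hold. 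Otherwise the dereliction clause applies and one obtains $\cbvToBangAGK{\cbvFCtxt}\bangCtxtPlug{\cbvToBangAGK{t}}$ --- the ``otherwise'' branch, legitimate since $\bangBangPred{\cbvToBangAGK{t}}$ fails. \emph{(b) $\cbvFCtxt' \notin \cbvLCtxtSet$ and $\bangBangPred{\cbvToBangAGK{\cbvFCtxt'}}$:} then by (B) $\cbvFCtxt'<t>$ is a list-wrapped value for every $t$, so the embedding of $\app{\cbvFCtxt'<t>}{u}$ super-develops; using the induction hypothesis on $\cbvFCtxt'$ and sliding $\bangStrip{\cdot}$ through the bang-context ($\cbvToBangAGK{\cbvFCtxt'}$ or $\cbvToBangAGK[*]{\cbvFCtxt'}$, per (B)), one lands on clause (2) of $\cbvToBangAGK[*]{\cdot}$ and gets $\cbvToBangAGK{\cbvFCtxt}\bangCtxtPlug{\cbvToBangAGK{t}}$ when $\neg\cbvFuncPred{\cbvFCtxt'}$ or $\neg\bangBangPred{\cbvToBangAGK{t}}$, and $\cbvToBangAGK[*]{\cbvFCtxt}\bangCtxtPlug{\bangStrip{\cbvToBangAGK{t}}}$ otherwise --- matching the right branch since $\cbvFuncPred{\cbvFCtxt} \Leftrightarrow \cbvFuncPred{\cbvFCtxt'}$. \emph{(c) $\cbvFCtxt' \notin \cbvLCtxtSet$ and $\neg\bangBangPred{\cbvToBangAGK{\cbvFCtxt'}}$:} then by (B) $\cbvFCtxt'<t>$ is never a list-wrapped value, so the dereliction clause applies; applying the induction hypothesis to $\cbvFCtxt'$ and wrapping with the clause (3) shape of $\cbvToBangAGK[*]{\cdot}$ --- or, in the ``otherwise'' branch, with the plain shape of $\cbvToBangAGK{\cdot}$ --- yields the required right-hand side, again with the branch matching by (C).

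\textbf{The main obstacle} is the bookkeeping in the application case: one must keep the clause fired by the \emph{term} embedding on $\cbvFCtxt'<t>$ --- governed, via (B), by whether $\cbvFCtxt'<t>$ is a list-wrapped value --- in lockstep with the clause defining the \emph{context} embedding $\cbvToBangAGK[*]{(\app{\cbvFCtxt'}{u})}$, while simultaneously checking that the branch of the statement that actually comes out is the intended one. The facts that $\bangBangPred{\cbvToBangAGK{\cbvFCtxt'}}$ forces the enclosing super-development and propagates to $\cbvToBangAGK[*]{\cbvFCtxt'}$, together with the commutation of $\bangStrip{\cdot}$ with list-context and bang-context prefixes, are exactly what make these alignments go through; fact (B) is also where I expect the bulk of the technical work to lie.
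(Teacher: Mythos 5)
Your proposal is correct and takes essentially the same route as the paper, whose proof of this lemma is simply ``by induction on $\cbvFCtxt$'': your facts (A)--(C) and the three-way split in the application case are exactly the bookkeeping that induction requires, and the clause alignments you check do go through. One minor wording slip: in fact (B) the first disjunct should say ``$\cbvFCtxt' \in \cbvLCtxtSet$ and $\bangBangPred{\cbvToBangAGK{t}}$ holds'' (i.e.\ $t$ is a \emph{list-wrapped} value, not merely a value), which is in any case the version you actually use in cases (a)--(c).
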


\begin{proof}
    By induction on $\cbvFCtxt$.
\end{proof}

\begin{definition}
    Let $\cbvFCtxt \in \cbvFCtxtSet$, then $\cbvRmDst{\cbvFCtxt}$ is
    defined as follows:
    \begin{equation*}
        \begin{array}{rcl}
            \cbvRmDst{\Hole} &\coloneqq& \Hole
        \\
            \cbvRmDst{\abs{x}{\cbvFCtxt}}
                &\coloneqq& \abs{x}{\cbvRmDst{\cbvFCtxt}}
        \\[0.3cm]
            \cbvRmDst{\app[\,]{\cbvFCtxt}{t}}
                &\coloneqq& \app[\,]{\cbvRmDst{\cbvFCtxt}}{t}
        \\[0.3cm]
            \cbvRmDst{\app[\,]{t}{\cbvFCtxt}}
                &\coloneqq& \app[\,]{t}{\cbvRmDst{\cbvFCtxt}}
        \\[0.3cm]
            \cbvRmDst{\cbvFCtxt\esub{x}{t}}
                &\coloneqq&
                \left\{\begin{array}{ll}
                    \Hole
                    &\text{ if } \cbvRmDst{\cbvFCtxt} = \Hole
                \\
                    \cbvRmDst{\cbvFCtxt}\esub{x}{t}
                    &\text{ otherwise}
                \end{array}\right.
        \\[0.3cm]
            \cbvRmDst{t\esub{x}{\cbvFCtxt}}
                &\coloneqq& t\esub{x}{\cbvRmDst{\cbvFCtxt}}
        \end{array}
    \end{equation*}
\end{definition}

\begin{lemma}
    \label{lem:cbvAGK_BangPred_on_Isub}%
    Let $t \in \setCbvTerms$ and $v \in \setCbvValues$. Then
    $\bangBangPred{\cbvToBangAGK{t}}$ if and only if
    $\bangBangPred{\cbvToBangAGK{(t\isub{x}{u})}}$.
\end{lemma}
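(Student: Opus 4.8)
The plan is to prove the equivalence by structural induction on $t \in \setCbvTerms$, after isolating two elementary facts about the embedding $\cbvToBangAGK{\cdot}$ (\Cref{def:cbvAGK_Embedding}). First, whenever $t$ is an \emph{application}, $\cbvToBangAGK{t}$ is a dereliction $\der{\cdot}$; since a list context is either $\Hole$ or of the shape $\bangLCtxt<\oc u>$ (a closure) or $\oc u$ (a bang)—never a dereliction—this gives $\neg\bangBangPred{\cbvToBangAGK{t}}$. Second, whenever $t$ is a \emph{value} (a variable or an abstraction), $\cbvToBangAGK{t}$ has the form $\oc s$, so $\bangBangPred{\cbvToBangAGK{t}}$ holds (take the empty list context). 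I would also record the auxiliary closure equivalence: since $\cbvToBangAGK{(s\esub{y}{r})} = \cbvToBangAGK{s}\esub{y}{\cbvToBangAGK{r}}$, and pushing an ES into or out of a list context again yields a list context, one gets $\bangBangPred{\cbvToBangAGK{(s\esub{y}{r})}}$ iff $\bangBangPred{\cbvToBangAGK{s}}$—a closure matches $\bangLCtxt<\oc u>$ exactly when its body does.

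With these facts the induction on $t$ is immediate, using $\alpha$-conversion to assume every bound variable of $t$ is distinct from $x$ and does not occur free in $v$, so that the substitution distributes over the constructors. If $t$ is a variable ($x$ itself or some other $y$), then $t\isub{x}{v}$ is a value ($v$ or $y$), and both $\cbvToBangAGK{t}$ and $\cbvToBangAGK{(t\isub{x}{v})}$ satisfy the predicate by the second fact. If $t = \abs{y}{s}$, then $t\isub{x}{v} = \abs{y}{(s\isub{x}{v})}$ is again an abstraction, and the second fact again applies on both sides. If $t$ is an application, then so is $t\isub{x}{v}$, and the first fact makes the predicate fail on both embeddings. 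Finally, if $t = s\esub{y}{r}$, then $t\isub{x}{v} = (s\isub{x}{v})\esub{y}{(r\isub{x}{v})}$; applying the closure equivalence to both $t$ and $t\isub{x}{v}$ reduces the goal to $\bangBangPred{\cbvToBangAGK{s}}$ iff $\bangBangPred{\cbvToBangAGK{(s\isub{x}{v})}}$, which is exactly the induction hypothesis on $s$.

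I do not anticipate any real obstacle: the entire content is that substituting a value never alters a term's outermost shape (value vs.\ application vs.\ closure) \emph{modulo} the surrounding list of explicit substitutions, and that $\bangBangPred{\cdot}$ depends only on that shape. The only points needing a little care are the explicit check that a dereliction is never of the form $\bangLCtxt<\oc u>$—immediate from the grammar of list contexts—and the $\alpha$-renaming bookkeeping in the abstraction and closure cases so that $\isub{x}{v}$ genuinely commutes with the constructors.
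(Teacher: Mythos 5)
Your proof is correct and follows essentially the same route as the paper's: a structural induction on $t$ in which values translate to bangs (predicate holds), applications translate to derelictions (predicate fails, since a list context never produces a dereliction), and the closure case reduces to the induction hypothesis on the body. The only difference is presentational: you factor these shape observations out as preliminary facts, whereas the paper inlines the same case analysis (including the subcases on $\bangBangPred{\cbvToBangAGK{t_1}}$ in the application case), so no substantive gap remains.
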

\begin{proof}
By induction on $t \in \setCbvTerms$:
\begin{itemize}
\item[\bltI] $t = y$: Then $\cbvToBangAGK{t} = \oc y$ thus
    $\bangBangPred{\cbvToBangAGK{t}}$. We distinguish two cases:
    \begin{itemize}
    \item[\bltII] $x = y$: Then $t\isub{x}{v} = v$ thus
        $\bangBangPred{\cbvToBangAGK{t\isub{x}{v}}}$.

    \item[\bltII] $x \neq y$: Then $t\isub{x}{v} = y$ thus
        $\cbvToBangAGK{(t\isub{x}{v})} = \oc y$ and therefore
        $\bangBangPred{\cbvToBangAGK{(t\isub{x}{v})}}$.
    \end{itemize}

\item[\bltI] $t = \abs{y}{t'}$: Then $\cbvToBangAGK{t} =
    \oc\abs{y}{\cbvToBangAGK{t'}}$ thus
    $\bangBangPred{\cbvToBangAGK{t}}$. Moreover, $t\isub{x}{v} =
    \abs{y}{(t'\isub{x}{v})}$ thus $\cbvToBangAGK{(t\isub{x}{v})} =
    \oc\abs{y}{\oc\cbvToBangAGK{(t'\isub{x}{v})}}$ and therefore
    $\bangBangPred{\cbvToBangAGK{(t\isub{x}{v})}}$.

\item[\bltI] $t = \app{t_1}{t_2}$: We distinguish two cases:
    \begin{itemize}
    \item[\bltII] $\bangBangPred{\cbvToBangAGK{t_1}}$: Then
        $\cbvToBangAGK{t} =
        \der{\app[\,]{\bangStrip{\cbvToBangAGK{t_1}}}{\cbvToBangAGK{t_2}}}$
        thus $\neg\bangBangPred{\cbvToBangAGK{t}}$ and by \ih on
        $t_1$, $\bangBangPred{(t_1\isub{x}{v})}$. Moreover,
        $t\isub{x}{v} = \app{(t_1\isub{x}{v})}{(t_2\isub{x}{v})}$ thus
        $\cbvToBangAGK{t} =
        \der{\app{\bangStrip{\cbvToBangAGK{(t_1\isub{x}{v})}}}{\cbvToBangAGK{(t_2\isub{x}{v})}}}$
        and therefore
        $\neg\bangBangPred{\cbvToBangAGK{(t\isub{x}{v})}}$.

    \item[\bltII] $\neg\bangBangPred{\cbvToBangAGK{t_1}}$: Then
        $\cbvToBangAGK{t} =
        \der{\app[\,]{\der{\cbvToBangAGK{t_1}}}{\cbvToBangAGK{t_2}}}$
        thus $\neg\bangBangPred{\cbvToBangAGK{t}}$ and by \ih on
        $t_1$, $\neg\bangBangPred{(t_1\isub{x}{v})}$. Moreover,
        $t\isub{x}{v} = \app{(t_1\isub{x}{v})}{(t_2\isub{x}{v})}$ thus
        $\cbvToBangAGK{t} =
        \der{\app{\der{\cbvToBangAGK{(t_1\isub{x}{v})}}}{\cbvToBangAGK{(t_2\isub{x}{v})}}}$
        and therefore
        $\neg\bangBangPred{\cbvToBangAGK{(t\isub{x}{v})}}$.
    \end{itemize}

\item[\bltI] $t = t_1\esub{y}{t_2}$: Then $\cbvToBangAGK{t} =
    \cbvToBangAGK{t_1}\esub{y}{\cbvToBangAGK{t_2}}$. We distinguish
    two cases:
    \begin{itemize}
    \item[\bltII] $\bangBangPred{\cbvToBangAGK{t_1}}$: By \ih on
        $t_1$, one has that
        $\bangBangPred{\cbvToBangAGK{(t_1\isub{x}{v})}}$ thus
        $\bangBangPred{\cbvToBangAGK{(t\isub{x}{v})}}$.

    \item[\bltII] $\neg\bangBangPred{\cbvToBangAGK{t_1}}$: By \ih on
        $t_1$, one has that
        $\neg\bangBangPred{\cbvToBangAGK{(t_1\isub{x}{v})}}$ thus
        $\neg\bangBangPred{\cbvToBangAGK{(t\isub{x}{v})}}$.
    \end{itemize}
\end{itemize}
\end{proof}

\begin{lemma}
    \label{lem:bang_Strip_on_Isub}%
    Let $t, u \in \setBangTerms$ such that $\bangBangPred{t}$, then
    $\bangStrip{t\isub{x}{u}} = \bangStrip{t}\isub{x}{u}$.
\end{lemma}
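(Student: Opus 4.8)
The plan is to prove the identity by structural induction on the list context $\bangLCtxt$ for which $t = \bangLCtxt<\oc w>$; such $\bangLCtxt$ and $w$ exist by definition of $\bangBangPred{\cdot}$, and one checks they are uniquely determined by $t$ (a list context collects exactly the outermost explicit substitutions of $t$ down to the leading bang), so that $\bangStrip{\cdot}$ is well defined on terms satisfying $\bangBangPred{\cdot}$. Before the induction it is convenient to record two elementary facts. \textbf{(a)} If $\bangBangPred{s}$ then $\bangBangPred{s\isub{x}{u}}$ for every $u \in \setBangTerms$: writing $s = \bangLCtxt_0<\oc w_0>$ and, by $\alpha$-conversion, taking the variables bound by $\bangLCtxt_0$ distinct from $x$ and not free in $u$, meta-level substitution distributes over the explicit substitutions of $\bangLCtxt_0$ and over the $\oc$, so $s\isub{x}{u} = \bangLCtxt_0'<\oc(w_0\isub{x}{u})>$ where $\bangLCtxt_0'$ is $\bangLCtxt_0$ with $\isub{x}{u}$ propagated into its terms. \textbf{(b)} If $\bangBangPred{t_1}$ then $\bangStrip{t_1\esub{y}{t_2}} = (\bangStrip{t_1})\esub{y}{t_2}$: indeed $t_1 = \bangLCtxt_1<\oc w_1>$ gives $t_1\esub{y}{t_2} = (\bangLCtxt_1\esub{y}{t_2})<\oc w_1>$, and both sides unfold to $\bangLCtxt_1<w_1>\esub{y}{t_2}$.

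I would then carry out the induction. In the base case $\bangLCtxt = \Hole$, so $t = \oc w$ and $\bangStrip{t} = w$; since $t\isub{x}{u} = \oc(w\isub{x}{u})$ we get $\bangStrip{t\isub{x}{u}} = w\isub{x}{u} = \bangStrip{t}\isub{x}{u}$. In the inductive case $\bangLCtxt = \bangLCtxt'\esub{y}{s}$, write $t = t_1\esub{y}{s}$ with $t_1 = \bangLCtxt'<\oc w>$, so $\bangBangPred{t_1}$ holds, and by $\alpha$-conversion assume $y \neq x$ and $y \notin \freeVar{u}$, whence $t\isub{x}{u} = (t_1\isub{x}{u})\esub{y}{s\isub{x}{u}}$. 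By \textbf{(a)}, $\bangBangPred{t_1\isub{x}{u}}$; so, applying \textbf{(b)}, then the induction hypothesis on $\bangLCtxt'$, then the definition of meta-level substitution on a closure (using $y \neq x$, $y \notin \freeVar{u}$), then \textbf{(b)} again:
\begin{align*}
\bangStrip{t\isub{x}{u}}
  &= (\bangStrip{t_1\isub{x}{u}})\esub{y}{s\isub{x}{u}}
   = (\bangStrip{t_1}\isub{x}{u})\esub{y}{s\isub{x}{u}} \\
  &= (\bangStrip{t_1}\esub{y}{s})\isub{x}{u}
   = (\bangStrip{t_1\esub{y}{s}})\isub{x}{u}
   = \bangStrip{t}\isub{x}{u}.
\end{align*}
This closes the induction.

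The only genuinely delicate point is the management of $\alpha$-equivalence and capture-avoidance, which shows up in fact \textbf{(a)} and in the inductive step: one must rename the variables bound by the ambient list context so that they neither coincide with $x$ nor occur free in $u$, which is exactly what the identification of terms up to $\alpha$-conversion licenses. Everything else is a direct unfolding of the definitions of $\bangBangPred{\cdot}$, $\bangStrip{\cdot}$ and meta-level substitution, so I expect no real obstacle beyond this routine bookkeeping.
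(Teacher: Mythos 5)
Your proof is correct and essentially the same as the paper's: the paper argues by structural induction on $t$ (cases $\oc t'$ and $t_1\esub{y}{t_2}$ with $\bangBangPred{t_1}$), which coincides with your induction on the list context witnessing $\bangBangPred{t}$, and your explicit facts (a) and (b) are exactly the steps the paper performs inline. No gap.
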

\begin{proof}
By induction on $t \in \setBangTerms$:
\begin{itemize}
\item[\bltI] $t = \oc t'$: Then $\bangStrip{t\isub{x}{u}} =
    \bangStrip{\oc (t'\isub{x}{u})} = t'\isub{x}{u} = \bangStrip{\oc
    t'}\isub{x}{u} = \bangStrip{t}\isub{x}{u}$.

\item[\bltI] $t = t_1\esub{y}{t_2}$: Then necessarily
    $\bangBangPred{t_1}$ and by \ih on $t_1$, one has that
    $\bangStrip{t_1\isub{x}{u}} = \bangStrip{t_1}\isub{x}{u}$ thus:
    \begin{equation*}
        \begin{array}{rcl}
            \bangStrip{t\isub{x}{u}}
                &=& \bangStrip{(t_1\isub{x}{u})\esub{y}{t_2\isub{x}{u}}}
            \\
                &=& \bangStrip{t_1\isub{x}{u}}\esub{y}{t_2\isub{x}{u}}
            \\
                &=& (\bangStrip{t_1}\isub{x}{u})\esub{y}{t_2\isub{x}{u}}
            \\
                &=& (\bangStrip{t_1}\esub{y}{t_2})\isub{x}{u}
            \\
                &=& \bangStrip{t}\isub{x}{u}
        \end{array}
    \end{equation*}

\item[\bltI] Otherwise: Impossible since it contradicts the hypothesis
    $\bangBangPred{t}$.
\end{itemize}
\end{proof}

\begin{lemma}
    \label{lem:cbvAGK_Translation_on_Isub}%
    Let $t \in \setCbvTerms$ and $v \in \setCbvValues$. Then
    $\cbvToBangAGK{(t\isub{x}{v})} =
    \cbvToBangAGK{t}\isub{x}{\bangStrip{\cbvToBangAGK{v}}}$.
\end{lemma}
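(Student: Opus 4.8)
The plan is to prove the identity by structural induction on $t \in \setCbvTerms$. The base case $t = y$ splits on whether $y = x$. If $y \neq x$, then $t\isub{x}{v} = y$ and both sides equal $\oc y$. If $y = x$, the left-hand side is $\cbvToBangAGK{v}$ while the right-hand side is $(\oc x)\isub{x}{\bangStrip{\cbvToBangAGK{v}}} = \oc\,\bangStrip{\cbvToBangAGK{v}}$, so I would use that $v$ is a value: by definition $\cbvToBangAGK{v}$ is either $\oc x$ or $\oc\abs{y}{\oc\cbvToBangAGK{t'}}$, hence of the form $\oc s$, so $\bangBangPred{\cbvToBangAGK{v}}$ holds with the empty list context and $\oc\,\bangStrip{\cbvToBangAGK{v}} = \oc s = \cbvToBangAGK{v}$, which closes the case.

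For $t = \abs{y}{t'}$ and $t = t_1\esub{y}{t_2}$ I would first $\alpha$-rename so that $y \notin \{x\}\cup\freeVar{v}$ (note $\freeVar{\bangStrip{\cbvToBangAGK{v}}} \subseteq \freeVar{v}$), then unfold the embedding, push the capture-avoiding meta-substitution through the constructors, and apply the induction hypothesis to the immediate subterms. No auxiliary lemma is needed here, since meta-substitution distributes over $\oc$, $\abs{y}{\cdot}$, and $\esub{y}{\cdot}$ when $y$ is suitably fresh.

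The interesting case is $t = \app{t_1}{t_2}$, where the embedding is defined by case analysis on whether $\bangBangPred{\cbvToBangAGK{t_1}}$ holds. Here I would first invoke Lemma~\ref{lem:cbvAGK_BangPred_on_Isub} to get that $\bangBangPred{\cbvToBangAGK{t_1}}$ holds iff $\bangBangPred{\cbvToBangAGK{(t_1\isub{x}{v})}}$ does, so the case analyses governing the two sides of the equation are synchronized. In the ``otherwise'' branch, both sides unfold to $\der{\app{\der{\cbvToBangAGK{t_1}}}{\cbvToBangAGK{t_2}}}\isub{x}{\bangStrip{\cbvToBangAGK{v}}}$ using only the two induction hypotheses and distribution of meta-substitution over $\derSymb$ and application. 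In the $\bangBangPred{\cbvToBangAGK{t_1}}$ branch I additionally need to commute $\bangStrip{\cdot}$ past the meta-substitution on the left subterm: from the induction hypothesis $\cbvToBangAGK{(t_1\isub{x}{v})} = \cbvToBangAGK{t_1}\isub{x}{\bangStrip{\cbvToBangAGK{v}}}$ and Lemma~\ref{lem:bang_Strip_on_Isub} (applicable since $\bangBangPred{\cbvToBangAGK{t_1}}$) I obtain $\bangStrip{\cbvToBangAGK{(t_1\isub{x}{v})}} = \bangStrip{\cbvToBangAGK{t_1}}\isub{x}{\bangStrip{\cbvToBangAGK{v}}}$; combining with the induction hypothesis on $t_2$, both sides reduce to $\der{\app{\bangStrip{\cbvToBangAGK{t_1}}}{\cbvToBangAGK{t_2}}}\isub{x}{\bangStrip{\cbvToBangAGK{v}}}$.

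The main obstacle is precisely this application case: one must ensure the branch choices on the two sides coincide (Lemma~\ref{lem:cbvAGK_BangPred_on_Isub}) and that stripping commutes with substitution on the function subterm (Lemma~\ref{lem:bang_Strip_on_Isub}); without these two facts the chain of equalities breaks. Everything else is routine unfolding of the definition of $\cbvToBangAGK{\cdot}$ together with the induction hypothesis.
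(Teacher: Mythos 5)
Your proposal is correct and follows essentially the same route as the paper: induction on $t$, with the application case synchronized via Lemma~\ref{lem:cbvAGK_BangPred_on_Isub} and closed using Lemma~\ref{lem:bang_Strip_on_Isub}, and the variable/value base case handled by observing that $\cbvToBangAGK{v}$ always carries an outer bang (the paper just spells this out by splitting $v$ into variable and abstraction). No gaps.
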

\begin{proof}
By induction on $t \in \setCbvTerms$:
\begin{itemize}
\item[\bltI] $t = y$: Then $\cbvToBangAGK{t} = \oc y$ and we
    distinguish two cases:
    \begin{itemize}
    \item[\bltII] $x = y$: Then $t\isub{x}{v} = v$ and we distinguish
        two cases:
        \begin{itemize}
        \item[\bltIII] $v = z$: Then $\cbvToBangAGK{(t\isub{x}{v})} =
            \cbvToBangAGK{z} = \oc z$ and
            $\cbvToBangAGK{t}\isub{x}{\bangStrip{\cbvToBangAGK{v}}} =
            (\oc x)\isub{x}{\bangStrip{\oc z}} = (\oc x)\isub{x}{z} =
            \oc z$.

        \item[\bltIII] $v = \abs{z}{t'}$: Then
            $\cbvToBangAGK{(t\isub{x}{v})} =
            \cbvToBangAGK{(\abs{z}{t'})} =
            \oc\abs{z}{\oc\cbvToBangAGK{t'}}$ and
            $\cbvToBangAGK{t}\isub{x}{\bangStrip{\cbvToBangAGK{v}}} =
            (\oc
            x)\isub{x}{\bangStrip{\oc\abs{z}{\oc\cbvToBangAGK{t'}}}} =
            (\oc x)\isub{x}{\abs{z}{\oc\cbvToBangAGK{t'}}} =
            \oc\abs{z}{\oc\cbvToBangAGK{t'}}$.
        \end{itemize}

    \item[\bltII] $x \neq y$: Then $t\isub{x}{v} = y$ thus
        $\cbvToBangAGK{(t\isub{x}{v})} = \oc y$ and
        $\cbvToBangAGK{t}\isub{x}{\bangStrip{\cbvToBangAGK{v}}} = (\oc
        y)\isub{x}{\bangStrip{\cbvToBangAGK{v}}} = \oc y$.
    \end{itemize}

\item[\bltI] $t = \abs{y}{t'}$: By \ih on $t'$, one has that
    $\cbvToBangAGK{(t'\isub{x}{v})} =
    \cbvToBangAGK{t'}\isub{x}{\bangStrip{\cbvToBangAGK{v}}}$ thus
    \begin{equation*}
        \begin{array}{rcl}
            \cbvToBangAGK{(t\isub{x}{v})}
                &=& \cbvToBangAGK{\abs{y}{(t'\isub{x}{v})}}
            \\
                &=& \oc\abs{y}{\oc\cbvToBangAGK{(t'\isub{x}{v})}}
            \\
                &=& \oc\abs{y}{(\oc\cbvToBangAGK{t'}\isub{x}{\bangStrip{\cbvToBangAGK{v}}})}
            \\
                &=& (\oc\abs{y}{\oc\cbvToBangAGK{t'}})\isub{x}{\bangStrip{\cbvToBangAGK{v}}}
            \\
                &=& \cbvToBangAGK{t}\isub{x}{\bangStrip{\cbvToBangAGK{v}}}
        \end{array}
    \end{equation*}

\item[\bltI] $t = \app{t_1}{t_2}$: By \ih on $t_1$ and $t_2$, one has
    that $\cbvToBangAGK{(t_1\isub{x}{v})} =
    \cbvToBangAGK{t_1}\isub{x}{\bangStrip{\cbvToBangAGK{v}}}$ and
    $\cbvToBangAGK{(t_2\isub{x}{v})} =
    \cbvToBangAGK{t_2}\isub{x}{\bangStrip{\cbvToBangAGK{v}}}$. We
    distinguish two cases:
    \begin{itemize}
    \item[\bltII] $\bangBangPred{\cbvToBangAGK{t_1}}$: Then
        $\bangBangPred{\cbvToBangAGK{(t_1\isub{x}{v})}}$ using
        \Cref{lem:cbvAGK_BangPred_on_Isub}, thus:
        \begin{equation*}
            \begin{array}{rcl}
                \cbvToBangAGK{(t\isub{x}{v})}
                    &=& \cbvToBangAGK{(\app{(t_1\isub{x}{v})}{(t_2\isub{x}{v})})}
                \\
                    &=& \der{\app{\bangStrip{\cbvToBangAGK{(t_1\isub{x}{v})}}}{\cbvToBangAGK{(t_2\isub{x}{v})}}}
                \\
                    &=& \der{\app{\bangStrip{\cbvToBangAGK{t_1}\isub{x}{\bangStrip{\cbvToBangAGK{v}}}}}{(\cbvToBangAGK{t_2}\isub{x}{\bangStrip{\cbvToBangAGK{v}}})}} \quad \text{(\Cref{lem:bang_Strip_on_Isub})}
                \\
                    &=& (\der{\app[\,]{\bangStrip{\cbvToBangAGK{t_1}}}{\cbvToBangAGK{t_2}}})\isub{x}{\bangStrip{\cbvToBangAGK{v}}}
                \\
                    &=& \cbvToBangAGK{t}\isub{x}{\bangStrip{\cbvToBangAGK{v}}}
            \end{array}
        \end{equation*}

    \item[\bltII] $\neg\bangBangPred{\cbvToBangAGK{t_1}}$: Then
        $\neg\bangBangPred{\cbvToBangAGK{(t_1\isub{x}{v})}}$ using
        \Cref{lem:cbvAGK_BangPred_on_Isub}, thus:
        \begin{equation*}
            \begin{array}{rcl}
                \cbvToBangAGK{(t\isub{x}{v})}
                    &=& \cbvToBangAGK{(\app{(t_1\isub{x}{v})}{(t_2\isub{x}{v})})}
                \\
                    &=& \der{\app{\der{\cbvToBangAGK{(t_1\isub{x}{v})}}}{\cbvToBangAGK{(t_2\isub{x}{v})}}}
                \\
                    &=& \der{\app{\der{\cbvToBangAGK{t_1}\isub{x}{\bangStrip{\cbvToBangAGK{v}}}}}{(\cbvToBangAGK{t_2}\isub{x}{\bangStrip{\cbvToBangAGK{v}}})}}
                \\
                    &=& (\der{\app[\,]{\der{\cbvToBangAGK{t_1}}}{\cbvToBangAGK{t_2}}})\isub{x}{\bangStrip{\cbvToBangAGK{v}}}
                \\
                    &=& \cbvToBangAGK{t}\isub{x}{\bangStrip{\cbvToBangAGK{v}}}
            \end{array}
        \end{equation*}
    \end{itemize}

\item[\bltI] $t = t_1\esub{y}{t_2}$: By \ih on $t_1$ and $t_2$, one
    has that $\cbvToBangAGK{(t_1\isub{x}{v})} =
    \cbvToBangAGK{t_1}\isub{x}{\bangStrip{\cbvToBangAGK{v}}}$ and
    $\cbvToBangAGK{(t_2\isub{x}{v})} =
    \cbvToBangAGK{t_2}\isub{x}{\bangStrip{\cbvToBangAGK{v}}}$, thus
    \begin{equation*}
        \begin{array}{rcl}
            \cbvToBangAGK{(t\isub{x}{v})}
                &=& \cbvToBangAGK{((t_1\isub{x}{v})\esub{y}{(t_2\isub{x}{v})})}
            \\
                &=& \cbvToBangAGK{(t_1\isub{x}{v})}\esub{y}{\cbvToBangAGK{(t_2\isub{x}{v})}}
            \\
                &=& \cbvToBangAGK{t_1}\isub{x}{\bangStrip{\cbvToBangAGK{v}}}\esub{y}{\cbvToBangAGK{t_2}\isub{x}{\bangStrip{\cbvToBangAGK{v}}}}
            \\
                &=& (\cbvToBangAGK{t_1}\esub{y}{\cbvToBangAGK{t_2}})\isub{x}{\bangStrip{\cbvToBangAGK{v}}}
            \\
                &=& \cbvToBangAGK{t}\isub{x}{\bangStrip{\cbvToBangAGK{v}}}
        \end{array}
    \end{equation*}
\end{itemize}
\end{proof}

We rephrase the \CBVSymb one-step simulation lemma by specializing the
resulting \CBVSymb as follows:

\begin{lemma}[\CBVSymb One-Step Simulation]
    \label{lem:cbv detailed one step simulation}
    Let $\rel \in \{\cbvSymbBeta, \cbvSymbSubs\}$. Let $t, u \in
    \setCbvTerms$ such that $\cbvFCtxt<t> \cbvArr_F<R> \cbvFCtxt<u>$.
    Then:
    \begin{itemize}
    \item[\bltI] When $\rel = \cbvSymbBeta$:
        \begin{itemize}
        \item[\bltII] If $\cbvFuncPred{\cbvFCtxt}$ and
            $\bangBangPred{\cbvToBangAGK{u}}$: \quad%
            $\cbvToBangAGK{(\cbvFCtxt<t>)}
            \bangArr_{\bangFCtxt_1\bangCtxtPlug{\cbvToBangAGK{\rel}}}
            \bangArr_{F_2<\bangSymbBang>}
            \bangArr_{F_3<\bangSymbBang>}
            \cbvToBangAGK{(\cbvFCtxt<u>)}$\\
            ~\quad with $\bangFCtxt_1 =
            \cbvToBangAGK{\cbvFCtxt}\bangCtxtPlug{\der{\Hole}}$,
            $\bangFCtxt_2 = \cbvToBangAGK{\cbvFCtxt}$ and
            $\bangFCtxt_3 = \cbvToBangAGK[*]{\cbvRmDst{\cbvFCtxt}}$.

        \item[\bltII] Otherwise: \quad%
            $\cbvToBangAGK{(\cbvFCtxt<t>)}
            \bangArr_{\bangFCtxt_1\bangCtxtPlug{\cbvToBangAGK{\rel}}}
            \bangArr_{F_2<\bangSymbBang>}
            \cbvToBangAGK{(\cbvFCtxt<u>)}$\\
            ~\quad with $\bangFCtxt_1 =
            \cbvToBangAGK{\cbvFCtxt}\bangCtxtPlug{\der{\Hole}}$ and
            $\bangFCtxt_2 = \cbvToBangAGK{\cbvFCtxt}$.
        \end{itemize}
    \item[\bltI] When $\rel = \cbvSymbSubs$:
        \begin{itemize}
        \item[\bltII] $\cbvFuncPred{\cbvFCtxt}$ and
            $\bangBangPred{\cbvToBangAGK{u}}$: \quad%
            $\cbvToBangAGK{(\cbvFCtxt<t>)}
            \bangArr_{\bangFCtxt\bangCtxtPlug{\cbvToBangAGK{\rel}}}
            \cbvToBangAGK{(\cbvFCtxt<u>)}$ with $\bangFCtxt =
            \cbvToBangAGK[*]{\cbvFCtxt}$.

        \item[\bltII] Otherwise: \quad%
            $\cbvToBangAGK{(\cbvFCtxt<t>)}
            \bangArr_{F<\cbvToBangAGK{\rel}>}
            \cbvToBangAGK{(\cbvFCtxt<u>)}$ with $\bangFCtxt =
            \cbvToBangAGK{\cbvFCtxt}$.
        \end{itemize}
    \end{itemize}
\end{lemma}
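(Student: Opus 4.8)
The plan is to derive this refined statement directly from the contextual form of the embedding, \Cref{lem:cbvAGK_Contextual_Translation}. The idea is to first contract the $\rel$-redex $t$ at the top level — exhibiting the computational $\bangSymbBeta$- or $\bangSymbSubs$-step together with the administrative $\bangSymbBang$-steps it generates — and then to re-wrap the resulting reduction inside $\cbvFCtxt$, with a case analysis on $\rel$ and on whether $\cbvFuncPred{\cbvFCtxt}$ and $\bangBangPred{\cbvToBangAGK{u}}$ hold.

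First, for the top level. If $\rel = \cbvSymbBeta$, then $t = \app{\cbvLCtxt<\abs{x}{s_1}>}{s_2}$ and $u = \cbvLCtxt<s_1\esub{x}{s_2}>$; since $\abs{x}{s_1}$ is a value, $\cbvToBangAGK{t} = \der{\app{\bangLCtxt<\abs{x}{\oc\cbvToBangAGK{s_1}}>}{\cbvToBangAGK{s_2}}}$ with $\bangLCtxt = \cbvToBangAGK{\cbvLCtxt}$, and one computes $\cbvToBangAGK{t} \bangArr_{\der{\Hole}\bangCtxtPlug{\bangSymbBeta}} \der{\bangLCtxt<(\oc\cbvToBangAGK{s_1})\esub{x}{\cbvToBangAGK{s_2}}>} \bangArr_{\Hole\bangCtxtPlug{\bangSymbBang}} \bangLCtxt<\cbvToBangAGK{s_1}\esub{x}{\cbvToBangAGK{s_2}}> = \cbvToBangAGK{u}$. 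If $\rel = \cbvSymbSubs$, then $t = s_1\esub{x}{\cbvLCtxt<v>}$ and $u = \cbvLCtxt<s_1\isub{x}{v}>$ with $v$ a value; since $\cbvToBangAGK{v} = \oc\bangStrip{\cbvToBangAGK{v}}$, the term $\cbvToBangAGK{t} = \cbvToBangAGK{s_1}\esub{x}{\bangLCtxt<\oc\bangStrip{\cbvToBangAGK{v}}>}$ (with $\bangLCtxt = \cbvToBangAGK{\cbvLCtxt}$) is a $\bangSymbSubs$-redex, reducing in one step to $\cbvToBangAGK{u}$, the contractum being recognised through \Cref{lem:cbvAGK_Translation_on_Isub}. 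The same computation, combined with \Cref{lem:bang_Strip_on_Isub} and \Cref{lem:cbvAGK_BangPred_on_Isub}, also shows that $\bangStrip{\cbvToBangAGK{t}}$ is a $\bangSymbSubs$-redex contracting to $\bangStrip{\cbvToBangAGK{u}}$ whenever $\bangBangPred{\cbvToBangAGK{t}}$ holds.

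Next, the re-wrapping: I would apply \Cref{lem:cbvAGK_Contextual_Translation} to both $\cbvToBangAGK{(\cbvFCtxt<t>)}$ and $\cbvToBangAGK{(\cbvFCtxt<u>)}$. When $\rel = \cbvSymbBeta$, the left-hand side is always $\cbvToBangAGK{\cbvFCtxt}\bangCtxtPlug{\cbvToBangAGK{t}}$ (as $\cbvToBangAGK{t}$ begins with a $\derSymb$, so $\neg\bangBangPred{\cbvToBangAGK{t}}$); lifting the two top-level steps through $\cbvToBangAGK{\cbvFCtxt}$ lands on $\cbvToBangAGK{\cbvFCtxt}\bangCtxtPlug{\cbvToBangAGK{u}}$, which equals $\cbvToBangAGK{(\cbvFCtxt<u>)}$ in the ``Otherwise'' case, yielding $\bangFCtxt_1 = \cbvToBangAGK{\cbvFCtxt}\bangCtxtPlug{\der{\Hole}}$ and $\bangFCtxt_2 = \cbvToBangAGK{\cbvFCtxt}$. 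When $\rel = \cbvSymbSubs$, the three conditions $\bangBangPred{\cbvToBangAGK{t}}$, $\bangBangPred{\cbvToBangAGK{u}}$ and $\bangBangPred{\cbvToBangAGK{s_1}}$ are equivalent (by \Cref{lem:cbvAGK_BangPred_on_Isub}), so in the case $\cbvFuncPred{\cbvFCtxt} \wedge \bangBangPred{\cbvToBangAGK{u}}$ both sides of \Cref{lem:cbvAGK_Contextual_Translation} use the super-developed embedding $\cbvToBangAGK[*]{\cbvFCtxt}$ and the $\bangSymbSubs$-step on the stripped terms lifts to $\cbvToBangAGK{(\cbvFCtxt<t>)} \bangArr_{\cbvToBangAGK[*]{\cbvFCtxt}\bangCtxtPlug{\bangSymbSubs}} \cbvToBangAGK{(\cbvFCtxt<u>)}$, giving $\bangFCtxt = \cbvToBangAGK[*]{\cbvFCtxt}$; in every other case both sides use $\cbvToBangAGK{\cbvFCtxt}$ and one lifts $\cbvToBangAGK{t} \bangArr_{\cbvToBangAGK{\cbvFCtxt}\bangCtxtPlug{\bangSymbSubs}} \cbvToBangAGK{u}$ directly, giving $\bangFCtxt = \cbvToBangAGK{\cbvFCtxt}$.

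The hard part is the one remaining case, $\rel = \cbvSymbBeta$ with $\cbvFuncPred{\cbvFCtxt}$ and $\bangBangPred{\cbvToBangAGK{u}}$: there the two steps above land on $\cbvToBangAGK{\cbvFCtxt}\bangCtxtPlug{\cbvToBangAGK{u}}$, whereas $\cbvToBangAGK{(\cbvFCtxt<u>)} = \cbvToBangAGK[*]{\cbvFCtxt}\bangCtxtPlug{\bangStrip{\cbvToBangAGK{u}}}$, so one further administrative step is required to pass from the plain to the super-developed translation of the context. The key auxiliary fact — which I would prove by a routine induction on $\cbvFCtxt$ — is that if $\cbvFuncPred{\cbvFCtxt}$ holds then, for every $r$ with $\bangBangPred{r}$, one has $\cbvToBangAGK{\cbvFCtxt}\bangCtxtPlug{r} \bangArr_{\cbvToBangAGK[*]{\cbvRmDst{\cbvFCtxt}}\bangCtxtPlug{\bangSymbBang}} \cbvToBangAGK[*]{\cbvFCtxt}\bangCtxtPlug{\bangStrip{r}}$, because $\cbvToBangAGK{\cbvFCtxt}$ differs from $\cbvToBangAGK[*]{\cbvFCtxt}$ only by a single $\derSymb$ sitting right at the hole, and $\cbvRmDst{\cbvFCtxt}$ isolates exactly the part of $\cbvFCtxt$ that falls outside the distance list of that $\bangSymbBang$-redex. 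Applying this with $r = \cbvToBangAGK{u}$ produces the third step, with $\bangFCtxt_3 = \cbvToBangAGK[*]{\cbvRmDst{\cbvFCtxt}}$, completing the proof. Throughout, the delicate point is the bookkeeping: tracking which of the two context embeddings applies on each side and maintaining the $\bangStrip{\cdot}$ and $\bangBangPred{\cdot}$ invariants across meta-substitution; the rest is direct calculation with the definitions.
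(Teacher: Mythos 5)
Your proposal is correct and follows the same overall strategy as the paper: analyse the top-level redex (the $\bangSymbBeta$-step followed by one administrative $\bangSymbBang$-step for $\cbvSymbBeta$, the single $\bangSymbSubs$-step — also on stripped terms via \Cref{lem:bang_Strip_on_Isub,lem:cbvAGK_BangPred_on_Isub,lem:cbvAGK_Translation_on_Isub} — for $\cbvSymbSubs$), then transport through the context using \Cref{lem:cbvAGK_Contextual_Translation}, with the case split on $\cbvFuncPred{\cbvFCtxt}$ and $\bangBangPred{\cbvToBangAGK{u}}$. Where you diverge is the hard case ($\cbvSymbBeta$ with $\cbvFuncPred{\cbvFCtxt}$ and $\bangBangPred{\cbvToBangAGK{u}}$): you isolate the third step into a standalone lemma, namely that $\cbvFuncPred{\cbvFCtxt}$ and $\bangBangPred{r}$ imply $\cbvToBangAGK{\cbvFCtxt}\bangCtxtPlug{r} \bangArr_{\cbvToBangAGK[*]{\cbvRmDst{\cbvFCtxt}}\bangCtxtPlug{\bangSymbBang}} \cbvToBangAGK[*]{\cbvFCtxt}\bangCtxtPlug{\bangStrip{r}}$, to be proved by induction on $\cbvFCtxt$, whereas the paper exploits the (unique) decomposition $\cbvFCtxt = \cbvFCtxt'\cbvCtxtPlug{\app[\,]{\cbvLCtxt'}{s}}$ and performs all three steps in one explicit computation, invoking only the \emph{term-level} contextual lemma on $\cbvFCtxt'$ (applicable since the plugged translation stays $\derSymb$-headed), so it never has to lift reductions through a translated context nor compare the two context embeddings node by node. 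Your auxiliary lemma is true and your structure sound, but be aware that the advertised ``routine induction'' hides genuine bookkeeping: at application-left nodes above the hole where $\bangBangPred{\cbvToBangAGK{\cdot}}$ holds for the subcontext, both $\cbvToBangAGK{\cdot}$ and $\cbvToBangAGK[*]{\cdot}$ apply $\bangStrip{\cdot}$ to the recursively translated subcontext, so the inductive step must show that stripping a bang away from the hole commutes with plugging and with the $\bangSymbBang$-step, and that the side conditions ($\in \cbvLCtxtSet$, $\bangBangPred{\cbvToBangAGK{\cdot}}$) are unchanged by $\cbvRmDst{\cdot}$; this is exactly the work the paper's decomposition discharges in one shot, and it buys the paper a shorter proof at the cost of a slightly less reusable statement, while your lemma, once fully proved, gives a cleaner separation of the administrative passage from plain to super-developed context translation.
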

    \begin{proof}
Let $\rel \in \{\cbvSymbBeta, \cbvSymbSubs\}$, $\cbvFCtxt \in
\cbvFCtxtSet$ and $t, u \in \setCbvTerms$ such that $\cbvFCtxt<t>
\cbvArr_F<R> \cbvFCtxt<u>$. We distinguish two cases:
\begin{itemize}
\item[\bltI] $\rel = \cbvSymbBeta$: Then $t =
    \app{\cbvLCtxt<\abs{x}{s_1}>}{s_2}$ and $u =
    \cbvLCtxt<s_1\esub{x}{s_2}>$ for some $\cbvLCtxt \in \cbvLCtxtSet$
    and $s_1, s_2 \in \setCbvTerms$ and thus $\cbvToBangAGK{t} =
    \der{\app{\cbvToBangAGK{\cbvLCtxt}<\abs{x}{\oc\cbvToBangAGK{s_1}}>}{\cbvToBangAGK{s_2}}}$
    and $\cbvToBangAGK{u} =
    \cbvToBangAGK{\cbvLCtxt}<\cbvToBangAGK{s_1}\esub{x}{\cbvToBangAGK{s_2}}>$.
    We distinguish two cases:
    \begin{itemize}
    \item[\bltII] $\cbvFuncPred{\cbvFCtxt}$ and
        $\bangBangPred{\cbvToBangAGK{u}}$: Then $\cbvFCtxt =
        \cbvFCtxt'<\app[\,]{\cbvLCtxt'}{s}>$ for some $\cbvFCtxt' \in
        \cbvFCtxtSet$, $\cbvLCtxt' \in \cbvLCtxtSet$ and $s \in
        \setCbvTerms$ and thus:
        \begin{equation*}
            \begin{array}{rcl}
                \cbvToBangAGK{(\cbvFCtxt<t>)}
                &=& \cbvToBangAGK{(\cbvFCtxt'<\app{\cbvLCtxt'<t>}{s}>)}
            \\
                &\stackon{=}{{\scriptsize Lem.\ref{lem: cbv embedding with contexts}}}&
                \cbvToBangAGK{{\cbvFCtxt'}}\bangCtxtPlug{\cbvToBangAGK{(\app{\cbvLCtxt'<t>}{s})}}
            \\
                &=& \cbvToBangAGK{{\cbvFCtxt'}}\bangCtxtPlug{\der{\app{\der{\cbvToBangAGK{{\cbvLCtxt'}}\bangCtxtPlug{\cbvToBangAGK{t}}}}{\cbvToBangAGK{s}}}}
            \\
                &=& \cbvToBangAGK{{\cbvFCtxt'}}\bangCtxtPlug{\der{\app{\der{\cbvToBangAGK{{\cbvLCtxt'}}\bangCtxtPlug{\der{\app{\cbvToBangAGK{\cbvLCtxt}<\abs{x}{\oc\cbvToBangAGK{s_1}}>}{\cbvToBangAGK{s_2}}}}}}{\cbvToBangAGK{s}}}}
            \\
                &\bangArr_{\bangFCtxt_1<\bangSymbBeta>}&
                \cbvToBangAGK{{\cbvFCtxt'}}\bangCtxtPlug{\der{\app{\der{\cbvToBangAGK{{\cbvLCtxt'}}\bangCtxtPlug{\der{\cbvToBangAGK{\cbvLCtxt}<(\oc\cbvToBangAGK{s_1})\esub{x}{\cbvToBangAGK{s_2}}>}}}}{\cbvToBangAGK{s}}}}
            \\
                &\bangArr_{\bangFCtxt_2<\bangSymbBang>}&
                \cbvToBangAGK{{\cbvFCtxt'}}\bangCtxtPlug{\der{\app{\der{\cbvToBangAGK{{\cbvLCtxt'}}\bangCtxtPlug{\cbvToBangAGK{\cbvLCtxt}<\cbvToBangAGK{s_1}\esub{x}{\cbvToBangAGK{s_2}}>}}}{\cbvToBangAGK{s}}}}
            \\
                &\bangArr_{\bangFCtxt_3<\bangSymbBang>}&
                \cbvToBangAGK{{\cbvFCtxt'}}\bangCtxtPlug{\der{\app{\cbvToBangAGK{{\cbvLCtxt'}}\bangCtxtPlug{\bangStrip{\cbvToBangAGK{\cbvLCtxt}<\cbvToBangAGK{s_1}\esub{x}{\cbvToBangAGK{s_2}}>}}}{\cbvToBangAGK{s}}}}
            \\
                &=& \cbvToBangAGK{{\cbvFCtxt'}}\bangCtxtPlug{\der{\app{\cbvToBangAGK{{\cbvLCtxt'}}\bangCtxtPlug{\bangStrip{\cbvToBangAGK{u}}}}{\cbvToBangAGK{s}}}}
            \\
                &=& \cbvToBangAGK[*]{\cbvFCtxt}<\bangStrip{\cbvToBangAGK{u}}>
            \\
                &\stackon{=}{{\scriptsize Lem.\ref{lem: cbv embedding with contexts}}}&
                \cbvToBangAGK{(\cbvFCtxt<u>)}
            \end{array}
        \end{equation*}
        where:
        \begin{equation*}
            \begin{array}{l}
                \bangFCtxt_1
                = \cbvToBangAGK{{\cbvFCtxt'}}\bangCtxtPlug{\der{\app{\der{\cbvToBangAGK{{\cbvLCtxt'}}\bangCtxtPlug{\der{\Hole}}}}{\cbvToBangAGK{s}}}}
                = \cbvToBangAGK{\cbvFCtxt}<\der{\Hole}>
            \\
                \bangFCtxt_2
                = \cbvToBangAGK{{\cbvFCtxt'}}\bangCtxtPlug{\der{\app{\der{\cbvToBangAGK{{\cbvLCtxt'}}}}{\cbvToBangAGK{s}}}}
                = \cbvToBangAGK{\cbvFCtxt}
            \\
                \bangFCtxt_3
                = \cbvToBangAGK{{\cbvFCtxt'}}\bangCtxtPlug{\der{\app[\,]{\Hole}{\cbvToBangAGK{s}}}}
                = \cbvToBangAGK{{\cbvFCtxt'}}\bangCtxtPlug{\cbvToBangAGK[*]{(\app[\,]{\Hole}{s})}}
                = \cbvToBangAGK{{\cbvFCtxt'}}\bangCtxtPlug{\cbvToBangAGK[*]{\cbvRmDst{\app[\,]{\cbvLCtxt'}{s}}}} = \cbvToBangAGK[*]{\cbvRmDst{\cbvFCtxt}}
            \end{array}
        \end{equation*}

    \item[\bltII] Otherwise:
        \begin{equation*}
            \begin{array}{rcl}
                \cbvToBangAGK{(\cbvFCtxt<t>)}
                &\stackon{=}{{\scriptsize Lem.\ref{lem: cbv embedding with contexts}}}&
                \cbvToBangAGK{\cbvFCtxt}<\cbvToBangAGK{t}>
            \\
                &=& \cbvToBangAGK{\cbvFCtxt}<\der{\app{\cbvToBangAGK{\cbvLCtxt}<\abs{x}{\oc\cbvToBangAGK{s_1}}>}{\cbvToBangAGK{s_2}}}>
            \\
                &\bangArr_{\bangFCtxt_1<\bangSymbBeta>}&
                \cbvToBangAGK{\cbvFCtxt}<\der{\cbvToBangAGK{\cbvLCtxt}<(\oc\cbvToBangAGK{s_1})\esub{x}{\cbvToBangAGK{s_2}}>}>
            \\
                &\bangArr_{\bangFCtxt_2<\bangSymbBang>}&
                \cbvToBangAGK{\cbvFCtxt}<\cbvToBangAGK{\cbvLCtxt}<\cbvToBangAGK{s_1}\esub{x}{\cbvToBangAGK{s_2}}>>
            \\
                &=&
                \cbvToBangAGK{\cbvFCtxt}<\cbvToBangAGK{u}>
            \\
                &\stackon{=}{{\scriptsize Lem.\ref{lem: cbv embedding with contexts}}}&
                \cbvToBangAGK{(\cbvFCtxt<u>)}
            \end{array}
        \end{equation*}
        where $\bangFCtxt_1 = \cbvToBangAGK{\cbvFCtxt}<\der{\Hole}>$
        and $\bangFCtxt_2 = \cbvToBangAGK{\cbvFCtxt}$.
    \end{itemize}

\item[\bltI] $\rel = \cbvSymbSubs$: Then $t =
    s_1\esub{x}{\cbvLCtxt<v>}$ and $u = \cbvLCtxt<s_1\isub{x}{v}>$ for
    some value $v$, thus $\cbvToBangAGK{t} =
    \cbvToBangAGK{s_1}\esub{x}{\cbvToBangAGK{\cbvLCtxt}<\cbvToBangAGK{v}>}$.
    Since $v$ is value then $\cbvToBangAGK{v} = \oc s'_2$ for some
    $s'_2 \in \setBangTerms$ and by induction on $s_1$, one has that
    $\cbvToBangAGK{u} = \cbvToBangAGK{s_1}\isub{x}{s'_2}$. We
    distinguish two cases:
    \begin{itemize}
    \item[\bltII] $\cbvFuncPred{\cbvFCtxt}$ and
        $\bangBangPred{\cbvToBangAGK{u}}$: Since $\cbvToBangAGK{u} =
        \cbvToBangAGK{s_1}\isub{x}{s'_2}$ then, by induction on $s_1$,
        necessarily $\bangBangPred{\cbvToBangAGK{s_1}}$ so that
        $\bangBangPred{\cbvToBangAGK{t}}$. Finally, since
        $\cbvToBangAGK{s_1}\esub{x}{\cbvToBangAGK{\cbvLCtxt}<\oc
        s'_2>} \mapstoR[\bangSymbSubs]
        \cbvToBangAGK{\cbvLCtxt}<\cbvToBangAGK{s_1}\isub{x}{ s'_2}>$,
        then by induction on $s_1$ and $\cbvLCtxt$, one deduces that
        $\bangStrip{\cbvToBangAGK{s_1}\esub{x}{\cbvToBangAGK{\cbvLCtxt}<\oc
        s'_2>}} \mapstoR[\bangSymbSubs]
        \bangStrip{\cbvToBangAGK{\cbvLCtxt}<\cbvToBangAGK{s_1}\isub{x}{
        s'_2}>}$ thus:
        \begin{equation*}
            \begin{array}{rcl}
                \cbvToBangAGK{(\cbvFCtxt<\bangStrip{t}>)}
                &\stackon{=}{{\scriptsize Lem.\ref{lem: cbv embedding with contexts}}}&
                \cbvToBangAGK[*]{\cbvFCtxt}<\bangStrip{\cbvToBangAGK{t}}>
            \\
                &=& \cbvToBangAGK[*]{\cbvFCtxt}<\bangStrip{\cbvToBangAGK{s_1}\esub{x}{\cbvToBangAGK{\cbvLCtxt}<\oc s'_2>}}>
            \\
                &\bangArr_{\cbvToBangAGK{\cbvFCtxt}<\bangSymbSubs>}&
                \cbvToBangAGK[*]{\cbvFCtxt}<\bangStrip{\cbvToBangAGK{\cbvLCtxt}<\cbvToBangAGK{s_1}\isub{x}{s'_2}>}>
            \\
                &=&
                \cbvToBangAGK[*]{\cbvFCtxt}<\bangStrip{\cbvToBangAGK{u}}>
            \\
                &\stackon{=}{{\scriptsize Lem.\ref{lem: cbv embedding with contexts}}}&
                \cbvToBangAGK{(\cbvFCtxt<u>)}
            \end{array}
        \end{equation*}

    \item[\bltII] Otherwise:
        \begin{equation*}
            \begin{array}{rcl}
                \cbvToBangAGK{(\cbvFCtxt<t>)}
                &\stackon{=}{{\scriptsize Lem.\ref{lem: cbv embedding with contexts}}}&
                \cbvToBangAGK{\cbvFCtxt}<\cbvToBangAGK{t}>
            \\
                &=& \cbvToBangAGK{\cbvFCtxt}<\cbvToBangAGK{s_1}\esub{x}{\cbvToBangAGK{\cbvLCtxt}<\oc s'_2>}>
            \\
                &\bangArr_{\cbvToBangAGK{\cbvFCtxt}<\bangSymbSubs>}&
                \cbvToBangAGK{\cbvFCtxt}<\cbvToBangAGK{\cbvLCtxt}<\cbvToBangAGK{s_1}\isub{x}{s'_2}>>
            \\
                &=&
                \cbvToBangAGK{\cbvFCtxt}<\cbvToBangAGK{u}>
            \\
                &\stackon{=}{{\scriptsize Lem.\ref{lem: cbv embedding with contexts}}}&
                \cbvToBangAGK{(\cbvFCtxt<u>)}
            \end{array}
        \end{equation*}
    \end{itemize}
\end{itemize}
\end{proof}

\begin{lemma}
    \label{lem:cbvAGK_dB_redex_Reverse_Translated}%
    Let $\cbvToBangAGK{t} =
    \bangFCtxt<\app{\bangLCtxt<\abs{x}{s'_1}>}{s'_2}>$ for some
    $\bangFCtxt \in \bangFCtxtSet$, $\bangLCtxt \in \bangLCtxtSet$, $t
    \in \setCbvTerms$, $s_1, s_2 \in \setBangTerms$. Then there exist
    $\cbvFCtxt \in \cbvFCtxtSet$, $\cbvLCtxt \in \cbvLCtxtSet$ and
    $s_1, s_2 \in \setCbvTerms$ such that $t =
    \cbvFCtxt<\app{\cbvLCtxt<\abs{x}{s_1}>}{s_2}>$, $\bangFCtxt =
    \cbvToBangAGK{\cbvFCtxt}\cbvCtxtPlug{\der{\Hole}}$, $\bangLCtxt =
    \cbvToBangAGK{\cbvLCtxt}$, $s'_1 = \oc\cbvToBangAGK{s_1}$ and
    $s'_2 = \cbvToBangAGK{s_2}$.
\end{lemma}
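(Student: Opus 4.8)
The plan is to prove the statement by induction on the \BANGSymb\ context $\bangFCtxt$, reading off from each clause of the embedding how a shape of the form $\bangFCtxt\cbvCtxtPlug{\app{\bangLCtxt<\abs{x}{s'_1}>}{s'_2}}$ can possibly be produced. The guiding observation is that the head constructor of $\cbvToBangAGK{t}$ is dictated by that of $t$: $\cbvToBangAGK{t}$ is headed by $\oc$ iff $t$ is a variable or an abstraction, by an explicit substitution iff $t$ is a closure, and by $\derSymb$ iff $t$ is an application. Consequently the cases $\bangFCtxt = \Hole$, $\bangFCtxt = \abs{y}{\bangFCtxt'}$, $\bangFCtxt = \app{\bangFCtxt'}{r}$ and $\bangFCtxt = \app{r}{\bangFCtxt'}$ are impossible, since $\cbvToBangAGK{t}$ is never an application or $\lambda$-abstraction at the top; the cases $\bangFCtxt = \oc\bangFCtxt'$ (forcing $t = \abs{y}{t'}$), $\bangFCtxt = \bangFCtxt'\esub{z}{r}$ (forcing $t = t_1\esub{z}{t_2}$ with $r = \cbvToBangAGK{t_2}$), $\bangFCtxt = r\esub{z}{\bangFCtxt'}$ (symmetric), and $\bangFCtxt = \der{\bangFCtxt'}$ with $\bangFCtxt'$ hitting either $\der{\cbvToBangAGK{t_1}}$ or $\cbvToBangAGK{t_2}$ inside the translated application $\cbvToBangAGK{(\app{t_1}{t_2})}$, all reduce directly to the induction hypothesis: one strips the matching constructor from $t$, applies the hypothesis, and reassembles $\cbvFCtxt$ (as $\abs{y}{\cbvFCtxt'}$, $\cbvFCtxt'\esub{z}{t_2}$, $t_1\esub{z}{\cbvFCtxt'}$, $\app{\cbvFCtxt'}{t_2}$, \dots), checking via \Cref{lem:cbvAGK_Contextual_Translation} and \Cref{def:cbvAGK_Embedding_Superdev} that $\cbvToBangAGK{\cbvFCtxt}\cbvCtxtPlug{\der{\Hole}}$ is again $\bangFCtxt$.

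The genuinely new case is $\bangFCtxt = \der{\Hole}$, i.e.\ $\cbvToBangAGK{t} = \der{\app{\bangLCtxt<\abs{x}{s'_1}>}{s'_2}}$ with $t = \app{t_1}{t_2}$, so that $\cbvToBangAGK{t} = \der{\app{P}{\cbvToBangAGK{t_2}}}$ with $P = \bangStrip{\cbvToBangAGK{t_1}}$ if $\bangBangPred{\cbvToBangAGK{t_1}}$ and $P = \der{\cbvToBangAGK{t_1}}$ otherwise. Since $\bangLCtxt<\abs{x}{s'_1}>$ is neither $\derSymb$-headed nor headed by an explicit substitution (a list context filled by an abstraction is either an abstraction or ends in an ES), it cannot equal $\der{\cbvToBangAGK{t_1}}$, so the value branch was taken: $\bangBangPred{\cbvToBangAGK{t_1}}$, $P = \bangStrip{\cbvToBangAGK{t_1}}$ and $s'_2 = \cbvToBangAGK{t_2}$. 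A short auxiliary structural induction shows that $\bangBangPred{\cbvToBangAGK{t_1}}$ holds iff $t_1 = \cbvLCtxt<v>$ for some value $v$ and list context $\cbvLCtxt$, and that then $\bangStrip{\cbvToBangAGK{t_1}} = \cbvToBangAGK{\cbvLCtxt}\cbvCtxtPlug{\bangStrip{\cbvToBangAGK{v}}}$. Equating this with $\bangLCtxt<\abs{x}{s'_1}>$ --- two list contexts, both plugged with a term that is neither an ES nor a dereliction --- forces $\bangLCtxt = \cbvToBangAGK{\cbvLCtxt}$ and $\bangStrip{\cbvToBangAGK{v}} = \abs{x}{s'_1}$; the latter excludes $v$ being a variable (whose stripped translation is a variable) and yields $v = \abs{x}{s_1}$ with $s'_1 = \oc\cbvToBangAGK{s_1}$. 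Taking $\cbvFCtxt \coloneqq \Hole$, $s_2 \coloneqq t_2$ closes the case, since $\bangFCtxt = \der{\Hole} = \cbvToBangAGK{\Hole}\cbvCtxtPlug{\der{\Hole}}$ and $s'_2 = \cbvToBangAGK{t_2} = \cbvToBangAGK{s_2}$.

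The main obstacle is the one remaining sub-case of $\bangFCtxt = \der{\bangFCtxt'}$, where the dB-redex lies strictly inside $P = \bangStrip{\cbvToBangAGK{t_1}}$ (so $t_1 = \cbvLCtxt<v>$): one must trace the redex through the on-the-fly $\bangStrip$ (an anticipated $\bangSymbBang$-step) back to a position inside $\cbvToBangAGK{t_1}$ --- either inside the body of the abstraction $v$ or inside an ES argument of $\cbvLCtxt$ --- re-add the single $\oc$ that $\bangStrip$ removed to recover a \BANGSymb\ context exhibiting the redex inside $\cbvToBangAGK{t_1}$, and apply the induction hypothesis there. Keeping the reassembled context $\cbvToBangAGK{\cbvFCtxt}\cbvCtxtPlug{\der{\Hole}}$ equal to $\bangFCtxt$ then requires carefully selecting the right clause of the application case of the context embedding (via \Cref{lem:cbvAGK_Contextual_Translation} and \Cref{def:cbvAGK_Embedding_Superdev}), using that $\bangBangPred{\cdot}$ is insensitive to what occupies a hole lying strictly below the head of $\cbvToBangAGK{t_1}$, so that whether $t_1$ is a value up to a list context is preserved under plugging. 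The companion lemmas \Cref{lem:cbvAGK_BangPred_on_Isub,lem:bang_Strip_on_Isub,lem:cbvAGK_Translation_on_Isub} on how $\bangStrip{\cdot}$ and $\bangBangPred{\cdot}$ commute with $\cbvToBangAGK{\cdot}$ make this tracing routine; the accompanying $\oc$/$\derSymb$ bookkeeping is where the real work of the lemma lies.
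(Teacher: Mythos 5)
Your proposal is correct in substance and performs the same clause-by-clause inversion of the embedding as the paper, but with a different scaffolding. The paper proves \Cref{lem:cbvAGK_dB_redex_Reverse_Translated} by induction on the \CBVSymb term $t$, and, precisely to get past your ``main obstacle'', it strengthens the induction hypothesis into two mutually proved statements: the lemma itself, plus a variant assuming $\bangBangPred{\cbvToBangAGK{t}}$ and locating the redex in $\bangStrip{\cbvToBangAGK{t}}$, whose conclusion additionally records $\bangBangPred{\cbvToBangAGK{\cbvFCtxt}}$ and replaces the context equation by $\bangFCtxt = \bangStrip{\cbvToBangAGK{\cbvFCtxt}\bangCtxtPlug{\der{\Hole}}}$. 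In the application case with $\bangBangPred{\cbvToBangAGK{t_1}}$ and the redex inside $\bangStrip{\cbvToBangAGK{t_1}}$, that second statement is exactly what selects the value clause of the context embedding and yields $\cbvToBangAGK{(\app[\,]{\cbvFCtxt'}{t_2})}\bangCtxtPlug{\der{\Hole}} = \der{\app{\bangStrip{\cbvToBangAGK{{\cbvFCtxt'}}\bangCtxtPlug{\der{\Hole}}}}{\cbvToBangAGK{t_2}}} = \bangFCtxt$. You instead handle this sub-case inline: re-insert the stripped $\oc$, apply the plain statement to $t_1$, then argue the right clause is chosen and that $\bangStrip{\cdot}$ commutes with plugging $\der{\Hole}$ because the hole sits strictly below the stripped $\oc$ or in an ES argument of the list prefix. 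This can be made to work, but be aware of two points. First, the deferred ``bookkeeping'' is precisely the content of the paper's second statement ($\bangBangPred{\cbvToBangAGK{\cbvFCtxt'}}$ and the strip/plug commutation); it does not follow from \Cref{lem:cbvAGK_BangPred_on_Isub,lem:bang_Strip_on_Isub,lem:cbvAGK_Translation_on_Isub}, which concern meta-substitution rather than hole-plugging, so spelling it out essentially reconstructs the paper's strengthening. Second, your induction measure: in this sub-case the context fed to the hypothesis is not a subcontext of $\bangFCtxt$ (it is a subcontext with a $\oc$ re-inserted), so strict structural induction on $\bangFCtxt$ does not apply; induct on the size of $\bangFCtxt$ (it strictly decreases) or, as the paper does, on $t$, applying the hypothesis to the subterm $t_1$. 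Your treatment of the base case $\bangFCtxt = \der{\Hole}$ (excluding the non-value branch and the variable value, then reading off $\bangLCtxt$ and $s'_1 = \oc\cbvToBangAGK{s_1}$) matches the paper's corresponding sub-case.
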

\begin{proof}
We strengthen the induction hypothesis as follows:
\begin{enumerate}
\item Let $\cbvToBangAGK{t} =
    \bangFCtxt<\app{\bangLCtxt<\abs{x}{s'_1}>}{s'_2}>$ for some
    $\bangFCtxt \in \bangFCtxtSet$, $\bangLCtxt \in \bangLCtxtSet$, $t
    \in \setCbvTerms$, $s_1, s_2 \in \setBangTerms$. Then there exist
    $\cbvFCtxt \in \cbvFCtxtSet$, $\cbvLCtxt \in \cbvLCtxtSet$ and
    $s_1, s_2 \in \setCbvTerms$ such that $t =
    \cbvFCtxt<\app{\cbvLCtxt<\abs{x}{s_1}>}{s_2}>$, $\bangFCtxt =
    \cbvToBangAGK{\cbvFCtxt}\cbvCtxtPlug{\der{\Hole}}$, $\bangLCtxt =
    \cbvToBangAGK{\cbvLCtxt}$, $s'_1 = \oc\cbvToBangAGK{s_1}$ and
    $s'_2 = \cbvToBangAGK{s_2}$.

\item Let $\bangBangPred{\cbvToBangAGK{t}}$ and
    $\bangStrip{\cbvToBangAGK{t}} =
    \bangFCtxt<\app{\bangLCtxt<\abs{x}{s'_1}>}{s'_2}>$ for some
    $\bangFCtxt \in \bangFCtxtSet$, $\bangLCtxt \in \bangLCtxtSet$, $t
    \in \setCbvTerms$, $s_1, s_2 \in \setBangTerms$. Then there exist
    $\cbvFCtxt \in \cbvFCtxtSet$, $\cbvLCtxt \in \cbvLCtxtSet$ and
    $s_1, s_2 \in \setCbvTerms$ such that $t =
    \cbvFCtxt<\app{\cbvLCtxt<\abs{x}{s_1}>}{s_2}>$,
    $\bangBangPred{\cbvToBangAGK{\cbvFCtxt}}$, $\bangFCtxt =
    \bangStrip{\cbvToBangAGK{\cbvFCtxt}\cbvCtxtPlug{\der{\Hole}}}$,
    $\bangLCtxt = \cbvToBangAGK{\cbvLCtxt}$, $s'_1 =
    \oc\cbvToBangAGK{s_1}$ and $s'_2 = \cbvToBangAGK{s_2}$.
\end{enumerate}

\noindent $1.$ By induction on $t$:
\begin{itemize}
\item[\bltI] $t = y$: Then $\cbvToBangAGK{t} = \oc y$ which
    contradicts the hypothesis.

\item[\bltI] $t = \abs{y}{t'}$: Then $\cbvToBangAGK{t} =
    \oc\abs{x}{\oc\cbvToBangAGK{t'}}$ and since $\cbvToBangAGK{t} =
    \bangFCtxt<\app{\bangLCtxt<\abs{x}{s'_1}>}{s'_2}>$ then
    necessarily $\bangFCtxt = \abs{y}{\bangFCtxt'}$ for some
    $\bangFCtxt' \in \bangFCtxtSet$ such that $\cbvToBangAGK{t'} =
    \bangFCtxt'<\app{\bangLCtxt<\abs{x}{s'_1}>}{s'_2}>$. By \ih on
    $t'$, there exist $\cbvFCtxt' \in \cbvFCtxtSet$, $\cbvLCtxt \in
    \cbvLCtxtSet$ and $s_1, s_2 \in \setCbvTerms$ such that $t' =
    \cbvFCtxt'<\app{\cbvLCtxt<\abs{x}{s_1}>}{s_2}>$, $\bangFCtxt =
    \cbvToBangAGK{{\cbvFCtxt'}}\cbvCtxtPlug{\der{\Hole}}$, $\bangLCtxt
    = \cbvToBangAGK{\cbvLCtxt}$, $s'_1 = \oc\cbvToBangAGK{s_1}$ and
    $s'_2 = \cbvToBangAGK{s_2}$. We set $\cbvFCtxt =
    \abs{y}{\cbvFCtxt'}$ concluding this cases since
    $\cbvToBangAGK{\cbvFCtxt} =
    \oc\abs{y}{\oc\cbvToBangAGK{{\cbvFCtxt'}}} =
    \oc\abs{y}{\oc\bangFCtxt'\bangCtxtPlug{\der{\Hole}}} =
    \bangFCtxt\bangCtxtPlug{\der{\Hole}}$.

\item[\bltI] $t = \app{t_1}{t_2}$: Let us distinguish two cases:
    \begin{itemize}
    \item[\bltII] $\bangBangPred{\cbvToBangAGK{t_1}}$: Then
        $\cbvToBangAGK{t} =
        \der{\app[\,]{\bangStrip{\cbvToBangAGK{t_1}}}{\cbvToBangAGK{t_2}}}$
        and $\cbvToBangAGK{t_1} = \bangLCtxt<\oc s'>$. Thus, by
        induction on $t_1$, necessarily $t_1 = \cbvLCtxt'<v>$ for some
        $\cbvLCtxt' \in \cbvLCtxtSet$ and $v \in \setCbvValues$. We
        distinguish three cases:
        \begin{itemize}
        \item[\bltIII] $\bangFCtxt = \der{\Hole}$, $\bangLCtxt =
            \cbvToBangAGK{{\cbvLCtxt'}}$, $v = \abs{x}{t'_1}$, $s'_1 =
            \oc\cbvToBangAGK{{t'_1}}$, $s'_2 = \cbvToBangAGK{t_2}$: We
            set $\cbvFCtxt = \Hole$, $\cbvLCtxt = \cbvLCtxt'$, $s_1 =
            t'_1$ and $s_2 = t_2$. Then $t = \app{t_1}{t_2} =
            \app{\cbvLCtxt'<v>}{t_2} =
            \app{\cbvLCtxt'<\abs{x}{t'_1}>}{t_2} =
            \app{\cbvLCtxt<\abs{x}{s_1}>}{s_2} =
            \cbvFCtxt<\app{\cbvLCtxt<\abs{x}{s_1}>}{s_2}>$ and
            $\cbvToBangAGK{\cbvFCtxt}\bangCtxtPlug{\der{\Hole}} =
            \cbvToBangAGK{\Hole}\bangCtxtPlug{\der{\Hole}} =
            \der{\Hole} = \bangFCtxt$.

        \item[\bltIII] $\bangFCtxt =
            \der{\app[\,]{\bangFCtxt'}{\cbvToBangAGK{t_2}}}$ for some
            $\bangFCtxt' \in \bangFCtxtSet$ such that
            $\bangStrip{\cbvToBangAGK{t_1}} =
            \bangFCtxt'<\app{\bangLCtxt<\abs{x}{s'_1}>}{s'_2}>$: By
            \ih on $t_1$, there exist $\cbvFCtxt' \in \cbvFCtxtSet$,
            $\cbvLCtxt \in \cbvLCtxtSet$ and $s_1, s_2 \in
            \setCbvTerms$ such that $t_1 =
            \cbvFCtxt'<\app{\cbvLCtxt<\abs{x}{s_1}>}{s_2}>$,
            $\bangBangPred{\cbvToBangAGK{{\cbvFCtxt'}}}$, $\bangFCtxt'
            =
            \bangStrip{\cbvToBangAGK{{\cbvFCtxt'}}\cbvCtxtPlug{\der{\Hole}}}$,
            $\bangLCtxt = \cbvToBangAGK{\cbvLCtxt}$, $s'_1 =
            \oc\cbvToBangAGK{s_1}$ and $s'_2 = \cbvToBangAGK{s_2}$. We
            set $\cbvFCtxt = \app[\,]{\cbvFCtxt'}{t_2}$ then $t =
            \app{t_1}{t_2} =
            \app{\cbvFCtxt'<\app{\cbvLCtxt<\abs{x}{s_1}>}{s_2}>}{t_2}
            =
            (\app{\cbvFCtxt'}{t_2})\cbvCtxtPlug{\app{\cbvLCtxt<\abs{x}{s_1}>}{s_2}}
            = \cbvFCtxt<\app[\,]{\cbvLCtxt<\abs{x}{s_1}>}{s_2}>$ and
            $\cbvToBangAGK{\cbvFCtxt}\cbvCtxtPlug{\der{\Hole}} =
            \cbvToBangAGK{(\app[\,]{\cbvFCtxt'}{t_2})}\cbvCtxtPlug{\der{\Hole}}
            =
            \der{\app[\,]{\bangStrip{\cbvToBangAGK{{\cbvFCtxt'}}}}{\cbvToBangAGK{t_2}}}\cbvCtxtPlug{\der{\Hole}}
            =
            \der{\app{\bangStrip{\cbvToBangAGK{{\cbvFCtxt'}}\cbvCtxtPlug{\der{\Hole}}}}{\cbvToBangAGK{t_2}}}
            = \der{\app[\,]{\bangFCtxt'}{\cbvToBangAGK{t_2}}} =
            \bangFCtxt$.

        \item[\bltIII] $\bangFCtxt =
            \der{\app[\,]{\bangStrip{\cbvToBangAGK{t_1}}}{\bangFCtxt'}}$
            for some $\bangFCtxt' \in \bangFCtxtSet$ such that
            $\bangStrip{\cbvToBangAGK{t_2}} =
            \bangFCtxt'<\app[\,]{\bangLCtxt<\abs{x}{s'_1}>}{s'_2}>$:
            By \ih on $t_2$, there exist $\cbvFCtxt' \in
            \cbvFCtxtSet$, $\cbvLCtxt \in \cbvLCtxtSet$ and $s_1, s_2
            \in \setCbvTerms$ such that $t_2 =
            \cbvFCtxt'<\app{\cbvLCtxt<\abs{x}{s_1}>}{s_2}>$,
            $\bangFCtxt =
            \cbvToBangAGK{{\cbvFCtxt'}}\cbvCtxtPlug{\der{\Hole}}$,
            $\bangLCtxt = \cbvToBangAGK{\cbvLCtxt}$, $s'_1 =
            \oc\cbvToBangAGK{s_1}$ and $s'_2 = \cbvToBangAGK{s_2}$. We
            set $\cbvFCtxt = \app{t_1}{\cbvFCtxt'}$ so that $t =
            \app{t_1}{t_2} =
            \app{t_1}{\cbvFCtxt'<\app{\cbvLCtxt<\abs{x}{s_1}>}{s_2}>}
            =
            (\app{t_1}{\cbvFCtxt'})\cbvCtxtPlug{\app{\cbvLCtxt<\abs{x}{s_1}>}{s_2}}
            = \cbvFCtxt<\app{\cbvLCtxt<\abs{x}{s_1}>}{s_2}>$ and
            $\cbvToBangAGK{(\cbvFCtxt<\der{\Hole}>)} =
            \cbvToBangAGK{(\app{t_1}{\cbvFCtxt'<\der{\Hole}>})} =
            \der{\app[\,]{\bangStrip{\cbvToBangAGK{t_1}}}{\cbvToBangAGK{\cbvFCtxt'<\der{\Hole}>}}}
            =
            \der{\app[\,]{\bangStrip{\cbvToBangAGK{t_1}}}{\bangFCtxt'}}
            = \bangFCtxt$.
        \end{itemize}

    \item[\bltII] $\neg\bangBangPred{\cbvToBangAGK{t_1}}$: Thus
        $\cbvToBangAGK{t} =
        \der{\app[\,]{\der{\cbvToBangAGK{t_1}}}{\cbvToBangAGK{t_2}}}$
        and since $\cbvToBangAGK{t} =
        \bangFCtxt<\app{\bangLCtxt<\abs{x}{s'_1}>}{s'_2}>$ then
        either:
        \begin{itemize}
        \item[\bltIII] $\bangFCtxt =
            \der{\app[\,]{\der{\bangFCtxt'}}{\cbvToBangAGK{t_2}}}$ for
            some $\bangFCtxt' \in \bangFCtxtSet$ such that
            $\cbvToBangAGK{t_1} =
            \bangFCtxt'<\app{\bangLCtxt<\abs{x}{s'_1}>}{s'_2}>$: By
            \ih on $t_1$, there exist $\cbvFCtxt' \in \cbvFCtxtSet$,
            $\cbvLCtxt \in \cbvLCtxtSet$ and $s_1, s_2 \in
            \setCbvTerms$ such that $t_1 =
            \cbvFCtxt'<\app{\cbvLCtxt<\abs{x}{s_1}>}{s_2}>$,
            $\bangFCtxt =
            \cbvToBangAGK{{\cbvFCtxt'}}\cbvCtxtPlug{\der{\Hole}}$,
            $\bangLCtxt = \cbvToBangAGK{\cbvLCtxt}$, $s'_1 =
            \oc\cbvToBangAGK{s_1}$ and $s'_2 = \cbvToBangAGK{s_2}$. We
            set $\cbvFCtxt = \der{\app[\,]{\der{\cbvFCtxt'}}{t_2}}$
            concluding this cases since $\cbvToBangAGK{\cbvFCtxt} =
            \der{\app[\,]{\der{\cbvToBangAGK{{\cbvFCtxt'}}}}{\cbvToBangAGK{t_2}}}
            =
            \der{\app[\,]{\der{\bangFCtxt'\bangCtxtPlug{\der{\Hole}}}}{\cbvToBangAGK{t_2}}}
            = \bangFCtxt\bangCtxtPlug{\der{\Hole}}$.

        \item[\bltIII] $\bangFCtxt =
            \der{\app[\,]{\der{\cbvToBangAGK{t_1}}}{\bangFCtxt'}}$ for
            some $\bangFCtxt' \in \bangFCtxtSet$ such that
            $\cbvToBangAGK{t_2} =
            \bangFCtxt'<\app{\bangLCtxt<\abs{x}{s'_1}>}{s'_2}>$: By
            \ih on $t_2$, there exist $\cbvFCtxt' \in \cbvFCtxtSet$,
            $\cbvLCtxt \in \cbvLCtxtSet$ and $s_1, s_2 \in
            \setCbvTerms$ such that $t_2 =
            \cbvFCtxt'<\app{\cbvLCtxt<\abs{x}{s_1}>}{s_2}>$,
            $\bangFCtxt =
            \cbvToBangAGK{{\cbvFCtxt'}}\cbvCtxtPlug{\der{\Hole}}$,
            $\bangLCtxt = \cbvToBangAGK{\cbvLCtxt}$, $s'_1 =
            \oc\cbvToBangAGK{s_1}$ and $s'_2 = \cbvToBangAGK{s_2}$. We
            set $\cbvFCtxt =
            \der{\app[\,]{\der{\cbvToBangAGK{t_1}}}{\cbvFCtxt'}}$
            concluding this cases since $\cbvToBangAGK{\cbvFCtxt} =
            \der{\app[\,]{\der{\cbvToBangAGK{t_1}}}{\cbvToBangAGK{{\cbvFCtxt'}}}}
            =
            \der{\app[\,]{\der{\cbvToBangAGK{t_1}}}{\bangFCtxt'\bangCtxtPlug{\der{\Hole}}}}
            = \bangFCtxt\bangCtxtPlug{\der{\Hole}}$.
        \end{itemize}
    \end{itemize}

\item[\bltI] $t = t_1\esub{y}{t_2}$: Then $\cbvToBangAGK{t} =
    \cbvToBangAGK{t_1}\esub{y}{\cbvToBangAGK{t_1}}$ and since
    $\cbvToBangAGK{t} =
    \bangFCtxt<\app{\bangLCtxt<\abs{x}{s'_1}>}{s'_2}>$ either:
    \begin{itemize}
    \item[\bltII] $\bangFCtxt =
        \bangFCtxt'\esub{y}{\cbvToBangAGK{t_2}}$ for some $\bangFCtxt'
        \in \bangFCtxtSet$ such that $\cbvToBangAGK{t_1} =
        \bangFCtxt'<\app{\bangLCtxt<\abs{x}{s'_1}>}{s'_2}>$. By \ih on
        $t_1$, there exist $\cbvFCtxt' \in \cbvFCtxtSet$, $\cbvLCtxt
        \in \cbvLCtxtSet$ and $s_1, s_2 \in \setCbvTerms$ such that
        $t_1 = \cbvFCtxt'<\app{\cbvLCtxt<\abs{x}{s_1}>}{s_2}>$,
        $\bangFCtxt =
        \cbvToBangAGK{{\cbvFCtxt'}}\cbvCtxtPlug{\der{\Hole}}$,
        $\bangLCtxt = \cbvToBangAGK{\cbvLCtxt}$, $s'_1 =
        \oc\cbvToBangAGK{s_1}$ and $s'_2 = \cbvToBangAGK{s_2}$. We set
        $\cbvFCtxt = \cbvFCtxt'\esub{y}{t_2}$ concluding this cases
        since $\cbvToBangAGK{\cbvFCtxt} =
        \cbvToBangAGK{{\cbvFCtxt'}}\esub{y}{\cbvToBangAGK{t_2}} =
        \bangFCtxt'\bangCtxtPlug{\der{\Hole}}\esub{y}{\cbvToBangAGK{t_2}}
        = \bangFCtxt\bangCtxtPlug{\der{\Hole}}$.

    \item[\bltII] $\bangFCtxt =
        \cbvToBangAGK{t_1}\esub{y}{\bangFCtxt'}$ for some $\bangFCtxt'
        \in \bangFCtxtSet$ such that $\cbvToBangAGK{t_2} =
        \bangFCtxt'<\app{\bangLCtxt<\abs{x}{s'_1}>}{s'_2}>$. By \ih on
        $t_2$, there exist $\cbvFCtxt' \in \cbvFCtxtSet$, $\cbvLCtxt
        \in \cbvLCtxtSet$ and $s_1, s_2 \in \setCbvTerms$ such that
        $t_2 = \cbvFCtxt'<\app{\cbvLCtxt<\abs{x}{s_1}>}{s_2}>$,
        $\bangFCtxt =
        \cbvToBangAGK{{\cbvFCtxt'}}\cbvCtxtPlug{\der{\Hole}}$,
        $\bangLCtxt = \cbvToBangAGK{\cbvLCtxt}$, $s'_1 =
        \oc\cbvToBangAGK{s_1}$ and $s'_2 = \cbvToBangAGK{s_2}$. We set
        $\cbvFCtxt = t_1\esub{y}{\cbvFCtxt'}$ concluding this cases
        since $\cbvToBangAGK{\cbvFCtxt} =
        \cbvToBangAGK{{\cbvFCtxt'}}\esub{y}{\cbvToBangAGK{t_2}} =
        \bangFCtxt'\bangCtxtPlug{\der{\Hole}}\esub{y}{\cbvToBangAGK{t_2}}
        = \bangFCtxt\bangCtxtPlug{\der{\Hole}}$.
    \end{itemize}
\end{itemize}

\noindent $2.$ Same principle as Point $1$.
\end{proof}

\begin{lemma}
    \label{lem:cbvAGK_s!_redex_Reverse_Translated}%
    Let $\cbvToBangAGK{t} = \bangFCtxt<s'_1\esub{x}{\bangLCtxt<\oc
    s'_2>}>$ for some $\bangFCtxt \in \bangFCtxtSet$, $\bangLCtxt \in
    \bangLCtxtSet$ and $s'_1, s'_2 \in \setBangTerms$. Then there
    exists $\cbvFCtxt \in \cbvFCtxtSet$, $\cbvLCtxt \in \cbvLCtxtSet$,
    $s \in \setCbvTerms$ and $v \in \setCbvValues$ such that $t =
    \cbvFCtxt<s\esub{x}{\cbvLCtxt<v>}>$, $\bangLCtxt =
    \cbvToBangAGK{\cbvLCtxt}$, $\oc s'_2 = \cbvToBangAGK{v}$ and
    either:
    \begin{itemize}
    \item[\bltII] $\cbvFuncPred{\cbvFCtxt}$ and
        $\bangBangPred{\cbvToBangAGK{s_1}}$ so that $\bangFCtxt =
        \cbvToBangAGK[*]{\cbvFCtxt}$ and $s'_1 =
        \bangStrip{\cbvToBangAGK{s}}$.

    \item[\bltII] $\neg\cbvFuncPred{\cbvFCtxt}$ or
        $\neg\bangBangPred{\cbvToBangAGK{s_1}}$ so that $\bangFCtxt =
        \cbvToBangAGK{\cbvFCtxt}$ and $s'_1 = \cbvToBangAGK{s}$.
    \end{itemize}
\end{lemma}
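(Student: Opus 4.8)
The plan is to argue by induction on $t \in \setCbvTerms$, following the pattern of \Cref{lem:cbvAGK_dB_redex_Reverse_Translated} (the analogous statement for $\bangSymbBeta$-redexes). As there, I strengthen the induction hypothesis to also cover the stripped embedding: simultaneously I prove that whenever $\bangBangPred{\cbvToBangAGK{t}}$ and $\bangStrip{\cbvToBangAGK{t}} = \bangFCtxt<s'_1\esub{x}{\bangLCtxt<\oc s'_2>}>$, the same data $\cbvFCtxt, \cbvLCtxt, s, v$ can be found with $t = \cbvFCtxt<s\esub{x}{\cbvLCtxt<v>}>$, $\bangLCtxt = \cbvToBangAGK{\cbvLCtxt}$, $\oc s'_2 = \cbvToBangAGK{v}$, the equality for $s'_1$ as in the statement, and the context equality reading $\bangFCtxt = \bangStrip{\cbvToBangAGK[*]{\cbvFCtxt}}$ (when $\cbvFuncPred{\cbvFCtxt}$ and $\bangBangPred{\cbvToBangAGK{s}}$) or $\bangFCtxt = \bangStrip{\cbvToBangAGK{\cbvFCtxt}}$ (otherwise). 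This second clause is what is needed in the application case, where an $s!$-redex of $\cbvToBangAGK{t}$ may lie inside $\bangStrip{\cbvToBangAGK{t_1}}$ rather than inside $\cbvToBangAGK{t_1}$ itself.

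Two observations drive the proof. First, the explicit substitutions occurring in $\cbvToBangAGK{t}$ are in bijection with those of $t$: the embedding introduces an ES only in the clause for $t\esub{x}{u}$, the use of $\bangStrip$ on an argument of an application merely exposing an already-present ES at top level, so the ES selected by $\bangFCtxt$ corresponds to a genuine ES $s\esub{x}{u}$ of $t$. Second, $\bangBangPred{\cbvToBangAGK{u}}$ holds if and only if $u = \cbvLCtxt<v>$ for some $\cbvLCtxt \in \cbvLCtxtSet$ and $v \in \setCbvValues$ — exactly the side condition of rule $\cbvSymbSubs$ — which I would establish by a short sub-induction on $u$, in the style of \Cref{lem:cbvAGK_BangPred_on_Isub}. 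From this, $\cbvLCtxt$ and $v$ are read off from $u$, and $\bangLCtxt = \cbvToBangAGK{\cbvLCtxt}$, $\oc s'_2 = \cbvToBangAGK{v}$ follow by \Cref{lem:cbvAGK_Contextual_Translation}.

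The induction then splits on the shape of $t$, exactly as in \Cref{lem:cbvAGK_dB_redex_Reverse_Translated}. For $t = y$ the embedding $\oc y$ has no ES, so the claim is vacuous. For $t = \abs{y}{t'}$ the redex sits inside $\cbvToBangAGK{t'}$; the first induction hypothesis on $t'$ yields $\cbvFCtxt'$ and the rest, and $\cbvFCtxt \coloneqq \abs{y}{\cbvFCtxt'}$ works, since prepending $\abs{y}{\cdot}$ changes neither $\cbvFuncPred$ nor the form of the embedding equalities. For $t = t_1\esub{z}{t_2}$ there are three subcases: the selected ES is the top-level one (then $x = z$, $\cbvFCtxt \coloneqq \Hole$, $s \coloneqq t_1$, and $\cbvLCtxt, v$ come from $t_2$ by the second observation; $\Hole$ is non-functional, so we land in the ``otherwise'' clause with $\bangFCtxt = \Hole = \cbvToBangAGK{\Hole}$ and $s'_1 = \cbvToBangAGK{t_1}$), or the ES lies strictly inside $\cbvToBangAGK{t_1}$ (recurse on $t_1$, prepend $\cdot\esub{z}{t_2}$), or strictly inside $\cbvToBangAGK{t_2}$ (recurse on $t_2$, prepend $t_1\esub{z}{\cdot}$). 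For $t = \app{t_1}{t_2}$ one splits on $\bangBangPred{\cbvToBangAGK{t_1}}$: if it holds, $\cbvToBangAGK{t} = \der{\app{\bangStrip{\cbvToBangAGK{t_1}}}{\cbvToBangAGK{t_2}}}$ and the redex lies either inside $\bangStrip{\cbvToBangAGK{t_1}}$ (apply the second induction hypothesis to $t_1$) or inside $\cbvToBangAGK{t_2}$ (first hypothesis on $t_2$); if it fails, $\cbvToBangAGK{t} = \der{\app{\der{\cbvToBangAGK{t_1}}}{\cbvToBangAGK{t_2}}}$ and one uses the first hypothesis on $t_1$ or $t_2$. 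In each case $\cbvFCtxt$ is obtained by prepending $\app{\cdot}{t_2}$ or $\app{t_1}{\cdot}$, and one verifies that $\cbvToBangAGK{\cbvFCtxt}$ (resp.\ $\cbvToBangAGK[*]{\cbvFCtxt}$) unfolds to the prescribed $\bangFCtxt$ using \Cref{def:cbvAGK_Embedding_Superdev} and the $\bangStrip$-commutation identities of \Cref{lem:bang_Strip_on_Isub}; the strengthened statement is treated the same way after peeling the leading $\oc$ and/or $\der$.

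The main obstacle is precisely this last verification in the application case: $\cbvToBangAGK[*]{\cdot}$ on a left-application context itself performs a three-way case analysis (the subcontext is a list context / the plain embedding of the subcontext satisfies $\bangBangPred$ / otherwise), so one must propagate the flags $\cbvFuncPred{\cbvFCtxt}$ and $\bangBangPred{\cbvToBangAGK{s}}$ consistently through the recursive calls in order to land in the branch that reproduces the given $\bangFCtxt$. The key facts keeping this under control are \Cref{lem:cbvAGK_Contextual_Translation} and the equivalence that $\bangBangPred{\cbvToBangAGK{(s\esub{x}{\cbvLCtxt<v>})}}$ holds if and only if $\bangBangPred{\cbvToBangAGK{s}}$ does; everything else is a mechanical unfolding of definitions.
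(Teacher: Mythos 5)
Your overall strategy is the right one and matches the paper's: induct on $t$ and strengthen the statement with a companion claim about $\bangStrip{\cbvToBangAGK{t}}$ when $\bangBangPred{\cbvToBangAGK{t}}$ holds, so that the application case with $\bangBangPred{\cbvToBangAGK{t_1}}$ can recurse into $\bangStrip{\cbvToBangAGK{t_1}}$. The gap is in the companion claim itself: as you state it, it is false. Take $t = y\esub{x}{z}$. Then $\cbvToBangAGK{t} = (\oc y)\esub{x}{\oc z}$, so $\bangBangPred{\cbvToBangAGK{t}}$ holds and $\bangStrip{\cbvToBangAGK{t}} = y\esub{x}{\oc z}$, which is an $\bangSymbSubs$-redex with $\bangFCtxt = \Hole$, $s'_1 = y$, $\bangLCtxt = \Hole$, $s'_2 = z$. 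The witnesses are forced: $\cbvFCtxt = \Hole$, $s = y$, $\cbvLCtxt = \Hole$, $v = z$. Since $\cbvFuncPred{\Hole}$ fails, your ``otherwise'' branch predicts $s'_1 = \cbvToBangAGK{s} = \oc y$, but the actual value is $s'_1 = y = \bangStrip{\cbvToBangAGK{s}}$; moreover $\bangStrip{\cbvToBangAGK{\cbvFCtxt}}$ is not even defined here, since the context strip has no ``otherwise'' clause. The missing idea is that $\bangStrip$ may remove a bang that belongs to the \emph{body} $s$ of the redex rather than to the surrounding context; this happens exactly when the hole of $\cbvFCtxt$ lies on the list-context spine, i.e.\ when $\neg\bangBangPred{\cbvToBangAGK{\cbvFCtxt}}$, and then the correct conclusion is the mixed combination $\bangFCtxt = \cbvToBangAGK{\cbvFCtxt}$ together with $s'_1 = \bangStrip{\cbvToBangAGK{s}}$, which your two-way disjunction excludes.

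This is not a cosmetic defect: that very case is the one fed back into the induction. For $t = \app{(y\esub{x}{z})}{u}$ the redex sits inside $\bangStrip{\cbvToBangAGK{t_1}}$, and after wrapping with $\app{\Hole}{u}$ the outer context satisfies $\cbvFuncPred$, so the main statement's first clause demands $s'_1 = \bangStrip{\cbvToBangAGK{s}}$; your induction hypothesis would hand you $\cbvToBangAGK{s}$ instead, and the step cannot be closed. The paper's proof avoids this by strengthening the companion claim with a three-way case analysis keyed on $\bangBangPred{\cbvToBangAGK{\cbvFCtxt}}$ (stripped bang in the context, with the two subcases you have, versus stripped bang in the redex body, where $\cbvFCtxt$ is a list context and $\bangBangPred{\cbvToBangAGK{s}}$ is recorded). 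You need to reformulate your auxiliary statement along these lines—tracking where the stripped bang comes from—before the rest of your (otherwise correct) case analysis goes through.
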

\begin{proof}
We strengthen the induction hypothesis as follows:
\begin{enumerate}
\item Let $\cbvToBangAGK{t} = \bangFCtxt<s'_1\esub{x}{\bangLCtxt<\oc
    s'_2>}>$ for some $\bangFCtxt \in \bangFCtxtSet$, $\bangLCtxt \in
    \bangLCtxtSet$ and $s'_1, s'_2 \in \setBangTerms$. Then there
    exists $\cbvFCtxt \in \cbvFCtxtSet$, $\cbvLCtxt \in \cbvLCtxtSet$,
    $s \in \setCbvTerms$ and $v \in \setCbvValues$ such that $t =
    \cbvFCtxt<s\esub{x}{\cbvLCtxt<v>}>$, $\bangLCtxt =
    \cbvToBangAGK{\cbvLCtxt}$, $\oc s'_2 = \cbvToBangAGK{v}$ and
    either:
    \begin{itemize}
    \item[\bltII] $\cbvFuncPred{\cbvFCtxt}$ and
        $\bangBangPred{\cbvToBangAGK{s_1}}$ so that $\bangFCtxt =
        \cbvToBangAGK[*]{\cbvFCtxt}$ and $s'_1 =
        \bangStrip{\cbvToBangAGK{s}}$.

    \item[\bltII] $\neg\cbvFuncPred{\cbvFCtxt}$ or
        $\neg\bangBangPred{\cbvToBangAGK{s_1}}$ so that $\bangFCtxt =
        \cbvToBangAGK{\cbvFCtxt}$ and $s'_1 = \cbvToBangAGK{s}$.
    \end{itemize}

\item Let $\bangBangPred{\cbvToBangAGK{t}}$ and
    $\bangStrip{\cbvToBangAGK{t}} =
    \bangFCtxt<s'_1\esub{x}{\bangLCtxt<\oc s'_2>}>$ for some
    $\bangFCtxt \in \bangFCtxtSet$, $\bangLCtxt \in \bangLCtxtSet$ and
    $s'_1, s'_2 \in \setBangTerms$. Then there exists $\cbvFCtxt \in
    \cbvFCtxtSet$, $\cbvLCtxt \in \cbvLCtxtSet$, $s \in \setCbvTerms$
    and $v \in \setCbvValues$ such that $t =
    \cbvFCtxt<s\esub{x}{\cbvLCtxt<v>}>$, $\bangLCtxt =
    \cbvToBangAGK{\cbvLCtxt}$, $\oc s'_2 = \cbvToBangAGK{v}$ with two
    possible subcases:
    \begin{itemize}
    \item[\bltII] $\bangBangPred{\cbvToBangAGK{\cbvFCtxt}}$, with two
    possible subcases:
        \begin{itemize}
        \item[\bltIII] $\cbvFuncPred{\cbvFCtxt}$ and
            $\bangBangPred{\cbvToBangAGK{s_1}}$ so that $\bangFCtxt =
            \bangStrip{\cbvToBangAGK[*]{\cbvFCtxt}}$ and $s'_1 =
            \bangStrip{\cbvToBangAGK{s}}$.

        \item[\bltIII] $\neg\cbvFuncPred{\cbvFCtxt}$ or
            $\neg\bangBangPred{\cbvToBangAGK{s_1}}$ so that
            $\bangFCtxt = \bangStrip{\cbvToBangAGK{\cbvFCtxt}}$ and
            $s'_1 = \cbvToBangAGK{s}$.
        \end{itemize}

    \item[\bltII] $\neg\bangBangPred{\cbvToBangAGK{\cbvFCtxt}}$, thus
        $\cbvFCtxt = \cbvLCtxt'$ (in particular
        $\neg\cbvFuncPred{\cbvFCtxt}$) and
        $\bangBangPred{\cbvToBangAGK{s}}$ so that $\bangFCtxt =
        \cbvToBangAGK{\cbvFCtxt}$ and $s'_1 = \cbvToBangAGK{s}$.
    \end{itemize}
\end{enumerate}

$1.$ By induction on $t$:
\begin{itemize}
\item[\bltI] $t = x$: Then $\cbvToBangAGK{t} = \oc x$ which
    contradicts the hypothesis.

\item[\bltI] $t = \abs{y}{t'}$: Then $\cbvToBangAGK{t} =
    \oc\abs{y}{\oc\cbvToBangAGK{t'}}$ and since $\cbvToBangAGK{t} =
    \bangFCtxt<s'_1\esub{x}{\bangLCtxt<\oc s'_2>}>$ then necessarily
    $\bangFCtxt = \oc\abs{y}{\oc\bangFCtxt'}$ for some $\bangFCtxt'
    \in \bangFCtxtSet$ such that $\cbvToBangAGK{t'} =
    \bangFCtxt'<s'_1\esub{x}{\bangLCtxt<\oc s'_2>}>$. By \ih on $t'$,
    there exists $\cbvFCtxt' \in \cbvFCtxtSet$, $\cbvLCtxt \in
    \cbvLCtxtSet$, $s \in \setCbvTerms$ and $v \in \setCbvValues$ such
    that $t' = \cbvFCtxt'<s\esub{x}{\cbvLCtxt<v>}>$, $\bangLCtxt =
    \cbvToBangAGK{\cbvLCtxt}$, $\oc s'_2 = \cbvToBangAGK{v}$ and
    either:
    \begin{itemize}
    \item[\bltII] $\cbvFuncPred{\cbvFCtxt'}$ and
        $\bangBangPred{\cbvToBangAGK{s_1}}$ so that $\bangFCtxt' =
        \cbvToBangAGK[*]{{\cbvFCtxt'}}$ and $s'_1 =
        \bangStrip{\cbvToBangAGK{s}}$. We set $\cbvFCtxt :=
        \abs{y}{\cbvFCtxt'}$ which concludes this case since from
        $\bangFuncPred{\cbvFCtxt'}$ one deduces that
        $\bangFuncPred{\cbvFCtxt}$ thus $\cbvToBangAGK[*]{\cbvFCtxt} =
        \oc\abs{y}{\oc\cbvToBangAGK[*]{{\cbvFCtxt'}}} =
        \oc\abs{y}{\oc\bangFCtxt'} = \bangFCtxt$.

    \item[\bltII] $\neg\cbvFuncPred{\cbvFCtxt'}$ or
        $\neg\bangBangPred{\cbvToBangAGK{s_1}}$ so that $\bangFCtxt' =
        \cbvToBangAGK{{\cbvFCtxt'}}$ and $s'_1 = \cbvToBangAGK{s}$. We
        set $\cbvFCtxt := \abs{y}{\cbvFCtxt'}$ which concludes this
        case since from $\neg\bangFuncPred{\cbvFCtxt'}$ one deduces
        that $\neg\bangFuncPred{\cbvFCtxt}$ thus
        $\cbvToBangAGK{\cbvFCtxt} =
        \oc\abs{y}{\oc\cbvToBangAGK{{\cbvFCtxt'}}} =
        \oc\abs{y}{\oc\bangFCtxt'} = \bangFCtxt$.
    \end{itemize}
    Moreover $t = \abs{y}{t'} =
    \abs{y}{\cbvFCtxt'<s\esub{x}{\cbvLCtxt<v>}>} =
    (\abs{y}{\cbvFCtxt'})\cbvCtxtPlug{s\esub{x}{\cbvLCtxt<v>}} =
    \cbvFCtxt<s\esub{x}{\cbvLCtxt<v>}>$

\item[\bltI] $t = \app{t_1}{t_2}$: We distinguish two cases:
    \begin{itemize}
    \item[\bltII] $\bangBangPred{\cbvToBangAGK{t_1}}$: Then
        $\cbvToBangAGK{t} =
        \der{\app[\,]{\bangStrip{\cbvToBangAGK{t_1}}}{\cbvToBangAGK{t_2}}}$
        and since $\cbvToBangAGK{t} =
        \bangFCtxt<s'_1\esub{x}{\bangLCtxt<\oc s'_2>}>$ then either:
        \begin{itemize}
        \item[\bltIII] $\bangFCtxt =
            \der{\app[\,]{\bangFCtxt'}{\cbvToBangAGK{t_2}}}$ for some
            $\bangFCtxt' \in \bangFCtxtSet$ such that
            $\bangStrip{\cbvToBangAGK{t_1}} =
            \bangFCtxt'<s'_1\esub{x}{\bangLCtxt<\oc s'_2>}>$: By \ih
            on $t_2$, there exists $\cbvFCtxt' \in \cbvFCtxtSet$,
            $\cbvLCtxt \in \cbvLCtxtSet$, $s \in \setCbvTerms$ and $v
            \in \setCbvValues$ such that $t_1 =
            \cbvFCtxt'<s\esub{x}{\cbvLCtxt<v>}>$, $\bangLCtxt =
            \cbvToBangAGK{\cbvLCtxt}$, $\oc s'_2 = \cbvToBangAGK{v}$
            and either:
            \begin{itemize}
            \item[\bltIV] $\bangBangPred{\cbvToBangAGK{{\cbvFCtxt'}}}$
                and either:
                \begin{enumerate}
                \item[\bltV] $\cbvFuncPred{\cbvFCtxt'}$ and
                    $\bangBangPred{\cbvToBangAGK{s_1}}$ so that
                    $\bangFCtxt' =
                    \bangStrip{\cbvToBangAGK[*]{{\cbvFCtxt'}}}$ and
                    $s'_1 = \bangStrip{\cbvToBangAGK{s}}$: We set
                    $\cbvFCtxt := \app[\,]{\cbvFCtxt'}{t_2}$. From
                    $\bangFuncPred{\cbvFCtxt'}$ one then deduces that
                    $\bangFuncPred{\cbvFCtxt}$. Since
                    $\bangBangPred{\cbvToBangAGK{{\cbvFCtxt'}}}$ then
                    necessarily $\cbvFCtxt' \notin \cbvLCtxtSet$ thus
                    $\cbvToBangAGK[*]{\cbvFCtxt} =
                    \der{\app[\,]{\bangStrip{\cbvToBangAGK[*]{{\cbvFCtxt'}}}}{\cbvToBangAGK{t_2}}}
                    = \der{\app{\bangFCtxt'}{\cbvToBangAGK{t_2}}} =
                    \bangFCtxt$.

                \item[\bltV] $\neg\cbvFuncPred{\cbvFCtxt'}$ or
                    $\neg\bangBangPred{\cbvToBangAGK{s_1}}$ so that
                    $\bangFCtxt' =
                    \bangStrip{\cbvToBangAGK{{\cbvFCtxt'}}}$ and $s'_1
                    = \cbvToBangAGK{s}$: We set $\cbvFCtxt :=
                    \app[\,]{\cbvFCtxt'}{t_2}$ thus
                    $\cbvToBangAGK{\cbvFCtxt} =
                    \der{\app[\,]{\bangStrip{\cbvToBangAGK{{\cbvFCtxt'}}}}{\cbvToBangAGK{t_2}}}
                    = \der{\app{\bangFCtxt'}{\cbvToBangAGK{t_2}}} =
                    \bangFCtxt$. We distinguish two cases:
                    \begin{enumerate}
                    \item[\bltVI] $\neg\cbvFuncPred{\cbvFCtxt'}$:
                        Since
                        $\bangBangPred{\cbvToBangAGK{{\cbvFCtxt'}}}$
                        then necessarily $\cbvFCtxt' \notin
                        \cbvLCtxtSet$ thus from
                        $\neg\bangFuncPred{\cbvFCtxt'}$ one then
                        finally deduce that
                        $\neg\bangFuncPred{\cbvFCtxt}$.

                    \item[\bltVI]
                        $\neg\bangBangPred{\cbvToBangAGK{s_1}}$: Thus
                        concluding this case.
                    \end{enumerate}
                \end{enumerate}

            \item[\bltIV] $\cbvFCtxt' = \cbvLCtxt'$,
                $\bangBangPred{\cbvToBangAGK{s_1}}$ and
                $\neg\cbvFuncPred{\cbvFCtxt'}$ so that $\bangFCtxt' =
                \cbvToBangAGK{{\cbvFCtxt'}}$ and $s'_1 =
                \cbvToBangAGK{s}$: We set $\cbvFCtxt :=
                \app[\,]{\cbvFCtxt'}{t_2}$ and since $\cbvFCtxt' \in
                \cbvLCtxtSet$, then $\cbvFuncPred{\cbvFCtxt}$.
                Finally, since $\cbvFCtxt' \in \cbvLCtxtSet$, then
                $\neg\bangBangPred{\cbvToBangAGK{{\cbvFCtxt'}}}$ thus
                $\cbvToBangAGK[*]{\cbvFCtxt} =
                \der{\app[\,]{\cbvToBangAGK{{\cbvFCtxt'}}}{\cbvToBangAGK{t_2}}}
                = \der{\app[\,]{\bangFCtxt'}{\cbvToBangAGK{t_2}}} =
                \bangFCtxt$.
            \end{itemize}
            Moreover, $t = \app{t_1}{t_2} =
            \app{\cbvFCtxt'<s\esub{x}{\cbvLCtxt<v>}>}{t_2} =
            (\app{\cbvFCtxt'}{t_2})\cbvCtxtPlug{s\esub{x}{\cbvLCtxt<v>}}
            = \cbvFCtxt<s\esub{x}{\cbvLCtxt<v>}>$.

        \item[\bltIII] $\bangFCtxt =
            \der{\app[\,]{\bangStrip{\cbvToBangAGK{t_1}}}{\bangFCtxt'}}$
            for some $\bangFCtxt' \in \bangFCtxtSet$ such that
            $\cbvToBangAGK{t_2} =
            \bangFCtxt'<s'_1\esub{x}{\bangLCtxt<\oc s'_2>}>$: By \ih
            on $t_2$, there exists $\cbvFCtxt' \in \cbvFCtxtSet$,
            $\cbvLCtxt \in \cbvLCtxtSet$, $s \in \setCbvTerms$ and $v
            \in \setCbvValues$ such that $t_2 =
            \cbvFCtxt'<s\esub{x}{\cbvLCtxt<v>}>$, $\bangLCtxt =
            \cbvToBangAGK{\cbvLCtxt}$, $\oc s'_2 = \cbvToBangAGK{v}$
            and either:
            \begin{itemize}
            \item[\bltIV] $\cbvFuncPred{{\cbvFCtxt'}}$ and
                $\bangBangPred{\cbvToBangAGK{s_1}}$ so that
                $\bangFCtxt' = \cbvToBangAGK[*]{{\cbvFCtxt'}}$ and
                $s'_1 = \bangStrip{\cbvToBangAGK{s}}$. From
                $\bangFuncPred{\cbvFCtxt'}$ one deduces that
                $\bangFuncPred{\cbvFCtxt}$. We set $\cbvFCtxt :=
                \app{t_1}{\cbvFCtxt'}$ which concludes this case since
                $\cbvToBangAGK[*]{\cbvFCtxt} =
                \der{\app[\,]{\bangStrip{\cbvToBangAGK{t_1}}}{\cbvToBangAGK[*]{{\cbvFCtxt'}}}}
                =
                \der{\app[\,]{\bangStrip{\cbvToBangAGK{t_1}}}{\bangFCtxt'}}
                = \bangFCtxt$.

            \item[\bltIV] $\neg\cbvFuncPred{{\cbvFCtxt'}}$ or
                $\neg\bangBangPred{\cbvToBangAGK{s_1}}$ so that
                $\bangFCtxt' = \cbvToBangAGK{{\cbvFCtxt'}}$ and $s'_1
                = \cbvToBangAGK{s}$. We set $\cbvFCtxt :=
                \app{t_1}{\cbvFCtxt'}$ thus $\cbvToBangAGK{\cbvFCtxt}
                =
                \der{\app[\,]{\bangStrip{\cbvToBangAGK{t_1}}}{\cbvToBangAGK{{\cbvFCtxt'}}}}
                =
                \der{\app[\,]{\bangStrip{\cbvToBangAGK{t_1}}}{\bangFCtxt'}}
                = \bangFCtxt$. We distinguish two cases:
                \begin{enumerate}
                \item[\bltV] $\neg\cbvFuncPred{{\cbvFCtxt'}}$: One
                    then deduces that $\neg\bangFuncPred{\cbvFCtxt}$
                    concluding this case.

                \item[\bltV] $\neg\bangBangPred{\cbvToBangAGK{s_1}}$:
                    Thus concluding this case.
                \end{enumerate}
            \end{itemize}
            Moreover, $t = \app{t_1}{t_2} =
            \app[\,]{t_1}{\cbvFCtxt'<s\esub{x}{\cbvLCtxt<v>}>} =
            \app[\,]{t_1}{\cbvFCtxt'}\cbvCtxtPlug{s\esub{x}{\cbvLCtxt<v>}}
            = \cbvFCtxt<s\esub{x}{\cbvLCtxt<v>}>$
        \end{itemize}

    \item[\bltII] $\neg\bangBangPred{\cbvToBangAGK{t_1}}$: Then
        $\cbvToBangAGK{t} =
        \der{\app[\,]{\der{\cbvToBangAGK{t_1}}}{\cbvToBangAGK{t_2}}}$
        and since $\cbvToBangAGK{t} =
        \bangFCtxt<s'_1\esub{x}{\bangLCtxt<\oc s'_2>}>$ then either:
        \begin{itemize}
        \item[\bltIII] $\bangFCtxt =
            \der{\app[\,]{\der{\bangFCtxt'}}{\cbvToBangAGK{t_2}}}$ for
            some $\bangFCtxt' \in \bangFCtxtSet$ such that
            $\cbvToBangAGK{t_1} =
            \bangFCtxt'<s'_1\esub{x}{\bangLCtxt<\oc s'_2>}>$: By \ih
            on $t_1$, there exists $\cbvFCtxt' \in \cbvFCtxtSet$,
            $\cbvLCtxt \in \cbvLCtxtSet$, $s \in \setCbvTerms$ and $v
            \in \setCbvValues$ such that $t_1 =
            \cbvFCtxt'<s\esub{x}{\cbvLCtxt<v>}>$, $\bangLCtxt =
            \cbvToBangAGK{\cbvLCtxt}$, $\oc s'_2 = \cbvToBangAGK{v}$
            and either:
            \begin{itemize}
            \item[\bltIV] $\cbvFuncPred{{\cbvFCtxt'}}$ and
                $\bangBangPred{\cbvToBangAGK{s_1}}$ so that
                $\bangFCtxt' = \cbvToBangAGK[*]{{\cbvFCtxt'}}$ and
                $s'_1 = \bangStrip{\cbvToBangAGK{s}}$. From
                $\bangFuncPred{\cbvFCtxt'}$ one deduces that
                $\bangFuncPred{\cbvFCtxt}$. We set $\cbvFCtxt :=
                \app[\,]{\cbvFCtxt'}{t_2}$ which concludes this case
                since $\bangFuncPred{\cbvFCtxt'}$ then $\cbvFCtxt'
                \notin \cbvLCtxtSet$ thus $\cbvToBangAGK[*]{\cbvFCtxt}
                =
                \der{\app[\,]{\der{\cbvToBangAGK[*]{{\cbvFCtxt'}}}}{\cbvToBangAGK{t_2}}}
                =
                \der{\app[\,]{\der{\bangFCtxt'}}{\cbvToBangAGK{t_2}}}
                = \bangFCtxt$.

            \item[\bltIV] $\neg\cbvFuncPred{{\cbvFCtxt'}}$ or
                $\neg\bangBangPred{\cbvToBangAGK{s_1}}$ so that
                $\bangFCtxt' = \cbvToBangAGK{{\cbvFCtxt'}}$ and $s'_1
                = \cbvToBangAGK{s}$. We set $\cbvFCtxt :=
                \app[\,]{\cbvFCtxt'}{t_2}$ thus
                $\cbvToBangAGK{\cbvFCtxt} =
                \der{\app[\,]{\der{\cbvToBangAGK{{\cbvFCtxt'}}}}{\cbvToBangAGK{t_2}}}
                =
                \der{\app[\,]{\der{\bangFCtxt'}}{\cbvToBangAGK{t_2}}}
                = \bangFCtxt$. Finally, we distinguish two cases:
                \begin{enumerate}
                \item[\bltV] $\neg\bangBangPred{\cbvToBangAGK{s_1}}$:
                    Thus concluding this case.

                \item[\bltV] $\neg\cbvFuncPred{{\cbvFCtxt'}}$: Without
                    loss of generality, one can suppose that
                    $\bangBangPred{\cbvToBangAGK{s_1}}$. Suppose by
                    absurd that $\cbvFCtxt' \in \cbvLCtxtSet$, then
                    $\bangBangPred{\cbvToBangAGK{(\cbvFCtxt'<s\esub{x}{\cbvLCtxt<v>}>)}}$
                    contradicting
                    $\neg\bangBangPred{\cbvToBangAGK{t_1}}$. Thus
                    $\cbvFCtxt' \notin \cbvLCtxtSet$ and therefore
                    $\neg\cbvFuncPred{\cbvFCtxt}$ conclusing this
                    case.
                \end{enumerate}
            \end{itemize}
            Moreover $t = \app{t_1}{t_2} =
            \app{\cbvFCtxt'<s\esub{x}{\cbvLCtxt<v>}>}{t_2} =
            (\app[\,]{\cbvFCtxt'}{t_2})\bangCtxtPlug{s\esub{x}{\cbvLCtxt<v>}}
            = \cbvFCtxt<s\esub{x}{\cbvLCtxt<v>}>$.

        \item[\bltIII] $\bangFCtxt =
            \cbvToBangAGK{s_1}\esub{y}{\bangFCtxt'}$ for some
            $\bangFCtxt' \in \bangFCtxtSet$ such that
            $\cbvToBangAGK{t_2} =
            \bangFCtxt'<s'_1\esub{x}{\bangLCtxt<\oc s'_2>}>$: By \ih
            on $t_2$, there exists $\cbvFCtxt' \in \cbvFCtxtSet$,
            $\cbvLCtxt \in \cbvLCtxtSet$, $s \in \setCbvTerms$ and $v
            \in \setCbvValues$ such that $t_2 =
            \cbvFCtxt'<s\esub{x}{\cbvLCtxt<v>}>$, $\bangLCtxt =
            \cbvToBangAGK{\cbvLCtxt}$, $\oc s'_2 = \cbvToBangAGK{v}$
            and either:
            \begin{itemize}
            \item[\bltIV] $\cbvFuncPred{{\cbvFCtxt'}}$ and
                $\bangBangPred{\cbvToBangAGK{s_1}}$ so that
                $\bangFCtxt' = \cbvToBangAGK[*]{{\cbvFCtxt'}}$ and
                $s'_1 = \bangStrip{\cbvToBangAGK{s}}$. From
                $\bangFuncPred{\cbvFCtxt'}$ one deduces that
                $\bangFuncPred{\cbvFCtxt}$. We set $\cbvFCtxt :=
                \cbvFCtxt'\esub{y}{t_2}$ which concludes this case
                since $\cbvToBangAGK[*]{\cbvFCtxt} =
                \cbvToBangAGK{t_1}\esub{y}{\cbvToBangAGK[*]{{\cbvFCtxt'}}}
                = \cbvToBangAGK{t_1}\esub{y}{\bangFCtxt'} =
                \bangFCtxt$.

            \item[\bltIV] $\neg\cbvFuncPred{{\cbvFCtxt'}}$ or
                $\neg\bangBangPred{\cbvToBangAGK{s_1}}$ so that
                $\bangFCtxt' = \cbvToBangAGK{{\cbvFCtxt'}}$ and $s'_1
                = \cbvToBangAGK{s}$. We set $\cbvFCtxt :=
                \cbvFCtxt'\esub{y}{t_2}$ thus
                $\cbvToBangAGK{\cbvFCtxt} =
                \cbvToBangAGK{t_1}\esub{y}{\cbvToBangAGK{{\cbvFCtxt'}}}
                = \cbvToBangAGK{t_1}\esub{y}{\bangFCtxt'} =
                \bangFCtxt$. We distinguish two cases:
                \begin{enumerate}
                \item[\bltV] $\neg\cbvFuncPred{{\cbvFCtxt'}}$: One
                    then deduces that $\neg\bangFuncPred{\cbvFCtxt}$
                    concluding this case.

                \item[\bltV] $\neg\bangBangPred{\cbvToBangAGK{s_1}}$:
                    Thus concluding this case.
                \end{enumerate}
            \end{itemize}
            Moreover, $t = t_1\esub{y}{t_2} =
            t_1\esub{y}{\cbvFCtxt'<s\esub{x}{\cbvLCtxt<v>}>} =
            t_1\esub{y}{\cbvFCtxt'}\cbvCtxtPlug{s\esub{x}{\cbvLCtxt<v>}}
            = \cbvFCtxt<s\esub{x}{\cbvLCtxt<v>}>$
        \end{itemize}
    \end{itemize}

\item[\bltI] $t = t_1\esub{y}{t_2}$: Then $\cbvToBangAGK{t} =
    \cbvToBangAGK{t_1}\esub{y}{\cbvToBangAGK{t_2}}$ and since
    $\cbvToBangAGK{t} = \bangFCtxt<s'_1\esub{x}{\bangLCtxt<\oc
    s'_2>}>$ then either:
    \begin{itemize}
    \item[\bltII] $\bangFCtxt = \Hole$, $x = y$, $\cbvToBangAGK{t_1} =
        s'_1$ and $\cbvToBangAGK{t_2} = \bangLCtxt<\oc s'_2>$: By
        induction on $t_2$, one has that $t_2 = \cbvLCtxt<v>$ for some
        $\cbvLCtxt \in \cbvLCtxtSet$ and $v \in \setCbvValues$ such
        that $\cbvToBangAGK{v} = \oc s'_2$ and
        $\cbvToBangAGK{\cbvLCtxt} = \bangLCtxt$. Taking $\cbvFCtxt =
        \Hole$ concludes this case since $t = s\esub{x}{\cbvLCtxt<v>}
        = \cbvFCtxt<s\esub{x}{\cbvLCtxt<v>}>$,
        $\neg\cbvFuncPred{\cbvFCtxt}$ and $\cbvToBangAGK{\cbvFCtxt} =
        \Hole = \cbvFCtxt$.

    \item[\bltII] $\bangFCtxt =
        \bangFCtxt'\esub{y}{\cbvToBangAGK{t_2}}$ for some $\bangFCtxt'
        \in \bangFCtxtSet$ such that $\cbvToBangAGK{t_1} =
        \bangFCtxt'<s'_1\esub{x}{\bangLCtxt<\oc s'_2>}>$: By \ih on
        $t_1$, there exists $\cbvFCtxt' \in \cbvFCtxtSet$, $\cbvLCtxt
        \in \cbvLCtxtSet$, $s \in \setCbvTerms$ and $v \in
        \setCbvValues$ such that $t_1 =
        \cbvFCtxt'<s\esub{x}{\cbvLCtxt<v>}>$, $\bangLCtxt =
        \cbvToBangAGK{\cbvLCtxt}$, $\oc s'_2 = \cbvToBangAGK{v}$ and
        either:
        \begin{itemize}
        \item[\bltIII] $\cbvFuncPred{{\cbvFCtxt'}}$ and
            $\bangBangPred{\cbvToBangAGK{s_1}}$ so that $\bangFCtxt' =
            \cbvToBangAGK[*]{{\cbvFCtxt'}}$ and $s'_1 =
            \bangStrip{\cbvToBangAGK{s}}$. From
            $\bangFuncPred{\cbvFCtxt'}$ one deduces that
            $\bangFuncPred{\cbvFCtxt}$. We set $\cbvFCtxt :=
            \cbvFCtxt'\esub{y}{t_2}$ which concludes this case since
            $\cbvToBangAGK[*]{\cbvFCtxt} =
            \cbvToBangAGK[*]{{\cbvFCtxt'}}\esub{y}{\cbvToBangAGK{t_2}}
            = \bangFCtxt'\esub{y}{\cbvToBangAGK{t_2}} = \bangFCtxt$.

        \item[\bltIII] $\neg\cbvFuncPred{{\cbvFCtxt'}}$ or
            $\neg\bangBangPred{\cbvToBangAGK{s_1}}$ so that
            $\bangFCtxt' = \cbvToBangAGK{{\cbvFCtxt'}}$ and $s'_1 =
            \cbvToBangAGK{s}$. We set $\cbvFCtxt :=
            \cbvFCtxt'\esub{y}{t_2}$ thus $\cbvToBangAGK{\cbvFCtxt} =
            \cbvToBangAGK{{\cbvFCtxt'}}\esub{y}{\cbvToBangAGK{t_2}} =
            \bangFCtxt'\esub{y}{\cbvToBangAGK{t_2}} = \bangFCtxt$. We
            distinguish two cases:
            \begin{enumerate}
            \item[\bltIV] $\neg\cbvFuncPred{{\cbvFCtxt'}}$: One then
                deduces that $\neg\bangFuncPred{\cbvFCtxt}$ concluding
                this case.

            \item[\bltIV] $\neg\bangBangPred{\cbvToBangAGK{s_1}}$:
                Thus concluding this case.
            \end{enumerate}
        \end{itemize}
        Moreover, $t = t_1\esub{y}{t_2} =
        \cbvFCtxt'<s\esub{x}{\cbvLCtxt<v>}>\esub{y}{t_2} =
        (\cbvFCtxt'\esub{y}{t_2})\cbvCtxtPlug{s\esub{x}{\cbvLCtxt<v>}}
        = \cbvFCtxt<s\esub{x}{\cbvLCtxt<v>}>$.

    \item[\bltII] $\bangFCtxt =
        \cbvToBangAGK{s_1}\esub{y}{\bangFCtxt'}$ for some $\bangFCtxt'
        \in \bangFCtxtSet$ such that $\cbvToBangAGK{t_2} =
        \bangFCtxt'<s'_1\esub{x}{\bangLCtxt<\oc s'_2>}>$: By \ih on
        $t_2$, there exists $\cbvFCtxt' \in \cbvFCtxtSet$, $\cbvLCtxt
        \in \cbvLCtxtSet$, $s \in \setCbvTerms$ and $v \in
        \setCbvValues$ such that $t_2 =
        \cbvFCtxt'<s\esub{x}{\cbvLCtxt<v>}>$, $\bangLCtxt =
        \cbvToBangAGK{\cbvLCtxt}$, $\oc s'_2 = \cbvToBangAGK{v}$ and
        either:
        \begin{itemize}
        \item[\bltIII] $\cbvFuncPred{{\cbvFCtxt'}}$ and
            $\bangBangPred{\cbvToBangAGK{s_1}}$ so that $\bangFCtxt' =
            \cbvToBangAGK[*]{{\cbvFCtxt'}}$ and $s'_1 =
            \bangStrip{\cbvToBangAGK{s}}$. From
            $\bangFuncPred{\cbvFCtxt'}$ one deduces that
            $\bangFuncPred{\cbvFCtxt}$. We set $\cbvFCtxt :=
            \cbvFCtxt'\esub{y}{t_2}$ which concludes this case since
            $\cbvToBangAGK[*]{\cbvFCtxt} =
            \cbvToBangAGK{t_1}\esub{y}{\cbvToBangAGK[*]{{\cbvFCtxt'}}}
            = \cbvToBangAGK{t_1}\esub{y}{\bangFCtxt'} = \bangFCtxt$.

        \item[\bltIII] $\neg\cbvFuncPred{{\cbvFCtxt'}}$ or
            $\neg\bangBangPred{\cbvToBangAGK{s_1}}$ so that
            $\bangFCtxt' = \cbvToBangAGK{{\cbvFCtxt'}}$ and $s'_1 =
            \cbvToBangAGK{s}$. We set $\cbvFCtxt :=
            \cbvFCtxt'\esub{y}{t_2}$ thus $\cbvToBangAGK{\cbvFCtxt} =
            \cbvToBangAGK{t_1}\esub{y}{\cbvToBangAGK{{\cbvFCtxt'}}} =
            \cbvToBangAGK{t_1}\esub{y}{\bangFCtxt'} = \bangFCtxt$. We
            distinguish two cases:
            \begin{enumerate}
            \item[\bltIV] $\neg\cbvFuncPred{{\cbvFCtxt'}}$: One then
                deduces that $\neg\bangFuncPred{\cbvFCtxt}$ concluding
                this case.

            \item[\bltIV] $\neg\bangBangPred{\cbvToBangAGK{s_1}}$:
                Thus concluding this case.
            \end{enumerate}
        \end{itemize}
    \end{itemize}
\end{itemize}

~ 

$2.$ Same principle as Point $1$.
\end{proof}

And similarly for the \CBVSymb one-step reverse simulation property:

\begin{lemma}[\CBVSymb One-Step Reverse Simulation]
    \label{lem:cbv detailed one step reverse simulation}
    Let $t \in \setCbvTerms, u' \in \setBangTerms$ and $\rel' \in
    \{\bangSymbBeta, \bangSymbSubs, \bangSymbBang\}$ where $u'$ is a
    \bangSetFNF<d!>, then%
    \begin{equation*}
        \cbvToBangAGK{t} \bangArr_F<R'>\bangArrSet*_F<d!> u'
            \quad \Rightarrow \quad
        \left\{\begin{array}{lr}
            \exists\; u \in \setCbvTerms,
                &\cbvToBangAGK{u} = u'
        \\
            \exists\; \rel \in \{\cbvSymbBeta, \cbvSymbSubs\},
                &\cbvToBangAGK{\rel} = \rel'
        \\
            \exists\; \cbvFCtxt \in \cbvFCtxtSet,
        \end{array}\right\}
         \text{
        such that } t \cbvArr_F<R> u
    \end{equation*}%
    Moreover, $\bangFCtxt$ is either $\cbvToBangAGK{\cbvFCtxt}$,
    $\cbvToBangAGK[*]{\cbvFCtxt}$ or
    $\cbvToBangAGK{\cbvFCtxt}\bangCtxtPlug{\der{\Hole}}$.
\end{lemma}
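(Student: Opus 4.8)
The plan is to proceed by case analysis on the contracted rule $\rel' \in \{\bangSymbBeta, \bangSymbSubs, \bangSymbBang\}$. The case $\rel' = \bangSymbBang$ is vacuous: by construction (thanks to the super-development in \Cref{def:cbvAGK_Embedding}) the term $\cbvToBangAGK{t}$ is always a \bangSetFNF<d!>, hence it has no $\bangSymbBang$-redex at all, which contradicts the hypothesis that $\cbvToBangAGK{t}$ fires a $\bangSymbBang$-step. For the two remaining cases the strategy is uniform: reconstruct, from the \BANGSymb redex contracted in $\cbvToBangAGK{t}$, the \CBVSymb redex it comes from (using \Cref{lem:cbvAGK_dB_redex_Reverse_Translated} when $\rel' = \bangSymbBeta$ and \Cref{lem:cbvAGK_s!_redex_Reverse_Translated} when $\rel' = \bangSymbSubs$); contract that \CBVSymb redex to obtain $u$ and choose $\rel$ accordingly; and then establish $\cbvToBangAGK{u} = u'$ by confronting the hypothesis with the detailed one-step simulation (\Cref{lem:cbv detailed one step simulation}) and closing with confluence of $\bangArrSet_F<d!>$ (\Cref{lem:Confluence_Admin_Full}).

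Concretely, suppose $\rel' = \bangSymbBeta$, so the hypothesis reads $\cbvToBangAGK{t} = \bangFCtxt<\app{\bangLCtxt<\abs{x}{s'_1}>}{s'_2}>$ with $\cbvToBangAGK{t} \bangArr_{\bangFCtxt\bangCtxtPlug{\bangSymbBeta}} u'' \bangArrSet*_F<d!> u'$ and $u'$ a \bangSetFNF<d!>. \Cref{lem:cbvAGK_dB_redex_Reverse_Translated} provides $\cbvFCtxt \in \cbvFCtxtSet$, $\cbvLCtxt \in \cbvLCtxtSet$ and $s_1, s_2 \in \setCbvTerms$ with $t = \cbvFCtxt<\app{\cbvLCtxt<\abs{x}{s_1}>}{s_2}>$, $\bangFCtxt = \cbvToBangAGK{\cbvFCtxt}\bangCtxtPlug{\der{\Hole}}$, $\bangLCtxt = \cbvToBangAGK{\cbvLCtxt}$, $s'_1 = \oc\cbvToBangAGK{s_1}$ and $s'_2 = \cbvToBangAGK{s_2}$. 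We set $\rel \coloneqq \cbvSymbBeta$ (so that $\cbvToBangAGK{\rel} = \rel'$) and $u \coloneqq \cbvFCtxt<\cbvLCtxt<s_1\esub{x}{s_2}>>$, whence $t \cbvArr_{\cbvFCtxt\bangCtxtPlug{\cbvSymbBeta}} u$. Feeding this \CBVSymb step to \Cref{lem:cbv detailed one step simulation} yields $\cbvToBangAGK{t} \bangArr_{\bangFCtxt_1\bangCtxtPlug{\bangSymbBeta}} \bangArrSet*_F<d!> \cbvToBangAGK{u}$ with $\bangFCtxt_1 = \cbvToBangAGK{\cbvFCtxt}\bangCtxtPlug{\der{\Hole}} = \bangFCtxt$. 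Since a $\bangSymbBeta$-redex in $\cbvToBangAGK{t}$ is uniquely determined by its context and $\bangSymbBeta$-contraction is deterministic, the initial step of this simulation contracts exactly the redex fixed by the hypothesis, so its target is again $u''$, and therefore $u'' \bangArrSet*_F<d!> \cbvToBangAGK{u}$. Now $u''$ reduces by $\bangArrSet*_F<d!>$ both to $u'$ and to $\cbvToBangAGK{u}$, and both are \bangSetFNF<d!> (the former by hypothesis, the latter by construction of $\cbvToBangAGK{\cdot}$), so confluence of $\bangArrSet_F<d!>$ forces $\cbvToBangAGK{u} = u'$; and $\bangFCtxt = \cbvToBangAGK{\cbvFCtxt}\bangCtxtPlug{\der{\Hole}}$ is of the announced form.

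The case $\rel' = \bangSymbSubs$ is entirely analogous, now using \Cref{lem:cbvAGK_s!_redex_Reverse_Translated} to recover a \CBVSymb substitution redex $t = \cbvFCtxt<s\esub{x}{\cbvLCtxt<v>}>$ with $v \in \setCbvValues$, $\bangLCtxt = \cbvToBangAGK{\cbvLCtxt}$ and $\oc s'_2 = \cbvToBangAGK{v}$; one sets $\rel \coloneqq \cbvSymbSubs$ and $u \coloneqq \cbvFCtxt<\cbvLCtxt<s\isub{x}{v}>>$, so $t \cbvArr_{\cbvFCtxt\bangCtxtPlug{\cbvSymbSubs}} u$. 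The dichotomy in \Cref{lem:cbvAGK_s!_redex_Reverse_Translated} (whether or not $\cbvFuncPred{\cbvFCtxt}$ and $\bangBangPred{\cbvToBangAGK{s}}$ both hold) matches --- up to the equivalences of \Cref{lem:cbvAGK_BangPred_on_Isub} --- the dichotomy of \Cref{lem:cbv detailed one step simulation} for $\cbvSymbSubs$, so that the simulating $\bangSymbSubs$-step is performed at exactly the context $\bangFCtxt$ given by the hypothesis ($\bangFCtxt = \cbvToBangAGK[*]{\cbvFCtxt}$ in the first case, $\bangFCtxt = \cbvToBangAGK{\cbvFCtxt}$ in the second). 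Since for $\cbvSymbSubs$ the simulation carries no administrative tail, it produces $\cbvToBangAGK{u}$ directly, and comparison with the hypothesis forces the $\bangArrSet*_F<d!>$ from $u''$ to $u'$ to be empty, so $\cbvToBangAGK{u} = u'$. In all cases $\bangFCtxt$ is one of $\cbvToBangAGK{\cbvFCtxt}$, $\cbvToBangAGK[*]{\cbvFCtxt}$ or $\cbvToBangAGK{\cbvFCtxt}\bangCtxtPlug{\der{\Hole}}$, as required.

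The main obstacle is the bookkeeping that guarantees the \BANGSymb redex fixed by the hypothesis is \emph{exactly} the one contracted by the simulation of the reconstructed \CBVSymb step: without this identification the two administrative tails $u'' \bangArrSet*_F<d!> u'$ and $u'' \bangArrSet*_F<d!> \cbvToBangAGK{u}$ would not share their source $u''$, and the confluence closure would not apply. This is precisely why the simulating context in \Cref{lem:cbv detailed one step simulation} is pinned down so finely (the extra $\der{\Hole}$ wrapper for $\cbvSymbBeta$, the $\cbvFuncPred{}/\bangBangPred{}$ alternative for $\cbvSymbSubs$), and why \Cref{lem:cbvAGK_dB_redex_Reverse_Translated,lem:cbvAGK_s!_redex_Reverse_Translated} are stated with the same context shapes; the proof then reduces to checking that the case splits of these three lemmas correspond exactly. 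Everything else follows from confluence of $\bangArrSet_F<d!>$ (\Cref{lem:Confluence_Admin_Full}), that is, from the fact that two $\bangArrSet_F<d!>$-normal forms reachable from a common term must coincide.
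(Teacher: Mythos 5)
Your proof is correct, and its first half coincides with the paper's: the case $\rel'=\bangSymbBang$ is discharged exactly as in the paper (images of $\cbvToBangAGK{\cdot}$ are \bangSetFNF<d!>), and the reconstruction of the \CBVSymb redex goes through the same two lemmas (\Cref{lem:cbvAGK_dB_redex_Reverse_Translated,lem:cbvAGK_s!_redex_Reverse_Translated}) with the same choice of $u$ and $\rel$. Where you diverge is in how $\cbvToBangAGK{u}=u'$ is established: the paper recomputes $\cbvToBangAGK{u}$ directly, unfolding the embedding through \Cref{lem:cbvAGK_Contextual_Translation} (and, for the $\bangSymbSubs$ case, \Cref{lem:cbvAGK_Translation_on_Isub,lem:bang_Strip_on_Isub}) to exhibit the administrative tail from the hypothesized reduct to $\cbvToBangAGK{u}$, and then closes with confluence of $\bangArrSet_F<d!>$ (\Cref{lem:Confluence_Admin_Full}); you instead reuse the already-proved forward simulation (\Cref{lem:cbv detailed one step simulation}) applied to $t\cbvArr_F<R> u$ and argue that its first computational step fires the very redex fixed by the hypothesis, so both tails share the source $u''$ and confluence (or emptiness of the tail in the $\bangSymbSubs$ case) does the rest. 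Your route avoids redoing the translation computations, but it hinges on two identifications that the paper's direct computation makes unnecessary: (i) that a $\bangSymbBeta$/$\bangSymbSubs$ step is determined by its context (true, since the subterm at the hole and its list-context decomposition are unique), and (ii) that the case split of \Cref{lem:cbvAGK_s!_redex_Reverse_Translated} (on $\bangBangPred{\cbvToBangAGK{s}}$) matches that of the simulation lemma (on $\bangBangPred{\cbvToBangAGK{u}}$), which indeed follows from \Cref{lem:cbvAGK_BangPred_on_Isub} extended through the list context. With those two points spelled out, your argument is a legitimate, somewhat more economical variant of the paper's proof.
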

\begin{proof}
Let $t \in \setCbvTerms$ such that $\cbvToBangAGK{t}$ is not a
$\bangFCtxtSet$-normal form. We distinguish three cases:
\begin{itemize}
\item[\bltI] $\cbvToBangAGK{t} =
    \bangFCtxt<\app{\bangLCtxt<\abs{x}{s'_1}>}{s'_2}>$ thus
    $\cbvToBangAGK{t} \bangArr_F<dB>
    \bangFCtxt<\bangLCtxt<s'_1\esub{x}{s'_2}>> =: s'$. By
    \Cref{lem:cbvAGK_dB_redex_Reverse_Translated}, one has that $t =
    \cbvFCtxt<\app{\cbvLCtxt<\abs{x}{s_1}>}{s_2}>$ for some $\cbvFCtxt
    \in \cbvFCtxtSet$, $\cbvLCtxt \in \cbvLCtxtSet$ and $s_1, s_2 \in
    \setCbvTerms$ such that $\bangFCtxt =
    \cbvToBangAGK{\cbvFCtxt}\bangCtxtPlug{\der{\Hole}}$, $\bangLCtxt =
    \cbvToBangAGK{\cbvLCtxt}$, $s'_1 = \oc\cbvToBangAGK{s_1}$ and
    $s'_2 = \cbvToBangAGK{s_2}$. We set $u :=
    \cbvFCtxt<\cbvLCtxt<s_1\esub{x}{s_2}>>$ so that $t \cbvArr_F<dB>
    u$. Suppose $s' \bangArrSet*_F<d!> u'$ for some \bangSetFNF<d!>
    $u' \in \setBangTerms$. We distinguish the following two cases.
    \begin{itemize}
    \item[\bltII] $\cbvFuncPred{\cbvFCtxt}$ and
        $\bangBangPred{\cbvToBangAGK{s_1}}$: Then $\cbvToBangAGK{s_1}
        = \oc\bangStrip{\cbvToBangAGK{s_1}}$ thus $s' =
        \cbvToBangAGK{\cbvFCtxt}\bangCtxtPlug{\der{\cbvToBangAGK{\cbvLCtxt}\bangCtxtPlug{\oc\oc
        \bangStrip{\cbvToBangAGK{s_1}}\esub{x}{\cbvToBangAGK{s_2}}}}}$.
        One has that $s' \bangArrSet*_F<d!>
        \cbvToBangAGK[*]{\cbvFCtxt}\bangCtxtPlug{\cbvToBangAGK{\cbvLCtxt}\bangCtxtPlug{\bangStrip{\cbvToBangAGK{s_1}}\esub{x}{\cbvToBangAGK{s_2}}}}
        = \cbvToBangAGK{u}$ using
        \Cref{lem:cbvAGK_Contextual_Translation} thus being a
        $\bangFCtxtSet<\bangSymbBang>$-normal form. By confluence of
        $\bangArrSet_F<d!>$ (\Cref{lem:Confluence_Admin_Full}), one
        finally concludes that $\cbvToBangAGK{u} = u'$.

    \item[\bltII] $\neg\cbvFuncPred{\cbvFCtxt}$ or
        $\neg\bangBangPred{\cbvToBangAGK{s_1}}$: Then $s' =
        \cbvToBangAGK{\cbvFCtxt}\bangCtxtPlug{\der{\cbvToBangAGK{\cbvLCtxt}\bangCtxtPlug{\oc
        \cbvToBangAGK{s_1}\esub{x}{\cbvToBangAGK{s_2}}}}}$. One has
        that $s' \bangArrSet_F<d!>
        \cbvToBangAGK{\cbvFCtxt}\bangCtxtPlug{\cbvToBangAGK{\cbvLCtxt}\bangCtxtPlug{\cbvToBangAGK{s_1}\esub{x}{\cbvToBangAGK{s_2}}}}
        = \cbvToBangAGK{u}$ using
        \Cref{lem:cbvAGK_Contextual_Translation} thus being a
        $\bangFCtxtSet<\bangSymbBang>$-normal form. By confluence of
        $\bangArrSet_F<d!>$ (\Cref{lem:Confluence_Admin_Full}), one
        finally concludes that $\cbvToBangAGK{u} = u'$.
    \end{itemize}

\item[\bltI] $\cbvToBangAGK{t} =
    \bangFCtxt<s'_1\esub{x}{\bangLCtxt<\oc s'_2>}>$ thus
    $\cbvToBangAGK{t} \bangArr_F
    \bangFCtxt<\bangLCtxt<s'_1\isub{x}{s'_2}>> =: s'$. By
    \Cref{lem:cbvAGK_s!_redex_Reverse_Translated}, $t =
    \cbvFCtxt<s\esub{x}{\cbvLCtxt<v>}>$ for some $\cbvFCtxt \in
    \cbvFCtxtSet$, $\cbvLCtxt \in \cbvLCtxtSet$, $s \in \setCbvTerms$
    and $v \in \setCbvValues$ with $\bangLCtxt =
    \cbvToBangAGK{\cbvLCtxt}$, $\oc s'_2 = \cbvToBangAGK{v}$ and
    either:
    \begin{itemize}
    \item[\bltII] $\cbvFuncPred{\cbvFCtxt}$ and
        $\bangBangPred{\cbvToBangAGK{s}}$ so that $\bangFCtxt =
        \cbvToBangAGK[*]{\cbvFCtxt}$ and $s'_1 =
        \bangStrip{\cbvToBangAGK{s}}$. We set $u :=
        \cbvFCtxt<\cbvLCtxt<s_1\isub{x}{v}>>$ so that $t \cbvArr_F<sV>
        u$. Suppose $s' \bangArrSet_F<d!> u'$ for some \bangSetFNF<d!>
        $u' \in \setBangTerms$. Since
        $\bangBangPred{\cbvToBangAGK{s_1}}$ then by
        \Cref{lem:cbvAGK_BangPred_on_Isub,lem:cbvAGK_Contextual_Translation},
        one deduces that
        $\bangBangPred{\cbvToBangAGK{(\cbvLCtxt<s_1\isub{x}{s_2}>)}}$
        thus $\cbvToBangAGK{u} =
        \cbvToBangAGK[*]{\cbvFCtxt}\bangCtxtPlug{\bangStrip{\cbvToBangAGK{(\cbvLCtxt<s\isub{x}{v}>)}}}$
        using \Cref{lem:cbvAGK_Contextual_Translation}. Using
        \Cref{lem:cbvAGK_Translation_on_Isub,lem:bang_Strip_on_Isub},
        one has that $\cbvToBangAGK{u} =
        \bangFCtxt<\bangLCtxt<s'_1\isub{x}{s'_2}>> = u'$ thus $u'$ is
        a $\bangFCtxtSet<\bangSymbBang>$-normal form. By confluence of
        $\bangArrSet_F<d!>$ (\Cref{lem:Confluence_Admin_Full}), one
        deduces that $u' = s'$ thus concluding this case.

    \item[\bltII] $\neg\cbvFuncPred{\cbvFCtxt}$ or
        $\neg\bangBangPred{\cbvToBangAGK{s_1}}$ so that $\bangFCtxt =
        \cbvToBangAGK{\cbvFCtxt}$ and $s'_1 = \cbvToBangAGK{s}$. We
        set $u := \cbvFCtxt<\cbvLCtxt<s_1\isub{x}{v}>>$ so that $t
        \cbvArr_F<sV> u$. Suppose $s' \bangArrSet_F<d!> u'$ for some
        \bangSetFNF<d!> $u' \in \setBangTerms$. We distinguish two
        cases:
        \begin{itemize}
        \item[\bltIII] $\neg\cbvFuncPred{\cbvFCtxt}$: Then using
            \Cref{lem:cbvAGK_Contextual_Translation}, one has that
            $\cbvToBangAGK{u} =
            \cbvToBangAGK{\cbvFCtxt}\bangCtxtPlug{\cbvToBangAGK{(\cbvLCtxt<s\isub{x}{v}>)}}$
            hence $\cbvToBangAGK{u} = u'$ using
            \Cref{lem:cbvAGK_Contextual_Translation,lem:cbvAGK_Translation_on_Isub}
            thus $u'$ is a $\bangFCtxtSet<\bangSymbBang>$-normal form.
            Using \Cref{lem:Confluence_Admin_Full}, one deduces that
            $u' = s'$ thus concluding this case.

        \item[\bltIII] $\neg\bangBangPred{\cbvToBangAGK{s_1}}$: Since
            $\neg\bangBangPred{\cbvToBangAGK{s_1}}$ one deduces that
            $\neg\bangBangPred{(\cbvToBangAGK{s_1\isub{x}{v}})}$ using
            \Cref{lem:cbvAGK_BangPred_on_Isub}. Thus, $\cbvToBangAGK{u}
            =
            \cbvToBangAGK{\cbvFCtxt}\bangCtxtPlug{\cbvToBangAGK{(\cbvLCtxt<s_1\isub{x}{v}>)}}$
            and therefore $\cbvToBangAGK{u} = u'$ using
            \Cref{lem:cbvAGK_Contextual_Translation,lem:cbvAGK_Translation_on_Isub}.
            One deduces that $u'$ is a
            $\bangFCtxtSet<\bangSymbBang>$-normal form and using
            \Cref{lem:Confluence_Admin_Full}, one concludes that $u' =
            s'$ closing this case.
        \end{itemize}
    \end{itemize}

\item[\bltI] $\cbvToBangAGK{t} = \bangFCtxt<\der{\bangLCtxt<\oc s>}>$
    which is impossible by since the translation produces term in
    $\bangFCtxtSet<\bangSymbBang>$-normal form.
\end{itemize}
\end{proof}

This detailed simulation can be used to get iterated preservations
that

\begin{lemma}
    \label{lem: cbv preservation restricted}
    Let $\cbvECtxtSet \subseteq \cbvFCtxtSet$, $\bangECtxtSet
    \subseteq \bangFCtxtSet$ be two families of contexts such that:
    \begin{enumerate}
    \item Let $\cbvECtxt \in \cbvECtxtSet$, then %
        $\cbvToBangAGK{\cbvECtxt}, %
        \cbvToBangAGK{\cbvECtxt}\bangCtxtPlug{\der{\Hole}}, %
        \cbvToBangAGK[*]{\cbvECtxt}, %
        \cbvToBangAGK[*]{\cbvRmDst{\cbvECtxt}} \in \bangECtxtSet$.

    \item Let $\bangECtxt \in \bangECtxtSet$ and $\cbvFCtxt \in
        \cbvFCtxtSet$ such that either $\cbvToBangAGK{\cbvFCtxt} =
        \bangECtxt$ or $\cbvToBangAGK[*]{\cbvFCtxt} = \bangECtxt$ or
        $\cbvToBangAGK{\cbvFCtxt}\bangCtxtPlug{\der{\Hole}} =
        \bangECtxt$, then $\cbvFCtxt \in \cbvECtxtSet$.

    \item $\bangECtxtSet$ admits a diligence process.

    \item Let $t \in \setCbvTerms$ and $u' \in \setBangTerms$ such
        that $\cbvToBangAGK{t} \bangArrSet*_E u'$ where $u'$ is a
        \bangENF<d!>, then it is also a \bangSetFNF<d!>.
    \end{enumerate}
    Then, the following properties hold:
    \begin{itemize}
    \item[\bltI] \textbf{(Normal Forms)}: %
        Let $t \in \setCbvTerms$, then: \quad%
        \begin{equation*}
            t \text{ is a \cbvSetENF}
                \quad\Leftrightarrow\quad
            \cbvToBangAGK{t} \text{ is a \bangSetENF}
        \end{equation*}

    \item[\bltI] \textbf{(Stability)}: %
        Let $t, \in \setCbvTerms$ and $u \in \setBangTerms$ where $u$
        is a \bangENF<d!>, then:
        \begin{equation*}
            \cbvToBangAGK{t} \bangArrSet*_E u'
                \quad \Rightarrow \quad
            \exists\, u \in \setCbvTerms, \; \cbvToBangAGK{u} = u'
        \end{equation*}

    \item[\bltI] \textbf{(Simulation and Reverse Simulation)}: %
        Let $t, u \in \setCbvTerms$, then:
        \begin{equation*}
            t \cbvArrSet*_E u
                \quad \Leftrightarrow \quad
            \cbvToBangAGK{t} \bangArrSet*_E \cbvToBangAGK{u}
        \end{equation*}
        Moreover, the number of $\cbvSymbBeta/\cbvSymbSubs$-steps
        matches the number $\bangSymbBeta/\bangSymbSubs$-steps.
    \end{itemize}
\end{lemma}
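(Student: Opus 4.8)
The plan is to reproduce the argument used for the \CBNSymb analogue (\Cref{lem: cbn preservation restricted}), the only new ingredient being a diligence argument to cope with the administrative $\bangSymbBang$-steps that now appear in simulation. First I would isolate two \emph{restricted one-step} statements for $\bangArrSet_E$, then derive the three properties from them. For restricted one-step simulation: if $t\cbvArr_E<R> u$ with $\cbvECtxt\in\cbvECtxtSet$ and $\rel\in\{\cbvSymbBeta,\cbvSymbSubs\}$, then \Cref{lem:cbv detailed one step simulation} gives $\cbvToBangAGK{t}\bangArr_F<\cbvToBangAGK{\rel}>\bangArrSet*_F<d!>\cbvToBangAGK{u}$, where the context of the computational step and those of the administrative steps are, in every case of that lemma, among $\cbvToBangAGK{\cbvECtxt}$, $\cbvToBangAGK{\cbvECtxt}\bangCtxtPlug{\der{\Hole}}$, $\cbvToBangAGK[*]{\cbvECtxt}$ and $\cbvToBangAGK[*]{\cbvRmDst{\cbvECtxt}}$. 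Hypothesis~1 places all of these in $\bangECtxtSet$, so in fact $\cbvToBangAGK{t}\bangArr_E<\cbvToBangAGK{\rel}>\bangArrSet*_E<d!>\cbvToBangAGK{u}$; as $\cbvToBangAGK{t}$ is always a \bangSetFNF<d!>, hence a \bangSetENF<d!>, this is a diligent step, $\cbvToBangAGK{t}\bangArrSet*_{Eai}\cbvToBangAGK{u}$.

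For restricted one-step reverse simulation: suppose $\cbvToBangAGK{t}\bangArr_E<\rel'>\bangArrSet*_E<d!> u'$ with $u'$ a \bangSetENF<d!>. By hypothesis~4, $u'$ is a \bangSetFNF<d!>, and since $\bangECtxtSet\subseteq\bangFCtxtSet$ the sequence is of the form $\bangArr_F<\rel'>\bangArrSet*_F<d!>$; \Cref{lem:cbv detailed one step reverse simulation} then yields $u\in\setCbvTerms$, $\rel\in\{\cbvSymbBeta,\cbvSymbSubs\}$ with $\cbvToBangAGK{\rel}=\rel'$, and $\cbvFCtxt\in\cbvFCtxtSet$ such that $\cbvToBangAGK{u}=u'$ and $t\cbvArr_F<R> u$, the context of the $\bangArr_F<\rel'>$-step — which is the context of the original $\bangArr_E<\rel'>$-step, hence an element of $\bangECtxtSet$ — being $\cbvToBangAGK{\cbvFCtxt}$, $\cbvToBangAGK[*]{\cbvFCtxt}$ or $\cbvToBangAGK{\cbvFCtxt}\bangCtxtPlug{\der{\Hole}}$. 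Hypothesis~2 then forces $\cbvFCtxt\in\cbvECtxtSet$, so $t\cbvArr_E<R> u$.

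From these, the three properties follow as in \CBNSymb, the diligence hypothesis entering at exactly one point. Since $\cbvToBangAGK{t}$ is a \bangSetFNF<d!> it has no $\bangSymbBang$-redex, so any $\bangArrSet_E$-step out of it is computational and can be completed, via the terminating reduction $\bangArrSet_E<d!>\subseteq\bangArrSet_F<d!>$ (\Cref{lem:->F<d!>SN}), to a \bangSetENF<d!> — hence, by hypothesis~4, \bangSetFNF<d!> — term, which restricted one-step reverse simulation projects back to a $\cbvArrSet_E$-step from $t$; together with restricted one-step simulation this gives the equivalence ``$t$ is a \cbvSetENF\ iff $\cbvToBangAGK{t}$ is a \bangSetENF'', and iterating restricted one-step simulation gives the forward implication $t\cbvArrSet*_E u\Rightarrow\cbvToBangAGK{t}\bangArrSet*_E\cbvToBangAGK{u}$ with matching $\cbvSymbBeta/\cbvSymbSubs$ versus $\bangSymbBeta/\bangSymbSubs$ counts. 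For \emph{stability} and the backward implication of \emph{simulation}, let $\cbvToBangAGK{t}\bangArrSet*_E u'$ with $u'$ a \bangSetENF<d!>. By hypothesis~3 the sequence can be made diligent, $\cbvToBangAGK{t}\bangArrSet*_{Eai} u'$; since $\cbvToBangAGK{t}$ is a \bangSetFNF<d!> and $u'$ a \bangSetENF<d!>, it decomposes into consecutive blocks $r_{k-1}\bangArr_E<\rel_k>\bangArrSet*_E<d!> r_k$ for $1\le k\le n$, with $r_0=\cbvToBangAGK{t}$, $r_n=u'$, $\rel_k\in\{\bangSymbBeta,\bangSymbSubs\}$, each $r_k$ being a \bangSetENF<d!> — hence, since $\cbvToBangAGK{t}\bangArrSet*_E r_k$ and by hypothesis~4, a \bangSetFNF<d!>. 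Setting $u_0:=t$ and applying restricted one-step reverse simulation to one block at a time, an induction on $k$ produces $u_k\in\setCbvTerms$ with $\cbvToBangAGK{u_k}=r_k$ and $u_{k-1}\cbvArr_E<R_k> u_k$; thus $\cbvToBangAGK{u_n}=u'$ (stability) and $t\cbvArrSet*_E u_n$. If moreover $u'=\cbvToBangAGK{u}$, injectivity of $\cbvToBangAGK{\cdot}$ gives $u_n=u$, so $t\cbvArrSet*_E u$; as the diligence process leaves the numbers of $\bangSymbBeta$- and $\bangSymbSubs$-steps unchanged and each block yields exactly one $\cbvSymbBeta$- or $\cbvSymbSubs$-step of the corresponding kind, the count statement holds.

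The main obstacle is the block decomposition of the diligent $\bangArrSet_E$-sequence and its induction invariant: each intermediate term $r_k$ must be \emph{both} in the image of $\cbvToBangAGK{\cdot}$ (so that \Cref{lem:cbv detailed one step reverse simulation} can be reapplied with $t$ replaced by $u_{k-1}$) \emph{and} a \bangSetFNF<d!> (so that the lemma's hypothesis is met) — the former delivered by the induction, the latter by hypothesis~4 — while one simultaneously tracks that the contexts arising along the way stay inside $\bangECtxtSet$ so that hypothesis~2 keeps applying. Everything else is a routine transcription of the \CBNSymb argument.
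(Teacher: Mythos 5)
Your proposal is correct and follows essentially the same route as the paper: restricted one-step simulation and reverse simulation obtained from \Cref{lem:cbv detailed one step simulation,lem:cbv detailed one step reverse simulation} together with hypotheses 1, 2 and 4, then hypothesis 3 (diligence) to reorganize an arbitrary $\bangArrSet_E$-sequence and induct on it, each block of one computational step plus administrative steps projecting to one \CBVSymb step. Your explicit block decomposition is just a more spelled-out version of the paper's induction on the length of the diligent sequence, so there is nothing substantive to add.
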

\begin{proof}
Let us first show that the following two properties hold:
\begin{itemize}
\item[\bltI] \textbf{(One-Step Simulation)}: %
    Let $t, u \in \setCbvTerms$, then:
    \begin{equation*}
        t \cbvArrSet_E u
            \quad \Rightarrow \quad
        \cbvToBangAGK{t} \bangArrSet*_E \cbvToBangAGK{u}
    \end{equation*}

\item[\bltI] \textbf{(One-Step Reverse Simulation)}: %
    Let $t \in \setCbvTerms$ and $u' \in \setBangTerms$ where $u'$ is
    a \bangENF<d!>, then:
    \begin{equation*}
        \cbvToBangAGK{t} \bangArrSet_E\bangArrSet*_E<d!> u'
            \quad \Rightarrow \quad
        \exists\, u \in \setCbvTerms, \;
            \cbvToBangAGK{u} = u'
        \text{ and }
            t \bangArrSet_E u
    \end{equation*}
\end{itemize}
In particular, $\cbvSymbBeta$-steps (\resp $\cbvSymbSubs$-steps) are
simulated by $\bangSymbBeta$-steps (\resp $\bangSymbSubs$-steps) with
possible administrative steps.

~

We distinguish the two cases:
\begin{itemize}
\item[\bltI] \textbf{(One-Step Simulation)}: Let $t, u \in
    \setCbnTerms$ such that $t \cbvArrSet_E u$. Thus $t =
    \cbvECtxt<t'> \cbvArr_E<R> \cbvECtxt<u'>$ for some $\cbvECtxt \in
    \cbvECtxtSet$, $\rel \in \{\cbvSymbBeta, \cbvSymbSubs\}$ and $t',
    u' \in \setCbvTerms$. Since $\cbvECtxtSet \subseteq \cbvFCtxtSet$
    then using \CBVSymb one-step simulation %
    (\Cref{lem:cbv detailed one step simulation}), we distinguish the
    following cases:
    \begin{itemize}
    \item[\bltI] When $\rel = \cbvSymbBeta$:
        \begin{itemize}
        \item[\bltII] If $\cbvFuncPred{\cbvECtxt}$ and
            $\bangBangPred{\cbvToBangAGK{u'}}$:
            $\cbvToBangAGK{(\cbvECtxt<t'>)}
            \bangArr_{\bangFCtxt_1\bangCtxtPlug{\cbvToBangAGK{\rel}}}
            \bangArr_{F_2<\bangSymbBang>}
            \bangArr_{F_3<\bangSymbBang>}
            \cbvToBangAGK{(\cbvECtxt<u'>)}$ with $\bangFCtxt_1 =
            \cbvToBangAGK{\cbvECtxt}\bangCtxtPlug{\der{\Hole}}$,
            $\bangFCtxt_2 = \cbvToBangAGK{\cbvECtxt}$ and
            $\bangFCtxt_3 = \cbvToBangAGK[*]{\cbvRmDst{\cbvECtxt}}$.
            Since $\cbvECtxt \in \cbvECtxtSet$, then by hypothesis,
            $\bangFCtxt_1, \bangFCtxt_2, \bangFCtxt_3 \in
            \bangECtxtSet$ so that $\cbvToBangAGK{(\cbvECtxt<t'>)}
            \bangArrSet_{E\bangCtxtPlug{\cbvToBangAGK{\rel}}}
            \bangArrSet_{E<d!>} \bangArrSet_{E<d!>}
            \cbvToBangAGK{(\cbvECtxt<u'>)}$ and by transitivity
            $\cbvToBangAGK{t} \bangArrSet*_E \cbvToBangAGK{u}$.

        \item[\bltII] Otherwise: $\cbvToBangAGK{(\cbvECtxt<t'>)}
            \bangArr_{\bangFCtxt_1\bangCtxtPlug{\cbvToBangAGK{\rel}}}
            \bangArr_{F_2<\bangSymbBang>}
            \cbvToBangAGK{(\cbvECtxt<u'>)}$ with $\bangFCtxt_1 =
            \cbvToBangAGK{\cbvECtxt}\bangCtxtPlug{\der{\Hole}}$ and
            $\bangFCtxt_2 = \cbvToBangAGK{\cbvECtxt}$. Since
            $\cbvECtxt \in \cbvECtxtSet$, then by hypothesis,
            $\bangFCtxt_1, \bangFCtxt_2 \in \bangECtxtSet$ so that
            $\cbvToBangAGK{(\cbvECtxt<t'>)}
            \bangArrSet_{E\bangCtxtPlug{\cbvToBangAGK{\rel}}}
            \bangArrSet_{E<d!>} \cbvToBangAGK{(\cbvECtxt<u'>)}$ and by
            transitivity $\cbvToBangAGK{t} \bangArrSet*_E
            \cbvToBangAGK{u}$.
        \end{itemize}

    \item[\bltI] When $\rel = \cbvSymbSubs$:
        \begin{itemize}
        \item[\bltII] If $\cbvFuncPred{\cbvECtxt}$ and
            $\bangBangPred{\cbvToBangAGK{u'}}$:
            $\cbvToBangAGK{(\cbvECtxt<t'>)}
            \bangArr_{\bangFCtxt\bangCtxtPlug{\cbvToBangAGK{\rel}}}
            \cbvToBangAGK{(\cbvECtxt<u'>)}$ with $\bangFCtxt =
            \cbvToBangAGK[*]{\cbvECtxt}$. Since $\cbvECtxt \in
            \cbvECtxtSet$, then by hypothesis, $\bangFCtxt \in
            \bangECtxtSet$ so that $\cbvToBangAGK{(\cbvECtxt<t'>)}
            \bangArrSet_{E\bangCtxtPlug{\cbvToBangAGK{\rel}}}
            \cbvToBangAGK{(\cbvECtxt<u'>)}$ thus $\cbvToBangAGK{t}
            \bangArrSet_E \cbvToBangAGK{u}$.

        \item[\bltII] Otherwise: $\cbvToBangAGK{(\cbvECtxt<t'>)}
            \bangArr_{F<\cbvToBangAGK{\rel}>}
            \cbvToBangAGK{(\cbvECtxt<u'>)}$ with $\bangFCtxt =
            \cbvToBangAGK{\cbvECtxt}$. Since $\cbvECtxt \in
            \cbvECtxtSet$, then by hypothesis, $\bangFCtxt \in
            \bangECtxtSet$ so that $\cbvToBangAGK{(\cbvECtxt<t'>)}
            \bangArrSet_{E\bangCtxtPlug{\cbvToBangAGK{\rel}}}
            \cbvToBangAGK{(\cbvECtxt<u'>)}$ thus $\cbvToBangAGK{t}
            \bangArrSet_E \cbvToBangAGK{u}$.
        \end{itemize}
    \end{itemize}

\item[\bltI] \textbf{(One-Step Reverse Simulation)}: Let $t \in
    \setCbvTerms$, $u' \in \setBangTerms$ and $\rel' \in
    \{\bangSymbBeta, \bangSymbSubs, \bangSymbBang\}$ where $u'$ is a
    \bangENF<d!> and $\cbvToBangAGK{t} \bangArrSet_E u'$. By
    hypothesis, one deduces that $u'$ is a \bangSetFNF<d!>. By
    definition, $\cbvToBangAGK{t} \bangArr_E<R'> u'$ for some
    $\bangECtxt \in \bangECtxtSet$. Since $\bangECtxtSet \subseteq
    \bangFCtxtSet$, then using \CBVSymb one-step reverse simulation
    (\Cref{lem:cbv detailed one step reverse simulation}), there
    exists $u \in \setBangTerms$, $\rel \in \{\cbvSymbBeta,
    \cbvSymbSubs\}$ and $\cbvFCtxt \in \cbvFCtxtSet$ such that $t
    \cbvArr_{F<R>} u$ with $\cbvToBangAGK{u} = u'$ and $\rel' =
    \cbvToBangAGK{\rel}$. Moreover, $\bangECtxt$ is either
    $\cbvToBangAGK{\cbvFCtxt}$, $\cbvToBangAGK[*]{\cbvFCtxt}$ or
    $\cbvToBangAGK{\cbvFCtxt}\bangCtxtPlug{\der{\Hole}}$ thus by
    hypothesis $\cbvFCtxt \in \cbvECtxtSet$ and therefore $t
    \cbvArrSet_E u$ which concludes this case.
\end{itemize}
Using these, let us now prove the three expected properties:
\begin{itemize}
\item[\bltI] \textbf{(Normal Forms)}: %
    Direct consequence of simulation and reverse simulation.

\item[\bltI] \textbf{(Stability)}: %
    Let $t \in \setCbvTerms$ and $u' \in \setBangTerms$ where $u'$ is
    a \bangENF<d!> and $\cbvToBangAGK{t} \bangArrSet*_E u'$. Then by
    hypothesis, one has that $u'$ is a \bangSetFNF<d!> and since
    $\bangECtxtSet$ admits a diligence process, we deduce that
    $\cbvToBangAGK{t} \bangArrSet*_{Eai} u'$. Let us proceed by
    induction on the length of $\cbvToBangAGK{t} \bangArrSet*_{Eai}
    u'$:
    \begin{itemize}
    \item[\bltII] $\cbvToBangAGK{t} \bangArrSet^0_E u'$: Then $u' =
        \cbvToBangAGK{t}$ and taking $u = t$ trivially concludes this
        case.
    \item[\bltII] $\cbvToBangAGK{t} \bangArrSet_E<R>\bangArrSet*_E<d!>
        s' \bangArrSet*_{Eai} u'$ with $s' \in \setBangTerms$ where
        $s'$ is a \bangENF<d!>: Using one-step reverse simulation,
        there exists $s \in \setCbvTerms$ such that $\cbvToBangAGK{s}
        = s'$ and thus by \ih on $\cbvToBangAGK{s} \bangArrSet*_E u'$,
        one concludes that there exists $u \in \setCbvTerms$ such that
        $\cbvToBangAGK{u} = u'$.
    \end{itemize}

\item[\bltI] \textbf{(Simulation)}: %
    Let $t, u \in \setCbvTerms$ such that $t \cbvArrSet*_E u$. By
    induction on the length of $t \cbvArrSet*_E u$:
    \begin{itemize}
    \item[\bltII] $t \cbvArrSet^0_E u$: Then $t = u$ thus
        $\cbvToBangAGK{t} = \cbvToBangAGK{u}$ and by reflexivity
        $\cbvToBangAGK{t} \bangArrSet*_E \cbvToBangAGK{u}$.
    \item[\bltII] $t \cbvArrSet_E s \cbvArrSet*_E u$: On one hand
      $\cbvToBangAGK{t} \cbvArrSet*_E \cbvToBangAGK{s}$ using one-step
      simulation. On the other hand, by \ih on $s
        \cbvArrSet*_E$, one has that $\cbvToBangAGK{s} \bangArrSet*_E
        \cbvToBangAGK{u}$. Finally, one concludes by transitivity that
        $\cbvToBangAGK{t} \bangArrSet*_E \cbvToBangAGK{u}$.
    \end{itemize}

\item[\bltI] \textbf{(Reverse Simulation)}: %
    Let $t, u \in \setCbvTerms$ such that $\cbvToBangAGK{t}
    \bangArrSet*_E \cbvToBangAGK{u}$. By contruction, one has that $u$
    is a \bangSetFNF<d!> and by diligence $\cbvToBangAGK{t}
    \bangArrSet*_{Eai} \cbvToBangAGK{u}$. By induction on the length
    of $\cbvToBangAGK{t} \bangArrSet*_{Eai} \cbvToBangAGK{u}$:
    \begin{itemize}
    \item[\bltII] $\cbvToBangAGK{t} \bangArrSet^0_{Eai}
        \cbvToBangAGK{u}$: Then $\cbvToBangAGK{t} = \cbvToBangAGK{u}$
        thus by injectivity $t = u$ and by reflexivity $t
        \cbvArrSet*_E u$.
    \item[\bltII] $\cbvToBangAGK{t} \bangArrSet_E \bangArrSet*_E<d!>
        s' \bangArrSet*_{Eai} \cbvToBangAGK{u}$ for some $s' \in
        \setBangTerms$ where $s'$ is a \bangENF<d!>: On one hand,
        using one-step reverse simulation, there exists $s \in
        \setCbvTerms$ such that $t \cbvArrSet_E s$ with
        $\cbvToBangAGK{s} = s'$. On the other hand, by \ih on
        $\cbvToBangAGK{s} \bangArrSet*_{Eai} \cbvToBangAGK{u}$, one
        has that $s \cbvArrSet*_E u$. Finally, one concludes by
        transitivity that $t \cbvArrSet*_E u$.
    \end{itemize}
\end{itemize}
\end{proof}

\begin{lemma}
    \label{lem:cbv_full_context_stability}
    The following property hold:
    \begin{enumerate}
    \item Let $\cbvFCtxt \in \cbvFCtxtSet$, then %
        $\cbvToBangAGK{\cbvFCtxt}, %
        \cbvToBangAGK{\cbvFCtxt}\bangCtxtPlug{\der{\Hole}}, %
        \cbvToBangAGK[*]{\cbvFCtxt}, %
        \cbvToBangAGK[*]{\cbvRmDst{\cbvFCtxt}} \in \bangFCtxtSet$.

    \item Let $\bangFCtxt \in \bangFCtxtSet$ and $\cbvFCtxt \in
        \cbvFCtxtSet$ such that either $\cbvToBangAGK{\cbvFCtxt} =
        \bangFCtxt$ or $\cbvToBangAGK[*]{\cbvFCtxt} = \bangFCtxt$ or
        $\cbvToBangAGK{\cbvFCtxt}\bangCtxtPlug{\der{\Hole}} =
        \bangFCtxt$, then $\cbvFCtxt \in \cbvFCtxtSet$.

    \item Let $t \in \setCbvTerms$ and $u' \in \setBangTerms$ such
        that $\cbvToBangAGK{t} \bangArrSet*_F u'$ where $u'$ is a
        \bangSetFNF<d!>, then it is also a \bangSetFNF<d!>.
    \end{enumerate}
\end{lemma}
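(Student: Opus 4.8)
The plan is to prove the three parts essentially independently, with almost all the work concentrated in Part~1; Parts~2 and~3 are immediate. Part~2 holds by hypothesis, since $\cbvFCtxt \in \cbvFCtxtSet$ is already assumed --- exactly as in \Cref{lem:cbn_full_context_stability}.2 for \CBNSymb, this clause records a vacuous fact that is nonetheless convenient when instantiating \Cref{lem: cbv preservation restricted} with the family $\bangECtxtSet = \bangFCtxtSet$. Part~3 is also trivial: when $\bangECtxtSet = \bangFCtxtSet$ the notions \bangSetENF<d!> and \bangSetFNF<d!> coincide, so the conclusion is literally the hypothesis; more generally, since $\bangArrSet_F<d!> \subseteq \bangArrSet_F$, every $\bangFCtxtSet$-normal form is normal for $\bangArrSet_F<d!>$, i.e.\ a \bangSetFNF<d!>.

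For Part~1, I would run a single induction on the structure of $\cbvFCtxt \in \cbvFCtxtSet$, establishing the four membership claims simultaneously. The base case $\cbvFCtxt = \Hole$ is immediate: $\cbvToBangAGK{\Hole} = \cbvToBangAGK[*]{\Hole} = \cbvRmDst{\Hole} = \Hole \in \bangFCtxtSet$ and $\Hole\bangCtxtPlug{\der{\Hole}} = \der{\Hole} \in \bangFCtxtSet$. In each inductive case ($\abs{x}{\cbvFCtxt'}$, $\app[\,]{\cbvFCtxt'}{t}$, $\app[\,]{t}{\cbvFCtxt'}$, $\cbvFCtxt'\esub{x}{t}$, $t\esub{x}{\cbvFCtxt'}$) I would unfold the corresponding clause of \Cref{def:cbvAGK_Embedding}, of \Cref{def:cbvAGK_Embedding_Superdev}, and of $\cbvRmDst{\cdot}$, apply the induction hypothesis to $\cbvFCtxt'$, and observe that every branch of the (possibly three-way) case splits in those clauses plugs the recursively translated context into a fixed \BANGSymb full-context shape built from abstractions, applications, derelictions and explicit substitutions around a single hole; hence the outcome is again a full \BANGSymb context. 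The only steps that are not purely syntactic bookkeeping are the facts that the auxiliary operations $\bangStrip{\cdot}$ and $\cbvRmDst{\cdot}$ preserve the relevant context classes.

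Accordingly the single point worth isolating is: if $\bangFCtxt \in \bangFCtxtSet$ then $\bangStrip{\bangFCtxt} \in \bangFCtxtSet$, and if $\cbvFCtxt \in \cbvFCtxtSet$ then $\cbvRmDst{\cbvFCtxt} \in \cbvFCtxtSet$. The first follows from the definition of $\bangStrip{\cdot}$ on contexts by a two-case analysis: if $\bangFCtxt = \bangLCtxt<\oc\bangFCtxt'>$ then $\bangStrip{\bangFCtxt} = \bangLCtxt<\bangFCtxt'>$, a list of explicit substitutions around a full context, hence in $\bangFCtxtSet$; if $\bangFCtxt = \bangLCtxt_1<\bangLCtxt_2<\oc t>\esub{x}{\bangFCtxt'}>$ then $\bangStrip{\bangFCtxt} = \bangLCtxt_1<\bangLCtxt_2<t>\esub{x}{\bangFCtxt'}>$, again in $\bangFCtxtSet$ because $\bangLCtxt_2<t>\esub{x}{\bangFCtxt'}$ is a closure with the hole in its argument. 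The second is a routine induction on $\cbvFCtxt$, the only clause requiring attention being $\cbvFCtxt = \cbvFCtxt'\esub{x}{t}$, where both possible results $\Hole$ and $\cbvRmDst{\cbvFCtxt'}\esub{x}{t}$ lie in $\cbvFCtxtSet$. I expect the main (still mild) obstacle to be keeping the guards straight in the application clauses of $\cbvToBangAGK[*]{\cdot}$, namely the side conditions $\cbvFCtxt' \in \cbvLCtxtSet$, $\bangBangPred{\cbvToBangAGK{\cbvFCtxt'}}$ and $\bangBangPred{\cbvToBangAGK{t}}$: one must check the claim along each branch. Since every branch nonetheless lands in $\bangFCtxtSet$, irrespective of which guard fires, this case split only lengthens the argument rather than threatening it.
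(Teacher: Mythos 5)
Your proposal is correct and matches the paper's own argument: the paper dismisses Part~1 as ``by definition of $\cbvToBangAGK{\cdot}$, $\cbvToBangAGK[*]{\cdot}$ and $\cbvRmDst{\cdot}$'' and Parts~2--3 as ``by hypothesis'', which is exactly what your structural induction (together with the easy closure of $\bangFCtxtSet$ and $\cbvFCtxtSet$ under $\bangStrip{\cdot}$ and $\cbvRmDst{\cdot}$) and your reading of the last two clauses spell out in detail.
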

\begin{proof} ~
    \begin{enumerate}
    \item By definition of $\cbvToBangAGK{\cdot},
        \cbvToBangAGK[*]{\cdot}$ and $\cbvRmDst{\cdot}$.

    \item By hypothesis.

    \item By hypothesis.
    \end{enumerate}
\end{proof}

\RecCbvPropFull*
\begin{proof}
    Using \Cref{lem: cbv preservation restricted} with
    \Cref{lem:cbv_full_context_stability} and
    \Cref{lem:Implicitation}.
    \qed
\end{proof}

\begin{lemma}
    \label{lem:cbv_surface_context_stability}
    The following property hold:
    \begin{enumerate}
    \item Let $\cbvSCtxt \in \cbvSCtxtSet$, then %
        $\cbvToBangAGK{\cbvSCtxt}, %
        \cbvToBangAGK{\cbvSCtxt}\bangCtxtPlug{\der{\Hole}}, %
        \cbvToBangAGK[*]{\cbvSCtxt}, %
        \cbvToBangAGK[*]{\cbvRmDst{\cbvSCtxt}} \in \bangSCtxtSet$.

    \item Let $\bangSCtxt \in \bangSCtxtSet$ and $\cbvFCtxt \in
        \cbvFCtxtSet$ such that either $\cbvToBangAGK{\cbvFCtxt} =
        \bangSCtxt$ or $\cbvToBangAGK[*]{\cbvFCtxt} = \bangSCtxt$ or
        $\cbvToBangAGK{\cbvFCtxt}\bangCtxtPlug{\der{\Hole}} =
        \bangSCtxt$, then $\cbvFCtxt \in \cbvSCtxtSet$.
    \end{enumerate}
\end{lemma}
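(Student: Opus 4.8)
The plan is to prove both points by structural induction on the relevant context, after isolating three routine auxiliary facts. First I would check, by an easy induction on the context, that $\cbvRmDst{\cdot}$ sends $\cbvSCtxtSet$ into $\cbvSCtxtSet$ (it only erases some $\der{\cdot}$'s and some trailing explicit substitutions lying on the spine above the hole, none of which can break surfaceness), that plugging $\der{\Hole}$ into the hole of a Bang surface context again yields a Bang surface context (since $\der{\cdot}$ is a surface constructor), and --- the least trivial of the three --- that whenever $\bangFCtxt \in \bangSCtxtSet$ and $\bangBangPred{\bangFCtxt}$ then $\bangStrip{\bangFCtxt} \in \bangSCtxtSet$: the $\oc$ erased by $\bangStrip{\cdot}$ always lies at list-context depth above the head of the hole or above its enclosing explicit substitution, so it is never the $\oc$ directly surrounding the hole, which in any case is impossible because $\bangFCtxt$ is surface.

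For Point 1 I would proceed by induction on $\cbvSCtxt \in \cbvSCtxtSet$. The crucial structural observation is that the grammar of $\cbvSCtxtSet$ omits the clause $\abs{x}{\cbvSCtxt'}$, while the abstraction clauses of $\cbvToBangAGK{\cdot}$ and $\cbvToBangAGK[*]{\cdot}$ are precisely the only ones placing the recursive sub-context underneath a $\oc$; hence those clauses never fire on a surface context. All the remaining clauses ($\Hole$, application on either side, explicit substitution on either side, together with the $\der{\cdot}$ and the $\bangStrip{\cdot}$ introduced in the application cases) keep the hole out of every $\oc$, so by the induction hypothesis and the auxiliary facts one gets $\cbvToBangAGK{\cbvSCtxt} \in \bangSCtxtSet$ and $\cbvToBangAGK[*]{\cbvSCtxt} \in \bangSCtxtSet$. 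The two remaining objects, $\cbvToBangAGK{\cbvSCtxt}\bangCtxtPlug{\der{\Hole}}$ and $\cbvToBangAGK[*]{\cbvRmDst{\cbvSCtxt}}$, then follow immediately from the $\der{\Hole}$-plugging fact and from $\cbvRmDst{\cbvSCtxt} \in \cbvSCtxtSet$.

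For Point 2 I would do induction on $\bangSCtxt \in \bangSCtxtSet$ with a case analysis on its head constructor, reading off from the definitions of $\cbvToBangAGK{\cdot}$, $\cbvToBangAGK[*]{\cdot}$ and from the optional $\der{\Hole}$-wrapping which shapes of $\cbvFCtxt$ are compatible. The case $\bangSCtxt = \Hole$ forces $\cbvFCtxt = \Hole \in \cbvSCtxtSet$. If the head of $\bangSCtxt$ is an application or an explicit substitution, then $\cbvFCtxt$ is forced to be an application or explicit-substitution context (possibly after peeling off the $\der$'s and the super-development of the embedding, exactly as in \Cref{lem:cbvAGK_dB_redex_Reverse_Translated} and \Cref{lem:cbvAGK_s!_redex_Reverse_Translated}), its recursive sub-context translates to a surface sub-context of $\bangSCtxt$ of strictly smaller size, the induction hypothesis applies, and reassembling gives $\cbvFCtxt \in \cbvSCtxtSet$; the head of $\bangSCtxt$ cannot be a $\oc$ at all, which is exactly the clause that would have forced $\cbvFCtxt = \abs{x}{\cbvFCtxt'}$, i.e.\ a non-surface CBV context, so that dangerous possibility never arises. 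The head case $\bangSCtxt = \der{\bangSCtxt'}$ is handled by unfolding the application clause (and the optional $\der{\Hole}$), again reducing to a surface sub-context.

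I expect the main obstacle to be the bookkeeping inside the application clauses for Point 2: $\cbvToBangAGK{\cdot}$ and $\cbvToBangAGK[*]{\cdot}$ are mutually recursive and branch three ways according to the predicate $\bangBangPred{\cdot}$ and to whether super-development applies, and $\bangStrip{\cdot}$ has to be threaded through these branches. Concretely, when $\bangSCtxt$ has the form $\der{\app[\,]{\bangSCtxt'}{s'}}$ one must determine from $\bangSCtxt'$ whether the original CBV hole sat to the left of an application under a stripped bang and then invoke the induction hypothesis on the appropriate stripped sub-context; this mirrors the inversion reasoning already carried out in \Cref{lem:cbvAGK_dB_redex_Reverse_Translated} and \Cref{lem:cbvAGK_s!_redex_Reverse_Translated} and can be organised the same way. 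Once the auxiliary fact that $\bangStrip{\cdot}$ preserves surfaceness is secured, no genuinely new difficulty remains.
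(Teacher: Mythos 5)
Your treatment of Point 1 is essentially the paper's own argument: first reduce the $\cbvToBangAGK{\cbvSCtxt}\bangCtxtPlug{\der{\Hole}}$ and $\cbvToBangAGK[*]{\cbvRmDst{\cbvSCtxt}}$ cases to the two embeddings via the observations that $\der{\Hole}$-plugging preserves $\bangSCtxtSet$ and that $\cbvRmDst{\cdot}$ preserves $\cbvSCtxtSet$, then induct on $\cbvSCtxt$, exploiting that the abstraction clause (the only one placing the hole under a $\oc$) is absent from the grammar of $\cbvSCtxtSet$; your general fact that $\bangStrip{\cdot}$ maps surface contexts to surface contexts plays the role of the paper's inline induction and is correct (up to the small bookkeeping that the application clause of $\cbvToBangAGK[*]{\cdot}$ tests $\bangBangPred{\cbvToBangAGK{\cbvFCtxt}}$ but strips $\cbvToBangAGK[*]{\cbvFCtxt}$). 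The skeleton of Point 2 (induction on $\bangSCtxt$, head-constructor analysis, impossibility of $\oc$-headed and bare abstraction/application heads, unfolding the $\der$-headed application clauses) also matches the paper.

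The gap is in the very sub-case of Point 2 you flag as the obstacle, and the tool you invoke does not close it. When $\bangSCtxt = \der{\bangSCtxt'}$ comes from $\cbvFCtxt = \app[\,]{\cbvFCtxt'}{t}$ with $\bangBangPred{\cbvToBangAGK{\cbvFCtxt'}}$, the hypothesis is that $\bangStrip{\cbvToBangAGK{\cbvFCtxt'}} = \bangSCtxt'$ is surface; you do \emph{not} know that $\cbvToBangAGK{\cbvFCtxt'}$ is surface (that is, morally, what you are proving), so ``$\bangStrip{\cdot}$ preserves surfaceness'' points in the wrong direction, and the induction hypothesis is not directly applicable because $\bangSCtxt'$ is not of any of the three shapes $\cbvToBangAGK{\cbvFCtxt''}$, $\cbvToBangAGK[*]{\cbvFCtxt''}$, $\cbvToBangAGK{\cbvFCtxt''}\bangCtxtPlug{\der{\Hole}}$. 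What is actually needed — and what the paper's proof supplies in its dedicated analysis of this case — is an inversion on the two shapes of $\bangBangPred{\cbvToBangAGK{\cbvFCtxt'}}$: if the witnessing $\oc$ lies on the hole path, i.e.\ $\cbvToBangAGK{\cbvFCtxt'} = \bangLCtxt\bangCtxtPlug{\oc\bangSCtxt''}$, then (by induction on $\bangLCtxt$) $\cbvFCtxt'$ must be an abstraction context under a chain of ES, whose embedding $\oc\abs{x}{\oc\cbvToBangAGK{\cbvFCtxt''}}$ carries a \emph{second, inner} $\oc$ around the hole that survives the strip, contradicting surfaceness of $\bangSCtxt'$; otherwise the $\oc$ lies inside an ES argument, $\cbvToBangAGK{\cbvFCtxt'} = \bangLCtxt_1\bangCtxtPlug{\bangLCtxt_2\bangCtxtPlug{\oc u'}\esub{x}{\bangSCtxt''}}$, which forces $\cbvFCtxt'$ to be an ES chain with its hole in an argument and $\cbvToBangAGK{\cbvFCtxt''} = \bangSCtxt''$, and only then does the induction hypothesis apply. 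Note that the converse of your fact (c) is false for arbitrary contexts ($\oc\Hole$ strips to $\Hole$), so this step genuinely needs the shape of the embedding; your pointer to the inversion style of \Cref{lem:cbvAGK_dB_redex_Reverse_Translated,lem:cbvAGK_s!_redex_Reverse_Translated} is the right instinct, but the dichotomy and the inner-$\oc$ contradiction are the missing content, so the claim that ``no genuinely new difficulty remains'' once fact (c) is secured is not accurate. (A minor imprecision: when $\bangSCtxt$ is headed by a bare application or abstraction, no $\cbvFCtxt$ exists at all — these cases are vacuous rather than invertible as you describe.)
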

\begin{proof} ~
    \begin{enumerate}
    \item Notice that if $\cbvSCtxt \in \cbvSCtxtSet$ then
        $\cbvRmDst{\cbvSCtxt} \in \cbvSCtxtSet$, and that if
        $\bangSCtxt \in \bangSCtxtSet$ then $\bangSCtxt<\der{\Hole}>
        \in \bangSCtxtSet$. Thanks to these observations, it becomes
        sufficient to show that if $\cbvSCtxt \in \cbvSCtxtSet$, then
        $\cbvToBangAGK{\cbvSCtxt} \in \bangSCtxt$ and
        $\cbvToBangAGK[*]{\cbvSCtxt} \in \bangSCtxt$.

        By induction on $\cbvSCtxt \in \cbvSCtxtSet$:
        \begin{itemize}
        \item[\bltI] $\cbvSCtxt = \Hole$: Then
            $\cbvToBangAGK{\cbvSCtxt} = \Hole \in \bangSCtxt$ and
            $\cbvToBangAGK[*]{\cbvSCtxt} = \Hole \in \bangSCtxt$.

        \item[\bltI] $\cbvSCtxt = \app[\,]{\cbvSCtxt'}{t}$: By \ih on
            $\cbvSCtxt'$, one has that $\cbvToBangAGK{{\cbvSCtxt'}},
            \cbvToBangAGK[*]{{\cbvSCtxt'}} \in \bangSCtxtSet$. Let us
            distinguish cases:
            \begin{itemize}
            \item[\bltII] $\cbvSCtxt' \in \cbvLCtxt$: Then
                $\cbvToBangAGK{\cbvSCtxt} =
                \der{\app[\,]{\der{\cbvToBangAGK{{\cbvSCtxt'}}}}{\cbvToBangAGK{t}}}
                \in
                \der{\app[\,]{\der{\bangSCtxtSet}}{\cbvToBangAGK{t}}}
                \subseteq \bangSCtxtSet$ and
                $\cbvToBangAGK[*]{\cbvSCtxt} =
                \der{\app[\,]{\cbvToBangAGK[*]{{\cbvSCtxt'}}}{\cbvToBangAGK{t}}}
                \in
                \der{\app[\,]{\bangSCtxtSet}{\cbvToBangAGK{t}}}
                \subseteq \bangSCtxtSet$.

            \item[\bltII] $\bangBangPred{\cbvToBangAGK{\cbvSCtxt}}$:
                By induction on $\bangSCtxt'$, one has that
                $\bangStrip{\cbvToBangAGK{{\bangSCtxt'}}},
                \bangStrip{\cbvToBangAGK[*]{{\bangSCtxt'}}} \in
                \bangSCtxt$ thus $\cbvToBangAGK{\cbvSCtxt} =
                \der{\app[\,]{\bangStrip{\cbvToBangAGK{{\cbvSCtxt'}}}}{\cbvToBangAGK{t}}}
                \in
                \der{\app[\,]{\bangSCtxtSet}{\cbvToBangAGK{t}}}
                \subseteq \bangSCtxtSet$ and
                $\cbvToBangAGK[*]{\cbvSCtxt} =
                \der{\app[\,]{\bangStrip{\cbvToBangAGK[*]{{\cbvSCtxt'}}}}{\cbvToBangAGK{t}}}
                \in
                \der{\app[\,]{\bangSCtxtSet}{\cbvToBangAGK{t}}}
                \subseteq \bangSCtxtSet$.

            \item[\bltII] Otherwise: Then $\cbvToBangAGK{\cbvSCtxt} =
                \der{\app[\,]{\der{\cbvToBangAGK{{\cbvSCtxt'}}}}{\cbvToBangAGK{t}}}
                \in
                \der{\app[\,]{\der{\bangSCtxtSet}}{\cbvToBangAGK{t}}}
                \subseteq \bangSCtxtSet$ and
                $\cbvToBangAGK[*]{\cbvSCtxt} =
                \der{\app[\,]{\der{\cbvToBangAGK[*]{{\cbvSCtxt'}}}}{\cbvToBangAGK{t}}}
                \in
                \der{\app[\,]{\der{\bangSCtxtSet}}{\cbvToBangAGK{t}}}
                \subseteq \bangSCtxtSet$.

            \end{itemize}

        \item[\bltI] $\cbvSCtxt = \app[\,]{t}{\cbvSCtxt'}$: By \ih on
            $\cbvSCtxt'$, one has that $\cbvToBangAGK{{\cbvSCtxt'}},
            \cbvToBangAGK[*]{{\cbvSCtxt'}} \in \bangSCtxtSet$. Let us
            distinguish cases:
            \begin{itemize}
            \item[\bltII] $\bangBangPred{\cbvToBangAGK{t}}$: Then
                $\cbvToBangAGK{\cbvSCtxt} =
                \der{\app[\,]{\bangStrip{\cbvToBangAGK{t}}}{\cbvToBangAGK{{\cbvSCtxt'}}}}
                \in
                \der{\app[\,]{\bangStrip{\cbvToBangAGK{t}}}{\bangSCtxtSet}}
                \subseteq \bangSCtxtSet$ and
                $\cbvToBangAGK[*]{\cbvSCtxt} =
                \der{\app[\,]{\bangStrip{\cbvToBangAGK{t}}}{\cbvToBangAGK[*]{{\cbvSCtxt'}}}}
                \in
                \der{\app[\,]{\bangStrip{\cbvToBangAGK{t}}}{\bangSCtxtSet}}
                \subseteq \bangSCtxtSet$.

            \item[\bltII] Otherwise: Then $\cbvToBangAGK{\cbvSCtxt} =
                \der{\app[\,]{\der{\cbvToBangAGK{t}}}{\cbvToBangAGK{{\cbvSCtxt'}}}}
                \in
                \der{\app[\,]{\der{\cbvToBangAGK{t}}}{\bangSCtxtSet}}
                \subseteq \bangSCtxtSet$ and
                $\cbvToBangAGK[*]{\cbvSCtxt} =
                \der{\app[\,]{\der{\cbvToBangAGK{t}}}{\cbvToBangAGK[*]{{\cbvSCtxt'}}}}
                \in
                \der{\app[\,]{\der{\cbvToBangAGK{t}}}{\bangSCtxtSet}}
                \subseteq \bangSCtxtSet$.
            \end{itemize}

        \item[\bltI] $\cbvSCtxt = \cbvSCtxt'\esub{x}{t}$: By \ih on
            $\cbvSCtxt'$, one has that $\cbvToBangAGK{{\cbvSCtxt'}},
            \cbvToBangAGK[*]{{\cbvSCtxt'}} \in \bangSCtxtSet$ so that
            $\cbvToBangAGK{\cbvSCtxt} =
            \cbvToBangAGK{{\cbvSCtxt'}}\esub{x}{\cbvToBangAGK{t}} \in
            \bangSCtxtSet\esub{x}{\cbvToBangAGK{t}} \subseteq
            \bangSCtxtSet$ and $\cbvToBangAGK[*]{\cbvSCtxt} =
            \cbvToBangAGK[*]{{\cbvSCtxt'}}\esub{x}{\cbvToBangAGK{t}}
            \in \bangSCtxtSet\esub{x}{\cbvToBangAGK{t}} \subseteq
            \bangSCtxtSet$.

        \item[\bltI] $\cbvSCtxt = t\esub{x}{\cbvSCtxt'}$: By \ih on
            $\cbvSCtxt'$, one has that $\cbvToBangAGK{{\cbvSCtxt'}},
            \cbvToBangAGK[*]{{\cbvSCtxt'}} \in \bangSCtxtSet$ so that
            $\cbvToBangAGK{\cbvSCtxt} =
            \cbvToBangAGK{t}\esub{x}{\cbvToBangAGK{{\cbvSCtxt'}}} \in
            \cbvToBangAGK{t}\esub{x}{\bangSCtxtSet} \subseteq
            \bangSCtxtSet$ and $\cbvToBangAGK[*]{\cbvSCtxt} =
            \cbvToBangAGK{t}\esub{x}{\cbvToBangAGK[*]{{\cbvSCtxt'}}}
            \in \cbvToBangAGK{t}\esub{x}{\bangSCtxtSet} \subseteq
            \bangSCtxtSet$.
        \end{itemize}

    \item Let $\bangSCtxt \in \bangSCtxtSet$ and $\cbvFCtxt \in
        \cbvFCtxtSet$ such that either $\cbvToBangAGK{\cbvFCtxt} =
        \bangSCtxt$, $\cbvToBangAGK[*]{\cbvFCtxt} = \bangSCtxt$ or
        $\cbvToBangAGK{\cbvFCtxt}\bangCtxtPlug{\der{\Hole}} =
        \bangSCtxt$. By induction on $\bangSCtxt \in \bangSCtxtSet$:
        \begin{itemize}
        \item[\bltI] $\bangSCtxt = \Hole$: We distinguish three cases:
            \begin{itemize}
            \item[\bltII] $\cbvToBangAGK{\cbvFCtxt} = \Hole$: Then
                necessarily $\cbvFCtxt = \Hole$ and thus $\cbvFCtxt
                \in \cbvSCtxtSet$.
            \item[\bltII] $\cbvToBangAGK[*]{\cbvFCtxt} = \Hole$: Same
                as the previous case.
            \item[\bltII]
                $\cbvToBangAGK{\cbvFCtxt}\bangCtxtPlug{\der{\Hole}} =
                \Hole$: Impossible.
            \end{itemize}

        \item[\bltI] $\bangSCtxt = \abs{x}{\bangSCtxt'}$: Impossible
            by definition of $\cbvToBangAGK{\cdot}$ and
            $\cbvToBangAGK[*]{\cdot}$.

        \item[\bltI] $\bangSCtxt = \app[\,]{\bangSCtxt'}{t'}$:
            Impossible by definition of $\cbvToBangAGK{\cdot}$ and
            $\cbvToBangAGK[*]{\cdot}$.

        \item[\bltI] $\bangSCtxt = \app[\,]{t'}{\bangSCtxt'}$:
            Impossible by definition of $\cbvToBangAGK{\cdot}$ and
            $\cbvToBangAGK[*]{\cdot}$.

        \item[\bltI] $\bangSCtxt = \bangSCtxt'\esub{x}{t'}$: We
            distinguish three cases:
            \begin{itemize}
            \item[\bltII] $\cbvToBangAGK{\cbvFCtxt} =
                \bangSCtxt'\esub{x}{t'}$: Then necessarily $\cbvFCtxt
                = \cbvFCtxt'\esub{x}{t}$ for some $\cbvFCtxt' \in
                \cbvFCtxtSet$ and $t \in \setCbvTerms$ such that
                $\cbvToBangAGK{{\cbvFCtxt'}} = \bangSCtxt'$ and
                $\oc\cbvToBangAGK{t} = t'$. By \ih on $\bangSCtxt'$,
                one has that $\cbvFCtxt' \in \cbvSCtxtSet$ so that
                $\cbvFCtxt = \cbvFCtxt'\esub{x}{t} \in
                \cbvSCtxtSet\esub{x}{t} \subseteq \cbvSCtxtSet$.

            \item[\bltII] $\cbvToBangAGK[*]{\cbvFCtxt} =
                \bangSCtxt'\esub{x}{t'}$ or
                $\cbvToBangAGK{\cbvFCtxt}\bangCtxtPlug{\der{\Hole}} =
                \bangSCtxt'\esub{x}{t'}$: Similar to the previous
                case.
            \end{itemize}

        \item[\bltI] $\bangSCtxt = t'\esub{x}{\bangSCtxt'}$: We
            distinguish three cases:
            \begin{itemize}
            \item[\bltII] $\cbvToBangAGK{\cbvFCtxt} =
                t'\esub{x}{\bangSCtxt'}$: Then necessarily $\cbvFCtxt
                = t\esub{x}{\cbvFCtxt'}$ for some $t \in \setCbvTerms$
                and $\cbvFCtxt' \in \cbvFCtxtSet$ such that
                $\cbvToBangAGK{t} = t'$ and
                $\cbvToBangAGK{{\cbvFCtxt'}} = \bangSCtxt'$. By \ih on
                $\bangSCtxt'$, one has that $\cbvFCtxt' \in
                \cbvSCtxtSet$ so that $\cbvFCtxt =
                t\esub{x}{\cbvFCtxt'} \in t\esub{x}{\cbvSCtxtSet}
                \subseteq \cbvSCtxtSet$.

            \item[\bltII] $\cbvToBangAGK[*]{\cbvFCtxt} =
                t'\esub{x}{\bangSCtxt'}$ or
                $\cbvToBangAGK{\cbvFCtxt}\bangCtxtPlug{\der{\Hole}} =
                t'\esub{x}{\bangSCtxt'}$: Similar to the previous
                case.
            \end{itemize}

        \item[\bltI] $\bangSCtxt = \der{\bangSCtxt'}$: We distinguish
            three cases:
            \begin{itemize}
            \item[\bltII] $\cbvToBangAGK{\cbvFCtxt} =
                \der{\bangSCtxt'}$: Then $\cbvFCtxt$ is necessarily an
                application and there exists $\cbvFCtxt' \in
                \cbvFCtxtSet$ and $t \in \setCbvTerms$ such that
                either:
                \begin{itemize}
                \item[\bltIII] $\cbvFCtxt = \app[\,]{\cbvFCtxt'}{t}$:
                    Then necessarily $\bangSCtxt' =
                    \app[\,]{\bangSCtxt'}{t'}$ for some $\bangSCtxt'
                    \in \bangSCtxtSet$ and $t' \in \setBangTerms$ such
                    that $\cbvToBangAGK{t} = t'$. Two cases must be
                    distinguished on $\cbvFCtxt'$:
                    \begin{itemize}
                    \item[\bltIV]
                        $\bangBangPred{\cbvToBangAGK{{\cbvFCtxt'}}}$
                        and $\bangStrip{\cbvToBangAGK{{\cbvFCtxt'}}} =
                        \bangSCtxt'$: (See $(*)$ below).

                    \item[\bltIV]
                        $\neg\bangBangPred{\cbvToBangAGK{{\cbvFCtxt'}}}$
                        and $\der{\cbvToBangAGK{{\cbvFCtxt'}}} =
                        \bangSCtxt'$: Thus $\bangSCtxt' =
                        \der{\bangSCtxt''}$ for some $\bangSCtxt'' \in
                        \bangSCtxtSet$ such that
                        $\cbvToBangAGK{{\cbvFCtxt'}} = \bangSCtxt''$.
                        By \ih on $\bangSCtxt''$, one has that
                        $\cbvFCtxt' \in \cbvSCtxtSet$ so that
                        $\cbvFCtxt = \app[\,]{\cbvFCtxt'}{t} \in
                        \app[\,]{\cbvSCtxtSet}{t} \subseteq
                        \cbvSCtxtSet$.
                    \end{itemize}

                \item[\bltIII] $\cbvFCtxt = \app[\,]{t}{\cbvFCtxt'}$:
                    Then necessarily $\bangSCtxt' =
                    \app[\,]{t'}{\bangSCtxt'}$ for some $\bangSCtxt'
                    \in \bangSCtxtSet$ and $t' \in \setBangTerms$ such
                    that $\cbvToBangAGK{{\cbvFCtxt'}} = \bangSCtxt'$.
                    By \ih on $\bangSCtxt'$, one has that $\cbvFCtxt'
                    \in \cbvSCtxtSet$ so that $\cbvFCtxt =
                    \app[\,]{t}{\cbvFCtxt'} \in
                    \app[\,]{t}{\cbvSCtxtSet} \subseteq \cbvSCtxtSet$.
                \end{itemize}

            \item[\bltII] $\cbvToBangAGK[*]{\cbvFCtxt} =
                \der{\bangSCtxt'}$: Same as the previous case (with
                one additional trivial subsubcase)

            \item[\bltII]
                $\cbvToBangAGK{\cbvFCtxt}\bangCtxtPlug{\der{\Hole}} =
                \der{\bangSCtxt'}$: Then either $\cbvFCtxt = \Hole$
                thus $\cbvFCtxt \in \cbvSCtxtSet$ or $\cbvFCtxt$ is an
                application an the case is similar to the first case.
            \end{itemize}
        \end{itemize}
    \end{enumerate}

    $(*)$ We distinguish two cases on
    $\bangBangPred{\cbvToBangAGK{{\cbvFCtxt'}}}$:
    \begin{itemize}
    \item[\bltI] $\cbvToBangAGK{{\cbvFCtxt'}} = \bangLCtxt<\oc
        \bangSCtxt''>$ for some $\bangLCtxt \in \bangLCtxtSet$ and
        some $\bangSCtxt'' \in \bangSCtxtSet$. Let us show by
        induction on $\bangLCtxt$ that this cases is impossible:
        \begin{itemize}
        \item[\bltII] $\bangLCtxt = \Hole$: Then
            $\cbvToBangAGK{{\cbvFCtxt'}} = \oc\bangSCtxt''$ thus
            necessarily $\cbvFCtxt' = \abs{x}{\cbvFCtxt''}$ for some
            $\cbvFCtxt'' \in \cbvFCtxtSet$ such that
            $\abs{x}{\oc\cbvToBangAGK{{\cbvFCtxt''}}} = \bangSCtxt''$
            which is impossible since no $\bangSCtxt'' \in
            \bangSCtxtSet$ can have a hole under a bang.

        \item[\bltII] $\bangLCtxt = \bangLCtxt'\esub{x}{s'}$ for some
            $\bangLCtxt' \in \bangLCtxtSet$ and $s' \in
            \setBangTerms$: Then $\cbvToBangAGK{{\cbvFCtxt'}} =
            \bangLCtxt'<\oc\bangSCtxt''>\esub{x}{s'}$ thus necessarily
            $\cbvFCtxt' = \cbvFCtxt''\esub{x}{s}$ for some
            $\cbvFCtxt'' \in \cbvFCtxtSet$ and $s \in \setCbvTerms$
            such that $\cbvToBangAGK{{\cbvFCtxt''}} =
            \bangLCtxt'<\oc\bangSCtxt''>$ which is impossible by \ih.
        \end{itemize}
    \item[\bltI] $\cbvToBangAGK{{\cbvFCtxt'}} =
        \bangLCtxt_1<\bangLCtxt_2<\oc u'>\esub{x}{\bangSCtxt''}>$ for
        some $\bangLCtxt_1, \bangLCtxt_2 \in \bangLCtxtSet$,
        $\bangSCtxt'' \in \bangSCtxtSet$ and $u' \in \setBangTerms$.
        By induction on $\bangLCtxt_1$:
        \begin{itemize}
        \item[\bltII] $\bangLCtxt_1 = \Hole$: Then
            $\cbvToBangAGK{{\cbvFCtxt'}} = \bangLCtxt_2<\oc
            u'>\esub{x}{\bangSCtxt''}$ thus necessarily $\cbvFCtxt' =
            u\esub{x}{\cbvFCtxt''}$ for some $u \in \setCbvTerms$ and
            $\cbvFCtxt'' \in \cbvFCtxtSet$ such that $\cbvToBangAGK{u}
            = \bangLCtxt_2<\oc u'>$ and $\cbvToBangAGK{{\cbvFCtxt''}}
            = \bangSCtxt''$. By \ih on $\bangSCtxt''$, one has that
            $\cbvToBangAGK{{\cbvFCtxt''}} \in \cbvSCtxtSet$ so that
            $\cbvFCtxt = \app[\,]{\cbvFCtxt'}{t} =
            \app[\,]{u\esub{x}{\cbvFCtxt''}}{t} \in
            \app[\,]{u\esub{x}{\bangSCtxtSet}}{t} \subseteq
            \bangSCtxtSet$.

        \item[\bltII] $\bangLCtxt_1 = \bangLCtxt_1'\esub{y}{s'}$ for
            some $\bangLCtxt'_1 \in \bangLCtxtSet$ and $s' \in
            \setBangTerms$. Then $\cbvToBangAGK{{\cbvFCtxt'}} =
            \bangLCtxt'_1<\bangLCtxt_2<\oc
            u'>\esub{x}{\bangSCtxt''}>\esub{y}{s'}$ thus necessarily
            $\cbvFCtxt' = \cbvFCtxt''\esub{y}{s}$ for some
            $\cbvFCtxt'' \in \cbvFCtxtSet$ and $s \in \setCbvTerms$
            such that $\cbvToBangAGK{{\cbvFCtxt''}} =
            \bangLCtxt'_1<\bangLCtxt_2<\oc u'>\esub{x}{\bangSCtxt''}>$
            and $\cbvToBangAGK{s} = s'$. By \ih on $\bangLCtxt'_1$,
            one has that $\cbvFCtxt'' \in \bangSCtxtSet$ so that
            $\cbvFCtxt = \app[\,]{\cbvFCtxt'}{t} =
            \app[\,]{\cbvFCtxt''\esub{y}{s}}{t} \in
            \app[\,]{\cbvSCtxtSet\esub{y}{s}}{t} \subseteq
            \cbvSCtxtSet$.
            \qed
        \end{itemize}
    \end{itemize}
\end{proof}

\begin{lemma}
    \label{lem:cbv_SNF<d!>_is_FNF<d!>}
    Let $t \in \setCbvTerms$ and $u' \in \setBangTerms$ such that
    $\cbvToBangAGK{t} \bangArrSet*_S u'$ where $u'$ is a
    \bangSetSNF<d!>, then it is also a \bangSetFNF<d!>.
\end{lemma}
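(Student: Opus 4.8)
The plan is to use surface diligence in order to show that $u'$ actually lies in the image of $\cbvToBangAGK{\cdot}$; since every embedded term is a \bangSetFNF<d!>, this yields the claim at once. From $\cbvToBangAGK{t} \bangArrSet*_S u'$ with $u'$ a \bangSetSNF<d!>, \Cref{lem:SurfaceReorchestration} (surface diligence) gives $\cbvToBangAGK{t} \bangArrSet*_{Sai} u'$. Unfolding the definition of $\bangArrSet_{Sai}$, this reduction decomposes into $\bangArrSet_S<d!>$-blocks separated by single $\bangSymbBeta$- or $\bangSymbSubs$-steps, each such \emph{computational} step being fired from a \bangSetSNF<d!> term:
\[
    \cbvToBangAGK{t} \;\bangArrSet*_S<d!>\; c_1 \;\bangArrSet_S\; c_1^+ \;\bangArrSet*_S<d!>\; c_2 \;\bangArrSet_S\; \cdots \;\bangArrSet_S\; c_n^+ \;\bangArrSet*_S<d!>\; u' ,
\]
where each $c_k$ is a \bangSetSNF<d!>, i.e.\ a $\bangArrSet_S<d!>$-normal form. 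Since $\cbvToBangAGK{t}$ is a \bangSetFNF<d!>, hence a $\bangArrSet_S<d!>$-normal form, the first block is empty and $c_1 = \cbvToBangAGK{t}$; and when $n=0$ the whole reduction is $\cbvToBangAGK{t} \bangArrSet*_S<d!> u'$, forcing $u' = \cbvToBangAGK{t}$, which finishes that case.

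The core is to show, by induction on $k$, that $c_k$ is of the form $\cbvToBangAGK{s}$ for some $s \in \setCbvTerms$, and likewise for $u'$. The base case is $c_1 = \cbvToBangAGK{t}$. For the inductive step, assume $c_k = \cbvToBangAGK{s}$. The step $c_k \bangArrSet_S c_k^+$ fires a $\bangSymbBeta$- or $\bangSymbSubs$-redex inside $\cbvToBangAGK{s}$; by the reverse-simulation analysis of such redexes (\Cref{lem:cbvAGK_dB_redex_Reverse_Translated} for $\bangSymbBeta$, \Cref{lem:cbvAGK_s!_redex_Reverse_Translated} for $\bangSymbSubs$) this redex is the image of a \CBVSymb redex in $s$ fired under some \CBVSymb context $\cbvFCtxt$, and the surrounding $\BANGSymb$ context is a translation of $\cbvFCtxt$; as that $\BANGSymb$ context is surface (the step being a surface step), \Cref{lem:cbv_surface_context_stability} forces $\cbvFCtxt$ to be surface as well. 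Writing $s'$ for the \CBVSymb reduct, so $s \cbvArrSet_S s'$, the one-step simulation (\Cref{lem:cbv detailed one step simulation}) applied to this \CBVSymb step realizes the very step $c_k \bangArrSet_S c_k^+$ followed by at most two $\bangSymbBang$-steps whose contexts — being translations of the surface context $\cbvFCtxt$ — are again surface, and reaches $\cbvToBangAGK{s'}$; thus $c_k^+ \bangArrSet*_S<d!> \cbvToBangAGK{s'}$. Now $\cbvToBangAGK{s'}$ is a \bangSetFNF<d!>, hence a $\bangArrSet_S<d!>$-normal form, and so is $c_{k+1}$; both are $\bangArrSet_S<d!>$-reducts of $c_k^+$, so by confluence of $\bangArrSet_S<d!>$ (\Cref{lem:t->*S<d!>u1andt->*S<d!>u2=>u1->*S<d!>sandu2->*S<d!>s}) they coincide: $c_{k+1} = \cbvToBangAGK{s'}$. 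The same argument applied to the last computational step, $c_n \bangArrSet_S c_n^+ \bangArrSet*_S<d!> u'$, gives $u' = \cbvToBangAGK{s'_n}$, so $u'$ is an embedding image, hence a \bangSetFNF<d!>, as wanted.

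The difficulty — and the reason the statement is not immediate — is that a \bangSetSNF<d!> reachable from $\cbvToBangAGK{t}$ is not a priori in the image of the embedding: surface stability (\Cref{cor:cbv prop-surface}, Stability) is only derived afterwards, using precisely this lemma, so one cannot directly invoke normal-form preservation. The device above circumvents this by using diligence to pin the $\bangSymbBeta/\bangSymbSubs$-steps to the milestones that are $\bangArrSet_S<d!>$-normal forms, and then using the one-step (reverse) simulation together with confluence of $\bangArrSet_S<d!>$ to identify each milestone with an embedded term. The technical care concentrates on matching each $\BANGSymb$ redex and its context with their \CBVSymb origins (including the super-development bookkeeping of Lemmas~\ref{lem:cbvAGK_dB_redex_Reverse_Translated} and~\ref{lem:cbvAGK_s!_redex_Reverse_Translated}) and on checking that the administrative steps generated by the simulation are surface, so that confluence of $\bangArrSet_S<d!>$ applies; these are exactly the places where the shape of the \CBVSymb embedding is used.
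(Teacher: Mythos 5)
Your argument is correct and non-circular (surface diligence, the redex reverse-translation lemmas, the detailed one-step simulation and the confluence of $\bangArrSet_S<d!>$ are all established independently of this statement), but it takes a genuinely different and much heavier route than the paper. The paper proves the lemma by a direct induction on the length of the reduction $\cbvToBangAGK{t} \bangArrSet*_S u'$, maintaining the single invariant that the term stays a \bangSetINF<d!> (it holds at $\cbvToBangAGK{t}$, and surface steps on such reducts do not create internal $\bangSymbBang$-redexes), and then concluding because a term is a \bangSetFNF<d!> exactly when it is both a \bangSetSNF<d!> and a \bangSetINF<d!>; no diligence, simulation or confluence is needed. You instead prove the stronger fact that $u'$ lies in the image of $\cbvToBangAGK{\cdot}$: you make the sequence diligent (\Cref{lem:SurfaceReorchestration}), identify each computational milestone with an embedded term via \Cref{lem:cbvAGK_dB_redex_Reverse_Translated,lem:cbvAGK_s!_redex_Reverse_Translated,lem:cbv detailed one step simulation,lem:cbv_surface_context_stability}, and close each block with confluence of $\bangArrSet_S<d!>$. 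What this buys is essentially an inline proof of surface stability (\Cref{cor:cbv prop-surface}.1 restricted to \bangSetSNF<d!> targets), which the paper only derives afterwards \emph{using} this lemma; the care you take that the administrative steps produced by the simulation live in surface contexts is exactly what makes the confluence step legitimate, and is the one point where the argument could not be shortcut by invoking \Cref{lem:cbv detailed one step reverse simulation} directly (its hypothesis requires a \bangSetFNF<d!>, which is what is being proved). The cost is that a statement the paper treats as a lightweight syntactic invariant is proved here with the full simulation/diligence machinery, duplicating work that the surrounding development performs once and for all in \Cref{lem: cbv preservation restricted}.
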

\begin{proof}
    From the observation that $\cbvToBangAGK{t}$ is a \bangSetINF<d!>,
    we show by induction on the length of the derivation
    $\cbvToBangAGK{t} \bangArrSet*_S u'$ that $u'$ is a
    \bangSetINF<d!> which concludes this case since $u'$ is a
    \bangSetFNF<d!> if and only if it is both a \bangSetSNF<d!> and
    \bangSetINF<d!>.
    \qed
\end{proof}

\begin{lemma}
    \label{lem:cbv_INF<d!>_is_FNF<d!>}
    Let $t \in \setCbvTerms$ and $u' \in \setBangTerms$ such that
    $\cbvToBangAGK{t} \bangArrSet*_S u'$ where $u'$ is a
    \bangSetSNF<d!>, then it is also a \bangSetFNF<d!>.
\end{lemma}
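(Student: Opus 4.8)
The plan is to combine two facts: that $\cbvToBangAGK{t}$ is always a \bangSetFNF<d!> (recorded just after \Cref{def:cbvAGK_Embedding}) and that the reached term $u'$ is, by hypothesis, a \bangSetSNF<d!>. The starting point is a purely contextual remark: since $\bangICtxtSet = \bangFCtxtSet \setminus \bangSCtxtSet$, every full context is either a surface context or an internal one, so a $\bangSymbBang$-redex fired in a full context is fired either in a surface or in an internal context. Consequently a term is a \bangSetFNF<d!> if and only if it is at once a \bangSetSNF<d!> and a \bangSetINF<d!>. As $u'$ is already a \bangSetSNF<d!>, it suffices to show that $u'$ is also a \bangSetINF<d!>.

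To prove this, I would show by induction on the length of the reduction $\cbvToBangAGK{t} \bangArrSet*_S u'$ that \emph{every} term along this surface sequence is a \bangSetINF<d!>. The base case is immediate, since $\cbvToBangAGK{t}$ is a \bangSetFNF<d!>, hence a \bangSetINF<d!>. For the inductive step it remains to show that, for any reduct $s$ of $\cbvToBangAGK{t}$ that is a \bangSetINF<d!> and any surface step $s \bangArrSet_S s'$, the term $s'$ is again a \bangSetINF<d!>; equivalently, that surface reduction never creates an internal $\bangSymbBang$-redex along reducts of the image.

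The inductive step is the main obstacle, and it is genuinely delicate, because preservation of \bangSetINF<d!> \emph{fails} for arbitrary terms. The $\bangSymbBeta$- and $\bangSymbBang$-cases are harmless: a surface $\bangSymbBeta$-step neither substitutes nor places any subterm under a new $\oc$, and a surface $\bangSymbBang$-step only erases a $\oc$ in surface position, so both leave the content sitting under bangs unchanged. The dangerous case is $\bangSymbSubs$: copying a bang-headed argument underneath a $\oc$ can manufacture an internal redex, as in $\oc(\der{x})\esub{x}{\oc\oc w} \bangArrSet_S<s!> \oc(\der{\oc w})$, which turns a \bangSetINF<d!> into a term with an internal $\bangSymbBang$-redex. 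Ruling this out requires the specific shape of reducts of $\cbvToBangAGK{t}$. The key invariant is that no dereliction is ever applied to a variable wrapped in a (possibly empty) list of explicit substitutions, i.e.\ no subterm of the form $\der{\bangLCtxt<y>}$ ever occurs; this holds in the image, where each dereliction is applied to an application- or dereliction-headed term, and since $\bangArrSet_S$ never rewrites under a $\oc$, the subterms occurring under a $\oc$ in a reduct are obtained from under-$\oc$ subterms of $\cbvToBangAGK{t}$ by substitution, so a routine induction on terms shows the invariant is stable under the instantiations performed by $\bangSymbSubs$. With this invariant in hand, the $\bangSymbSubs$-case cannot produce a pattern $\der{\bangLCtxt<\oc w>}$, hence no internal $\bangSymbBang$-redex, and the induction goes through: $u'$ is a \bangSetINF<d!>, and therefore a \bangSetFNF<d!>.
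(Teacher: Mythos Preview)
Your proof follows essentially the same approach as the paper's. The paper's argument is a one-line sketch: starting from the observation that $\cbvToBangAGK{t}$ is a \bangSetINF<d!>, it shows by induction on the length of the surface reduction $\cbvToBangAGK{t} \bangArrSet*_S u'$ that $u'$ is a \bangSetINF<d!>, and concludes via the decomposition \bangSetFNF<d!> $\Leftrightarrow$ \bangSetSNF<d!> $\wedge$ \bangSetINF<d!>. This is exactly your plan.

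Where you go further than the paper is in analysing the inductive step, which the paper leaves implicit. Your identification of $\bangSymbSubs$ as the only dangerous case, together with a concrete counterexample showing that preservation of \bangSetINF<d!> fails on arbitrary terms, is a genuine addition. The invariant you propose (no subterm $\der{\bangLCtxt<y>}$) is in the right spirit, but as stated globally it is a bit stronger than what you actually need and its preservation under surface $\bangSymbBang$ is not entirely obvious. What your argument really uses, and what is straightforwardly preserved, is the weaker form restricted to positions under a $\oc$: since surface steps never rewrite under a bang, the only way under-bang subterms change is by the substitutions performed by $\bangSymbSubs$, and your routine induction then goes through cleanly. With that small sharpening, your argument is a faithful and more detailed version of the paper's sketch.
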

\begin{proof}
    From the observation that $\cbvToBangAGK{t}$ \bangSetSNF<d!>, we
    show by induction on the length of the derivation
    $\cbvToBangAGK{t} \bangArrSet*_S u'$ that $u'$ is a
    \bangSetSNF<d!> which concludes this case since $u'$ is a
    \bangSetFNF<d!> if and only if it is both a \bangSetSNF<d!> and
    \bangSetINF<d!>.
\end{proof}

\RecCbvPropSurface*
\begin{proof}
    Apply \Cref{lem: cbv preservation restricted} with
    \Cref{lem:cbv_surface_context_stability,lem:Implicitation,lem:cbv_INF<d!>_is_FNF<d!>}.
        \qed
\end{proof}

\section{Appendix: Proofs of \Cref{sec:Factorization}}
\label{secproofs:Factorization}

	\subsection{Internal Diligence for \BANGSymb}
	\label{prfSec:Internal_Diligence}

  \begin{definition}
    \label{d:Internal_context_reduction}
    \emph{Internal  reduction} $\bangArrSet_I<R>$ is extended to contexts
as expected:
\begin{itemize}
\item either the redex occurs in a \emph{subterm} of some internal  context,  
\item or the redex occurs in an internal \emph{subcontext}, where, in
  particular, $\bangSymbSubs$-redexes are only defined as
  $t\esub{x}{(\oc u)\esub{x_1}{u_1} \ldots \esub{x_i}{\bangFCtxt_i}
    \ldots \esub{x_n}{u_n}}$, but not as $t\esub{x}{\bangLCtxt<\oc
    \bangFCtxt>}$ or $\bangFCtxt\esub{x}{\bangLCtxt<\oc u>}$.
\end{itemize}   
\end{definition}

Examples of the first case are  $\Hole\esub{x}{\oc (\app{(\abs{w}{w})}{z})} \bangArrSet_I<dB> \Hole \esub{x}{\oc (w\esub{w}{z})}$, $\app{\Hole\, }{\oc \der{\oc z}} \bangArrSet_I<d!> \app{\Hole\, }{\oc z} $,  and $\Hole\esub{x}{\oc (z\esub{z}{\oc w}) }\bangArrSet_I<s!>  \Hole\esub{x}{\oc w}$
while examples of the second case are
$\oc (\app{(\abs{x}{x})}{\Hole})
\bangArrSet_I<dB> \oc (x\esub{x}{\Hole})$,
$\oc \der{\oc \Hole}
\bangArrSet_I<dB> \oc \Hole$, 
and $\oc (x\esub{x}{(\oc z)\esub{z}{\Hole} })\bangArrSet_I<s!>
  \oc( z\esub{z}{\Hole})$ but  $\oc (\app{(xx)\esub{x}{\oc \Hole}}{z})$
  and $\oc(\Hole\esub{x}{!y})$ %
are $\bangArrSet_I<R>$-irreducible.     

\begin{lemma}
\label{l:Internal_Context_Stability}
Let $t \in \setBangTerms$. 
\begin{enumerate}

\item For every $\bangICtxt \in \bangICtxtSet$, if $\bangICtxt  \bangArrSet_I<R> \bangFCtxt$ then $\bangFCtxt \in \bangICtxtSet$.

\item For every $\bangICtxt \in \bangICtxtSet$, if $t \bangArrSet_F<R> t'$ then $\bangICtxt<t>  \bangArrSet_I<R> \bangICtxt<t'>$.
	\item For every $\bangFCtxt \in \bangFCtxtSet$, if $\bangFCtxt \bangArrSet_I<\rel>
	\bangFCtxt'$ then $\bangFCtxt<t>  \bangArrSet_I<R> \bangFCtxt'<t>$.
\end{enumerate}
\end{lemma}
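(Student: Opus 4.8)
The plan is to follow the proof of \Cref{l:stability}, working throughout with the characterization $\bangICtxtSet = \bangFCtxtSet \setminus \bangSCtxtSet$ recorded just after the definition of internal contexts: a full context is internal exactly when its hole occurs under at least one $\oc$, equivalently when it has the shape $\bangSCtxt\bangCtxtPlug{\oc \bangFCtxt}$ with $\bangSCtxt \in \bangSCtxtSet$ (the surface part of the context down to the first bang met on the way to the hole) and $\bangFCtxt \in \bangFCtxtSet$. Each item is then proved by the usual case analysis --- on the structure of the internal context for Part~2, on the derivation of the context step (\Cref{d:Internal_context_reduction}) for Part~3, and on both for Part~1 --- with \Cref{l:stability} invoked to discharge the base cases in which the contracted redex lies entirely inside a subterm.

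For Part~2 I would induct on $\bangICtxt \in \bangICtxtSet$ along its grammar. If $\bangICtxt = \oc \bangFCtxt$, then $t \bangArrSet_F<R> t'$ gives $\bangFCtxt<t> \bangArrSet_F<R> \bangFCtxt<t'>$ by \Cref{l:stability}.3; writing this as a contraction $r \mapsto_\rel r'$ inside a full context $\bangFCtxt_0$, the context $\oc \bangFCtxt_0$ is internal, so $\bangICtxt<t> = (\oc \bangFCtxt_0)<r> \bangArrSet_I<R> (\oc \bangFCtxt_0)<r'> = \bangICtxt<t'>$. If $\bangICtxt = \bangSCtxt^*\bangCtxtPlug{\bangICtxt'}$, I apply the induction hypothesis to $\bangICtxt'$ and use that prefixing an internal context by a surface context $\bangSCtxt^*$ again yields an internal context, so the internal step survives the prefix. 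Part~3 is symmetric: I would induct on the derivation $\bangFCtxt \bangArrSet_I<\rel> \bangFCtxt'$ supplied by \Cref{d:Internal_context_reduction}; once this context step is expressed as a contraction at an internal position of $\bangFCtxt$ disjoint from the hole of $\bangFCtxt$, plugging an arbitrary term $t$ into that hole leaves the redex and its internal position unchanged, so the same step fires from $\bangFCtxt<t>$ to $\bangFCtxt'<t>$; the restricted $\bangSymbSubs$-redex shapes cause no trouble here because $t$ sits at the hole, which in those shapes is neither the body nor the bang-argument of the explicit substitution being fired.

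I expect Part~1 to be the real obstacle. Given $\bangICtxt \in \bangICtxtSet$ and a context step $\bangICtxt \bangArrSet_I<R> \bangFCtxt$, I would split according to \Cref{d:Internal_context_reduction}. If the contracted redex is a subterm of $\bangICtxt$ disjoint from its hole, the path from the root of $\bangICtxt$ down to its hole --- and in particular the $\oc$ sitting above it --- is untouched, so $\bangFCtxt$ is again internal. If the redex is an internal subcontext carrying the hole of $\bangICtxt$, then $\bangICtxt = \bangICtxt_0\bangCtxtPlug{R}$ with $\bangICtxt_0$ internal and $R$ contracting to $R'$, so $\bangFCtxt = \bangICtxt_0\bangCtxtPlug{R'}$; the hole of $\bangFCtxt$ then lies below the $\oc$ that $\bangICtxt_0$ already provides above its own hole, hence $\bangFCtxt$ is internal. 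The delicate point is that these two cases are genuinely exhaustive: the $\bangSymbBang$-contractions that would strip off the last $\oc$ above the hole, and the $\bangSymbSubs$-contractions of shape $\bangFCtxt\esub{x}{\bangLCtxt<\oc u>}$ or $t\esub{x}{\bangLCtxt<\oc \bangFCtxt>}$ that would erase or duplicate the hole, are exactly the context-level redexes forbidden by \Cref{d:Internal_context_reduction} --- they are not internal-subcontext redexes and, since they carry the hole, they are not subterm redexes either --- while every admissible $\bangSymbBang$-step that does erase a $\oc$ remains enclosed by the $\oc$ supplied by its (nontrivial) internal ambient context. Checking this requires a careful case analysis on the rewrite rule and on whether the redex carries the hole, which is where I would concentrate the effort; the routine structural bookkeeping in the remaining cases is offloaded onto \Cref{l:stability}.3 and \Cref{l:stability}.4.
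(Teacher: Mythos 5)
Your proposal is correct and follows essentially the same route as the paper: Parts 1 and 3 are proved by the case analysis/induction on the definition of internal reduction on contexts, and Part 2 ultimately rests on the same fact the paper uses directly, namely that plugging a full context into the hole of an internal context yields an internal context (the paper states this in one line instead of inducting on the internal-context grammar as you do). The only nit is a wording slip in Part 3 --- the contracted redex need not be disjoint from the hole --- but you in fact treat the hole-carrying case via the restricted redex shapes, so the argument stands.
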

	
\begin{proof} \mbox{} 
  \begin{enumerate}
  \item   By induction on the definition of $\bangArrSet_I<\rel>$ on contexts.
  \item Since $t \bangArrSet_F<R> t'$, there is a full context $\bangFCtxt$ and $r,r' \in \setBangTerms$ such that $t = \bangFCtxt<r> \bangArrSet_F<R> \bangFCtxt<r'> = t'$ with $r \mapsto_\rel r'$. 
			As $\bangICtxt<\bangFCtxt>$ is an internal context, $\bangICtxt<t> = \bangICtxt<\bangFCtxt<r>> \bangArrSet_I<R> \bangICtxt<\bangFCtxt<r'>> = \bangICtxt<t'>$.

\item By induction on the definition of $\bangArrSet_I<\rel>$ on contexts.
		\qed
\end{enumerate}
\end{proof}

\begin{lemma}
		\label{lem:Commutation_I<d!>_I<R>}
		Let $\rel \in \{\bangSymbBeta, \bangSymbSubs\}$ and $t, u_1, u_2
		\in \setBangTerms$ such that $t \bangArrSet_I<d!> u_1$ and $t
		\bangArrSet_I<\rel> u_2$. Then there exists $s \in \setBangTerms$
		such that the diagram below commutes.
        \begin{equation*}
            \begin{array}{ccc}
                t                               &\bangLongArrSet{1cm}_I<R>  &u_2
            \\
                \bangLongDownArrBangISet{1cm}   &                           &\bangLongDownArrsBangISet{1cm}
            \\
                u_1                             &\bangLongArrSet{1cm}_I<R>  &s
            \end{array}
        \end{equation*}
\end{lemma}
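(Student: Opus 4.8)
The plan is to argue by a case analysis on the relative positions, inside $t$, of the $\bangSymbBang$-redex contracted by the step $t \bangArrSet_I<d!> u_1$ and of the $\rel$-redex contracted by the step $t \bangArrSet_I<\rel> u_2$. The overall structure mirrors that of \Cref{lem:R=dB/s!andt->F<R>u1andt->F<d!>u2=>u1->*F<d!>sandu2->F<R>s} (the analogous commutation for \full reduction); the only genuinely new ingredient is the bookkeeping needed to certify that every context produced while closing the square is still an \emph{internal} context, which we obtain from \Cref{l:Internal_Context_Stability}.

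First I would fix notation. Write $t = \bangICtxt_1<\der{\bangLCtxt<\oc r>}>$ with $\bangICtxt_1 \in \bangICtxtSet$ and $u_1 = \bangICtxt_1<\bangLCtxt<r>>$, and write $t = \bangICtxt_2<p>$ with $\bangICtxt_2 \in \bangICtxtSet$, $p \mapsto_\rel p'$ and $u_2 = \bangICtxt_2<p'>$. Since the $\bangSymbBang$-redex and the $\rel$-redex ($\rel \in \{\bangSymbBeta,\bangSymbSubs\}$) have disjoint top-level shapes ($\derSymb$ versus an application or an ES), the two redexes are either disjoint or one is properly nested inside the other, and I would split into the following cases.
\begin{itemize}
\item \emph{Disjoint redexes.} Then $u_2$ still contains the $\bangSymbBang$-redex and $u_1$ still contains the $\rel$-redex, both at unchanged positions, and contracting them yields a common $s$. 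The surrounding contexts are obtained from $\bangICtxt_1$ and $\bangICtxt_2$ by reductions, hence are still internal by \Cref{l:Internal_Context_Stability}, so both closing steps are internal; here each leg is a single step.
\item \emph{The $\bangSymbBang$-redex lies inside the $\rel$-redex.} When $\rel=\bangSymbBeta$ (the $\bangSymbBang$-redex inside the abstraction body, the argument, or the list context of the $\bangSymbBeta$-redex) or $\rel=\bangSymbSubs$ with the $\bangSymbBang$-redex inside the substituted term $q$ or inside the list context of the argument, the $\bangSymbBeta$/$\bangSymbSubs$-step leaves exactly one residual of the $\bangSymbBang$-redex, so the square closes with one step on each leg, again using \Cref{l:Internal_Context_Stability} to keep the contexts internal. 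The interesting sub-case is $\rel=\bangSymbSubs$ with the $\bangSymbBang$-redex inside the \emph{duplicable argument}, i.e.\ $p = q\esub{x}{\bangLCtxt'<\oc w>}$ with $\der{\bangLCtxt<\oc r>}$ a subterm of $w$ (this position is internal, being under the $\oc$ of the argument). Contracting the $\bangSymbSubs$-redex first gives $u_2 = \bangICtxt_2<\bangLCtxt'<q\isub{x}{w}>>$, which contains one residual of the $\bangSymbBang$-redex per free occurrence of $x$ in $q$; contracting all of them --- and this is precisely where the reflexive--transitive leg $\bangArrSet*_I<d!>$ is genuinely needed --- reaches $s = \bangICtxt_2<\bangLCtxt'<q\isub{x}{w'}>>$, where $w \bangArrSet_I<d!> w'$. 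Conversely, contracting the $\bangSymbBang$-redex inside $w$ first keeps the argument of the form $\bangLCtxt'<\oc w'>$, so a single $\bangSymbSubs$-step from $u_1$ reaches the same $s$.
\item \emph{The $\rel$-redex lies inside the $\bangSymbBang$-redex}, i.e.\ strictly inside the list context $\bangLCtxt$, or inside the body $r$: since the rule $\bangSymbBang$ neither duplicates nor discards its body, the two steps commute, each leg being a single step, closed with \Cref{l:Internal_Context_Stability}.
\end{itemize}

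The main obstacle I anticipate is combinatorial rather than conceptual: handling cleanly the interaction with the \emph{distance} list contexts in all three rules, and tracking the restriction of \Cref{d:Internal_context_reduction} on where $\bangSymbSubs$-redexes may occur inside a context, so that every intermediate context can be certified internal; and, in the $\bangSymbSubs$ case above, correctly accounting for the multiplication of the $\bangSymbBang$-redex, which is exactly what forces the right-hand leg of the square to be $\bangArrSet*_I<d!>$ and not a single step. Once the case split is set up, each individual diagram closes mechanically via \Cref{l:Internal_Context_Stability} (parts 2 and 3 to push a term or subcontext step through an internal context, part 1 to preserve internality of contexts), exactly as \Cref{l:stability} is used in the \full-reduction proof.
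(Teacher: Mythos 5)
Your plan is correct and follows essentially the same route as the paper's proof: a case analysis on the relative positions of the two redexes, closing each local diagram with \Cref{l:Internal_Context_Stability} (the paper phrases the ``disjoint/in the surrounding context'' cases via the reduction on contexts of \Cref{d:Internal_context_reduction}, exactly as in the \full{} case), and isolating the single situation where the $\bangSymbBang$-redex sits under the $\oc$ of the duplicable argument of a $\bangSymbSubs$-redex as the one forcing the $\bangArrSet*_I<d!>$ leg. The only organizational difference is your disjoint/nested trichotomy versus the paper's ``primary redex plus step inside $\bangICtxt$, $\bangLCtxt$, $s_1$ or $s_2$'' presentation, which is immaterial.
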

\begin{proof} 
    We  analyze different (potentially overlapping) situations that at the
end cover all  possible cases. To close the diagrams we
use~\Cref{l:stability} and ~\Cref{l:Internal_Context_Stability}.

\begin{enumerate}
\item $t = \bangICtxt<\der{\bangLCtxt<\oc s_1>}>$ and $u_1 =
    \bangICtxt<\bangLCtxt<s_1>>$  and the step $t \bangArrSet_I<R>
    u_2$ occurs internally inside $\bangICtxt$, or fully inside
    $\bangLCtxt$, or $s_1$. We analyze all the possible cases.
    \begin{itemize}
    \item[\bltI] If $\bangICtxt \bangArrSet_I<R> \bangFCtxt$, then we
      know by \Cref{l:Internal_Context_Stability}.1 that there exist
      $\bangICtxt' \in \bangICtxtSet$ such that $\bangFCtxt =
      \bangICtxt'$. Then:
      \begin{equation*}
            \begin{array}{ccc}
                \bangICtxt<\der{\bangLCtxt<\oc s_1>}>       &\bangLongArrSet{1cm}_I<R>  &\bangICtxt'<\der{\bangLCtxt<\oc s_1>}> 
            \\[0.2cm]
                \bangLongDownArrBangISet{1cm}               &                           &\bangLongDownArrBangISet{1cm}
            \\[0.2cm]
                \bangICtxt<\bangLCtxt<s_1>>                 &\bangLongArrSet{1cm}_I<R>  &\bangICtxt'<\bangLCtxt<s_1>>
            \end{array}
        \end{equation*}

    \item[\bltI] If $\bangLCtxt \bangArrSet_F<R> \bangLCtxt'$, then:
      \begin{equation*}
            \begin{array}{ccc}
                \bangICtxt<\der{\bangLCtxt<\oc s_1>}>       &\bangLongArrSet{1cm}_I<R>  &\bangICtxt<\der{\bangLCtxt'<\oc s_1>}> 
            \\[0.2cm]
                \bangLongDownArrBangISet{1cm}               &                           &\bangLongDownArrBangISet{1cm}
            \\[0.2cm]
                \bangICtxt<\bangLCtxt<s_1>>                 &\bangLongArrSet{1cm}_I<R>  &\bangICtxt<\bangLCtxt'<s_1>>
            \end{array}
        \end{equation*}

    \item[\bltI] If $s_1 \bangArrSet_F<R> s'_1$, then:
        \begin{equation*}
            \begin{array}{ccc}
                \bangICtxt<\der{\bangLCtxt<\oc s_1>}>   &\bangLongArrSet{1cm}_I<R>  &\bangICtxt<\der{\bangLCtxt<\oc s'_1>}> 
            \\[0.2cm]
                \bangLongDownArrBangISet{1cm}           &                           &\bangLongDownArrBangISet{1cm}
            \\[0.2cm]
                \bangICtxt<\bangLCtxt<s_1>>             &\bangLongArrSet{1cm}_I<R>  &\bangICtxt<\bangLCtxt<s'_1>>
            \end{array}
          \end{equation*}
        \end{itemize}

\item $t = \bangICtxt<\app{\bangLCtxt<\abs{x}{s_1}>}{s_2}>$ and $u_2 =
    \bangICtxt<\bangLCtxt<s_1\esub{x}{s_2}>>$  and the step $t
    \bangArrSet_I<d!> u_1$ occurs internally inside $\bangICtxt$, or
    fully inside $\bangLCtxt$, $s_1$, or $s_2$. We analyze all the
    possible cases.
    \begin{itemize}
    \item[\bltI] If $\bangICtxt \bangArrSet_I<d!> \bangFCtxt$, then we
        know from \Cref{l:Internal_Context_Stability}.1 that there
        exist $\bangICtxt' \in \bangICtxtSet$ such that $\bangFCtxt =
        \bangICtxt'$. Then:
        \begin{equation*}
            \begin{array}{ccc}
                \bangICtxt<\app{\bangLCtxt<\abs{x}{s_1}>}{s_2}>     &\bangLongArrSet{1cm}_I<dB> &\bangICtxt<\bangLCtxt<s_1\esub{x}{s_2}>> 
            \\[0.2cm]
                \bangLongDownArrBangISet{1cm}                       &                           &\bangLongDownArrBangISet{1cm}
            \\[0.2cm]
                \bangICtxt'<\app{\bangLCtxt<\abs{x}{s_1}>}{s_2}>    &\bangLongArrSet{1cm}_I<dB> &\bangICtxt'<\bangLCtxt<s_1\esub{x}{s_2}>>
            \end{array}
        \end{equation*}

    \item[\bltI] If $\bangLCtxt \bangArrSet_F<d!> \bangLCtxt'$, then:
        \begin{equation*}
            \hspace{-1.5cm}
            \begin{array}{ccc}
                \bangICtxt<\app{\bangLCtxt<\abs{x}{s_1}>}{s_2}>     &\bangLongArrSet{1cm}_I<dB> &\bangICtxt<\bangLCtxt<s_1\esub{x}{s_2}>>
            \\[0.2cm]
                \bangLongDownArrBangISet{1cm}                       &                           &\bangLongDownArrBangISet{1cm}
            \\[0.2cm]
                \bangICtxt<\app{\bangLCtxt'<\abs{x}{s_1}>}{s_2}>    &\bangLongArrSet{1cm}_I<dB> &\bangICtxt<\bangLCtxt'<s_1\esub{x}{s_2}>>
            \end{array}
        \end{equation*}

    \item[\bltI] If $s_1 \bangArrSet_F<d!> s'_1$, then:
        \begin{equation*}
            \begin{array}{ccc}
                \bangICtxt<\app{\bangLCtxt<\abs{x}{s_1}>}{s_2}>     &\bangLongArrSet{1cm}_I<dB> &\bangICtxt<\bangLCtxt<s_1\esub{x}{s_2}>>
            \\[0.2cm]
                \bangLongDownArrBangISet{1cm}                       &                           &\bangLongDownArrBangISet{1cm}
            \\[0.2cm]
                \bangICtxt<\app{\bangLCtxt<\abs{x}{s'_1}>}{s_2}>    &\bangLongArrSet{1cm}_I<dB> &\bangICtxt<\bangLCtxt<s'_1\esub{x}{s_2}>>
            \end{array}
        \end{equation*}

    \item[\bltI] If $s_2 \bangArrSet_F<d!> s'_2$, then:
        \begin{equation*}
            \begin{array}{ccc}
                \bangICtxt<\app{\bangLCtxt<\abs{x}{s_1}>}{s_2}>     &\bangLongArrSet{1cm}_I<dB> &\bangICtxt<\bangLCtxt<s_1\esub{x}{s_2}>>
            \\[0.2cm]
                \bangLongDownArrBangISet{1cm}                       &                           &\bangLongDownArrBangISet{1cm}
            \\[0.2cm]
                \bangICtxt<\app{\bangLCtxt<\abs{x}{s_1}>}{s'_2}>    &\bangLongArrSet{1cm}_I<dB> &\bangICtxt<\bangLCtxt<s_1\esub{x}{s'_2}>> 
            \end{array}
        \end{equation*}
    \end{itemize}

\item $t = \bangICtxt<s_1\esub{x}{\bangLCtxt<\oc s_2>}>$ and $u_2 =
    \bangICtxt<\bangLCtxt<s_1\isub{x}{s_2}>>$  and the step $t
    \bangArrSet_I<d!> u_1$ occurs internally inside $\bangICtxt$, or
    fully inside $\bangLCtxt$, $s_1$ or $s_2$. We analyze all the
    possible  cases.
    \begin{itemize}
    \item[\bltI] If $\bangICtxt \bangArrSet_I<d!> \bangFCtxt$, then we
        know from \Cref{l:Internal_Context_Stability}.1 that there
        exist $\bangICtxt' \in \bangICtxtSet$ such that $\bangFCtxt =
        \bangICtxt'$. Then:
        \begin{equation*}
            \hspace{-1cm}
            \begin{array}{ccc}
                \bangICtxt<s_1\esub{x}{\bangLCtxt<\oc s_2>}>    &\bangLongArrSet{1cm}_I<s!> &\bangICtxt<\bangLCtxt<s_1\isub{x}{s_2}>>
            \\[0.2cm]
                \bangLongDownArrBangISet{1cm}                   &                           &\bangLongDownArrBangISet{1cm}
            \\[0.2cm]
                \bangICtxt'<s_1\esub{x}{\bangLCtxt<\oc s_2>}>   &\bangLongArrSet{1cm}_I<s!> &\bangICtxt'<\bangLCtxt<s_1\isub{x}{s_2}>>
            \end{array}
        \end{equation*}

    \item[\bltI] If $s_1 \bangArrSet_F<d!> s'_1$, then:  
        \begin{equation*}
            \hspace{-1cm}
            \begin{array}{ccc}
                \bangICtxt<s_1\esub{x}{\bangLCtxt<\oc s_2>}>    &\bangLongArrSet{1cm}_I<s!> &\bangICtxt<\bangLCtxt<s_1\isub{x}{s_2}>>
            \\[0.2cm]
                \bangLongDownArrBangISet{1cm}                   &                           &\bangLongDownArrBangISet{1cm}
            \\[0.2cm]
                \bangICtxt<s'_1\esub{x}{\bangLCtxt<\oc s_2>}>   &\bangLongArrSet{1cm}_I<s!> &\bangICtxt<\bangLCtxt<s'_1\isub{x}{s_2}>>
            \end{array}
        \end{equation*}

    \item[\bltI] If $\bangLCtxt \bangArrSet_F<d!> \bangLCtxt'$, then:
        \begin{equation*}
            \hspace{-1cm}
            \begin{array}{ccc}
                \bangICtxt<s_1\esub{x}{\bangLCtxt<\oc s_2>}>    &\bangLongArrSet{1cm}_I<s!> &\bangICtxt<\bangLCtxt<s_1\isub{x}{s_2}>>
            \\[0.2cm]
                \bangLongDownArrBangISet{1cm}                   &                           &\bangLongDownArrBangISet{1cm}
            \\[0.2cm]
                \bangICtxt<s_1\esub{x}{\bangLCtxt'<\oc s_2>}>   &\bangLongArrSet{1cm}_I<s!> &\bangICtxt<\bangLCtxt'<s_1\isub{x}{s_2}>>
            \end{array}
        \end{equation*}

    \item[\bltI] If $s_2 \bangArrSet_F<d!> s'_2$, then:
        \begin{equation*}
            \hspace{-1cm}
            \begin{array}{ccc}
                \bangICtxt<s_1\esub{x}{\bangLCtxt<\oc s_2>}>    &\bangLongArrSet{1cm}_I<s!> &\bangICtxt<\bangLCtxt<s_1\isub{x}{s_2}>>
            \\[0.2cm]
                \bangLongDownArrBangISet{1cm}                   &                           &\bangLongDownArrsBangISet{1cm}
            \\[0.2cm]
                \bangICtxt<s_1\esub{x}{\bangLCtxt<\oc s'_2>}>   &\bangLongArrSet{1cm}_I<s!> &\bangICtxt<\bangLCtxt<s_1\isub{x}{s'_2}>>
            \end{array}
        \end{equation*}
    \qed
    \end{itemize}
\end{enumerate}

\end{proof}


\begin{lemma}[Diamond of $\bangArrSet_I<d!>$]
		\label{lem:Local Confluence ->I<d!>}
		Let $t, u_1, u_2 \in \setBangTerms$. 
		If $t \bangArrSet_I<d!>
		u_1$, $t \bangArrSet_I<d!> u_2$ and $u_1 \neq u_2$, then there is
		$s \in \setBangTerms$ such that the diagram below commutes.
		\begin{equation*}
			\begin{array}{ccc}
				t                               &\bangLongArrSet{1cm}_I<d!> &u_2
				\\
				\bangLongDownArrBangISet{1cm}   &                           &\bangLongDownArrBangISet{1cm}
				\\
				u_1                             &\bangLongArrSet{1cm}_I<d!> &s
			\end{array}
		\end{equation*}
\end{lemma}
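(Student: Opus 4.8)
The plan is to replay the argument used for the diamond of $\bangArrSet_F<d!>$ (\Cref{lem:Local Confluence ->F<d!>}), replacing full contexts by internal ones and invoking \Cref{l:Internal_Context_Stability} to ensure that every residual step stays inside $\bangArrSet_I<d!>$. First I would decompose the step $t \bangArrSet_I<d!> u_2$: by definition of internal $\bangSymbBang$-reduction there are $\bangICtxt \in \bangICtxtSet$, $\bangLCtxt \in \bangLCtxtSet$ and $s_1 \in \setBangTerms$ with $t = \bangICtxt<\der{\bangLCtxt<\oc s_1>}>$ and $u_2 = \bangICtxt<\bangLCtxt<s_1>>$. I would record that this decomposition of a $\bangSymbBang$-redex is forced --- the $\oc s_1$ sits at the bottom of the ES-spine under the head $\der$, so $\bangLCtxt$ and $s_1$ are uniquely determined --- so that the $\bangSymbBang$-redex contracted by $t \bangArrSet_I<d!> u_1$, being distinct from the one yielding $u_2$ (here is where $u_1 \neq u_2$ enters), sits at a position that is either disjoint from, or strictly below, the chosen one. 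Since $\bangSymbBang$-redex positions are pairwise disjoint or nested, the $u_1$-step then occurs either internally inside the context $\bangICtxt$, or fully inside one of the terms $\bangLCtxt$ or $s_1$ occupying the hole of $\bangICtxt$ (the latter two positions being automatically internal in $t$ because $\bangICtxt$ is).

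Next I would go through these three mutually exclusive and exhaustive subcases, closing a genuine diamond in each, and present them as three short diagrams, exactly as in the proofs of \Cref{lem:Local Confluence ->F<d!>} and \Cref{lem:Commutation_I<d!>_I<R>}. (i) If $\bangICtxt \bangArrSet_I<d!> \bangFCtxt$, then \Cref{l:Internal_Context_Stability}.1 gives $\bangFCtxt = \bangICtxt'$ for some $\bangICtxt' \in \bangICtxtSet$, the two steps act on disjoint regions, and the square closes at $s = \bangICtxt'<\bangLCtxt<s_1>>$, using \Cref{l:Internal_Context_Stability}.2--3 to lift the subterm and subcontext steps. (ii) If $\bangLCtxt \bangArrSet_F<d!> \bangLCtxt'$, the $u_1$-redex lives in the ES-list of the $u_2$-redex, disjoint from its head $\der$, and the square closes at $s = \bangICtxt<\bangLCtxt'<s_1>>$. (iii) If $s_1 \bangArrSet_F<d!> s'_1$, again disjoint, and the square closes at $s = \bangICtxt<\bangLCtxt<s'_1>>$; in cases (ii)--(iii) the residuals are obtained via \Cref{l:stability} together with \Cref{l:Internal_Context_Stability}. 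In every case both residual steps are single $\bangArrSet_I<d!>$-steps, so the diagram with single arrows on all four sides commutes.

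The main obstacle I anticipate is certifying exhaustiveness of the case split: I must argue that any $\bangSymbBang$-redex of $t$ other than $\der{\bangLCtxt<\oc s_1>}$ falls into exactly one of the three positions above. This rests on the uniqueness of the list-context decomposition of a $\bangSymbBang$-redex and on the fact that an internal context cannot overlap the head $\der$ of the chosen redex otherwise than by having it sit inside one of the context's subterms --- the same point that is non-trivial in \Cref{lem:Local Confluence ->F<d!>}, now slightly complicated by the inductive definition of internal contexts in \Cref{d:Internal_context_reduction}. I would discharge it uniformly by appealing to \Cref{l:Internal_Context_Stability}, noting also that since both steps contract $\bangSymbBang$-redexes there is none of the $\bangSymbSubs$-in-subcontext subtlety that \Cref{d:context-reduction} and \Cref{d:Internal_context_reduction} have to guard against elsewhere.
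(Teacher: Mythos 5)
Your proposal is correct and follows essentially the same route as the paper: decompose $t = \bangICtxt<\der{\bangLCtxt<\oc s_1>}>$, $u_2 = \bangICtxt<\bangLCtxt<s_1>>$, split the other step into the three positions (internally in $\bangICtxt$, fully in $\bangLCtxt$, fully in $s_1$), and close each square in one step using \Cref{l:stability} and \Cref{l:Internal_Context_Stability}, with the same joining terms. Your extra remarks on exhaustiveness and on where $u_1 \neq u_2$ is used only make explicit what the paper leaves implicit.
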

\begin{proof}
As $t \bangArrSet_I<d!> u_2$, then $t = \bangICtxt<\der{\bangLCtxt<\oc
s_1>}>$ and $u_2 = \bangICtxt<\bangLCtxt<s_1>>$. Different cases have
to be distinguished, since the reduction step $t \bangArrSet_I<d!>
u_1$ may occur internally inside $\bangICtxt$, or fully inside
$\bangLCtxt$, or $s_1$; in all these cases we use
\Cref{l:stability,l:Internal_Context_Stability} to conclude. 
\begin{itemize}
\item[\bltI] If $\bangICtxt \bangArrSet_I<d!> \bangFCtxt$, then we
    know from \Cref{l:Internal_Context_Stability}.1 that there exist
    $\bangICtxt' \in \bangICtxtSet$ such that $\bangFCtxt =
    \bangICtxt'$. Then:
    \begin{equation*}
        \hspace{-1cm}
        \begin{array}{ccc}
            \bangICtxt<\der{\bangLCtxt<\oc s_1>}>   &\bangLongArrSet{1cm}_I<d!>     &\bangICtxt<\bangLCtxt<s_1>>
        \\[0.2cm]
            \bangLongDownArrBangISet{1cm}           &                               &\bangLongDownArrBangISet{1cm}
        \\[0.2cm]
            \bangICtxt'<\der{\bangLCtxt<\oc s_1>}>  & \bangLongArrSet{1cm}_I<d!>    &\bangICtxt'<\bangLCtxt<s_1>>
        \end{array}
    \end{equation*}

\item[\bltI] If $\bangLCtxt \bangArrSet_F<d!> \bangLCtxt'$, then:
    \begin{equation*}
        \hspace{-1cm}
        \begin{array}{ccc}
            \bangICtxt<\der{\bangLCtxt<\oc s_1>}>   & \bangLongArrSet{1cm}_I<d!>    &\bangICtxt<\bangLCtxt<s_1>>
        \\[0.2cm]
            \bangLongDownArrBangISet{1cm}           &                               &\bangLongDownArrBangISet{1cm}
        \\[0.2cm]
            \bangICtxt<\der{\bangLCtxt'<\oc s_1>}>  &\bangLongArrSet{1cm}_I<d!>     &\bangICtxt<\bangLCtxt'<s_1>>
        \end{array}
    \end{equation*}

\item[\bltI] If $s_1 \bangArrSet_F<d!> s'_1$, then:
    \begin{equation*}
        \hspace{-1cm}
        \begin{array}{ccc}
            \bangICtxt<\der{\bangLCtxt<\oc s_1>}>   &\bangLongArrSet{1cm}_I<d!>     &\bangICtxt<\bangLCtxt<s_1>>
        \\[0.2cm]
            \bangLongDownArrBangISet{1cm}           &                               &\bangLongDownArrBangISet{1cm}
        \\[0.2cm]
            \bangICtxt<\der{\bangLCtxt<\oc s'_1>}>  &\bangLongArrSet{1cm}_I<d!>     &\bangICtxt<\bangLCtxt<s'_1>>
        \end{array}
    \end{equation*}
\qed
\end{itemize}
\end{proof}

\begin{corollary}[Confluence of $\bangArr_I<d!>$]
    \label{lem:Confluence_I<d!>}
    Let $t, u_1, u_2 \in \setBangTerms$ such that $t
    \bangArrSet*_I<d!> u_1$ and $t \bangArrSet*_I<d!> u_2$. Then there
    is $s \in \setBangTerms$ such that the diagram below commutes.
    \begin{equation*}
        \begin{array}{ccc}
            t                               &\bangLongArrSet{1cm}*_I<d!>    &u_1
        \\
            \bangLongDownArrsBangISet{1cm}  &                               &\bangLongDownArrsBangISet{1cm}
        \\
            u_2                             &\bangLongArrSet{1cm}*_I<d!>    &s
        \end{array}
    \end{equation*}
\end{corollary}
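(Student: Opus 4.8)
The plan is to obtain confluence of $\bangArrSet_I<d!>$ as an immediate consequence of its diamond property, just established in \Cref{lem:Local Confluence ->I<d!>}, in exactly the same way the analogous statements for the surface and full administrative reductions were derived from their diamonds (see \Cref{lem:t->*S<d!>u1andt->*S<d!>u2=>u1->*S<d!>sandu2->*S<d!>s} and \Cref{lem:Confluence_Admin_Full}). The diamond lemma closes peaks of \emph{single} $\bangSymbBang$-steps with one step on each side; upgrading this to the reflexive–transitive closure is the classical tiling argument.

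Concretely, I would first prove a strip lemma: whenever $t \bangArrSet_I<d!> u_1$ and $t \bangArrSet*_I<d!> u_2$, there is $s$ with $u_1 \bangArrSet*_I<d!> s$ and $u_2 \bangArrSet*_I<d!> s$. This goes by induction on the length of the derivation $t \bangArrSet*_I<d!> u_2$. If that length is $0$, take $s = u_1$. Otherwise write $t \bangArrSet_I<d!> u_2'$ followed by the shorter derivation $u_2' \bangArrSet*_I<d!> u_2$; apply \Cref{lem:Local Confluence ->I<d!>} to the local peak formed by $t \bangArrSet_I<d!> u_1$ and $t \bangArrSet_I<d!> u_2'$ — the case $u_1 = u_2'$ being closed reflexively — to get $r$ with $u_1 \bangArrSet_I<d!> r$ and $u_2' \bangArrSet_I<d!> r$ (allowing equalities); then feed the peak given by $u_2' \bangArrSet_I<d!> r$ and $u_2' \bangArrSet*_I<d!> u_2$ to the induction hypothesis and paste the resulting square underneath. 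A second, outer induction on the length of $t \bangArrSet*_I<d!> u_1$, combining the strip lemma with its induction hypothesis, then yields the stated corollary. Equivalently, one may simply invoke the standard abstract-rewriting fact that a reduction whose one-step relation is diamond is confluent, which is the route used elsewhere in the paper.

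I do not anticipate any genuine obstacle here. The only point needing a touch of care is that the paper's notion of ``diamond'' in \Cref{sec:basic-notions} carries the side condition $u_1 \neq u_2$, so in the tiling one must separately dispatch the degenerate case where the two one-steps from a common source coincide, which is closed trivially. All the real rewriting content — that two distinct $\bangSymbBang$-redexes occurring inside internal contexts do not interfere and can be joined in one step each — is already packaged in \Cref{lem:Local Confluence ->I<d!>}, which itself rests on the stability lemmas \Cref{l:stability,l:Internal_Context_Stability}; nothing further is required.
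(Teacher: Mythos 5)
Your proposal is correct and matches the paper's proof, which simply cites the diamond property of $\bangArrSet_I<d!>$ (\Cref{lem:Local Confluence ->I<d!>}) and treats confluence as an immediate consequence; you merely spell out the standard strip-lemma/tiling induction (and the degenerate equal-reducts case) that the paper leaves implicit.
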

\begin{proof} 
    Immediate consequence of the diamond for $\bangArrSet_I<d!>$
    (\Cref{lem:Local Confluence ->I<d!>}).
    \qed
\end{proof}

\begin{lemma}[Internal Diligence]
    \label{lem:Internal_Diligence}%
    Let $t, u \in \setBangTerms$ such that $t \bangArrSet*_I u$.
    If $u$ is a \bangSetINF<d!>, then $t\bangArrSet*_{Iai} u$.
\end{lemma}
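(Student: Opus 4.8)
The plan is to obtain \Cref{lem:Internal_Diligence} as a direct instance of the abstract diligence result \Cref{lem: Abstract Implicitation}, applied to the family of contexts $\mathbb{E} \coloneqq \bangICtxtSet$. With this choice $\bangArrSet_E$ is $\bangArrSet_I$, the parametrised reduction $\bangArrSet_{Eai}$ of \Cref{def:abstract-diligence} is exactly $\bangArrSet_{Iai}$, and the conclusion ``$\bangArrSet_E$ admits a diligence process'' unfolds precisely to the claim at hand: $t \bangArrSet*_{Iai} u$ whenever $t \bangArrSet*_I u$ and $u$ is a \bangSetINF<d!>. So it only remains to verify that the two hypotheses of \Cref{lem: Abstract Implicitation} hold for $\bangArrSet_I$.

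First I would discharge the confluence hypothesis: $\bangArrSet_I<d!>$ is confluent by \Cref{lem:Confluence_I<d!>}, itself an immediate consequence of the diamond property for $\bangArrSet_I<d!>$ (\Cref{lem:Local Confluence ->I<d!>}). Then I would discharge the commuting-square hypothesis: for $\rel \in \{\bangSymbBeta, \bangSymbSubs\}$, whenever $t \bangArrSet_I<\rel> u_1$ and $t \bangArrSet_I<d!> u_2$ there is $s$ with $u_1 \bangArrSet*_I<d!> s$ and $u_2 \bangArrSet_I<\rel> s$ --- this is exactly \Cref{lem:Commutation_I<d!>_I<R>}, up to swapping the names of the two reducts. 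I would also note that the termination of $\bangArrSet_I<d!>$ used internally by \Cref{lem: Abstract Implicitation} follows from $\bangArrSet_I<d!> \,\subseteq\, \bangArrSet_F<d!>$ together with \Cref{lemma:RecFullDistantBangSN}. Once these checks are in place, internal diligence is the one-line specialisation mirroring the surface (\Cref{lem:SurfaceReorchestration}) and full (\Cref{lem:FullReorchestration}) cases.

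The genuine effort is therefore already packaged in the two lemmas invoked above, whose proofs proceed by a case analysis on the relative positions of the two redexes --- whether the $\rel$-step lies in an internal subcontext of the surrounding internal context, or fully inside the list context of the distance $\bangSymbBang$-redex, or inside one of the involved subterms --- each square being closed by substitutivity of $\bangArrSet_F<R>$ (\Cref{l:stability}) and the internal-context stability facts of \Cref{l:Internal_Context_Stability}, which in particular guarantee that reducing an internal context by $\bangArrSet_I<R>$ yields again an internal context. The main delicacy there is that a $\bangSymbSubs$-redex whose hole would land under a bang inside its argument must not be counted, which is why internal reduction on contexts is set up as in \Cref{d:Internal_context_reduction}; this is the only point where the internal case genuinely diverges from the full one.
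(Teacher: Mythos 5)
Your proposal is correct and matches the paper's proof exactly: the paper also obtains internal diligence as the instance of the abstract diligence lemma (\Cref{lem: Abstract Implicitation}) for $\bangArrSet_I$, discharging its hypotheses via \Cref{lem:Confluence_I<d!>} and \Cref{lem:Commutation_I<d!>_I<R>}. Your additional remark that termination of $\bangArrSet_I<d!>$ follows from $\bangArrSet_I<d!>\subseteq\bangArrSet_F<d!>$ and \Cref{lemma:RecFullDistantBangSN} is a correct reading of what the abstract lemma silently relies on.
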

\begin{proof} \label{prf:Internal_Diligence}%
    Just apply \Cref{lem: Abstract Implicitation} to $\bangArrSet_I$,
    since its hypotheses are fulfilled
    (\Cref{lem:Confluence_I<d!>,lem:Commutation_I<d!>_I<R>}).
    \qed
\end{proof}

\subsection{Bang Factorization}

\begin{definition}
    \label{d:surface context-reduction}%
     \emph{Surface reduction} $\bangArrSet_S<R>$ is extended to contexts
as expected:
\begin{itemize}
\item either the redex occurs in a \emph{subterm} of some surface context,  
\item or the redex occurs in a surface \emph{subcontext}, where, in particular,
  $\bangSymbSubs$-redexes are only defined
as   
$t\esub{x}{(\oc u)\esub{x_1}{u_1} \ldots \esub{x_i}{\bangFCtxt_i}
\ldots \esub{x_n}{u_n}}$, but not
as $t\esub{x}{\bangLCtxt<\oc \bangFCtxt>}$ or 
  $\bangFCtxt\esub{x}{\bangLCtxt<\oc u>}$.  
\end{itemize} 
\end{definition}

Remark that in particular $\bangArrSet_S<s!>$ reduction cannot make
the bang affecting the substituted subcontext disappear. Examples of
the first case are $\Hole\esub{x}{\app{(\abs{w}{w})}{z}}
\bangArrSet_S<dB> \Hole \esub{x}{w\esub{w}{z}}$, $\app{\Hole\,
}{\der{\oc z}} \bangArrSet_S<d!> \app{\Hole\, }{z} $ and
$\abs{x}{\Hole z\esub{z}{\oc x}} \bangArrSet_S<s!>\abs{x}{\Hole x}$,
while examples of the second case are $\app{(\abs{x}{x})}{\Hole}
\bangArrSet_S<dB> x\esub{x}{\Hole}$ and $ \Hole (x\esub{x}{\oc
y})\bangArrSet_S<s!> \Hole y$, but $(xx)\esub{x}{\oc \Hole}$ and
$(\Hole x)\esub{x}{\oc y}$ are $\bangArrSet_S<R>$-irreducible.

\begin{lemma}
    \label{l:surface_context_reduction}
    Let $t \in \setBangTerms$. Then, 
    \begin{enumerate}
   \item \label{item:ISdBI}%
        For every $\bangICtxt \in \bangICtxtSet$, if $\bangICtxt
        \bangArrSet_S<dB> \bangFCtxt$ then $\bangFCtxt \in
        \bangICtxtSet$.

   \item For every $\bangSCtxt \in \bangSCtxtSet$, if $\bangSCtxt
        \bangArrSet_S<R> \bangFCtxt$ then $\bangFCtxt \in
        \bangSCtxtSet$.
    \item\label{p:surface_context_reduction}%
        If $\bangFCtxt \bangArrSet_S<\rel> \bangFCtxt'$, then
        $\bangFCtxt<t>  \bangArrSet_S<R> \bangFCtxt'<t>$.

    \item If $t \bangArrSet_S<R> t'$, then $\bangSCtxt<t>
        \bangArrSet_S<R> \bangSCtxt<t'>$.
    \end{enumerate}
\end{lemma}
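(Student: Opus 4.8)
The plan is to prove the four items by structural inductions, transposing the template of \Cref{l:stability} to the surface stratum and exploiting the contextual presentation of surface reduction from \Cref{d:surface context-reduction}. For items~\ref{item:ISdBI} and~2 I would argue by induction on the derivation of the context reduction $\bangICtxt \bangArrSet_S<dB> \bangFCtxt$ (resp.\ $\bangSCtxt \bangArrSet_S<R> \bangFCtxt$), distinguishing whether the fired redex occurs in a \emph{subterm} of the context or in a \emph{subcontext}. In the subterm case the sequence of constructors along the path from the root to the hole is untouched, so internality (resp.\ surfaceness) is preserved for free. In the subcontext case I peel off the outermost context constructor and invoke the induction hypothesis, using that the grammars of $\bangICtxtSet$ and $\bangSCtxtSet$ are closed under each of those constructors; the only genuine work then lies in the base cases, namely the subcontext redexes themselves.

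For item~\ref{item:ISdBI} the key observation in the base cases is that a $\bangSymbBeta$-step $\app{\bangLCtxt<\abs{x}{s}>}{r} \bangMapstoBeta \bangLCtxt<s\esub{x}{r}>$ creates no $\oc$ and destroys no $\oc$, so a $\oc$ lying above the hole is carried over unchanged; this is exactly why the statement is restricted to $\bangSymbBeta$, since a $\bangSymbBang$-step such as $\der{\oc\Hole} \bangArrSet_S<d!> \Hole$ would turn an internal context into a non-internal one. For item~2 one must instead check that no subcontext redex of a surface context can drag the hole underneath a $\oc$: a $\bangSymbBeta$-step only inserts an explicit substitution; a $\bangSymbSubs$-step $s\esub{x}{\bangLCtxt<\oc r>} \bangMapstoSubs \bangLCtxt<s\isub{x}{r}>$ only copies or erases $r$ within $s$, leaving the hole (which sits in $s$ and differs from $x$) at the same modality depth; and a $\bangSymbBang$-step can only erase a $\oc$ that does not enclose the hole, for otherwise \Cref{d:surface context-reduction} would not classify the occurrence as a surface-context redex in the first place.

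Items~3 and~4 are the surface counterparts of parts~3 and~4 of \Cref{l:stability}, obtained by replaying those proofs with $\bangArrSet_F$ replaced by $\bangArrSet_S$ and full contexts by surface contexts where appropriate. Concretely, item~4 follows because $t \bangArrSet_S<R> t'$ means $t = \bangSCtxt'<r> \bangArrSet_S<R> \bangSCtxt'<r'> = t'$ for some surface context $\bangSCtxt'$ and some root redex, and $\bangSCtxt<\bangSCtxt'>$ is again a surface context (immediate from the grammar), whence $\bangSCtxt<t> \bangArrSet_S<R> \bangSCtxt<t'>$; item~3 follows by a direct induction on $\bangFCtxt \bangArrSet_S<\rel> \bangFCtxt'$, the subterm case being trivial and the subcontext case reducing, after unfolding the definition, to plugging the term $t$ into the hole, which lies strictly inside the rewritten part and hence does not interfere with the step.

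I expect the real obstacle to be the collection of subcontext base cases in item~2: one has to run through every shape of $\bangSymbSubs$- and $\bangSymbBang$-redex permitted by \Cref{d:surface context-reduction} and verify that the hole never slips below a $\oc$, which is precisely where the side conditions of that definition (forbidding $\bangFCtxt\esub{x}{\bangLCtxt<\oc r>}$ and $t\esub{x}{\bangLCtxt<\oc \bangFCtxt>}$ as subcontext $\bangSymbSubs$-redexes) do the decisive work. Everything else is routine bookkeeping paralleling \Cref{l:stability}.
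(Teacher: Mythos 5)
Your proposal is correct and takes essentially the same route as the paper, whose proof is exactly the ``straightforward induction on $\bangArrSet_S<R>$'' over its contextual definition that you spell out, with items 3 and 4 mirroring \Cref{l:stability}. One harmless slip: in the only admissible $\bangSymbSubs$-subcontext case the hole sits in an ES argument $\bangFCtxt_i$ of the list context (not in the body $t$, which is one of the configurations the definition excludes), and it plainly stays at a surface position after the step, so your conclusion and your appeal to the side conditions are unaffected.
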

\begin{proof}
    All the points are by straightforward induction on $
    \bangArrSet_S<R>$.
    \qed
\end{proof}

\BangSFS*
\begin{proof} \label{prf:BangSFS}%
We prove the following properties:
\begin{enumerate}
\item \textbf{(Termination)} The reductions
$\bangArrSet_S<dB>$, $\bangArrSet_S<s!>$ and $\bangArrSet_S<d!>$ are
terminating.

\item \textbf{(Row-swaps)}
    \begin{enumerate}
    \item $\bangArrSet_I<dB> \bangArrSet_S<dB>%
            \ \subseteq \ %
        \bangArrSet_S<dB> \bangArrSet_I<dB>$;

    \item $\bangArrSet_I<s!> \bangArrSet_S<s!>%
            \ \subseteq \ %
        \bangArrSet+_S<s!>\bangArrSet*_I<s!>$;

    \item $\bangArrSet_I<d!> \bangArrSet_S<d!>%
            \ \subseteq \ %
        \bangArrSet_S<d!> \bangArrSet_I<d!>%
        \,\cup\,%
        \bangArrSet_S<d!> \bangArrSet_S<d!>$.
    \end{enumerate}

\item \textbf{(Diagonal-swaps)}
    \begin{enumerate}
    \item $\bangArrSet_I<dB> \bangArrSet_S<s!>%
            \ \subseteq \
        \bangArrSet_S<s!> \bangArrSet*_F<dB>$;

    \item $\bangArrSet_I<dB> \bangArrSet_S<d!>%
            \ \subseteq \ %
        \bangArrSet_S<d!> \bangArrSet_I<dB>%
        \bangArrSet_S<d!> \bangArrSet_S<dB>$;

    \item $\bangArrSet_I<s!> \bangArrSet_S<dB>%
            \ \subseteq \ 
        \bangArrSet_S<dB> \bangArrSet_I<s!>$;

    \item $\bangArrSet_I<s!> \bangArrSet_S<d!>%
            \ \subseteq \
        \bangArrSet_S<d!> \bangArrSet_I<s!>%
        \;\cup\;%
        \bangArrSet_S<d!> \bangArrSet_S<s!>$;

    \item $\bangArrSet_I<d!> \bangArrSet_S<dB>%
            \ \subseteq \
        \bangArrSet_S<dB>\bangArrSet_I<d!>$;

    \item $\bangArrSet_I<d!> \bangArrSet_S<s!>%
            \ \subseteq \ 
        \bangArrSet_S<s!>\bangArrSet*_F<d!>$.
    \end{enumerate}
\end{enumerate}
thus deducing that $(\setBangTerms, \;
\{\bangFCtxt\bangCtxtPlug{\bangSymbBeta}\!,\,
\bangFCtxt\bangCtxtPlug{\bangSymbSubs}\!,\,
\bangFCtxt\bangCtxtPlug{\bangSymbBang}\})$ is an $SFS$.

\begin{enumerate}
	
\item \textbf{(Termination)} Reduction $\bangArrSet_S<d!>$ is terminating by \Cref{lemma:RecFullDistantBangSN}.
Reduction $\bangArrSet_S<dB>$ is terminating by \Cref{lemma:RecFullDistantBetaSN}.
The proof that reduction $\bangArrSet_S<s!>$ is terminating is an easy adaptation of the proof in \cite[Lemma 2.9]{AccattoliKesner12} to \BANGSymb.

\item \textbf{(Row-swaps)} Let $t, u, s \in \setBangTerms$.
    \begin{enumerate}
    \item If $t \bangArrSet_I<dB> u \bangArrSet_S<dB> s$ then $t =
        \bangICtxt<\app{\bangLCtxt<\abs{x}{t_1}>}{t_2}>$ and $u =
        \bangICtxt<\bangLCtxt<t_1\esub{x}{t_2}>>$ for some $\bangICtxt
        \in \bangICtxtSet$. Note that the step $u \bangArrSet_S<dB> s$
        can only occur inside $\bangICtxt$\ so that $\bangICtxt
        \bangArrSet_S<dB> \bangFCtxt$ and $\bangFCtxt = \bangICtxt'
        \in \bangICtxtSet$ by
        \Cref{l:surface_context_reduction}.\ref{item:ISdBI}. Thus
        $s=\bangICtxt'<\bangLCtxt<t_1\esub{x}{t_2}>>$. Using
        \Cref{l:surface_context_reduction}.\ref{p:surface_context_reduction}:
        \begin{equation*}
            \begin{array}{ccccc}
                t = &\bangICtxt<\app{\bangLCtxt<\abs{x}{t_1}>}{t_2}>    &\bangLongArrSet{1cm}_I<dB>     &\bangICtxt<\bangLCtxt<t_1\esub{x}{t_2}>> & = u
            \\[0.2cm]
                &\bangLongDownArrBetaSSet{1cm}                          &                               &\bangLongDownArrBetaSSet{1cm}
            \\[0.2cm]
                &\bangICtxt'<\app{\bangLCtxt<\abs{x}{t_1}>}{t_2}>       &\bangLongArrSet{1cm}_I<dB>     &\bangICtxt'<\bangLCtxt<t_1\esub{x}{t_2}>> & = s
            \end{array}
        \end{equation*}

    \item If $t \bangArrSet_I<s!> u \bangArrSet_S<s!> s$ then $t =
        \bangICtxt<t_1\esub{x}{ \bangLCtxt<\oc t_2>}>$ and $u =
        \bangICtxt<\bangLCtxt<t_1\isub{x}{t_2}>>$ for some $\bangICtxt
        \in \bangICtxtSet$. 
		Note that the step $u \bangArrSet_S<s!> s$ can occur in two different ways:
		\begin{itemize}
        \item either inside $\bangICtxt$ so that $\bangICtxt
			\bangArrSet_S<s!> \bangICtxt' \in \bangICtxtSet$; thus
			$s=\bangICtxt'<\bangLCtxt<t_1\isub{x}{t_2}>>$; using
			\Cref{l:surface_context_reduction}.\ref{p:surface_context_reduction}:
			\begin{equation*}
				\begin{array}{ccccc}
					t = &\bangICtxt<t_1\esub{x}{\bangLCtxt<\oc t_2>}>        &\bangLongArrSet{1cm}_I<s!>     &\bangICtxt<\bangLCtxt<t_1\isub{x}{t_2}>> & = u
					\\[0.2cm]
					&\bangLongDownArrSubsSSet{1cm}                       &                               &\bangLongDownArrSubsSSet{1cm}
					\\[0.2cm]
					&\bangICtxt'<t_1\esub{x}{\bangLCtxt<\oc t_2>}>       &\bangLongArrSet{1cm}_I<s!>     &\bangICtxt'<\bangLCtxt<t_1\isub{x}{t_2}>> & = s
				\end{array}
			\end{equation*}
		
        \item $\bangICtxt = \bangSCtxt<r\esub{y}{\bangLCtxt'<\oc
			\bangSCtxt'>}>$ with $\bangSCtxt, \bangSCtxt' \!\in
			\bangSCtxtSet$; so $u = \bangSCtxt <r
			\esub{y}{\bangLCtxt'<\oc \bangSCtxt'<\bangLCtxt<
			t_1\isub{x}{t_2}>>>}>$ and $s =
			\bangSCtxt<\bangLCtxt'<r\isub{y}{\bangSCtxt'<\bangLCtxt<t_1\isub{x}{t_2}>>}>>$;
			using
			\Cref{l:surface_context_reduction}.\ref{p:surface_context_reduction}:
			\begin{equation*}
				\begin{array}{ccccc}
					t = & \bangSCtxt <r \esub{y}{\bangLCtxt'<\oc \bangSCtxt'<t_1\esub{x}{ \bangLCtxt<\oc t_2>}>>}>    &\bangLongArrSet{.3cm}_I<s!>     & \bangSCtxt <r \esub{y}{\bangLCtxt'<\oc \bangSCtxt'<\bangLCtxt<t_1 \isub{x}{t_2}>>>}> & = u
					\\[0.2cm]
					&\bangLongDownArrSubsS{.5cm}              &                               &\bangLongDownArrSubsS{.5cm}
					\\[0.2cm]
					u' = & \bangSCtxt <\bangLCtxt'<r \isub{y}{\bangSCtxt'< t_1\esub{x}{\bangLCtxt<\oc t_2>}>}>>   &\bangLongArrSet{.3cm}^*_F<s!>     & \bangSCtxt<\bangLCtxt'<r\isub{y}{\bangSCtxt'<\bangLCtxt<t_1\isub{x}{t_2}>>}>> & = s
				\end{array}
			\end{equation*}
			We conclude by observing that the reduction sequence $u'
			\bangArrSet*_F<s!> s$ can be rearranged so that $u'
			\bangArrSet*_S<s!> \bangArrSet*_I<s!> s$ because all
			$\bangSymbSubs$-redexes $t_1\esub{x}{\bangLCtxt<\oc t_2>}$
			occurring in $r \isub{y}{\bangSCtxt'<
			t_1\esub{x}{\bangLCtxt<\oc t_2>}>}$ are not overlapping
			(as they replace different occurrences of $y$ in $r$),
			hence they can be fired independently, first the surface
			ones and then the internal ones.

        \item[\bltI] $\bangICtxt =
            \bangSCtxt<\bangICtxt'\esub{y}{\bangLCtxt'<\oc s'>}>$ for
            some $\bangICtxt' \in \bangICtxtSet$ and $s' \in
            \setBangTerms$: Thus
            \begin{equation*}
                \begin{array}{rcl}
                    t &=& \bangSCtxt<\bangICtxt'<t_1\esub{x}{\bangLCtxt<\oc t_2>}>\esub{y}{\bangLCtxt'<\oc s'>}>
                \\
                    u &=& \bangSCtxt<\bangLCtxt<\bangICtxt'<t_1\isub{x}{\oc t_2}>>\esub{y}{\bangLCtxt'<\oc s'>}>
                \\
                    s &=& \bangSCtxt<\bangLCtxt'<\bangLCtxt\isub{y}{s'}\bangCtxtPlug{\bangICtxt'\isub{y}{s'}\bangCtxtPlug{t_1\isub{y}{s'}\isub{x}{\oc t_2\isub{y}{s'}}}}>>
                \end{array}
            \end{equation*}
            Then, by taking $u' =
            \bangSCtxt<\bangLCtxt'<\bangICtxt'\isub{y}{s'}\bangCtxtPlug{t_1\isub{y}{s'}\esub{x}{\bangLCtxt\isub{y}{s'}\bangCtxtPlug{\oc
            t_2\isub{y}{s'}}}}>>$, one has:
            \begin{equation*}
        		\begin{array}{ccccc}
        			t                               &\bangLongArrSet{1cm}_I<s!>     &u
                \\[0.2cm]
                    \bangLongDownArrSubsSSet{1cm}   &                               &\bangLongDownArrSubsSSet{1cm}
                \\[0.2cm]
                    u'                              &\bangLongArrSet{1cm}_I<s!>     &s
        		\end{array}
        	\end{equation*}
		\end{itemize} 

    \item If $t \bangArrSet_I<d!> u \bangArrSet_S<d!> s$ then $t =
        \bangICtxt<\der{\bangLCtxt<\oc t'>}>$ and $u =
        \bangICtxt<\bangLCtxt<t'>>$ for some $\bangICtxt \!\in\! \bangICtxtSet$. 
	Note that the step $u  \bangArr_S<d!> s$ can occur inside $\bangICtxt$ in two different~ways:
	\begin{itemize}
		\item either $\bangICtxt \bangArrSet_S<d!> \bangICtxt' \in \bangICtxtSet$; 
		thus, $s=\bangICtxt'<\bangLCtxt<t'>>$;
		      using \Cref{l:surface_context_reduction}.\ref{p:surface_context_reduction}:
		        \begin{equation*}
			            \begin{array}{ccccc}
				                t =  & \bangICtxt<\der{\bangLCtxt<\oc t'>}>                &\bangLongArrSet{1cm}_I<d!>     &\bangICtxt<\bangLCtxt<t'>> & = u
				            \\[0.2cm]
				                &\bangLongDownArrBangSSet{1cm}                       &                               &\bangLongDownArrBangSSet{1cm}
				            \\[0.2cm]
				                &\bangICtxt'<\der{\bangLCtxt<\oc t'>}>               &\bangLongArrSet{1cm}_I<d!>     &\bangICtxt'<\bangLCtxt<t'>> & = s
				            \end{array}
			        \end{equation*}
		        
		 \item or $\bangICtxt = \bangSCtxt<\der{\bangLCtxt<\oc \bangSCtxt'>}> \bangArr_S<d!> \bangSCtxt<\bangLCtxt<\bangSCtxt'>> \in \bangSCtxtSet$;
		 using \Cref{l:surface_context_reduction}.\ref{p:surface_context_reduction}:
		 \begin{equation*}
		 	\begin{array}{ccccc}
		 		t =  & \bangSCtxt<\der{\bangLCtxt'<\oc \bangSCtxt'<\der{\bangLCtxt<\oc t'>}> > }>            &\bangLongArrSet{.3cm}_I<d!>     & \bangSCtxt<\der{\bangLCtxt'<\oc \bangSCtxt'<\bangLCtxt<t'>> > }>   & = u
		 		\\[0.2cm]
		 		&\bangLongDownArrBangSSet{.5cm}                       &                               &\bangLongDownArrBangSSet{.5cm}
		 		\\[0.2cm]
		 		&\bangSCtxt<\bangLCtxt'<\bangSCtxt'<\der{\bangLCtxt<\oc t'>}> > >                &\bangLongArrSet{.3cm}_S<d!>     &\bangSCtxt<\bangLCtxt'< \bangSCtxt'<\bangLCtxt<t'>> > > & = s
		 	\end{array}
		 \end{equation*}
	\end{itemize}
	\end{enumerate}

\item \textbf{(Diagonal-swaps)}  Let $t, u, s \in \setBangTerms$.
    \begin{enumerate}
    \item Let $t
        \bangArr_I<dB> u \bangArr_S<s!> s$ then $t =
        \bangICtxt<\app{\bangLCtxt<\abs{x}{t_1}>}{t_2}>$ and $u =
        \bangICtxt<\bangLCtxt<t_1\esub{x}{t_2}>>$ for some $\bangICtxt \in \bangICtxtSet$. 
        Note that the step $u \bangArr_S<s!> s$ can occur in two different ways:
        \begin{itemize}
        	\item either $\bangICtxt \bangArrSet_S<s!> \bangICtxt'\in
			\bangICtxtSet$; thus, $s =
			\bangICtxt'<\bangLCtxt<t_1\esub{x}{t_2}>>$; using
			\Cref{l:surface_context_reduction}.\ref{p:surface_context_reduction}:
        	\begin{equation*}
        		\begin{array}{ccccc}
        			t = & \bangICtxt<\app{\bangLCtxt<\abs{x}{t_1}>}{t_2}>    &\bangLongArrSet{1cm}_I<dB>     & \bangICtxt<\bangLCtxt<t_1\esub{x}{t_2}>> & = u
        			\\[0.2cm]
        			&\bangLongDownArrSubsSSet{1cm}              &                               &\bangLongDownArrSubsSSet{1cm}
        			\\[0.2cm]
        			& \bangICtxt'<\app{\bangLCtxt<\abs{x}{t_1}>}{t_2}>   &\bangLongArrSet{1cm}_I<d!>     & \bangICtxt'<\bangLCtxt<t_1\esub{x}{t_2}>> & = s
        		\end{array}
        	\end{equation*}
        	
        	\item or $\bangICtxt = \bangSCtxt<r\esub{y}{\bangLCtxt'<\oc \bangSCtxt'>}>$ with $\bangSCtxt, \bangSCtxt' \!\in \bangSCtxtSet$; so, $u = \bangSCtxt <r \esub{y}{\bangLCtxt'<\oc \bangSCtxt'<\bangLCtxt< t_1\esub{x}{t_2}>>>}>$ and
        	$s = \bangSCtxt<\bangLCtxt'<r\isub{y}{\bangLCtxt<t_1\esub{x}{t_2}>}>>$;
        	using \Cref{l:surface_context_reduction}.\ref{p:surface_context_reduction}:        
        \end{itemize}
		\begin{equation*}
			\begin{array}{ccccc}
				t = & \bangSCtxt <r \esub{y}{\bangLCtxt'<\oc \bangSCtxt'<\bangLCtxt< \abs{x}{ t_1}>t_2>>}>    &\bangLongArrSet{.3cm}_I<dB>     & \bangSCtxt <r \esub{y}{\bangLCtxt'<\oc \bangSCtxt'<\bangLCtxt<t_1 \esub{x}{t_2}>>>}> & = u
				\\[0.2cm]
				&\bangLongDownArrSubsS{.5cm}              &                               &\bangLongDownArrSubsS{.5cm}
				\\[0.2cm]
				& \bangSCtxt <\bangLCtxt'<r \isub{y}{\bangSCtxt'<\app{\bangLCtxt< \abs{x}{ t_1}>}{t_2}>}>>   &\bangLongArrSet{.3cm}^*_F<dB>     & \bangSCtxt<\bangLCtxt'<r\isub{y}{\bangSCtxt'<\bangLCtxt<t_1\esub{x}{t_2}>>}>> & = s
			\end{array}
		\end{equation*}

    \item Let $t \bangArr_I<dB> u \bangArr_S<d!> s$ then $t =
        \bangFCtxtInt\bangCtxtPlug{\app{\bangLCtxt\bangCtxtPlug{\abs{x}{t_1}}}{t_2}}$ and 
        $u =
        \bangFCtxtInt\bangCtxtPlug{\bangLCtxt\bangCtxtPlug{t_1\esub{x}{t_2}}}$ for some $\bangICtxt \in \bangICtxtSet$.
	Note that the step $u \bangArr_S<d!> s$ can occur in two different ways:
	\begin{itemize}
		\item either $\bangICtxt \bangArrSet_S<s!> \bangICtxt'\in
			\bangICtxtSet$;  
		thus,
		$s = \bangICtxt'<\bangLCtxt<t_1\esub{x}{t_2}>>$;
		using \Cref{l:surface_context_reduction}.\ref{p:surface_context_reduction}:
		\begin{equation*}
			\begin{array}{ccccc}
				t = & \bangICtxt<\app{\bangLCtxt<\abs{x}{t_1}>}{t_2}>    &\bangLongArrSet{1cm}_I<dB>     & \bangICtxt<\bangLCtxt<t_1\esub{x}{t_2}>> & = u
				\\[0.2cm]
				&\bangLongDownArrBangSSet{1cm}              &                               &\bangLongDownArrBangSSet{1cm}
				\\[0.2cm]
				& \bangICtxt'<\app{\bangLCtxt<\abs{x}{t_1}>}{t_2}>   &\bangLongArrSet{1cm}_I<dB>     & \bangICtxt'<\bangLCtxt<t_1\esub{x}{t_2}>> & = s
			\end{array}
		\end{equation*}
	
		\item or $\bangICtxt = \bangSCtxt<\der{\bangLCtxt<\oc \bangSCtxt'>}> \bangArr_S<d!> \bangSCtxt<\bangLCtxt<\bangSCtxt'>> \in \bangSCtxtSet$;
		using \Cref{l:surface_context_reduction}.\ref{p:surface_context_reduction}:
		\begin{equation*}
			\begin{array}{ccccc}
				t =  & \bangSCtxt<\der{\bangLCtxt'<\oc \bangSCtxt'<\app{\bangLCtxt\bangCtxtPlug{\abs{x}{t_1}}}{t_2}> > }>            &\bangLongArrSet{.3cm}_I<dB>     & \bangSCtxt<\der{\bangLCtxt'<\oc \bangSCtxt'<\bangLCtxt<t_1 \esub{x}{t_2}>> > }>   & = u
				\\[0.2cm]
				&\bangLongDownArrBangSSet{0.5cm}                       &                               &\bangLongDownArrBangSSet{.5cm}
				\\[0.2cm]
				&\bangSCtxt<\bangLCtxt'<\bangSCtxt'<\app{\bangLCtxt\bangCtxtPlug{\abs{x}{t_1}}}{t_2}> > >                &\bangLongArrSet{.3cm}_S<dB>     &\bangSCtxt<\bangLCtxt'< \bangSCtxt'<\bangLCtxt<t_1\esub{x}{t_2}>> > > & = s
			\end{array}
		\end{equation*}
	\end{itemize}

    \item  Let $t \bangArr_I<s!> u \bangArr_S<dB> s$ then $t =
    \bangFCtxtInt\bangCtxtPlug{t_1\esub{x}{\bangLCtxt\bangCtxtPlug{\oc
    t_2}}}$ and $u =
    \bangFCtxtInt\bangCtxtPlug{\bangLCtxt\bangCtxtPlug{t_1\isub{x}{t_2}}}$
    for some $\bangICtxt \in \bangICtxtSet$. Note that the step $u
    \bangArr_S<dB> s$ can only occur inside $\bangICtxt$, thus
    $\bangICtxt \bangArr_S<dB> \bangFCtxt$ and by
    \Cref{l:surface_context_reduction}.\ref{item:ISdBI}  $\bangFCtxt =
    \bangICtxt' \in \bangICtxtSet$. Thus, 	$s =
    \bangICtxt'<\bangLCtxt<t_1\isub{x}{t_2}>>$. Using
    \Cref{l:surface_context_reduction}.\ref{p:surface_context_reduction}:
    	\begin{equation*}
    		\begin{array}{ccccc}
    			t = & \bangICtxt<t_1 \esub{x}{\bangLCtxt<\oc t_2>}>    &\bangLongArrSet{1cm}_I<s!>     & \bangICtxt<\bangLCtxt<t_1\isub{x}{t_2}>> & = u
    			\\[0.2cm]
    			&\bangLongDownArrBetaSSet{1cm}              &                               &\bangLongDownArrBetaSSet{1cm}
    			\\[0.2cm]
    			& \bangICtxt'<t_1 \esub{x}{\bangLCtxt<\oc t_2>}>   &\bangLongArrSet{1cm}_I<s!>     & \bangICtxt'<\bangLCtxt<t_1\isub{x}{t_2}>> & = s
    		\end{array}
    	\end{equation*}

    \item  Let $t \bangArr_I<s!> u \bangArr_S<d!> s$ then $t =
        \bangFCtxtInt\bangCtxtPlug{t_1\esub{x}{\bangLCtxt\bangCtxtPlug{\oc
        t_2}}}$ and $u =
        \bangFCtxtInt\bangCtxtPlug{\bangLCtxt\bangCtxtPlug{t_1\isub{x}{t_2}}}$ for some $\bangICtxt \in \bangICtxtSet$.
	Note that the step $u \bangArr_S<d!> s$ can occur inside $\bangICtxt$ in two different ways:
	\begin{itemize}
		\item either $\bangICtxt \bangArr_S<d!> \bangICtxt' \in \bangICtxtSet$; 
		thus,
		$s = \bangICtxt'<\bangLCtxt<t_1\isub{x}{t_2}>>$;
		using \Cref{l:surface_context_reduction}.\ref{p:surface_context_reduction}:
		\begin{equation*}
			\begin{array}{ccccc}
				t = & \bangICtxt<t_1 \esub{x}{\bangLCtxt<\oc t_2>}>    &\bangLongArrSet{1cm}_I<s!>     & \bangICtxt<\bangLCtxt<t_1\isub{x}{t_2}>> & = u
				\\[0.2cm]
				&\bangLongDownArrBangSSet{1cm}              &                               &\bangLongDownArrBangSSet{1cm}
				\\[0.2cm]
				& \bangICtxt'<t_1 \esub{x}{\bangLCtxt<\oc t_2>}>   &\bangLongArrSet{1cm}_I<s!>     & \bangICtxt'<\bangLCtxt<t_1\isub{x}{t_2}>> & = s
			\end{array}
		\end{equation*}
		
		\item or $\bangICtxt = \bangSCtxt<\der{\bangLCtxt<\oc \bangSCtxt'>}> \bangArr_S<d!> \bangSCtxt<\bangLCtxt<\bangSCtxt'>> \in \bangSCtxtSet$;
		using \Cref{l:surface_context_reduction}.\ref{p:surface_context_reduction}:
		\begin{equation*}
			\begin{array}{ccccc}
				t =  & \bangSCtxt<\der{\bangLCtxt'<\oc \bangSCtxt'<t_1{\esub{x}{\bangLCtxt<\oc t_2>}}> > }>            &\bangLongArrSet{.3cm}_I<s!>     & \bangSCtxt<\der{\bangLCtxt'<\oc \bangSCtxt'<\bangLCtxt<t_1 \isub{x}{t_2}>> > }>   & = u
				\\[0.2cm]
				&\bangLongDownArrBangSSet{0.5cm}                       &                               &\bangLongDownArrBangSSet{.5cm}
				\\[0.2cm]
				&\bangSCtxt<\bangLCtxt'<\bangSCtxt'<t_1\esub{x}{\bangLCtxt<\oc t_2>}> > >                &\bangLongArrSet{.3cm}_S<s!>     &\bangSCtxt<\bangLCtxt'< \bangSCtxt'<\bangLCtxt<t_1\isub{x}{t_2}>> > > & = s
			\end{array}
		\end{equation*}
	\end{itemize}

    \item Let $t \bangArr_I<d!> u \bangArr_S<dB> s$ then $t =
        \bangFCtxtInt\bangCtxtPlug{\der{\bangLCtxt\bangCtxtPlug{\oc
        t_1}}}$ and $u =
        \bangFCtxtInt\bangCtxtPlug{\bangLCtxt\bangCtxtPlug{t_1}}$ for some $\bangICtxt \in \bangICtxtSet$.
		Note that the step $u
		\bangArrSet_S<dB> s$ can only occur inside $\bangICtxt$, so
		$\bangICtxt \bangArrSet_S<dB>
		\bangFCtxt$, and by \Cref{l:surface_context_reduction}.\ref{item:ISdBI} 
		$\bangFCtxt =  \bangICtxt' \in \bangICtxtSet$.
		Thus, $s=\bangICtxt'<\bangLCtxt<t_1>>$.
		Using \Cref{l:surface_context_reduction}.\ref{p:surface_context_reduction}:
		\begin{equation*}
			\begin{array}{ccccc}
				t = &\bangICtxt<\der{\bangLCtxt<\oc t_1>}>     &\bangLongArrSet{1cm}_I<d!>     &\bangICtxt<\bangLCtxt<t_1>> & = u
				\\[0.2cm]
				&\bangLongDownArrBetaSSet{1cm}                       &                               &\bangLongDownArrBetaSSet{1cm}
				\\[0.2cm]
				&\bangICtxt'<\der{\bangLCtxt<\oc t_1>}>    &\bangLongArrSet{1cm}_I<d!>     &\bangICtxt'<\bangLCtxt<t_1>> & = s
			\end{array}
		\end{equation*}

		\item Let $t \bangArr_I<d!> u \bangArr_S<s!> s$ then $t =
		        \bangFCtxtInt\bangCtxtPlug{\der{\bangLCtxt\bangCtxtPlug{\oc
					        t_1}}}$ and $u =
		        \bangFCtxtInt\bangCtxtPlug{\bangLCtxt\bangCtxtPlug{t_1}}$ for some $\bangICtxt \in \bangICtxtSet$. 
			Note that the step $u \bangArrSet_S<s!> s$ can occur in two different ways:
			\begin{itemize}
				\item either $\bangICtxt \bangArrSet_S<s!> \bangICtxt'
				\in  \bangICtxtSet$; thus,
				$s=\bangICtxt'<\bangLCtxt<t_1\isub{x}{t_2}>>$; using
				\Cref{l:surface_context_reduction}.\ref{p:surface_context_reduction}:
				\begin{equation*}
					\begin{array}{ccccc}
						t = &\bangICtxt<\der{\bangLCtxt<\oc t_1>}>        &\bangLongArrSet{1cm}_I<d!>     &\bangICtxt<\bangLCtxt<t_1>> & = u
						\\[0.2cm]
						&\bangLongDownArrSubsSSet{1cm}                       &                               &\bangLongDownArrSubsSSet{1cm}
						\\[0.2cm]
						&\bangICtxt'<\der{\bangLCtxt<\oc t_1>}>       &\bangLongArrSet{1cm}_I<d!>     &\bangICtxt'<\bangLCtxt<t_1>> & = s
					\end{array}
				\end{equation*}
				
				\item or $\bangICtxt = \bangSCtxt<r\esub{y}{\bangLCtxt'<\oc \bangSCtxt'>}>$ with $\bangSCtxt, \bangSCtxt' \!\in \bangSCtxtSet$; 
				so $u = \bangSCtxt <r \esub{y}{\bangLCtxt'<\oc \bangSCtxt'<\bangLCtxt<t_1>>>}>$ and
				$s = \bangSCtxt<\bangLCtxt'<r\isub{y}{\bangSCtxt'<\bangLCtxt<t_1>>}>>$;
				using \Cref{l:surface_context_reduction}.\ref{p:surface_context_reduction}:
			\end{itemize} 
			\begin{equation*}
				\begin{array}{ccccc}
					t = & \bangSCtxt <r \esub{y}{\bangLCtxt'<\oc \bangSCtxt'<\der{ \bangLCtxt<\oc t_1>}>>}>    &\bangLongArrSet{.3cm}_I<d!>     & \bangSCtxt <r \esub{y}{\bangLCtxt'<\oc \bangSCtxt'<\bangLCtxt<t_1 >>>}> & = u
					\\[0.2cm]
					&\bangLongDownArrSubsS{.5cm}              &                               &\bangLongDownArrSubsS{.5cm}
					\\[0.2cm]
					& \bangSCtxt <\bangLCtxt'<r \isub{y}{\bangSCtxt'< \der{\bangLCtxt<\oc t_1>}>}>>   &\bangLongArrSet{.3cm}^*_F<d!>     & \bangSCtxt<\bangLCtxt'<r\isub{y}{\bangSCtxt'<\bangLCtxt<t_1>>}>> & = s
				\end{array}
			\end{equation*}
    \end{enumerate}
\end{enumerate}
\end{proof}

\subsection{Call-by-Name Factorization}

\begin{lemma}
    \label{lem: Cbn Internal mapped to Bang Internal}%
    The following two properties hold:
    \begin{enumerate}
    \item \textbf{(\CBNSymb $\rightarrow$ \BANGSymb)}
    Let $\cbnICtxt \in \cbnICtxtSet$, then
        $\cbnToBangAGK{\cbnICtxt} \in \bangICtxtSet$.%
    \item \textbf{(\BANGSymb $\rightarrow$ \CBNSymb)} %
        Let $\bangICtxt \in \bangICtxtSet$ and $\cbnFCtxt \in
        \cbnFCtxtSet$ such that $\cbnToBangAGK{\cbnFCtxt} =
        \bangICtxt$, then $\cbnFCtxt \in \cbnICtxtSet$.
    \end{enumerate}
\end{lemma}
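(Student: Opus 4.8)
The plan is to avoid inducting on the context structure altogether and instead reduce both directions to the already-established preservation of \CBNSymb contexts under the embedding, exploiting the complement characterizations of internal contexts. Recall that $\bangICtxtSet = \bangFCtxtSet \setminus \bangSCtxtSet$ (stated in the excerpt) and, analogously, that $\cbnICtxtSet = \cbnFCtxtSet \setminus \cbnSCtxtSet$: a \CBNSymb internal context is precisely a full context whose hole occurs in some argument position, i.e. one that is not a surface context. Granting these two identities, I would simply combine \Cref{lem:cbn_full_context_stability} (full \CBNSymb contexts are mapped to full \BANGSymb contexts, and a \CBNSymb context whose image is full is full) with \Cref{lem:Cbn Redex Position Stability} (surface \CBNSymb contexts are mapped to surface \BANGSymb contexts, and a \CBNSymb full context whose image lies in $\bangSCtxtSet$ must already lie in $\cbnSCtxtSet$).

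For the direction \CBNSymb $\to$ \BANGSymb: given $\cbnICtxt \in \cbnICtxtSet \subseteq \cbnFCtxtSet$, \Cref{lem:cbn_full_context_stability}.1 yields $\cbnToBangAGK{\cbnICtxt} \in \bangFCtxtSet$. Were $\cbnToBangAGK{\cbnICtxt} \in \bangSCtxtSet$, then \Cref{lem:Cbn Redex Position Stability}.2, instantiated at $\bangSCtxt \coloneqq \cbnToBangAGK{\cbnICtxt}$ and $\cbnFCtxt \coloneqq \cbnICtxt$, would force $\cbnICtxt \in \cbnSCtxtSet$, contradicting $\cbnICtxt \in \cbnFCtxtSet \setminus \cbnSCtxtSet$; hence $\cbnToBangAGK{\cbnICtxt} \in \bangFCtxtSet \setminus \bangSCtxtSet = \bangICtxtSet$. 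For the direction \BANGSymb $\to$ \CBNSymb: given $\bangICtxt \in \bangICtxtSet$ and $\cbnFCtxt \in \cbnFCtxtSet$ with $\cbnToBangAGK{\cbnFCtxt} = \bangICtxt$, if $\cbnFCtxt$ were in $\cbnSCtxtSet$ then \Cref{lem:Cbn Redex Position Stability}.1 would give $\bangICtxt = \cbnToBangAGK{\cbnFCtxt} \in \bangSCtxtSet$, contradicting $\bangICtxt \in \bangFCtxtSet \setminus \bangSCtxtSet$; therefore $\cbnFCtxt \in \cbnFCtxtSet \setminus \cbnSCtxtSet = \cbnICtxtSet$.

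The only point demanding a genuine (if mild) argument is the complement identity $\cbnICtxtSet = \cbnFCtxtSet \setminus \cbnSCtxtSet$, and that is where I expect the real work to sit: disjointness $\cbnICtxtSet \cap \cbnSCtxtSet = \emptyset$ follows by a routine induction showing that the hole of any $\cbnICtxt$ lies inside an argument whereas the hole of any $\cbnSCtxt$ does not, and exhaustiveness $\cbnFCtxtSet \subseteq \cbnICtxtSet \cup \cbnSCtxtSet$ follows by induction on $\cbnFCtxt$, stripping off leading surface constructors until one reaches $\Hole$ (surface case) or an application/ES whose hole sits in the argument (internal case). Should one wish to sidestep this identity entirely, the statement can instead be proved by direct structural induction on $\cbnICtxt$ (resp. on the derivation of $\bangICtxt \in \bangICtxtSet$), using that $\cbnToBangAGK{\cdot}$ commutes with context composition, sends non-hole surface contexts to non-hole surface contexts, and places $\oc\cbnToBangAGK{\cbnFCtxt'}$ in the argument slot of the images of $\app{t}{\cbnFCtxt'}$ and $t\esub{x}{\cbnFCtxt'}$ — matching exactly the $\oc\bangFCtxt$ clause of the \BANGSymb internal grammar — but the complement route above is the shorter one.
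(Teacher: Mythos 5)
Your argument is correct, but it takes a genuinely different route from the paper's. The paper proves \Cref{lem: Cbn Internal mapped to Bang Internal} directly, by structural induction on the grammar of internal contexts (with an inner induction on the surface prefix $\cbnSCtxt^*$, resp.\ $\bangSCtxt^*$), unfolding the definition of $\cbnToBangAGK{\cdot}$ case by case and checking that each clause lands in $\bangSCtxtSet\bangCtxtPlug{\oc\bangFCtxtSet}$ or $\bangSCtxtSet\bangCtxtPlug{\bangICtxtSet}$, and symmetrically for the converse. You instead argue by contraposition, reducing both directions to the already-proved surface stability (\Cref{lem:Cbn Redex Position Stability}) together with the trivial full-context preservation, via the complement characterizations $\bangICtxtSet = \bangFCtxtSet \setminus \bangSCtxtSet$ (asserted in the paper) and $\cbnICtxtSet = \cbnFCtxtSet \setminus \cbnSCtxtSet$ (not stated in the paper, but true, and you correctly identify it as the one point needing proof; your sketch --- disjointness by induction on the internal grammar, exhaustiveness by induction on $\cbnFCtxt$ stripping the surface prefix --- is sound, since every non-surface full context decomposes as a surface context plugged with $\app{t}{\cbnFCtxt'}$ or $t\esub{x}{\cbnFCtxt'}$). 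What each approach buys: yours is shorter, reuses existing lemmas, and would transfer verbatim to any pair of context classes related by such complement identities; the paper's is self-contained, needs no complement identity for \CBNSymb, and yields the slightly finer information about where the image hole sits (under which $\oc$), which is in the spirit of the other context-stability proofs in the appendix. Your fallback remark (direct induction on $\cbnICtxt$, using that the embedding puts $\oc$ on arguments) is essentially the paper's proof.
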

\begin{proof} ~%
    \begin{enumerate}
\item \textbf{(\CBNSymb $\rightarrow$ \BANGSymb)} By induction on
    $\cbnICtxt \in \cbnICtxtSet$:
    \begin{itemize}
    \item[\bltI] $\cbnICtxt = \app[\,]{s}{\cbnFCtxt}$ for some $s \in
        \setCbnTerms$ and $\cbnFCtxt \in \cbnFCtxtSet$, then 
        $\cbnToBangAGK{\cbnICtxt} =
        \cbnToBangAGK{(\app[\,]{s}{\cbnFCtxt})} =
        \app[\,]{\cbnToBangAGK{s}}{\oc\cbnToBangAGK{\cbnFCtxt}} =
        (\app[\,]{\cbnToBangAGK{s}}{\Hole})\bangCtxtPlug{\oc\cbnToBangAGK{\cbnFCtxt}}
        \in \bangSCtxtSet<\oc\bangFCtxtSet> \subseteq \bangICtxtSet$.

    \item[\bltI] $\cbnICtxt = s\esub{x}{\cbnFCtxt}$ for some $s \in
        \setCbnTerms$ and $\cbnFCtxt \in \cbnFCtxtSet$, then 
        $\cbnToBangAGK{\cbnICtxt} =
        \cbnToBangAGK{(s\esub{x}{\cbnFCtxt})} =
        \cbnToBangAGK{s}\esub{x}{\oc\cbnToBangAGK{\cbnFCtxt}} =
        (\cbnToBangAGK{s}\esub{x}{\Hole})\bangCtxtPlug{\oc\cbnToBangAGK{\cbnFCtxt}}
        \in \bangSCtxtSet<\oc\bangFCtxtSet> \subseteq \bangICtxtSet$.

    \item[\bltI] $\cbnICtxt = \cbnSCtxt<\cbnICtxt'>$ for some
        $\cbnSCtxt \in \cbnSCtxtSet \backslash \{\Hole\}$ and
        $\cbnICtxt' \in \cbnICtxtSet$: By induction on $\cbnSCtxt$:
        \begin{itemize}
        \item[\bltII] $\cbnSCtxt = \Hole$: Then by \ih on
            $\cbnICtxt'$, one has that $\cbnToBangAGK{{\cbnICtxt'}}
            \in \bangICtxtSet$ which concludes this case since
            $\cbnICtxt = \cbnSCtxt<\cbnICtxt'> = \cbnICtxt'$.

        \item[\bltII] $\cbnSCtxt = \abs{x}{\cbnSCtxt'}$: Then by \ih
            on $\cbnSCtxt'$, one has that
            $\cbnToBangAGK{(\cbnSCtxt'<\cbnICtxt'>)} \in
            \bangICtxtSet$ and thus $\cbnToBangAGK{\cbnICtxt} =
            \cbnToBangAGK{(\abs{x}{\cbnSCtxt'<\cbnICtxt'>})} =
            \abs{x}{\cbnToBangAGK{(\cbnSCtxt'<\cbnICtxt'>)}} =
            (\abs{x}{\Hole})\bangCtxtPlug{\cbnToBangAGK{(\cbnSCtxt'<\cbnICtxt'>)}}
            \in \bangSCtxtSet<\bangICtxtSet> \subseteq \bangICtxtSet$.

        \item[\bltII] $\cbnSCtxt = \app[\,]{\cbnSCtxt'}{t}$: Then by
            \ih on $\cbnSCtxt'$, one has that
            $\cbnToBangAGK{(\cbnSCtxt'<\cbnICtxt'>)} \in
            \bangICtxtSet$ and thus $\cbnToBangAGK{\cbnICtxt} =
            \cbnToBangAGK{(\app{\cbnSCtxt'<\cbnICtxt'>}{t})} =
            \app[\,]{\cbnToBangAGK{(\cbnSCtxt'<\cbnICtxt'>)}}{\oc\cbnToBangAGK{t}}
            =
            (\app[\,]{\Hole}{\oc\cbnToBangAGK{t}})\bangCtxtPlug{\cbnToBangAGK{(\cbnSCtxt'<\cbnICtxt'>)}}
            \in \bangSCtxtSet<\bangICtxtSet> \subseteq \bangICtxtSet$.

        \item[\bltII] $\cbnSCtxt = \cbnSCtxt'\esub{x}{t}$: Then by \ih
            on $\cbnSCtxt'$, one has that
            $\cbnToBangAGK{(\cbnSCtxt'<\cbnICtxt'>)} \in
            \bangICtxtSet$ and thus $\cbnToBangAGK{\cbnICtxt} =
            \cbnToBangAGK{(\cbnSCtxt'<\cbnICtxt'>\esub{x}{t})} =
            \cbnToBangAGK{(\cbnSCtxt'<\cbnICtxt'>)}\esub{x}{\oc\cbnToBangAGK{t}}
            =
            (\Hole\esub{x}{\oc\cbnToBangAGK{t}})\bangCtxtPlug{\cbnToBangAGK{(\cbnSCtxt'<\cbnICtxt'>)}}
            \in \bangSCtxtSet<\bangICtxtSet> \subseteq \bangICtxtSet$.
        \end{itemize}
    \end{itemize}

\item \textbf{(\BANGSymb $\rightarrow$ \CBNSymb)} Let $\bangICtxt \in
    \bangICtxtSet$ and $\cbnFCtxt \in \cbnFCtxtSet$ such that
    $\cbnToBangAGK{\cbnFCtxt} = \bangICtxt$. By induction on
    $\bangICtxt \in \bangICtxtSet$:
    \begin{itemize}
    \item[\bltI] $\bangICtxt = \oc\bangFCtxt$ for some $\bangFCtxt \in
        \bangFCtxtSet$. Impossible since one cannot have
        $\cbnToBangAGK{\cbnFCtxt} = \oc\bangFCtxt$ by definition of
        the \CBNSymb embedding on contexts.

    \item[\bltI] $\bangICtxt = \bangSCtxt<\bangICtxt'>$ for some $\bangSCtxt
        \in \bangSCtxtSet\backslash\{\Hole\}$: By induction on
        $\bangSCtxt \in \bangSCtxtSet$:
        \begin{itemize}
        \item[\bltII] $\bangSCtxt = \Hole$: Then
            $\cbnToBangAGK{\cbnFCtxt} = \bangSCtxt<\bangICtxt'> =
            \bangICtxt'$ and by \ih on $\bangICtxt'$, one concludes
            that $\cbnFCtxt \in \cbnICtxtSet$.

        \item[\bltII] $\bangSCtxt = \abs{x}{\bangSCtxt'}$: Then
            $\bangSCtxt<\bangICtxt'> =
            \abs{x}{\bangSCtxt'<\bangICtxt'>}$ thus necessarily
            $\cbnFCtxt = \abs{x}{\cbnFCtxt'}$ for some $\cbnFCtxt' \in
            \cbnFCtxtSet$ such that $\cbnToBangAGK{{\cbnFCtxt'}} =
            \bangSCtxt'<\bangICtxt'>$. By \ih on $\bangSCtxt'$, one
            has that $\cbnFCtxt' \in \cbnICtxtSet$ so that $\cbnFCtxt
            = \abs{x}{\bangFCtxt'} =
            (\abs{x}{\Hole})\cbnCtxtPlug{\cbnFCtxt'} \in
            \cbnSCtxtSet<\cbnICtxtSet> \subseteq \cbnICtxtSet$.

        \item[\bltII] $\bangSCtxt = \app{\bangSCtxt'}{s'}$: Then
            $\bangSCtxt<\bangICtxt'> =
            \app{\bangSCtxt'<\bangICtxt'>}{s'}$ thus necessarily
            $\cbnFCtxt = \app[\,]{\cbnFCtxt'}{s}$ for some $\cbnFCtxt'
            \in \cbnFCtxtSet$ and $s \in \setCbnTerms$ such that
            $\cbnToBangAGK{{\cbnFCtxt'}} = \bangSCtxt'<\bangICtxt'>$
            and $\oc\cbnToBangAGK{s} = s'$. By \ih on $\bangSCtxt'$,
            one has that $\cbnFCtxt' \in \cbnICtxtSet$ so that
            $\cbnFCtxt = \app[\,]{\bangFCtxt'}{s} =
            (\app[\,]{\Hole}{s})\cbnCtxtPlug{\cbnFCtxt'} \in
            \cbnSCtxtSet<\cbnICtxtSet> \subseteq \cbnICtxtSet$.

        \item[\bltII] $\bangSCtxt = \app[\,]{s'}{\bangSCtxt'}$: Then
            $\bangSCtxt<\bangICtxt'> =
            \app[\,]{s'}{\bangSCtxt'<\bangICtxt'>}$ thus necessarily
            $\cbnFCtxt = \app[\,]{s}{\cbnFCtxt'}$ for some $s \in
            \setCbnTerms$ and $\cbnFCtxt' \in \cbnFCtxtSet$ such that
            $\cbnToBangAGK{s} = s'$ and
            $\oc\cbnToBangAGK{{\cbnFCtxt'}} =
            \bangSCtxt'<\bangICtxt'>$, which concludes this case since
            $\cbnFCtxt = \app[\,]{s}{\cbnFCtxt'} \in
            \app[\,]{s}{\cbnFCtxtSet} \subseteq \cbnICtxtSet$.

        \item[\bltII] $\bangSCtxt = \bangSCtxt'\esub{x}{s'}$: Then
            $\bangSCtxt<\bangICtxt'> =
            \bangSCtxt'<\bangICtxt'>\esub{x}{s'}$ thus necessarily
            $\cbnFCtxt = \cbnFCtxt'\esub{x}{s}$ for some $\cbnFCtxt'
            \in \cbnFCtxtSet$ and $s \in \setCbnTerms$ such that
            $\cbnToBangAGK{{\cbnFCtxt'}} = \bangSCtxt'<\bangICtxt'>$
            and $\oc\cbnToBangAGK{s} = s'$. By \ih on $\bangSCtxt'$,
            one has that $\cbnFCtxt' \in \cbnICtxtSet$ so that
            $\cbnFCtxt = \bangFCtxt'\esub{x}{s} =
            (\Hole\esub{x}{s})\cbnCtxtPlug{\cbnFCtxt'} \in
            \cbnSCtxtSet<\cbnICtxtSet> \subseteq \cbnICtxtSet$.

        \item[\bltII] $\bangSCtxt = s'\esub{x}{\bangSCtxt'}$: Then
            $\bangSCtxt<\bangICtxt'> =
            s'\esub{x}{\bangSCtxt'<\bangICtxt'>}$ thus necessarily
            $\cbnFCtxt = s\esub{x}{\cbnFCtxt'}$ for some $s \in
            \setCbnTerms$ and $\cbnFCtxt' \in \cbnFCtxtSet$ such that
            $\cbnToBangAGK{s} = s'$ and
            $\oc\cbnToBangAGK{{\cbnFCtxt'}} =
            \bangSCtxt'<\bangICtxt'>$, which concludes this case since
            $\cbnFCtxt = s\esub{x}{\cbnFCtxt'} \in
            s\esub{x}{\cbnFCtxtSet} \subseteq \cbnICtxtSet$.

        \item[\bltII] $\bangSCtxt = \der{\bangSCtxt'}$: Impossible
            since one cannot have $\cbnToBangAGK{\cbnFCtxt} =
            \der{\bangSCtxt}$ by definition of the \CBNSymb embedding
            on contexts.
            \qed
        \end{itemize}
    \end{itemize}
\end{enumerate}%
\end{proof}



\RecCbnPropInternal*
\begin{proof}
    Immediate consequence of \Cref{lem: cbn preservation restricted,lem: Cbn
    Internal mapped to Bang Internal}.
\qed
\end{proof}


\subsection{Call-by-Value Factorization}

\begin{lemma}
    \label{lem:cbv_internal_context_stability}
    The following property hold:
    \begin{enumerate}
    \item Let $\cbvICtxt \in \cbvICtxtSet$, then %
        $\cbvToBangAGK{\cbvICtxt}, %
        \cbvToBangAGK{\cbvICtxt}\bangCtxtPlug{\der{\Hole}}, %
        \cbvToBangAGK[*]{\cbvICtxt}, %
        \cbvToBangAGK[*]{\cbvRmDst{\cbvICtxt}} \in \bangICtxtSet$.

    \item Let $\bangICtxt \in \bangICtxtSet$ and $\cbvFCtxt \in
        \cbvFCtxtSet$ such that either $\cbvToBangAGK{\cbvFCtxt} =
        \bangICtxt$ or $\cbvToBangAGK[*]{\cbvFCtxt} = \bangICtxt$ or
        $\cbvToBangAGK{\cbvFCtxt}\bangCtxtPlug{\der{\Hole}} =
        \bangICtxt$, then $\cbvFCtxt \in \cbvICtxtSet$.
    \end{enumerate}
\end{lemma}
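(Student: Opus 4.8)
The plan is to mirror the proof of the surface counterpart \Cref{lem:cbv_surface_context_stability} and use it essentially as a black box, the point being that all the structural analysis of the (super-developed) embedding on contexts has already been carried out there. I first record two routine preliminary facts. (a) Internal and surface \CBVSymb contexts partition the full \CBVSymb contexts, i.e.\ $\cbvICtxtSet = \cbvFCtxtSet \setminus \cbvSCtxtSet$ --- the \CBVSymb counterpart of the identity $\bangICtxtSet = \bangFCtxtSet \setminus \bangSCtxtSet$ recalled in \Cref{sec:Factorization} --- which follows by a straightforward induction on the grammar of $\genFCtxt$, sorting a context according to whether its hole lies under an abstraction or not. (b) $\cbvRmDst{\cdot}$ maps $\cbvICtxtSet$ into $\cbvICtxtSet$, since removing explicit substitutions never moves the hole out of the scope of a $\lambda$; this is the exact analogue (and same trivial induction) as the observation ``$\cbvRmDst{\cbvSCtxt} \in \cbvSCtxtSet$ for $\cbvSCtxt \in \cbvSCtxtSet$'' used in the proof of \Cref{lem:cbv_surface_context_stability}.

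\textbf{Point 1.} Fix $\cbvICtxt \in \cbvICtxtSet$; in particular $\cbvICtxt \in \cbvFCtxtSet$, so \Cref{lem:cbv_full_context_stability}.1 gives that each of the four contexts $\cbvToBangAGK{\cbvICtxt}$, $\cbvToBangAGK{\cbvICtxt}\bangCtxtPlug{\der{\Hole}}$, $\cbvToBangAGK[*]{\cbvICtxt}$, $\cbvToBangAGK[*]{\cbvRmDst{\cbvICtxt}}$ is a full \BANGSymb context. To upgrade ``full'' to ``internal'' it suffices, by $\bangICtxtSet = \bangFCtxtSet \setminus \bangSCtxtSet$, to exclude that any of them is surface. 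If $\cbvToBangAGK{\cbvICtxt} \in \bangSCtxtSet$, then \Cref{lem:cbv_surface_context_stability}.2 (first alternative) yields $\cbvICtxt \in \cbvSCtxtSet$, contradicting $\cbvICtxt \in \cbvICtxtSet$ by disjointness (a); the cases $\cbvToBangAGK[*]{\cbvICtxt} \in \bangSCtxtSet$ and $\cbvToBangAGK{\cbvICtxt}\bangCtxtPlug{\der{\Hole}} \in \bangSCtxtSet$ are closed identically using the other two alternatives of \Cref{lem:cbv_surface_context_stability}.2. For the fourth context, if $\cbvToBangAGK[*]{\cbvRmDst{\cbvICtxt}} \in \bangSCtxtSet$ then the same lemma (second alternative, applied to the \CBVSymb context $\cbvRmDst{\cbvICtxt}$) gives $\cbvRmDst{\cbvICtxt} \in \cbvSCtxtSet$, which contradicts (b) together with disjointness.

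\textbf{Point 2.} Let $\bangICtxt \in \bangICtxtSet$ and $\cbvFCtxt \in \cbvFCtxtSet$ with $\cbvToBangAGK{\cbvFCtxt} = \bangICtxt$ (the arguments for $\cbvToBangAGK[*]{\cbvFCtxt} = \bangICtxt$ and $\cbvToBangAGK{\cbvFCtxt}\bangCtxtPlug{\der{\Hole}} = \bangICtxt$ are verbatim the same). Suppose towards a contradiction that $\cbvFCtxt \notin \cbvICtxtSet$; then $\cbvFCtxt \in \cbvSCtxtSet$ by (a), hence \Cref{lem:cbv_surface_context_stability}.1 gives $\bangICtxt = \cbvToBangAGK{\cbvFCtxt} \in \bangSCtxtSet$, which is impossible since $\bangICtxt \in \bangICtxtSet = \bangFCtxtSet \setminus \bangSCtxtSet$. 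Therefore $\cbvFCtxt \in \cbvICtxtSet$.

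\textbf{Expected obstacle.} There is no genuine difficulty: the only points needing care are establishing the partition identity $\cbvICtxtSet = \cbvFCtxtSet \setminus \cbvSCtxtSet$ and the preservation fact (b), both elementary inductions on contexts strictly parallel to facts already used for surface contexts. Should one wish to avoid the partition identity altogether, Point 1 can alternatively be proved by a direct structural induction on $\cbvICtxt$, exactly as in the \CBNSymb case \Cref{lem: Cbn Internal mapped to Bang Internal}, invoking \Cref{lem:cbv_surface_context_stability} for the $\cbvSCtxt^*\cbvCtxtPlug{\cbvICtxt}$ production; this is more verbose but requires no new ingredient.
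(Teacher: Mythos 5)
Your proof is correct, but it takes a genuinely different route from the paper's. The paper proves Point~1 directly: it first observes (as you do) that $\cbvRmDst{\cbvICtxt} \in \cbvICtxtSet$ and that $\bangICtxt\bangCtxtPlug{\der{\Hole}} \in \bangICtxtSet$, reducing the claim to the contexts $\cbvToBangAGK{\cbvICtxt}$ and $\cbvToBangAGK[*]{\cbvICtxt}$, and then uses the decomposition $\cbvICtxt = \cbvSCtxt\cbvCtxtPlug{\abs{x}{\cbvFCtxt}}$ together with the contextual translation (\Cref{lem: cbv embedding with contexts}) and \Cref{lem:cbv_surface_context_stability}.1 to compute that the image has the shape $\bangSCtxtSet\bangCtxtPlug{\oc\bangFCtxtSet} \subseteq \bangICtxtSet$; Point~2 is then dispatched by a direct induction on $\bangICtxt$. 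You instead argue by complementation: you invoke the full-context stability lemma (\Cref{lem:cbv_full_context_stability}.1), the two partition identities $\bangICtxtSet = \bangFCtxtSet \setminus \bangSCtxtSet$ (stated in the paper) and $\cbvICtxtSet = \cbvFCtxtSet \setminus \cbvSCtxtSet$ (which you must prove, but which is indeed an elementary induction, and whose nontrivial direction is exactly the decomposition $\cbvSCtxt\cbvCtxtPlug{\abs{x}{\cbvFCtxt}}$ the paper also uses), and then obtain both points by contraposition from \Cref{lem:cbv_surface_context_stability}. What your route buys is uniformity and economy: all four translated contexts in Point~1 are handled at once without redoing any translation computation, and Point~2 is immediate rather than a fresh induction; the price is the extra partition lemma for \CBVSymb contexts and the fact that the argument never exhibits where the hole actually lands (under a $\oc$), which the paper's direct computation makes explicit but which the statement does not require. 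Your auxiliary fact that $\cbvRmDst{\cdot}$ preserves internality is exactly the observation the paper itself makes, so no new difficulty hides there.
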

\begin{proof} ~
    \begin{enumerate}
    \item Notice that if $\cbvICtxt \in \cbvICtxtSet$ then
        $\cbvRmDst{\cbvICtxt} \in \cbvICtxtSet$, and that if
        $\bangICtxt \in \bangICtxtSet$ then $\bangICtxt<\der{\Hole}>
        \in \bangICtxtSet$. Thanks to these observations, it becomes
        sufficient to show that if $\cbvICtxt \in \cbvICtxtSet$, then
        $\cbvToBangAGK{\cbvICtxt} \in \bangICtxt$ and
        $\cbvToBangAGK[*]{\cbvICtxt} \in \bangICtxt$.

        Let $\cbvICtxt \in \cbnICtxtSet$, then by construction
        $\cbvICtxt = \cbvSCtxt<\abs{x}{\cbvFCtxt}>$ for some
        $\cbvSCtxt \in \cbvSCtxtSet$ and $\cbvFCtxt \in \cbvFCtxtSet$.
        We distinguish two cases:
        \begin{itemize}
        \item[\bltI] $\cbvFuncPred{\cbvSCtxt}$: Using \Cref{lem: cbv
            embedding with
            contexts,lem:cbv_surface_context_stability}:
            \begin{equation*}
                \begin{array}{r cl cl}
                    \cbvToBangAGK{\cbvICtxt}
                    &=& \cbvToBangAGK[*]{\cbvSCtxt}<\bangStrip{\cbvToBangAGK{(\abs{x}{\cbvFCtxt})}}>
                \\
                    &=& \cbvToBangAGK[*]{\cbvSCtxt}<\bangStrip{\oc\abs{x}{\oc\cbvToBangAGK{\cbvFCtxt}}}>
                \\
                    &=& \cbvToBangAGK[*]{\cbvSCtxt}<\abs{x}{\oc\cbvToBangAGK{\cbvFCtxt}}>
                    &\in& \bangSCtxtSet<\abs{x}{\oc\bangFCtxtSet}>
                    \;\subseteq\; \bangSCtxtSet<\oc\bangFCtxtSet>
                    \;\subseteq\; \bangICtxtSet
                \end{array}
            \end{equation*}

        \item[\bltI] $\neg\cbvFuncPred{\cbvSCtxt}$: Using \Cref{lem:
            cbv embedding with
            contexts,lem:cbv_surface_context_stability}:
            \begin{equation*}
                \begin{array}{r cl cl}
                    \cbvToBangAGK{\cbvICtxt}
                    &=& \cbvToBangAGK{\cbvSCtxt}<\cbvToBangAGK{(\abs{x}{\cbvFCtxt})}>
                \\
                    &=& \cbvToBangAGK{\cbvSCtxt}<\oc\abs{x}{\oc\cbvToBangAGK{\cbvFCtxt}}>
                    &\in& \bangSCtxtSet<\oc\abs{x}{\oc\bangFCtxtSet}>
                    \;\subseteq\; \bangSCtxtSet<\oc\bangFCtxtSet>
                    \;\subseteq\; \bangICtxtSet
                \end{array}
            \end{equation*}
        \end{itemize}

    \item By straightforward induction on $\bangICtxt \in \bangICtxtSet$.
    \qed

    \end{enumerate}

\end{proof}

\begin{corollary}
    \label{cor:cbv prop-internal_extended}
    Let $t, u \in \setCbvTerms$ and $u' \in \setBangTerms$ such that
    $\bangPredINF<d!>{u'}$. Then,
    \begin{itemize}
    \item \textbf{(Stability)}: \hspace{1.86cm}%
            $\cbvToBangAGK{t} \bangArrSet*_I u'%
                \quad \Rightarrow \quad%
            \exists\, u \in \setCbvTerms, \; \cbvToBangAGK{u} = u'$

    \item \textbf{(Normal Forms)}: \hspace{0.2cm}%
            $t \text{ is a \cbvSetINF}%
                \quad\Leftrightarrow\quad%
            \cbvToBangAGK{t} \text{ is a \bangSetINF}$

    \item \textbf{(Simulations)}: \hspace{1.32cm}%
            $t \cbvArrSet*_I u%
                \quad \Leftrightarrow \quad%
            \cbvToBangAGK{t} \bangArrSet*_I \cbvToBangAGK{u}$

            Moreover, the number of $\cbvSymbBeta/\cbvSymbSubs$-steps
            on the left matches the number
            $\bangSymbBeta/\bangSymbSubs$-steps on the right.
    \end{itemize}
\end{corollary}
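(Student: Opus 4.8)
The plan is to obtain this corollary by instantiating the generic preservation lemma \Cref{lem: cbv preservation restricted} with the context families $\cbvECtxtSet \coloneqq \cbvICtxtSet$ and $\bangECtxtSet \coloneqq \bangICtxtSet$, exactly in the spirit of the derivations of \Cref{cor:cbv prop-full} and \Cref{cor:cbv prop-surface} for the full and surface fragments. It then suffices to check the four hypotheses of that lemma. Hypotheses~1 and~2 --- that the embeddings $\cbvToBangAGK{\cbvICtxt}$, $\cbvToBangAGK{\cbvICtxt}\bangCtxtPlug{\der{\Hole}}$, $\cbvToBangAGK[*]{\cbvICtxt}$ and $\cbvToBangAGK[*]{\cbvRmDst{\cbvICtxt}}$ of an internal \CBVSymb context all land in $\bangICtxtSet$, and conversely that a \CBVSymb context whose (plain or super-developed) embedding is an internal \BANGSymb context is itself internal --- are exactly \Cref{lem:cbv_internal_context_stability}. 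Hypothesis~3, that $\bangICtxtSet$ admits a diligence process, is \Cref{lem:Internal_Diligence}.

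For hypothesis~4 I would prove that whenever $\cbvToBangAGK{t} \bangArrSet*_I u'$ with $u'$ a \bangSetINF<d!>, the term $u'$ is in fact a \bangSetFNF<d!>. The key observation is that, thanks to the super-development built into the embedding, $\cbvToBangAGK{t}$ is always a \bangSetFNF<d!>, hence in particular a \bangSetSNF<d!>; moreover internal reduction $\bangArrSet_I$ preserves the property of being a \bangSetSNF<d!>, since an internal step rewrites only a subterm occurring under a $\oc$ and so can neither create nor destroy a $\bangSymbBang$-redex placed in a surface context. A straightforward induction on the length of $\cbvToBangAGK{t} \bangArrSet*_I u'$ then gives that $u'$ is a \bangSetSNF<d!>; combined with the assumption that $u'$ is a \bangSetINF<d!>, and using $\bangICtxtSet = \bangFCtxtSet \setminus \bangSCtxtSet$ --- so that a term is a \bangSetFNF<d!> precisely when it is both a \bangSetSNF<d!> and a \bangSetINF<d!> --- we conclude that $u'$ is a \bangSetFNF<d!>. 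This is the internal counterpart of \Cref{lem:cbv_INF<d!>_is_FNF<d!>}.

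With the four hypotheses discharged, \Cref{lem: cbv preservation restricted} yields at once the three statements of the corollary: preservation of \CBVSymb internal normal forms in both directions; stability, i.e.\ that any $\bangSymbBang$-normal $u'$ reachable from $\cbvToBangAGK{t}$ via $\bangArrSet*_I$ lies in the image of $\cbvToBangAGK{\cdot}$; and the equivalence $t \cbvArrSet*_I u \Leftrightarrow \cbvToBangAGK{t} \bangArrSet*_I \cbvToBangAGK{u}$ with the stated matching between $\cbvSymbBeta/\cbvSymbSubs$-steps and $\bangSymbBeta/\bangSymbSubs$-steps. Within this proof the only substantive verification is the argument for hypothesis~4 above, which is elementary once one notes that internal steps stay confined under a $\oc$; the genuinely laborious ingredient of the whole picture is the reverse direction of \Cref{lem:cbv_internal_context_stability}, an involved induction over $\bangICtxtSet$ paralleling the surface development in \Cref{lem:cbv_surface_context_stability}, which I take as given.
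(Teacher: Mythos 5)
Your proposal is correct and follows essentially the same route as the paper: instantiating \Cref{lem: cbv preservation restricted} at the internal context families, discharging hypotheses 1--3 via \Cref{lem:cbv_internal_context_stability,lem:Internal_Diligence}, and handling hypothesis 4 by the observation that $\cbvToBangAGK{t}$ is a \bangSetSNF<d!> and that internal steps, occurring strictly under a $\oc$, preserve this property --- which is precisely the content of the paper's \Cref{lem:cbv_INF<d!>_is_FNF<d!>} (whose statement, as printed, carries over the surface wording by a typo, but whose proof is exactly your argument).
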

\begin{proof}
    Using \Cref{lem: cbv preservation restricted} with
    \Cref{lem:cbv_internal_context_stability,lem:Internal_Diligence,lem:cbv_INF<d!>_is_FNF<d!>}.
    \qed
\end{proof}


}

\end{document}